\newcolumntype{L}[1]{>{\raggedright\arraybackslash}p{#1}}
\newcolumntype{C}[1]{>{\centering\arraybackslash}m{#1}}
\newcolumntype{R}[1]{>{\raggedleft\arraybackslash}p{#1}}
\newcommand{\DTV}[2]{d_{\mathrm{TV}}\left({#1},{#2}\right)}
\newcommand{\e}{\mathrm{e}}
\renewcommand{\epsilon}{\varepsilon}
\newcommand{\mixtl}{T_{\textnormal{mix}}^{\textnormal{tilted}}}
\newcommand{\mixmax}{T_{\textnormal{mix-}\*1_V}^{\textnormal{GD}}}
\newcommand{\mixgd}{T_{\textnormal{mix}}^{\textnormal{GD}}}
\newcommand{\mixfd}{T_{\textnormal{mix}}^{\textnormal{FD}}}
\newcommand{\normone}[1]{\left\Vert #1 \right\Vert_1}
\newcommand{\mc}{\preceq_{\textnormal{mc}}}
\newcommand{\sd}{\preceq_{\textnormal{sd}}}
\newcommand{\psim}{P_{\textnormal{s-GD}}}
\newtheorem{theorem}{Theorem}[section]
\newtheorem{observation}[theorem]{Observation}
\newtheorem{claim}[theorem]{Claim}
\newtheorem*{claim*}{Claim}
\newtheorem{lemma}[theorem]{Lemma}
\newtheorem{proposition}[theorem]{Proposition}
\newtheorem{corollary}[theorem]{Corollary}
\theoremstyle{definition}
\newtheorem{definition}[theorem]{Definition}
\newtheorem{remark}[theorem]{Remark}
\newtheorem*{remark*}{Remark}
\def\Pr{\mathop{\mathbf{Pr}}\nolimits}
\renewcommand{\emptyset}{\varnothing}
\newcommand{\abs}[1]{\left\vert#1\right\vert}
 \newcommand{\tuple}[1]{\left(#1\right)} \newcommand{\eps}{\varepsilon}
 \newcommand{\tp}{\tuple}
\newcommand{\defeq}{\triangleq}
\renewcommand{\d}{\,\-d}
\newcommand{\nug}{\nu^{\textnormal{GD}}}
\newcommand{\nus}{\nu^{\textnormal{sim}}}
\def\*#1{\mathbf{#1}} 
\def\+#1{\mathcal{#1}} 
\def\-#1{\mathrm{#1}} 
\renewcommand{\Pr}[2][]{ \ifthenelse{\isempty{#1}}
  {\mathbf{Pr}\left[#2\right]} {\mathbf{Pr}_{#1}\left[#2\right]} } 
\newcommand{\E}[2][]{ \ifthenelse{\isempty{#1}}
  {\mathbf{\mathbf{E}}\left[#2\right]}
  {\mathbf{\mathbf{E}}_{#1}\left[#2\right]} }
  \newcommand{\Var}[2][]{ \ifthenelse{\isempty{#1}}
  {\mathbf{\mathbf{Var}}\left[#2\right]}
  {\mathbf{\mathbf{Var}}_{#1}\left[#2\right]} }
\crefname{theorem}{Theorem}{Theorems}
\crefname{observation}{Observation}{Observations}
\crefname{claim}{Claim}{Claims}
\crefname{condition}{Condition}{Conditions}
\crefname{algorithm}{Algorithm}{Algorithms}
\crefname{property}{Property}{Properties}
\crefname{example}{Example}{Examples}
\crefname{fact}{Fact}{Facts}
\crefname{lemma}{Lemma}{Lemmas}
\crefname{corollary}{Corollary}{Corollaries}
\crefname{definition}{Definition}{Definitions}
\crefname{remark}{Remark}{Remarks}
\crefname{proposition}{Proposition}{Propositions}
\crefname{equation}{equation}{equations}
\crefname{enumi}{}{}
\crefname{enumii}{}{}
\crefname{enumiii}{}{}
\crefname{enumiv}{}{}
\title{Rapid Mixing of Glauber Dynamics for Monotone Systems\\ via Entropic Independence}
\date{}
\begin{document}

\maketitle

\begin{abstract}
We study the mixing time of Glauber dynamics on monotone systems.
For monotone systems satisfying the entropic independence condition, we prove a new mixing time comparison result for Glauber dynamics.
For concrete applications, we obtain $\tilde{O}(n)$ mixing time for the random cluster model induced by the ferromagnetic Ising model with consistently biased external fields, and $\tilde{O}(n^2)$ mixing time for the bipartite hardcore model under the one-sided uniqueness condition, where $n$ is the number of variables in corresponding models, improving the best known results in [Chen and Zhang, SODA'23] and [Chen, Liu, and Yin, FOCS'23], respectively.

Our proof combines ideas from the stochastic dominance argument in the classical censoring inequality and the recently developed high-dimensional expanders. The key step in the proof is a novel comparison result between the Glauber dynamics and the field dynamics for monotone systems.
\end{abstract}
\section{Introduction}

Sampling from high-dimensional distributions is a central problem in computing and data science. 
A widely-used sampling technique is the Markov chain Monte Carlo (MCMC) method. 
The \emph{Glauber dynamics} (also known as the \emph{Gibbs sampling}) is one of the most fundamental sampling methods in MCMC. Given a distribution $\mu$ over $\{0,1\}^V$, the Glauber dynamics starts from an arbitrary feasible \emph{configuration} $X \in \{0,1\}^V$ and iteratively updates $X$ as follows:
\begin{itemize}
    \item select a variable $v \in V$ uniformly at random;
    \item resample the value of $v$ from $\mu_v^{X_{V \setminus \{v\}}}$, where $\mu_v^{X_{V \setminus \{v\}}}$ is the marginal distribution of $\mu$ on $v$ conditional on the current configuration $X_{V \setminus \{v\}}$ of other variables.
\end{itemize}
One central problem for the Glauber dynamics is to understand its convergence rate, which is captured by the notion of the \emph{mixing time}. The mixing time is the number of iterations that the Glauber dynamics needs to make the distribution of the configuration $X$ to be close to the target distribution $\mu$. 
For various kinds of distributions, the mixing time of the Glauber dynamics was of great interest and extensively studied~\cite{levin2017markov}.

In this paper, we focus on the case where the target distribution $\mu$ is a \emph{monotone system}. 
Monotone systems are an important class of high-dimensional distributions and it is well-studied in the literature of sampling algorithms~(e.g. see \cite{JerrumS93,ProppW96,MS04,peres2013can,MS13,RSVVY14,GJ18,HSZ19}).
For any $X,Y \in \{0,1\}^S$, where $S \subseteq V$, define the \emph{partial ordering}:
\begin{align}\label{eq:order1}
    X \preceq Y \text{ if and only if } X_v\leq Y_v \text{ for all } v\in S.   
\end{align}

\begin{definition}[Monotone systems]\label{def:monotone-system}
    A distribution $\mu$ over $\{0,1\}^V$ is said to be a monotone system if for $v \in V$, any feasible and comparable pinnings $\sigma,\tau \in \{0,1\}^{V \setminus \{v\}}$ with $\sigma \preceq \tau$, the distribution $\mu^\sigma_v$ is stochastically dominated by $\mu^\tau_v$.
\end{definition}
In the definition, $\mu^\sigma_v$ and $\mu^\tau_v$ are marginal distributions of $\mu$ on $v$ conditional on $\sigma$ and $\tau$ respectively.
The marginal distribution $\mu^\sigma_v$ is said to be stochastically dominated by $\mu^\tau_v$ if $\mu^\sigma_v(1) \leq \mu^\tau_v(1)$. Equivalently, there exists a coupling $(X,Y)$ of $\mu^\sigma_v$ and $\mu^\tau_v$, where $X \sim \mu^\sigma_v$ and $Y \sim \mu^\tau_v$, such that $X \leq Y$ with probability 1.

Consider the mixing time of the Glauber dynamics starting from configuration $X_0 \in \Omega$:
\begin{align}\label{eq:mix-init}
    T^{\textnormal{GD}}_{\textnormal{mix-}X_0}(\mu,\epsilon) = \min\{t \in \mathbb{N} \mid \DTV{P^t(X_0,\cdot)}{\mu}\leq \epsilon\}.
\end{align}    
where $\DTV{P^t(X_0,\cdot)}{\mu} = \frac{1}{2} \sum_{\sigma \in \Omega} |P^t(X_0,\sigma) - \mu(\sigma)|$ denotes the total variation distance between two distributions $P^t(X_0,\cdot)$ and $\mu$.
In particular, for monotone systems, one often considers the case where $X_0 = \*1_V$ is the maximum all-one configuration.

The \emph{censoring inequality}~\cite{peres2013can,FillK13} is a powerful tool to study the mixing time of the Glauber dynamics on monotone systems, which states that extra updates do not delay mixing. Hence, to analyze the mixing time of Glauber dynamics, one can focus on updates in a local region of interest and ignore the updates outside the region. One typical application of the censoring inequality is to compare Glauber dynamics with \emph{block dynamics}. 
Given the current configuration $X$, the block dynamics samples a random subset $S\subseteq V$, where $S$ is \emph{independent} of $X$, and then updates $X_S$ conditional on $X_{V \setminus S}$. 
With the censoring inequality, proving the mixing of Glauber dynamics can be reduced to proving the mixing of block dynamics and the mixing of Glauber dynamics in local regions $S$.  
This framework and its generalizations have achieved great success in many monotone systems~\cite{ding2010mixing,MS13,BlancaCV20,BlancaZ23,BlancaG25}.

Recent advances in spectral independence~\cite{ALO21} and entropic independence~\cite{AJKPV22} have led to new techniques for analyzing the mixing time of Glauber dynamics. Many new mixing time results have been established through the analysis of \emph{field dynamics}~\cite{CFYZ21,CFYZ22,ChenE22,CZ23,Chen0Y23,chen2024rapid}. In field dynamics, given the current configuration $X$, it samples a random subset $S\subseteq V$ that is \emph{adaptive} to $X$ (i.e., the distribution of $S$ depends on the current state), and then updates the configuration $X_S$ according to a \emph{tilted conditional distribution}. The key challenge in applying the censoring inequality to field dynamics arises from the \emph{correlation} between the adaptive sampling of $S$ and the current configuration $X$, combined with the specific update rule for $X_S$. Consequently, for some important monotone systems, while the optimal mixing time of field dynamics is understood, the best known mixing time bounds for Glauber dynamics remain sub-optimal.

In this paper, we develop a new comparison result between the Glauber dynamics and the field dynamics for monotone systems, which leads to improved mixing time results from entropic independence.
We first exhibit some concrete applications of the new comparison result in \Cref{sec:applications} and then give our general and technical results in \Cref{sec:general-and-technical-results}.

\subsection{Applications}\label{sec:applications}

\paragraph{Random cluster model from ferromagnetic Ising model}
The ferromagnetic Ising model is a fundamental spin system in statistical physics. 
Let $G=(V,E)$ be an undirected graph with $n$ vertices and $m$ edges. Let $\beta \in (1,+\infty)^E$ be the \emph{edge activities} and $\lambda\in [0,1]^V$ be the \emph{external fields}. The ferromagnetic Ising model $\mu_{\beta,\lambda}^{\textnormal{Ising}}$ is defined as follows: 
\begin{align*}
\forall \sigma \in \{0,1\}^V,\quad    \mu_{\beta,\lambda}^{\textnormal{Ising}}(\sigma) \propto w_{\beta,\lambda}^{\textnormal{Ising}}(\sigma) := \prod_{e\in m(\sigma)}\beta_e\prod_{v\in V: \sigma_v = 1}\lambda_v, 
\end{align*}
where $w_{\beta,\lambda}^{\textnormal{Ising}}(\sigma)$ denotes the weight of $\sigma$ and $m(\sigma):=\{e = \{u,v\}\in E \mid \sigma_u = \sigma_v\}$ is the set of monochromatic edges of $\sigma$.

In the ferromagnetic Ising model, neighboring vertices are more likely to have the same value. If the Glauber dynamics starts from all-ones configuration, then it may take exponential time to reach a configuration dominated by 0s. Alternatively, to draw random samples from the ferromagnetic Ising model, consider the following \emph{random cluster model}. 

Let $p\in [0,1]^E$ and $\lambda\in [0,1]^V$ be parameters.
For any subset of edges $S \subseteq E$, let $\kappa(V,S)$ denote the set of connected components of subgraph $(V,S)$. The random cluster model defines a distribution $\mu_{p,\lambda}^{\textnormal{RC}}$ such that for any subset of edges $S \subseteq E$,
\begin{align}\label{eq:weight-rc}
    \mu_{p,\lambda}^{\textnormal{RC}}(S) \propto w_{p,\lambda}^{\textnormal{RC}}(S) := \prod_{e\in S}\frac{p_e}{1-p_e} \prod_{C\in \kappa(V,S)}\left(1+\prod_{v\in C}\lambda_v\right).
\end{align}
Here, $C \in \kappa(V,S)$ denotes a connected component of subgraph $(V,S)$ and $C$ denotes the set of vertices in this connected component.
By setting $p_e = 1 - \frac{1}{\beta_e}$ for all $e \in E$,
one can transfer a random sample from the random cluster model $ \mu_{p,\lambda}^{\textnormal{RC}}$ to a random sample from the ferromagnetic Ising model  $\mu_{\beta,\lambda}^{\textnormal{Ising}}$ using a simple coupling in~\cite{JerrumS93,feng2023swendsen}.

The problem of sampling from random cluster models was extensively studied. 
Assume all $p_e \in (0,1)$ are constants.
Jerrum and Sinclair \cite{JerrumS93} gave the first polynomial time algorithm to sample from the random cluster model with $\lambda = \bm{1}$. Guo and Jerrum \cite{GJ18} further proved the $\tilde{O}(m^3 n^4)$ mixing time of the Glauber dynamics for this case. Recently, many works~\cite{liu2019ising,feng2023swendsen,CZ23,chen2024spectral} considered the random cluster model with consistently biased external fields, i.e. for all $v \in V$, $\lambda_v \leq 1 - \delta_\lambda$ for some constant $\delta_\lambda > 0$. Feng, Guo and Wang \cite{feng2023swendsen} proved that the mixing time of Glauber dynamics is $\Delta^{O(\Delta^2)} \cdot n \log n$, where $\Delta$ is the maximum degree of the graph. The mixing time may not be polynomial when the maximum degree is large. Chen and Zhang \cite{CZ23} gave a \emph{sampling algorithm} (not Glauber dynamics) in time $ m \cdot (\log n)^{C}$, where the exponent $C$ is a constant depending on $(p_e)_{e \in E}$ and $\delta_\lambda$.
By a standard comparison argument, the technique in~\cite{CZ23} also leads to $\tilde{O}(m^2)$ mixing time of the Glauber dynamics.

It is well-known that the random cluster model is a monotone system.
Our first application is a near-linear mixing time of the Glauber dynamics on the random cluster model with consistently biased external fields.

\begin{theorem}\label{thm:RC-model}
    Let $\delta_p,\delta_\lambda\in (0,1)$ be constants.
    For any graph $G=(V,E)$ with $|V|=n$ and $|E|=m$, any $p\in[\delta_p,1)^E$, and any $\lambda\in [0,1-\delta_\lambda]^V$,
    the mixing time of Glauber dynamics on $\mu_{p,\lambda}^{\textnormal{RC}}$ starting from the empty set $\emptyset$ satisfies 
    \[T^{\textnormal{GD}}_{\textnormal{mix-}\emptyset}(\mu_{p,\lambda}^{\textnormal{RC}},\epsilon)\leq C(\delta_p,\delta_\lambda) \cdot m \cdot (\log n)^{O(1)} (\log \frac{1}{\epsilon})^2,\] 
    where $C(\delta_p,\delta_\lambda)$ is a constant depending on $\delta_p$ and $\delta_\lambda$.
\end{theorem}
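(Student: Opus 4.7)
The plan is to apply the general comparison theorem between Glauber dynamics and field dynamics (the main technical contribution promised in \Cref{sec:general-and-technical-results}) to the random cluster model $\mu_{p,\lambda}^{\textnormal{RC}}$. Three ingredients are needed: (i) monotonicity of $\mu_{p,\lambda}^{\textnormal{RC}}$ viewed as a distribution over $\{0,1\}^E$; (ii) an entropic independence statement (or equivalent mixing bound for the field dynamics) in the relevant parameter regime; and (iii) control on the ``starting point'' contribution that a comparison from field dynamics to Glauber dynamics incurs, so that initialization from $\emptyset$ is legitimate.

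For (i), I would verify that under the identification of $S\subseteq E$ with $\chi_S\in\{0,1\}^E$ and the partial order of \eqref{eq:order1}, the distribution $\mu_{p,\lambda}^{\textnormal{RC}}$ satisfies \Cref{def:monotone-system}: conditioning on the presence/absence of other edges only changes the factor $(1+\prod_{v\in C}\lambda_v)$ through merging of components, and a direct calculation (as in the FKG/monotonicity arguments already used in \cite{GJ18,feng2023swendsen}) shows $\mu^\sigma_e(1)$ is non-decreasing in $\sigma$. Because $\emptyset$ is the minimum element in this order, it plays the role of the extremal initialization analogous to $\*1_V$ in the Ising formulation.

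For (ii), the field dynamics mixing for $\mu_{p,\lambda}^{\textnormal{RC}}$ under the consistently-biased condition $\lambda\in [0,1-\delta_\lambda]^V$ and $p\in[\delta_p,1)^E$ was essentially established in \cite{CZ23} via entropic independence / approximate tensorization of entropy; concretely, one obtains an $O(\log(1/\theta))$ bound on the number of field-dynamics steps to contract entropy by a constant factor, where $\theta$ is the tilting parameter. I would recall this bound (possibly after re-stating it in the language of entropic independence used in \Cref{sec:general-and-technical-results}) and feed it into the comparison theorem.

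Finally, for (iii), applying the comparison theorem turns the field dynamics mixing bound into a bound on the Glauber dynamics mixing time starting from an extremal configuration. Here the initial state $\emptyset$ is crucial: the censoring-style stochastic dominance argument at the heart of the comparison only produces a clean one-sided bound when initialized at a monotone extreme, and the relevant extreme on the subset lattice is $\emptyset$ (no edges), dual to $\*1_V$ for the lattice-gas formulations. Combining (i)--(iii) with the $O(m\,\mathrm{polylog}\, n)$ per-round cost and the standard amplification from mixing to $\epsilon$-mixing (which introduces one factor of $\log(1/\epsilon)$ plus an additional factor coming from the entropy-to-TV conversion, yielding the $(\log(1/\epsilon))^2$ in the bound) produces the claimed $C(\delta_p,\delta_\lambda)\cdot m\cdot (\log n)^{O(1)}(\log(1/\epsilon))^2$ mixing time.

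The main obstacle is the comparison step (iii): field dynamics updates a random \emph{adaptive} subset $S\subseteq E$ drawn in a way that depends on the current configuration and then resamples $X_S$ from a tilted conditional distribution, so one cannot appeal to the classical censoring inequality as one does for block dynamics. The plan is to exploit monotonicity to construct a grand coupling between the Glauber chain and a ``censored'' variant aligned with the field dynamics, and argue via stochastic dominance that the censored chain is sandwiched between two monotone Glauber chains started at the extremes, thereby transferring the entropic contraction of the field dynamics to a spectral/entropic contraction of Glauber dynamics up to polylogarithmic overhead. All steps beyond (i)--(ii) are encapsulated in the general comparison theorem; the work specific to this application is simply checking that the random cluster model with the stated parameters fits its hypotheses.
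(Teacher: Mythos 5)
Your high-level skeleton (monotonicity, entropic independence, apply the comparison theorem) does match the paper's structure, but there are two concrete gaps.

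First, the comparison theorem is stated for the Glauber dynamics started at the \emph{maximum} configuration $\*1_V$, and the field dynamics in \Cref{def:field-dynamics} tilts by choosing the random update set $S$ based on 1s and resampling with an all-1 boundary condition. Your plan to start from $\emptyset$ and call it the ``extremal'' configuration ``dual to $\*1_V$'' is not yet an argument: the theorem as written does not directly cover the minimum. The paper resolves this by passing to the flipped distribution $\bar{\mu}_{p,\lambda}^{\textnormal{RC}}$ (\Cref{def:flipped-distribution}), proving monotonicity of the flip (\Cref{lem:monotone-flipped}), and applying \Cref{thm:main-comparison} to $\bar{\mu}$ starting from $\*1_E$, which corresponds to $\emptyset$ for $\mu$. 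You would need to carry out this flip explicitly (and verify that the flipped random cluster model is still a monotone system, and that its tilted conditional distributions inherit the rapid-mixing bound), rather than informally invoking duality.

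Second, and more substantively, your plan in step (ii) to simply ``recall'' the entropic independence bound of Chen--Zhang gives the wrong quantitative conclusion. Their bound yields a mixing time of the form $m\cdot(\log n)^{O(1/\delta_\lambda)}$, whereas the theorem you are proving claims a \emph{universal} exponent $(\log n)^{O(1)}$. The paper explicitly points out this discrepancy in \Cref{sec:proofs-of-general-results-and-applications} and supplies a sharper two-piece entropic independence profile $\alpha(t)$ in \eqref{eq:alpha-t}: it only pays the $(1-\lambda_{\max})^{-2}$-dependent constant over the short interval $[0,\log(1/\theta_0)]$, and over the remaining (and dominant) interval $(\log(1/\theta_0),\log(1/\theta)]$ proves a constant-independent-of-$\delta_\lambda$ bound via a new coupling-independence estimate (\Cref{lem:CI-rc-large-lambda}) combined with marginal stability (\Cref{lem:rc-ms}) and \Cref{lem:SIandMStoEI}. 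Without this refinement, the integral $\int_0^{\log(1/\theta)} 4\alpha(t)\,dt$ scales like $(1-\lambda_{\max})^{-2}\cdot \log\log n$, and exponentiating puts $\delta_\lambda$ in the exponent of $\log n$. So while your proof plan is structurally aligned with the paper, it would, as written, only prove the weaker $m\cdot(\log n)^{O(1/\delta_\lambda)}$ bound already implicit in prior work, not the stated theorem.
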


Compared to the results in \cite{feng2023swendsen,CZ23}, our mixing time is $\tilde{O}(m)$ for all (possibly unbounded degree) graphs. In terms of a sampling algorithm, our result yields an algorithm with constant error in time $C(\delta_p,\delta_\lambda)\cdot m \cdot (\log n)^{O(1)}$, where the exponent of $\log n$ is a universal constant. The algorithm in \cite{CZ23} runs in time $m \cdot (\log n)^{f(\delta_p,\delta_\lambda)}$. The constant $f(\delta_p,\delta_\lambda) \to \infty$ as $\delta_\lambda \to 0$. Hence, our algorithm is faster when external fields $\lambda_v$ are close to 1.

\paragraph{Bipartite hardcore model}

Let $G=(V,E)$ be a graph with maximum degree $\Delta\geq 3$ and $\lambda>0$ be the external fields.
The hardcore model $\mu^{\textnormal{HC}}_{G,\lambda}$ defines a distribution over all independent sets in graph $G$ such that
\begin{align*}
    \forall \text{ independent set } S \subseteq V, \quad \mu^{\textnormal{HC}}_{G,\lambda}(S)\propto \lambda^{|S|}.
\end{align*}
The uniqueness threshold of the hardcore model is given by
\begin{align*}
\lambda_c(\Delta)\defeq \frac{(\Delta-1)^{\Delta-1}}{(\Delta-2)^\Delta}\approx \frac{e}{\Delta}.
\end{align*}
The famous computational phase transition result shows that if $\lambda \leq \lambda_c(\Delta)$, then the hardcore model can be sampled from in polynomial time~\cite{weitz2006counting,chen2024rapid}; if $\lambda > \lambda_c(\Delta)$, then unless $\textbf{NP} = \textbf{RP}$, a polynomial time sampler does not exist~\cite{Sly10}.

If the graph $G=(V_L \uplus V_R,E)$ is bipartite, then the hardness result~\cite{Sly10} can not apply.
The bipartite hardcore model is closely related to the \#BIS problem (counting the number of independent sets in the bipartite graph), which is one of the most important open problems in sampling and approximate counting~\cite{dyer2004relative}. 
The sampling problem of the bipartite hardcore model was extensively studied both in high-temperature (small $\lambda$) regime \cite{liu2015fptas,Chen0Y23} and low-temperature (large $\lambda$) regime~\cite{HPR20,cannon2020counting,liao2019counting,chen2022sampling,jenssen2023approximate}.

For bipartite hardcore model, an important high-temperature regime is when the uniqueness condition is satisfied with respect to the degree on one side.
Let $\Delta_L$ denote the maximum degree of vertices in the left side $V_L$. Suppose $\lambda \leq (1-\delta)\lambda_c(\Delta_L)$ for some constant $\delta > 0$. Chen, Liu and Yin~\cite{Chen0Y23} proved that the mixing time of the Glauber dynamics for $\mu^{\textnormal{HC}}_{G,\lambda}$ is $C(\Delta_L, \lambda, \delta) \cdot n^3 (\log n)^{O(1/\delta)}$, where $C(\Delta_L, \lambda, \delta)$ depends only on $\Delta_L, \lambda$ and $\delta$, and $n = |V_L| + |V_R|$ is the number of vertices. 
Recently, Chen and Feng~\cite{chenfeng2024rapid} considered the \emph{balanced} bipartite hardcore model, which further requires the maximum degree $\Delta_R$ on the right side $V_R$ to satisfy $\Delta_R = O(\Delta_L)$. 
For Glauber dynamics starting from the independent set consisting of all vertices in one side, \cite{chenfeng2024rapid} proved the $C(\lambda, \delta) \cdot n (\log n)^{O(1/\delta)}$ mixing time.
It is well-known that the bipartite hardcore model is a monotone system.
We give the following mixing time result.

\begin{theorem}\label{thm:bhc-mixing-time}
Let $\delta > 0$ be a constant. Let $\Delta_L \geq 3$ and $\lambda > 0$ be two parameters satisfying $\lambda < (1-\delta)\lambda_c(\Delta_L)$. For any bipartite graph $G=(V_L \uplus V_R,E)$ with $n$ vertices and left side degree at most $\Delta_L$, the mixing time of Glauber dynamics for $\mu^{\textnormal{HC}}_{G,\lambda}$ starting from the independent set consisting of all vertices in $V_R$ satisfies 
\[T^{\textnormal{GD}}_{\textnormal{mix-}V_R}(\mu_{G,\lambda}^{\textnormal{HC}},\epsilon)\leq n^2 \cdot \left(\frac{1}{\lambda}\right)^{O(1/\delta)}  (\Delta_L\log n)^{O(1)} \cdot \log^3(1/\epsilon).\]
\end{theorem}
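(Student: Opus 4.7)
The plan is to realize the bipartite hardcore model as a monotone system under an appropriate ``flipped'' coordinate system, and then to feed this into the paper's new Glauber-vs.-field-dynamics comparison together with the entropic independence of the bipartite hardcore model under one-sided uniqueness established by Chen, Liu, and Yin.

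First, I would fix the partial order $\sigma \preceq \tau$ defined by $\sigma_u \geq \tau_u$ for every $u \in V_L$ and $\sigma_v \leq \tau_v$ for every $v \in V_R$. A direct inspection of the conditional marginals shows that $\mu_{G,\lambda}^{\textnormal{HC}}$ is monotone in this flipped order: either decreasing the occupied vertices in $V_L$ or increasing them in $V_R$ makes each right-side vertex more likely to be occupied and each left-side vertex less likely to be occupied, and both directions correspond to raising the configuration under the flipped order. With this convention, the initial configuration that occupies all of $V_R$ and none of $V_L$ becomes the maximum element $\*1_V$, matching the canonical starting state used in the paper's monotone framework.

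Second, under $\lambda < (1-\delta)\lambda_c(\Delta_L)$, the bipartite hardcore distribution satisfies $C(\lambda,\delta)$-entropic independence with $C(\lambda,\delta) = (1/\lambda)^{O(1/\delta)}$, and this estimate is stable under pinnings since pinning only removes vertices or decreases the effective left-degree, keeping the one-sided uniqueness hypothesis intact. Consequently, a field dynamics with activation parameter $\theta = \Theta(\lambda/(1+\lambda))$ mixes in $\tilde{O}(C(\lambda,\delta))$ rounds from the worst start, and each adaptive block update can be implemented in time polynomial in $\Delta_L$ and $\log n$. This is the fast, adaptive chain whose rate I want to transfer to Glauber.

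Third, I would apply the paper's general monotone-system comparison (promised in Section \ref{sec:general-and-technical-results}) to the pair (Glauber, field dynamics) on $\mu_{G,\lambda}^{\textnormal{HC}}$, started from $\*1_V$ in the flipped order. The comparison charges a polynomial-in-$n$ overhead in exchange for removing the adaptive, state-dependent structure of the field-dynamics rounds; combined with the $\tilde{O}(C(\lambda,\delta))$ field-dynamics mixing time, this produces the claimed bound $n^2 \cdot (1/\lambda)^{O(1/\delta)} \cdot (\Delta_L \log n)^{O(1)}$, with the factor $\log^3(1/\epsilon)$ following from standard mixing-time amplification. The main obstacle is precisely this comparison step: since the random subset $S$ and the tilted update of field dynamics are both coupled to the current state $X$, the classical censoring inequality cannot be applied directly, and the substantive work is to verify that the hardcore field dynamics — with the flipped monotone order and the preserved pinning-closure of entropic independence — meets the hypotheses of the paper's new comparison theorem. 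Everything else is either routine or quotable from prior work.
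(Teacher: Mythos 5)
Your setup — the flipped partial order that makes $V_R$ the ``all-one'' side, the monotonicity check, and the appeal to one-sided uniqueness for entropic independence — is all on the right track, and it does match the first half of the paper's argument. But the third step contains a genuine gap.

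The entropic independence estimate $(1/\lambda)^{O(1/\delta)}$ under one-sided uniqueness (Chen--Liu--Yin) is a statement about the \emph{marginal} distribution $\mu^{\textnormal{HC}}_{L,\lambda}$ on $V_L$, not about the joint distribution $\mu^{\textnormal{HC}}_{G,\lambda}$ on $V_L \cup V_R$. For the joint distribution, the spectral and entropic independence constants can degrade with the right-side degree $\Delta_R$, which is uncontrolled under the hypotheses of the theorem; this is precisely why \cite{chenfeng2024rapid} needs the balancing assumption $\Delta_R = O(\Delta_L)$ to get $\tilde{O}(n)$ mixing for the full chain. So applying \Cref{thm:main-comparison} directly to $\mu^{\textnormal{HC}}_{G,\lambda}$ as you propose does not verify its hypothesis, and you cannot conclude the entropic independence you need. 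Relatedly, your reading of \Cref{thm:main-comparison} as ``charging a polynomial-in-$n$ overhead'' that lands at $n^2$ does not match the theorem's form: the bound is $T \cdot \mixtl$ where $T$ comes from entropic independence and $\mixtl$ from the tilted Glauber dynamics, and for an $n$-variable monotone system with bounded entropic independence this gives $\tilde{O}(n)$, not $\tilde{O}(n^2)$.

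What the paper actually does is a two-level argument. First it applies \Cref{thm:main-comparison} to the \emph{marginal} distribution $\mu^{\textnormal{HC}}_{L,\lambda,\lambda}$ over $\{0,1\}^{V_L}$, where the entropic independence is clean and depends only on $\Delta_L$, yielding $\tilde{O}(n)$ mixing for the marginal Glauber dynamics (\Cref{thm:bhc-mixing-time-marginal}). Then it lifts this to the full chain on $V_L \cup V_R$: each marginal update at $v\in V_L$ is simulated by running $\tilde{O}(n)$ steps of Glauber dynamics on $\mu^{X_{V_L\setminus\{v\}}}$, which lives on a graph of maximum degree $\Delta_L$ and so is in the uniqueness regime. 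The result is a censored Glauber dynamics whose censoring pattern depends only on the uniformly chosen $v$'s, \emph{not} on the current state, so the classical Peres--Winkler censoring inequality applies and gives $\tilde{O}(n^2)$ for the uncensored chain. The $n^2$ is the product of the outer $\tilde{O}(n)$ (marginal mixing via the new comparison) and the inner $\tilde{O}(n)$ (per-step simulation). You need this nesting-plus-classical-censoring step; the direct comparison on the joint distribution is not available without an entropic independence bound for the joint that you have not established and which, in general, fails to be independent of $\Delta_R$.
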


Compared to the result in \cite{Chen0Y23}, our result improves the mixing time from $\tilde{O}(n^3)$ to $\tilde{O}(n^2)$.
Compared to the result in \cite{chenfeng2024rapid}, our result works for general graphs. However, \cite{chenfeng2024rapid} gave a faster $\tilde{O}(n)$ mixing time in the balanced case.

\begin{remark}[Mixing time for marginal distribution and sampling algorithm]
Chen, Liu and Yin~\cite{Chen0Y23} also studied the problem of sampling from $\mu^{\textnormal{HC}}_{L,\lambda}$, which is the \emph{marginal distribution} of $\mu^{\textnormal{HC}}_{G,\lambda}$ projected on the left side $V_L$. 
Formally, $S \cap V_L \sim \mu^{\textnormal{HC}}_{L,\lambda}$ if $S \sim \mu^{\textnormal{HC}}_{G,\lambda}$.
They gave a \emph{sampling algorithm} (not Glauber dynamics) in near-linear time $C(\Delta_L, \lambda, \delta) \cdot n \cdot (\log n)^{O(1/\delta)}$. 
This marginal distribution is also a monotone system.
Our technique also gives the near-linear mixing time of the Glauber dynamics on $\mu^{\textnormal{HC}}_{L, G,\lambda}$ starting from the empty set:
\[T_{\textnormal{mix-}\emptyset}(\mu_{L,\lambda}^{\textnormal{HC}},\epsilon)\leq C(\Delta_L, \lambda, \delta) \cdot n \cdot (\log n)^{O(1)} \log^2\frac{1}{\epsilon} = \tilde{O}_{\Delta_L,\lambda,\delta}(n).\]
See \Cref{thm:bhc-mixing-time-marginal} for more details. 
Our mixing result for marginal distribution immediately gives a sampling algorithm in near-linear time $C(\Delta_L, \lambda, \delta) \cdot n (\log n)^{O(1)}$. Compared to the sampling algorithm in \cite{Chen0Y23}, our algorithm improves the degree of the polylog$(n)$ factor from $O(1/\delta)$ to a universal constant $O(1)$.
\end{remark}

\subsection{General and technical results}\label{sec:general-and-technical-results}

\paragraph{Monotone spin systems with entropic independence}
Our general result is a comparison theorem, which works for monotone systems satisfying entropic independence. Let $\mu$ be a distribution with support $\Omega \subseteq \{0,1\}^V$. For any distribution $\nu$ over $\Omega$, define their KL-divergence as $\mathrm{D}_{\textnormal{KL}}(\nu\parallel \mu)= \sum_{\sigma\in \Omega}\nu(\sigma)\log \frac{\nu(\sigma)}{\mu(\sigma)}$. The following notion was introduced in~\cite{AJKPV22}.
\begin{definition}[Entropic independence \cite{AJKPV22}]
    Let $\alpha>0$. A distribution $\mu$ with support $\Omega \subseteq \{0,1\}^V$ is said to be $\alpha$-entropically independent if for any distribution $\nu$ over $\Omega$,
    \begin{align}\label{eq:entropic-independence}
    \sum_{i\in V}\mathrm{D}_{\textnormal{KL}}(\nu_i\parallel \mu_i)\leq \alpha \mathrm{D}_{\textnormal{KL}}(\nu\parallel \mu),
    \end{align}
    where $\mu_i$ and $\nu_i$ are the marginal distributions of $\mu$ and $\nu$ on variable $i \in V$ respectively. 
\end{definition}

Our general result compares the mixing time of Glauber dynamics for $\mu$ to the mixing time of the Glauber dynamics for a tilted distribution.

\begin{definition}[Tilted distributions]
    Let $\mu$ be a distribution with support $\Omega \subseteq \{0,1\}^V$ and $\theta > 0$ be a real number. The tilted distribution $(\theta * \mu)$ is defined as 
    \begin{align}
    \forall \sigma \in \Omega, \quad (\theta * \mu)(\sigma) \propto \mu(\sigma)\theta^{\normone{\sigma}},
    \end{align}
    where $\normone{\sigma}$ is the number of 1's in $\sigma$.
\end{definition}

The tilted distribution $(\theta * \mu)$ puts an external field on each vertex. For example, if $\mu$ is a Hardcore model on graph $G=(V,E)$ with parameter $\lambda$, then $(\theta * \mu)$ is a hardcore model on $G$ with parameter $\theta \lambda$.
For many natural distributions, it is easy to analyze the mixing time of the Glauber dynamics on $(\theta * \mu)$ for a proper $\theta \in (0,1)$. 
We next define the mixing time of the Glauber dynamics for tilted distributions with pinnings. For any $\epsilon \in (0,1)$, define
\begin{align}\label{def:mixtl}
   \mixtl(\mu,\theta,\epsilon) \defeq \max\left\{\mixgd((\theta * \mu)^{\*1_{\Lambda}},\epsilon) \mid \Lambda \subseteq V \text{ and } \mu_\Lambda(\*1_\Lambda) > 0\right\},
\end{align}
where $(\theta * \mu)^{\*1_{\Lambda}}$ is the distribution of $\theta * \mu$ conditional on the values on $\Lambda$ are all pinned as 1 and for any distribution $\nu$, we use $\mixgd(\nu,\epsilon)$ to denote the mixing time of the Glauber dynamics $P$ on $\nu$ starting from an \emph{arbitrary} feasible configuration:
\begin{align}
    \mixgd(\nu,\epsilon) = \max_{X \in \{0,1\}^V: \nu(X) > 0} \min\{t \in \mathbb{N} \mid \DTV{P^t(X,\cdot)}{\nu}\leq \epsilon\}.\label{eq:mixgd}
\end{align}
We remark that the mixing time in~\eqref{eq:mixgd} is stronger than the mixing time in~\eqref{eq:mix-init} and we use the stronger mixing time to define the mixing time of tilted distributions in~\eqref{def:mixtl}. 

Our general result is a comparison theorem between $\mixtl(\mu,\theta,\epsilon)$ and $\mixmax(\mu,\epsilon)$ when $\mu$ is a monotone system satisfying entropic independence.

\begin{theorem}\label{thm:main-comparison}
Let $\mu$ be a monotone system with support $\Omega \subseteq \{0,1\}^V$. Let $\theta \in (0,1)$. Let $\alpha:[0,-\log \theta] \to \mathbb{R}_{> 0}$ be an integrable function.
Suppose for any $t \in [0,-\log \theta]$, for any $\Lambda \subseteq V$ such that $\mu_{\Lambda}(\*1_\Lambda) > 0$, the distribution $(e^{-t} * \mu)^{\*1_\Lambda}$ is $\alpha(t)$-entropically independent.
Then, the mixing time of the Glauber dynamics on $\mu$ starting from the maximum all-1 state $\*1_V$ satisfies 
\begin{align*}
    \mixmax(\mu,\epsilon) \leq T \cdot \mixtl\tp{\mu,\theta,\frac{\epsilon}{2T}},
\end{align*}
where
\begin{align*}
    T = O\tp{\exp \tp{\int_{0}^{1/\theta}4\alpha(t)dt } \cdot \tp{\log \log \frac{1}{\mu_{\min}} + \log \frac{1}{\epsilon}}} \text{ and } \mu_{\min} = \min_{\sigma \in \Omega} \mu(\sigma).
\end{align*}
\end{theorem}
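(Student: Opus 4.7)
The plan is to analyze a discrete field dynamics $P_{\mathrm{FD}}^\theta$ for $\mu$, show it mixes in few steps via entropic independence, and then simulate each of its steps by running Glauber on a tilted pinned distribution; the novel ingredient is arguing, via monotonicity, that Glauber dynamics on $\mu$ started from $\*1_V$ is no slower than this simulated field dynamics. A natural candidate for $P_{\mathrm{FD}}^\theta$ is the down-up walk that, from state $X$ with one-set $A=\{v:X_v=1\}$, keeps each $v\in A$ independently with probability $1-\theta$ to form a subset $\Lambda\subseteq A$ and then resamples a fresh configuration from $(\theta*\mu)^{\*1_\Lambda}$. A direct detailed-balance check shows that this chain is reversible with stationary distribution $\mu$, and each of its steps can be implemented within TV error $\eta$ by running Glauber on $(\theta*\mu)^{\*1_\Lambda}$ for $\mixtl(\mu,\theta,\eta)$ iterations.

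For the entropy-contraction bound on $P_{\mathrm{FD}}^\theta$, I would adopt a continuous interpolation along the field parameter $\theta(t)=e^{-t}$ for $t\in[0,-\log\theta]$. Tracking $\mathrm{D}_{\mathrm{KL}}(\nu_t\parallel(\theta(t)*\mu)^{\*1_\Lambda})$ along this flow, the $\alpha(t)$-entropic independence assumption on $(e^{-t}*\mu)^{\*1_\Lambda}$ yields a differential inequality whose KL-decay rate is controlled pointwise by $\alpha(t)$. Integrating this Gronwall-style bound over the whole interval and converting KL decay to TV mixing via Pinsker, together with the standard $\log\log(1/\mu_{\min})$ bound on the worst-case initial divergence from $\*1_V$, shows that $T=O\!\left(\exp\!\left(\int 4\alpha(t)\,dt\right)\cdot(\log\log(1/\mu_{\min})+\log(1/\epsilon))\right)$ ideal field-dynamics steps suffice to reach TV distance $\epsilon/2$. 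Splitting the target error evenly between the ideal mixing contribution and the simulation error accumulated across $T$ field-dynamics rounds (each of tolerance $\epsilon/(2T)$) then gives the product bound $T\cdot\mixtl(\mu,\theta,\epsilon/(2T))$.

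The key remaining step, and the novel part advertised in the introduction, is to bound Glauber dynamics on $\mu$ started at $\*1_V$ by this simulated field dynamics. Because the two algorithms do not share a transition kernel, the classical censoring inequality does not apply directly; instead I would construct a grand coupling between (i) Glauber on $\mu$ from $\*1_V$ and (ii) the simulated field dynamics from $\*1_V$, and use the monotonicity of $\mu$ together with the all-ones pinning structure of $(\theta*\mu)^{\*1_\Lambda}$ to maintain a stochastic domination in the partial order $\preceq$ at every step of both the outer down-up moves and the inner tilted-Glauber subroutine. Stochastic domination from above, combined with the fact that $\*1_V$ is the maximum state, then transfers the TV upper bound for the simulated chain to the Glauber chain on $\mu$.

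The main obstacle will be precisely this coupling. The correlation between the adaptively chosen subset $\Lambda$ and the current configuration, together with the fact that the inner Glauber chain targets a distribution different from $\mu$, means the coupling has to route the inner moves so that the monotone ordering is preserved throughout each simulation segment, not only at its endpoints. This is where the paper's hybrid of censoring-style stochastic dominance with entropic-independence/high-dimensional-expander techniques does the real work; once that comparison is in place, everything else reduces to putting together the entropy-decay estimate, Pinsker's inequality, and the triangle inequality for the simulation error.
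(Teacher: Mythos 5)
Your overall architecture matches the paper's: a field-dynamics outer chain whose mixing is controlled via entropic independence and the continuous ``negative-field'' interpolation in $e^{-t}$ (this is exactly \Cref{thm:SIandMS-KL-Decay}), each outer step simulated by tilted Glauber to tolerance $\epsilon/(2T)$, and the result reduced to a comparison between Glauber on $\mu$ from $\*1_V$ and the simulated chain (\Cref{thm:FD-comparison}, \Cref{lem:simu-error}). Where the proposal has a genuine gap is exactly where you flag it: the comparison step is sketched as ``a grand coupling that maintains stochastic domination at every step of both the outer down-up moves and the inner tilted-Glauber subroutine,'' but you do not say how to build such a coupling, and the paper's own analysis shows that the most natural version of this idea does not work.

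Concretely: the paper does \emph{not} couple Glauber on $\mu$ directly against the simulated field dynamics. It lifts to the state space $\{0,1,\star\}^V$ (so the ``current set $\Lambda$'' becomes part of the state and the simulation algorithm becomes a time-inhomogeneous Markov chain), and then works in the Fill--Kahn framework with the comparison relation $\preceq_{\mathrm{mc}}$ rather than a pathwise grand coupling. Even in that framework, a step-by-step comparison of the \emph{standard} lifted Glauber chain $P_{\pi\textnormal{-GD}}$ against the simulation kernel $P^{(t)}_{\textnormal{alg}}$ provably fails at the phase boundaries (\Cref{sec:counterexample}): at the contract-and-lift step one has $P^{(t)}_{\textnormal{alg}}=P_{\textnormal{cl}}\psim$, and the relation $P_{\pi\textnormal{-GD}}\preceq_{\mathrm{mc}}P_{\textnormal{cl}}\psim$ is false. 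Your grand-coupling plan, which enforces an ordering at every step, is precisely this naive pointwise comparison and would hit the same wall. The paper's resolution is to replace the Glauber chain by a \emph{modified} time-inhomogeneous chain $P^{(t)}_{\pi\textnormal{-mGD}}$ that also applies $P_{\textnormal{cl}}$ at phase boundaries; this leaves the sequence of Glauber marginals $\pi^{(t)}_{\textnormal{GD}}$ unchanged thanks to the observation that every $\pi^{(t)}_{\textnormal{GD}}$ is stationary for $P_{\textnormal{cl}}$ (\Cref{lem:many-stationary}), and it makes the two time-inhomogeneous kernels comparable (\Cref{lem:phase-comparison}, reducing to $P_{\pi\textnormal{-GD}}\preceq_{\mathrm{mc}}\psim$, which is then verified vertex by vertex). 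Without this device, or some replacement for it, the comparison step in your proposal does not go through as written.
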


In applications, by appropriately choosing the parameter $\theta$, we can guarantee that $(\theta * \mu)$ is in a non-critical region so that  $\mixtl(\mu,\theta,\epsilon)$ is easy to bound. Typically, $\mixtl(\mu,\theta,\epsilon) = n \cdot \mathrm{polylog}\frac{n}{\epsilon}$ and $\mu_{\min} = e^{-\Theta(n)}$. Hence, if we can verify the entropic independence condition and show that $\exp \tp{\int_{0}^{1/\theta}4\alpha(t)dt } \leq \mathrm{polylog}(n)$, then \Cref{thm:main-comparison} gives a near-optimal $n \cdot \mathrm{polylog}\frac{n}{\epsilon}$ mixing time for the Glauber dynamics starting from the maximum all-1 state $\*1_V$.

Using field dynamics, previous works~\cite{CFYZ21,CFYZ22,ChenE22} compare the \emph{spectral gap} and \emph{modified log-Sobolev (mLS) constants} between the Glauber dynamics on the critical and non-critical regimes.
Our result \Cref{thm:main-comparison} directly compares the mixing time of Glauber dynamics on monotone systems. 
The spectral gap may lead to a sub-optimal mixing time bound and the mLS constants are sometimes hard to analyze even in the non-critical regime.
Compared to previous works, our comparison result is easier to apply and can obtain the near-optimal mixing time bound for many distributions. However, our result only works for monotone systems.

In \cite{chenfeng2024rapid}, a mixing time comparison result is proved for Glauber dynamics in the critical and non-critical regimes on monotone systems. 
However, their non-critical regime is defined in terms of certain conditional distributions induced from $\mu$ while our non-critical regime is defined as the tilted distributions.
Due to this difference, our result works for many monotone systems for which it is unclear whether their result applies.
For example, our \Cref{thm:bhc-mixing-time} works for general bipartite graphs while their result only works for a special balanced bipartite graphs.
From a technical perspective, the result in \cite{chenfeng2024rapid} is an application of the standard censoring inequality~\cite{peres2013can} while our result is obtained through a new comparison argument between Glauber dynamics and field dynamics.

\paragraph{Comparison through field dynamics}

The general comparison result in \Cref{thm:main-comparison} is proved through a Markov chain called \emph{field dynamics}. 
Field dynamics was first introduced in~\cite{CFYZ21}. Later, it was used to prove the mixing time for many distributions whose parameters are in a critical regime~\cite{AJKPV22,CFYZ22,ChenE22,CZ23,Chen0Y23,chen2024rapid}.

\begin{definition}[Field dynamics \cite{CFYZ21}]\label{def:field-dynamics}
    Let $\theta \in (0,1)$ be a parameter. Let $\mu$ be a distribution with support $\Omega \subseteq \{0,1\}^V$. The field dynamics starts from an arbitrary feasible configuration $X \in \Omega$ and in every step, it updates the configuration $X$ as follows. 
    \begin{itemize}
        \item Sample a random subset $S \subseteq V$ by selecting each variable $v \in V$ independently with probability $p_v$ such that 
        \begin{align}\label{eq:field-dynamics-p}
             p_v = \begin{cases}
                \theta & \text{ if } X_v = 1;\\
                1 & \text{ if } X_v = 0.
            \end{cases}
        \end{align}
        \item Resample $X \sim (\theta * \mu)^{\*1_{V \setminus S}}$, where $(\theta * \mu)^{\*1_{V \setminus S}}$ is the distribution of $\sigma \sim (\theta * \mu)$ conditional on $\sigma_{V \setminus S}  = \*1_{V \setminus S}$.
    \end{itemize}
    \end{definition}
    
    Let $P_{\theta,\mu}$ denote the transition matrix of the field dynamics on $\mu$ with parameter $\theta$. Define the mixing time of the field dynamics starting from an arbitrary configuration\footnote{One can relax the requirement of the arbitrary starting configuration to the maximum all-1 state $\*1_V$. The result in \Cref{thm:FD-comparison} still holds for this relaxed definition of mixing time for the field dynamics.} $X \in \Omega$ as 
    \begin{align*}
        \mixfd(\mu,\theta,\epsilon) = \max_{X \in \Omega} \min\left\{t \in \mathbb{N} \mid \DTV{P_{\theta,\mu}^t(X,\cdot)}{\nu}\leq \epsilon\right\}.
    \end{align*}

\begin{theorem}\label{thm:FD-comparison}
    Let $\mu$ be a monotone system with support $\Omega \subseteq \{0,1\}^V$ and $|V| = n$. Let $\theta \in (0,1)$. 
    The mixing time of the Glauber dynamics for $\mu$ starting from the maximum all-1 state $\*1_V$ satisfies the following bound
    \begin{align*}
        \mixmax(\mu,\epsilon) \leq \mixfd\tp{\mu,\theta,\frac{\epsilon}{2}} \cdot \mixtl\tp{\mu,\theta,\delta}, \quad \text{where } \delta = \frac{\epsilon}{2 \mixfd(\mu,\theta,\frac{\epsilon}{2})}.
    \end{align*}
    \end{theorem}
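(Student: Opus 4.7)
The plan is to factor the mixing analysis through a two-level chain that approximates the field dynamics by running tilted Glauber as an inner subroutine, and then to use monotonicity to relate this back to Glauber on $\mu$ started from $\*1_V$. Set $T_1 := \mixfd(\mu,\theta,\epsilon/2)$ and $T_2 := \mixtl(\mu,\theta,\delta)$ with $\delta := \epsilon/(2T_1)$; the goal is to show $\mixmax(\mu,\epsilon) \le T_1 T_2$. First I would introduce a simulated field dynamics chain $\widehat{P}$ whose transition mirrors one step of \Cref{def:field-dynamics} exactly, except that the ideal resampling $X_S \sim (\theta * \mu)^{\*1_{V \setminus S}}$ is replaced by $T_2$ steps of Glauber on $(\theta * \mu)^{\*1_{V \setminus S}}$ started at $X_S$, with $V \setminus S$ fixed at $1$ throughout. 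By the definition of $\mixtl$, one step of $\widehat{P}$ differs from an exact field dynamics step by at most $\delta$ in total variation, uniformly in the starting state. A triangle inequality telescoped across $T_1$ compositions, combined with $T_1$ being a mixing time of the exact field dynamics, then gives $\DTV{\widehat{P}^{T_1}(\*1_V,\cdot)}{\mu} \le T_1 \delta + \epsilon/2 = \epsilon$.

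The crux, and the main obstacle, is the comparison step: showing that Glauber on $\mu$ started from $\*1_V$ and run for $T_1 T_2$ steps is at least as close to $\mu$ in total variation as $\widehat{P}^{T_1}(\*1_V,\cdot)$. I would organize the $T_1 T_2$ Glauber updates into $T_1$ super-stages of length $T_2$, and build a joint coupling in which at the start of each super-stage both chains adaptively draw the same set $S$ according to the field dynamics rule applied to the current Glauber state, and inside the super-stage the Glauber updates on $\mu$ are monotonically coupled to the tilted Glauber updates of $\widehat{P}$. Monotonicity would be used in two complementary ways: first through the Peres--Winkler censoring inequality, which, since $\*1_V$ is the maximum state and $\mu$ is monotone, guarantees that removing updates cannot accelerate convergence, so the uncensored Glauber-on-$\mu$ trajectory is at least as close to $\mu$ as any censored variant; second through a stochastic domination argument exploiting the fact that the tilt $\theta<1$ only pushes the tilted chain downward relative to the untilted one, so the former cannot reach $\mu$ from above faster than the latter.

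The delicate points I expect to wrestle with in Step~2 are: (i) tilted and untilted Glauber have distinct stationary measures, so the coupling must be engineered to control only the terminal-time TV distance to $\mu$ rather than to couple trajectories pointwise; and (ii) the subset $S$ at each super-stage is chosen adaptively from the current state, so the classical Peres--Winkler censoring statement does not directly apply and one needs an adaptive, state-dependent analog for monotone systems matched to the field dynamics rule for sampling $S$ (namely, keeping all $0$-vertices and keeping each $1$-vertex with probability $\theta$). Establishing this adaptive censoring lemma, and verifying that the monotone coupling inside each super-stage indeed preserves the required stochastic domination across super-stage boundaries, will be the main technical content of the proof.
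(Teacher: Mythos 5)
Your high-level scaffold matches the paper's: you set $T_1 = \mixfd(\mu,\theta,\epsilon/2)$ and $T_2 = \mixtl(\mu,\theta,\delta)$, define a chain $\widehat P$ that approximates one field-dynamics step by $T_2$ inner Glauber steps on the tilted measure, telescope the error to get $\DTV{\widehat P^{T_1}(\*1_V,\cdot)}{\mu}\le\epsilon$ (this is \Cref{lem:simu-error}), and then try to show $T_1T_2$ steps of plain Glauber on $\mu$ from $\*1_V$ are no worse. That last comparison is where your proposal stops short: you explicitly flag the two obstacles (the subset $S$ is chosen adaptively from the current state, and the inner updates use the tilted conditional $(\theta*\mu)^{\*1_{V\setminus S}}$ rather than the $\mu$-conditional), and you correctly note that Peres--Winkler censoring does not cover this case, but you leave the resolution as ``an adaptive, state-dependent analog'' to be established. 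That is exactly the missing content, not a routine verification.

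The paper's resolution hinges on two ideas your sketch does not have, and both are essential. First, the ternary lift: rather than keeping $S$ as side information in a coupling, the paper enlarges the state space to $\{0,1,\star\}^V$ with ordering $0<1<\star$, so that the set $S=\{v: X_v\ne\star\}$ is encoded inside the configuration and the simulation algorithm becomes a genuine (time-inhomogeneous) Markov chain $(P_{\textnormal{alg}}^{(t)})$ on the lifted space. The stationary measure becomes $\pi = \textsf{lift}(\mu)$, both chains are shown to lie in $\+{MC}_\pi$ (reversible and stochastically monotone, \Cref{lem:properties-of-transition-matrices}), and the comparison is carried out with Fill--Kahn's comparison-inequality relation $\mc$ rather than censoring. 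Second, and more subtly, the paper does \emph{not} compare $P_{\textnormal{alg}}^{(t)}$ against the plain lifted Glauber $P_{\pi\textnormal{-GD}}$: Appendix \ref{sec:counterexample} gives an explicit counterexample showing $P_{\pi\textnormal{-GD}}\mc P_{\textnormal{alg}}^{(t)}$ is false. Instead it uses \Cref{lem:many-stationary}, the observation that every $\pi^{(t)}_{\textnormal{GD}}$ is stationary for the contract-then-lift chain $P_{\textnormal{cl}}$, to rewrite the Glauber trajectory as driven by a \emph{modified}, time-inhomogeneous chain $P_{\pi\textnormal{-mGD}}^{(t)}$ that matches the block structure of $P_{\textnormal{alg}}^{(t)}$; only then does the term-by-term comparison $P_{\pi\textnormal{-mGD}}^{(t)}\mc P_{\textnormal{alg}}^{(t)}$ (\Cref{lem:phase-comparison}) go through, via a careful sign analysis of the coefficients $A(\sigma^0),A(\sigma^1),A(\sigma^\star)$ (\Cref{claim:g-i-f}). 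Your informal argument that ``the tilt $\theta<1$ only pushes the tilted chain downward'' is also not quite right as stated: the tilted inner chain is pinned at $\*1_{V\setminus S}$, which pushes those coordinates up, and disentangling the two effects is precisely what the $\{0,1,\star\}$ ordering and \Cref{claim:g-i-f} accomplish. So the proposal identifies the right target and the right obstruction, but it has not supplied the mechanism that overcomes it.
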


    A proof outline of \Cref{thm:FD-comparison} is given in \Cref{sec:proof-outline}.

    The general result in \Cref{thm:main-comparison} is a corollary of the above technical result.
    In \Cref{thm:main-comparison}, the entropic independence condition implies that the mixing time of the field dynamics satisfies $\mixfd\tp{\mu,\theta,\frac{\epsilon}{2}} \leq T$. 
    Then \Cref{thm:FD-comparison} immediately implies \Cref{thm:main-comparison}. 
    The formal proof of \Cref{thm:main-comparison} assuming \Cref{thm:FD-comparison} is deferred to \Cref{sec:proof-GR}.

\section{Preliminaries: A general comparison technique}\label{sec:preliminaries}

In the preliminaries,
we give and slightly reformulate the comparison technique developed by Fill and Kahn \cite{FillK13}, which will be used as basic tools in our proof.
Let $\mu$ be a distribution with support $\Omega$. 
Let $\preceq$ be a partial order on $\Omega$. 
Define the following class of \emph{increasing} functions. Some literature also call them \emph{non-decreasing} functions.
\begin{definition}[Increasing  function]\label{def:increasing-function}
    A non-negative function $f:\Omega \to \mathbb{R}_{\geq 0}$ is said to be an increasing function if for any $X,Y \in \Omega$ with $X\preceq Y$, it holds that $f(X)\leq f(Y)$.
\end{definition}

The following the stochastic dominance relation between distributions is well-known.
\begin{definition}[Stochastic dominance] \label{def:stochastic-dominance}
    Let $\pi$ and $\pi'$ be two distributions over $\Omega$. The following two definitions of stochastic dominance (denoted by $\pi \preceq_{\text{sd}} \pi'$) are equivalent.
    \begin{itemize}
        \item There exists a coupling $(X,Y)$ of $\pi$ and $\pi'$, where $X \sim \pi$ and $Y \sim \pi'$, such that $X \preceq Y$ with probability 1.
        \item For any increasing function $f:\Omega \to \mathbb{R}_{\geq 0}$, it holds that $\mathbb{E}_{X \sim \pi}[f(X)] \leq \mathbb{E}_{Y \sim \pi'}[f(Y)]$.
    \end{itemize}
\end{definition}
The equivalence of the two definitions in \Cref{def:stochastic-dominance} is proved in \cite[Theorem 22.6]{levin2017markov}.

Fill and Kahn \cite{FillK13} also introduced a relation between different Markov chains.
In this paper, we restrict our attention to \emph{reversible} Markov chains.
A Markov chain with transition matrix $P$ is reversible with respect to $\mu$ if for any $\sigma,\tau \in \Omega$, $\mu(\sigma)P(\sigma,\tau) = \mu(\tau)P(\tau,\sigma)$. 
Then $\mu$ is a stationary distribution of $P$ such that $\mu P = \mu$, where $\mu$ is viewed as a row vector. We also view $P$ as a linear operator acting on column vectors.
\begin{definition}[Stochastically monotone chain]\label{def:stochastically-monotone}
A Markov chain $P$ is said to be stochastically monotone if for any increasing function $f:\Omega \to \mathbb{R}_{\geq 0}$, $Pf$ is also an increasing function.
\end{definition}

In the above definition, a function $f:\Omega \to \mathbb{R}_{\geq 0}$ is view as a column vector. The definition says that $P$, as a linear operator, preserves the increasing property of the function.
The following proposition gives two additional equivalent definitions of the stochastically monotone chain.
\begin{proposition}[\text{\cite[Prop. 22.7]{levin2017markov}}]\label{prop:stochastically-monotone-equivalent}
    The following statements are equivalent.
\begin{itemize}
        \item $P$ is stochastically monotone.
        \item for any two distributions $\nu \sd \nu'$, $\nu P \sd \nu' P$.
        \item for any pair of comparable $\sigma,\tau \in \Omega$ with $\sigma \preceq \tau$, $P(\sigma,\cdot) \sd P(\tau,\cdot)$.
\end{itemize}
\end{proposition}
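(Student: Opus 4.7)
The plan is to establish the three equivalences as a cycle (1) $\Rightarrow$ (3) $\Rightarrow$ (2) $\Rightarrow$ (1). The ordering reflects that the pointwise statement (3) is the easiest to extract from (1) by evaluating against increasing test functions, while (2) sits naturally in the middle since it upgrades the pointwise stochastic dominance to a distributional statement via a coupling argument, and (1) can then be recovered from (2) by feeding in Dirac point masses.

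For (1) $\Rightarrow$ (3), I would fix comparable $\sigma \preceq \tau$ and an arbitrary increasing $f:\Omega \to \mathbb{R}_{\geq 0}$. By hypothesis $Pf$ is increasing, so $(Pf)(\sigma) \leq (Pf)(\tau)$. Unpacking the definition of $P$ as a linear operator on column vectors, this reads $\E[Y \sim P(\sigma,\cdot)]{f(Y)} \leq \E[Y \sim P(\tau,\cdot)]{f(Y)}$. Since $f$ was an arbitrary non-negative increasing function, the second characterization of stochastic dominance in \Cref{def:stochastic-dominance} gives $P(\sigma,\cdot) \sd P(\tau,\cdot)$.

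For (3) $\Rightarrow$ (2), I would glue couplings. Given $\nu \sd \nu'$, the first characterization in \Cref{def:stochastic-dominance} supplies a coupling $(X,Y)$ with $X \sim \nu$, $Y \sim \nu'$, and $X \preceq Y$ almost surely. Conditional on $(X,Y) = (\sigma,\tau)$, invoke (3) to draw $(X',Y')$ from a monotone coupling of $P(\sigma,\cdot)$ and $P(\tau,\cdot)$, which exists because $\sigma \preceq \tau$. After marginalising over the joint construction, $X'$ has law $\nu P$ and $Y'$ has law $\nu' P$, and by construction $X' \preceq Y'$ almost surely, so $\nu P \sd \nu' P$.

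For (2) $\Rightarrow$ (1), I would test on point masses. For any comparable $\sigma \preceq \tau$ the Dirac distributions satisfy $\delta_\sigma \sd \delta_\tau$ trivially, so (2) yields $\delta_\sigma P \sd \delta_\tau P$, that is $P(\sigma,\cdot) \sd P(\tau,\cdot)$. For any increasing $f \geq 0$, this implies $(Pf)(\sigma) = \E[Y \sim P(\sigma,\cdot)]{f(Y)} \leq \E[Y \sim P(\tau,\cdot)]{f(Y)} = (Pf)(\tau)$, so $Pf$ is increasing and $P$ is stochastically monotone. The only subtle point throughout is the existence and measurability of the glued coupling in the step (3) $\Rightarrow$ (2); since $\Omega \subseteq \{0,1\}^V$ is finite, this reduces to a routine finite gluing and is not really an obstacle, which is why the result is stated as a reformulation of the classical fact.
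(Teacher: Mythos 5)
Your proof is correct, and it follows the standard textbook route; the paper itself does not supply a proof of this proposition but simply cites it from \cite[Prop.~22.7]{levin2017markov}, so there is no in-paper argument to compare against. Your cyclic decomposition (1) $\Rightarrow$ (3) $\Rightarrow$ (2) $\Rightarrow$ (1) is clean: the passage (1) $\Rightarrow$ (3) correctly reads $(Pf)(\sigma) \leq (Pf)(\tau)$ as an expectation inequality and invokes the functional characterization of $\sd$ from \Cref{def:stochastic-dominance}; the gluing argument for (3) $\Rightarrow$ (2) is the usual two-stage coupling and, as you note, the finiteness of $\Omega$ disposes of any measurability concerns; and testing (2) on Dirac masses to recover (1) is exactly right. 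The argument is essentially the one found in the cited reference.
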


Define the following family of Markov chains for distribution $\mu$
\begin{align}\label{eq:mc-set}
    \+{MC}_\mu = \{ \text{Markov chains that are reversible w.r.t. $\mu$ and stochastically monotone}\}.
\end{align}
Define the following comparison inequality relation $\preceq_{\text{mc}}$ over the Markov chains space $\+{MC}_\mu$.
\begin{definition}[Comparison inequality relation~\cite{FillK13}]\label{def:comparison-inequality-relation}
   For any $P,Q \in \+{MC}_\mu$, we say that $P\mc Q$ iff for any distribution $\nu$ such that the function $\frac{\nu}{\mu}:\Omega \to \mathbb{R}_{\geq 0}$ is increasing, $\nu P \preceq_{\text{sd}} \nu Q$.
\end{definition}
The notation $\frac{\nu}{\mu}$ denotes a function from $\Omega$ to $\mathbb{R}_{\geq 0}$ defined by $\frac{\nu}{\mu}(\sigma) = \frac{\nu(\sigma)}{\mu(\sigma)}$ for any $\sigma \in \Omega$. 
\Cref{def:comparison-inequality-relation} is different from the one in \cite{FillK13}, where they define $P \preceq_{\text{mc}} Q$ iff for any two increasing functions $f,g:\Omega \to \mathbb{R}_{\geq 0}$, $\langle Pg, f \rangle_\mu \leq \langle Qg, f \rangle_\mu$, where $\langle \cdot, \cdot \rangle_\mu$ is the weighted inner product defined by $\langle h, \ell \rangle_\mu = \sum_{\sigma \in \Omega} h(\sigma)\ell(\sigma)\mu(\sigma)$. 
For reversible Markov chains, two definitions are equivalent, we give a proof in Appendix \ref{sec:proofs-for-comparison-inequality-relation}.

It is straightforward to verify that the family $\+{MC}_{\mu}$ is closed under linear combination and composition, i.e., if $P,Q \in \+{MC}_{\mu}$, then $\lambda P + (1-\lambda)Q \in \+{MC}_{\mu}$ for any $0 \leq \lambda \leq 1$ and $P Q \in \+{MC}_{\mu}$.
Here, $PQ$ is matrix multiplication of $P$ and $Q$, which denotes the composition of the Markov chain that first applies the transition in $P$ and then the transition in $Q$.
The following properties of the comparison inequality are proved in \cite{FillK13}.
\begin{lemma}[\text{\cite[Prop. 2.3, 2.4]{FillK13}}]\label{lem:comparison-inequality-properties}
The following properties hold for Markov chains in $\+{MC}_\mu$.
\begin{itemize}
 \item If $P_0 \mc Q_0$ and $P_1 \mc Q_1$, then for any $0 \leq \lambda \leq 1$, $\lambda P_0 + (1-\lambda)P_1 \mc \lambda Q_0 + (1-\lambda)Q_1$.
 \item If $P_i \mc Q_i$ for any $1\leq i \leq \ell$, then $P_1P_2\ldots P_\ell \mc Q_1Q_2\ldots Q_\ell$.
\end{itemize}
\end{lemma}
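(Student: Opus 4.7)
The plan is to unpack the definition of $\mc$ and exploit the fact that, for a chain reversible with respect to $\mu$, the action on a density $\nu/\mu$ mirrors the action on the measure $\nu$ itself. Throughout, I fix a test distribution $\nu$ with $\nu/\mu$ increasing, since that is all that the relation $\mc$ sees.

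For the first item I would observe that $\nu(\lambda P_0 + (1-\lambda)P_1) = \lambda \nu P_0 + (1-\lambda) \nu P_1$, and similarly for $Q_0, Q_1$. By hypothesis $\nu P_i \sd \nu Q_i$ for $i = 0,1$; coupling these two dominations (independently is fine) and then mixing with weights $\lambda, 1-\lambda$ produces a coupling of $\lambda \nu P_0 + (1-\lambda) \nu P_1$ with $\lambda \nu Q_0 + (1-\lambda) \nu Q_1$ that is monotone with probability one, yielding the claim.

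For the second item it suffices, by induction on $\ell$ and the closure of $\+{MC}_\mu$ under composition, to handle $\ell = 2$. There I would chain two dominations
\[
\nu P_1 P_2 \;\sd\; \nu P_1 Q_2 \;\sd\; \nu Q_1 Q_2.
\]
The first link uses $P_2 \mc Q_2$ applied to the distribution $\nu P_1$, which requires that $(\nu P_1)/\mu$ itself be increasing. Reversibility of $P_1$ with respect to $\mu$ makes this automatic: for every $\sigma$,
\[
\frac{(\nu P_1)(\sigma)}{\mu(\sigma)} = \sum_\tau \frac{\nu(\tau)}{\mu(\tau)}\, P_1(\sigma,\tau) = \bigl(P_1(\nu/\mu)\bigr)(\sigma),
\]
and since $P_1 \in \+{MC}_\mu$ is stochastically monotone, it carries the increasing function $\nu/\mu$ to an increasing function, so $(\nu P_1)/\mu$ is increasing as required. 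The second link uses $P_1 \mc Q_1$ to get $\nu P_1 \sd \nu Q_1$ and then invokes stochastic monotonicity of $Q_2$ via \Cref{prop:stochastically-monotone-equivalent} to preserve $\sd$ under right multiplication by $Q_2$.

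The main obstacle is precisely the identity $(\nu P_1)/\mu = P_1(\nu/\mu)$: without $\mu$-reversibility, the pushforward $\nu P_1$ need not have an increasing density with respect to $\mu$, and the hypothesis $P_2 \mc Q_2$ would be inapplicable to it, breaking the induction. The joint assumption of reversibility and stochastic monotonicity built into the definition of $\+{MC}_\mu$ is exactly what ensures that the increasing-density property and the $\sd$-comparison propagate together along a product of chains.
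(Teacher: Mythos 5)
Your proof is correct and takes essentially the same approach as the paper: for the linear combination you invoke closure of $\sd$ under mixtures, and for the product you reduce to $\ell = 2$ and chain two stochastic dominations using reversibility (to transfer $(\nu R)/\mu = R(\nu/\mu)$) and stochastic monotonicity. The only cosmetic difference is that your intermediate distribution is $\nu P_1 Q_2$ whereas the paper routes through $\nu Q_1 P_2$; these are symmetric variants of the same argument and each requires the same combination of reversibility and monotonicity to justify.
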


The following comparison results are proved in \cite{peres2013can,FillK13}.
\begin{lemma}[\text{\cite[Thm. 2.5]{peres2013can} and \cite[Cor. 3.3]{FillK13}}]\label{lem:gen-compare}
For any two distributions $\pi$ and $\pi'$, if $\frac{\pi}{\mu}$ is an increasing function and $\pi \sd \pi'$, then $\DTV{\pi}{\mu} \leq \DTV{\pi'}{\mu}$.     

As a consequence, for two Markov chains $P,Q \in \+{MC}_{\mu}$, if $P \mc Q$, then for any distribution $\pi_0$ such that $\frac{\pi_0}{\mu}$ is increasing, it holds that $\DTV{\pi_0 P}{\mu} \leq \DTV{\pi_0 Q}{\mu}$.
\end{lemma}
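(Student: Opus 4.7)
The plan is to handle the two statements of \Cref{lem:gen-compare} in turn, with the second reducing cleanly to the first. For the first statement, I would exploit the variational characterization $\DTV{\pi}{\mu}=\max_{A\subseteq\Omega}(\pi(A)-\mu(A))$, attained at $A^{*}=\{\sigma\in\Omega:\pi(\sigma)\geq\mu(\sigma)\}$. The crucial observation is that since $\tfrac{\pi}{\mu}$ is increasing by hypothesis, $A^{*}$ is precisely the super-level set $\{\tfrac{\pi}{\mu}\geq 1\}$ and is therefore an up-set; equivalently, $\mathbbm{1}_{A^{*}}$ is an increasing function. Then the stochastic dominance $\pi\sd\pi'$, through the second characterization in \Cref{def:stochastic-dominance}, immediately yields $\pi(A^{*})\leq\pi'(A^{*})$, and chaining
\[
\DTV{\pi}{\mu}=\pi(A^{*})-\mu(A^{*})\leq \pi'(A^{*})-\mu(A^{*})\leq \DTV{\pi'}{\mu}
\]
closes the first part.

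For the second statement, my plan is to apply the first part with $\pi\leftarrow\pi_{0}P$ and $\pi'\leftarrow\pi_{0}Q$. Two things must be checked: that $\tfrac{\pi_{0}P}{\mu}$ is increasing, and that $\pi_{0}P\sd\pi_{0}Q$. The second is built directly into \Cref{def:comparison-inequality-relation} of $P\mc Q$, since $\tfrac{\pi_{0}}{\mu}$ is increasing by hypothesis. For the first, I would use reversibility of $P$ with respect to $\mu$ to rewrite, for any $\tau\in\Omega$,
\[
(\pi_{0}P)(\tau)=\sum_{\sigma\in\Omega}\frac{\pi_{0}(\sigma)}{\mu(\sigma)}\,\mu(\sigma)P(\sigma,\tau)=\mu(\tau)\sum_{\sigma\in\Omega}P(\tau,\sigma)\frac{\pi_{0}(\sigma)}{\mu(\sigma)}=\mu(\tau)\left(P\tfrac{\pi_{0}}{\mu}\right)(\tau),
\]
so $\tfrac{\pi_{0}P}{\mu}=P\tfrac{\pi_{0}}{\mu}$ pointwise. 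Since $\tfrac{\pi_{0}}{\mu}$ is increasing and $P\in\+{MC}_{\mu}$ is stochastically monotone (\Cref{def:stochastically-monotone}), the right-hand side is increasing, establishing the needed hypothesis of the first part.

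I expect no genuine technical obstacle: the whole argument is two short manipulations. The only subtle point, and the step I would write most carefully, is the transition between the row-vector view of $\pi_{0}P$ (which gives the distribution after one step) and the column-vector view $P\tfrac{\pi_{0}}{\mu}$ (which is where stochastic monotonicity naturally acts); reversibility of $P$ with respect to $\mu$ is exactly what mediates these two viewpoints. Once this identification is made, the increasing-set characterization of TV distance combined with the definitions of $\sd$ and $\mc$ produces the conclusion essentially line-by-line.
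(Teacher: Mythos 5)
Your proof is correct and follows essentially the same route as the paper: the first part via the increasing up-set $A^{*}$ (the paper uses a strict inequality in defining the set, but the TV distance is attained at either version), and the second part via the two observations that $\pi_{0}P\sd\pi_{0}Q$ comes directly from $\mc$ and that $\tfrac{\pi_{0}P}{\mu}=P\tfrac{\pi_{0}}{\mu}$ via reversibility (the paper's \Cref{lem:reversible-transition}) combined with stochastic monotonicity. The paper leaves the second part as a one-line pointer to those lemmas; you have spelled out exactly the intended deduction.
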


Since \Cref{def:comparison-inequality-relation} is an alternative (but equivalent) definition to the one in \cite{FillK13}, we prove \Cref{lem:comparison-inequality-properties} and \Cref{lem:gen-compare} in Appendix \ref{sec:proofs-for-comparison-inequality-relation} for completeness.

\section{Proof outline of the technical result}\label{sec:proof-outline}

In this section, we give an outline of the proof of the comparison result in \Cref{thm:FD-comparison}.

\subsection{A simulation algorithm of the field dynamics}


To prove \Cref{thm:FD-comparison}, we compare the Glauber dynamics with an algorithm that simulates the field dynamics. In the simulation algorithm, each resampling step of the field dynamics is implemented by running the Glauber dynamics for the tilted distribution.
To use the comparison technique in \Cref{sec:preliminaries}, we need to ensure that the simulation algorithm forms a Markov chain.
To achieve this, we extend the configuration space from $\{0,1\}^V$ to $\{0,1,\star\}^V$.
Define the following two basic operations.
\begin{itemize}
    \item \textbf{Lifting Operation}: The $\textsf{lift}:\{0,1\}^V \to \{0,1,\star\}^V$ is a \emph{random} function that maps any $X \in \{0,1\}^V$ to a random configuration $Y = \textsf{lift}(X) \in \{0,1,\star\}^V$ using the following procedure. For each $v \in V$, if $X_v = 0$, then $Y_v = 0$; if $X_v = 1$, then with probability $1-\theta$, set $Y_v = \star$, with rest probability $\theta$, set $Y_v = 1$.
    \item \textbf{Contracting Operation}: The $\textsf{contr}:\{0,1,\star\}^V \to \{0,1\}^V$ is a \emph{deterministic} function that maps any $Y \in \{0,1,\star\}^V$ to a configuration $X = \textsf{contr}(Y) \in \{0,1\}^V$ using the following procedure. For each $v \in V$, if $Y_v = 0$, then $X_v = 0$; if $Y_v \in \{1,\star\}$, then $X_v = 1$.
\end{itemize}
Define the following simulation algorithm for the field dynamics.

 \begin{tcolorbox}[colback=lightgray!20, colframe=lightgray!18, coltitle=black, title={\underline{Simulation Algorithm for Field Dynamics} $\+A_\mu(\theta,T_1,T_2)$}]
   \begin{enumerate}
        \item  Initialize with $X = \textsf{lift}(\*1_V)$, which is the lifted maximum all-1 state.
        \item \label{item:outer-loop} For each step $t$, from 1 to $T_1$, do the following:
       \begin{enumerate}
            \item \label{item:contracting} Contract $X \gets \textsf{contr}(X)$ to a configuration in $\{0,1\}^V$.
           \item \label{item:lifting} Lift $X \gets \textsf{lift}(X)$ to a configuration in $\{0,1,\star\}^V$. Let $S = \{v \in V \mid X_v \neq \star\}$.
            \item \label{item:inner-loop} Simulate the Glauber dynamics on distribution $(\theta *\mu)^{\*1_{V \setminus S}}$ for $T_2$ steps to update $X$. Specifically, repeat the following steps $T_2$ times:
            \begin{enumerate}
                \item \label{item:random-variable} Select a random variable $v \in V$ uniformly at random;
                \item \label{item:resample} If $v \in S$, resample $X_v \sim (\theta *\mu)^{\sigma_{V \setminus \{v\}}}_v$; if $v \notin S$, then keep $X_v = \star$, where $\sigma = \textsf{contr}(X)\in \{0,1\}^V$ is the contracting of $X$. 
            \end{enumerate}
        \end{enumerate}
        \item Return the random sample $\textsf{contr}(X)$ as the output.
    \end{enumerate}
\end{tcolorbox}

In the outer loop, as indicated in Line~\ref{item:outer-loop}, the algorithm simulates  the field dynamics starting from the all-1 configuration $\*1_V$ for $T_1$ steps.
Before Line~\ref{item:contracting}, $X$ is always a configuration in $\{0,1,\star\}^V$.
Then Line~\ref{item:contracting} contracts $X$ to a configuration in $\{0,1\}^V$.
Line \Cref{item:lifting} and Line \Cref{item:inner-loop} together approximately simulate one transition step of the field dynamics.
By \Cref{def:field-dynamics}, each transition step of field dynamics contains two sub-steps: (1) constructing the random set $S \subseteq V$; (2) Resampling $X \sim (\theta * \mu)^{\*1_{V \setminus S}}$.

\begin{itemize}
    \item \textbf{Construction step}. In Line~\ref{item:lifting}, the algorithm lifts $X \in \{0,1\}^V$ to a configuration in $\{0,1,\star\}^V$.
    This lifting operation is equivalent to the construction of the random set $S \subseteq V$ in the field dynamics, where $S$ is constructed by parameters in \eqref{eq:field-dynamics-p}.
    After the lifting operation, a vertex $v \in S$ if and only if $X_v \neq \star$. 
    \item \textbf{Resampling step}. In the inner loop, as indicated in Line~\ref{item:inner-loop}, the algorithm simulates the Glauber dynamics on $\nu = (\theta * \mu)^{\*1_{V \setminus S}}$ for $T_2$ steps, which aims to approximate the resampling step of the field dynamics.
    Specifically, in each step, the Glauber dynamics selects a vertex $v \in V$ uniformly at random.
    If $v \in S$, or equivalently $X_v \neq \star$, the Glauber dynamics resamples $X_v \sim \nu_v^{X_{S \setminus \{v\}}}$, which is the marginal distribution on $v$ projected from $\nu = (\theta * \mu)^{\*1_{V \setminus S}}$ conditional on the current state of the other variables in $S \setminus \{v\}$. 
    If $v \notin S$, or equivalently $X_v = \star$, it means the value of $v$ is fixed to be 1 in the distribution $\nu$, and the algorithm keeps $X_v = \star$.
    After $T_2$ steps of the inner loop, the algorithm assumes that the distribution of $X$ is close to the distribution $\nu=(\theta * \mu)^{\*1_{V \setminus S}}$.
    After simulating one step of the field dynamics, the algorithm contracts $X \in \{0,1,\star\}^V$ back to a configuration in $\{0,1\}^V$ in Line~\ref{item:contracting}.
\end{itemize}

Since we consider the lifted space $\{0,1,\star\}^V$,
the simulation algorithm $\+A_{\mu}(\theta,T_1,T_2)$ itself is a Markov chain $(X^{(t)}_{\textnormal{alg}})_{t=0}^{T_1T_2}$. Specifically, $X^{(0)}_{\textnormal{alg}} = \textsf{lift}(\*1_V)$.
For $1 \leq t \leq T_1T_2$, $X^{(t)}_{\textnormal{alg}}$ is the random configuration $X$ after the $t$-th execution of Line~\ref{item:resample}. 
This Markov chain is \emph{time-inhomogeneous} because the transition $X^{(t)}_{\textnormal{alg}} \to X^{(t+1)}_{\textnormal{alg}}$ follows different rules depending on whether $(t \mod T_2) = 0$.
See \Cref{fig:lifted-simulation-algorithm} for a simple example.

\begin{figure}[ht]
    \centering 
    \begin{tikzpicture}
        \pgfdeclarelayer{background}
        \pgfsetlayers{background,main}
        
        \draw[thick] (0, 0) -- (13, 0);
        
        \foreach \x in {0, 1, ..., 12} {
            \draw (\x, 0.1) -- (\x, -0.1);
            \node[below] at (\x, -0.1) {\x};
        }
        
        \begin{pgfonlayer}{background}
            \filldraw[fill=red!20, draw=red, thick, rounded corners, opacity=0.3] (0.8, -0.5) rectangle (4.2, 0.5);
            \filldraw[fill=green!20, draw=green, thick, rounded corners, opacity=0.3] (4.8, -0.5) rectangle (8.2, 0.5);
            \filldraw[fill=blue!20, draw=blue, thick, rounded corners, opacity=0.3] (8.8, -0.5) rectangle (12.2, 0.5);
        \end{pgfonlayer}
        
        \node[above, align=center] at (2.5, 0.5) {Glauber dynamics \\ for $(\theta * \mu)^{\*1_{V \setminus {S_1}}}$};
        \node[above, align=center] at (6.5, 0.5) {Glauber dynamics \\ for $(\theta * \mu)^{\*1_{V \setminus {S_2}}}$};
        \node[above, align=center] at (10.5, 0.5) {Glauber dynamics \\ for $(\theta * \mu)^{\*1_{V \setminus {S_3}}}$};
        
        \node[above, align=center] at (-1.3, -0.5) {start from \\ $X=\textsf{lift}(\*1_V)$};
        
        \def\hshift{-0.2}
        \foreach \i/\x/\textx in {1/1/2.5} {
            \node[below, align=center] at (\textx, -0.85) { Contract and lift $X$, \\construct  $S_{\i}$ from $X$};
            \draw[thick, ->, >=Stealth] (\x+\hshift, -0.95) arc[start angle=240, end angle=120, radius=0.5];
        }
        \foreach \i/\x/\textx in {2/5/6.5, 3/9/10.5} {
            \node[below, align=center] at (\textx, -0.85) {Contract and lift $X$, \\ construct  $S_{\i}$ from $X$};
            \draw[thick, ->, >=Stealth] (\x+\hshift, -0.95) arc[start angle=240, end angle=120, radius=0.5];
        }
    \end{tikzpicture}
    \caption{This figure illustrates the algorithm \(\+A_{\mu}(\theta,T_1,T_2)\) with \(T_1 = 3\) and \(T_2 = 4\). Initially, at time 0, let $X = X^{(0)}_{\text{alg}} = \textsf{lift}(\*1_V)$. Before time 1, the algorithm first contracts $X^{(0)}_{\text{alg}}$ and then lifts it. The algorithm next constructs $S_1 = \{i\in V \mid X_i \neq \star\}$. Then, the algorithm runs Glauber dynamics for the conditional distribution $(\theta * \mu)^{\*1_{V \setminus {S_1}}}$ for $T_2 = 4$ steps to obtain $X^{(t)}_{\text{alg}}$ for $1\leq t \leq 4$. The same procedure is repeated twice to obtain $X= X^{(12)}_{\text{alg}}$.}
    \label{fig:lifted-simulation-algorithm}
    \end{figure}
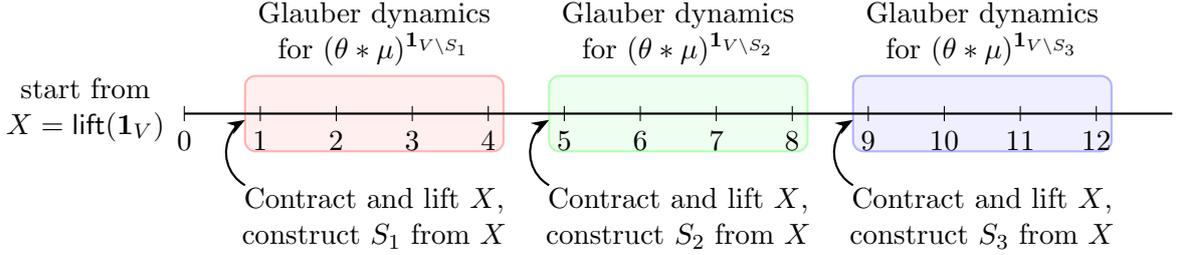

Using a coupling argument and a simple union bound, we can verify that the algorithm $\+A_\mu(\theta,T_1,T_2)$ produces an accurate sample if $T_1 \geq T_{\text{mix}}^\text{FD}(\mu,\frac{\epsilon}{2})$, which is the mixing time of the field dynamics with an error of $\frac{\epsilon}{2}$, and $T_2 \geq \mixtl(\mu,\theta,\frac{\epsilon}{2T_1})$, which is the mixing time of the Glauber dynamics for the tilted distribution with an error of $\frac{\epsilon}{2T_1}$.
Formally, we have the following lemma. The proof is deferred to \Cref{sec:simu-error}.

\begin{lemma}\label{lem:simu-error}
For any distribution $\mu$ over $\{0,1\}^V$ and any $0< \theta < 1$,  if $T_1 \geq \mixfd(\mu,\frac{\epsilon}{2})$ and $T_2 \geq \mixtl(\mu,\theta,\frac{\epsilon}{2T_1})$, then the output $X = \textsf{contr}(X^{(T_1T_2)}_{\textnormal{alg}})$ of the algorithm $\+A_\mu(\theta,T_1,T_2)$ satisfies
\[\DTV{X}{\mu} \leq \epsilon.\]
\end{lemma}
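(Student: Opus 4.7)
The plan is to compare the approximate simulation algorithm $\+A_\mu(\theta,T_1,T_2)$ to an idealized variant that implements the field dynamics exactly, and then bound the total variation error via a two-step triangle inequality. Let $\+A_\mu^{\textnormal{ideal}}(\theta,T_1)$ denote the variant that differs from $\+A_\mu(\theta,T_1,T_2)$ only in Line~\ref{item:inner-loop}: instead of running $T_2$ Glauber steps to approximate the resampling, it draws $X$ exactly from $(\theta*\mu)^{\*1_{V\setminus S}}$ in one shot while keeping $X_v=\star$ for $v\notin S$.

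First, I would check that, on the contracted configurations, the idealized algorithm realizes the field dynamics step by step. The $\textsf{lift}$ operation in Line~\ref{item:lifting} produces exactly the subset $S$ with the per-vertex selection probabilities of~\eqref{eq:field-dynamics-p}, and the idealized resampling then returns $X\sim(\theta*\mu)^{\*1_{V\setminus S}}$, so contracting yields one transition of field dynamics as in \Cref{def:field-dynamics}. Since the initialization $\textsf{lift}(\*1_V)$ contracts to $\*1_V$, the contracted output after $T_1$ outer iterations is distributed as the field dynamics started from $\*1_V$ and run for $T_1$ steps; the assumption $T_1\geq\mixfd(\mu,\epsilon/2)$ then gives $\DTV{\textsf{contr}(X^{\textnormal{ideal}})}{\mu}\leq\epsilon/2$.

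Second, I would couple the idealized and the approximate algorithms outer step by outer step, using common randomness for each $\textsf{lift}$ (hence for each subset $S_t$) and for each Glauber variable selection. In outer iteration $t$, both algorithms target the same distribution $(\theta*\mu)^{\*1_{V\setminus S_t}}$ conditional on $S_t$; the idealized one hits it exactly, while the approximate one runs Glauber dynamics on this target for $T_2$ steps starting from whatever state is carried over from iteration $t-1$. Because $\mixtl$ is defined as a maximum over all feasible starting configurations under the pinning $\*1_{V\setminus S_t}$, the assumption $T_2\geq\mixtl(\mu,\theta,\epsilon/(2T_1))$ bounds the per-iteration discrepancy in total variation by $\epsilon/(2T_1)$, \emph{uniformly} over the state at the start of the inner loop. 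Standard telescoping together with the data-processing inequality for total variation then yields $\DTV{X^{(T_1T_2)}_{\textnormal{alg}}}{X^{\textnormal{ideal}}}\leq T_1\cdot\epsilon/(2T_1)=\epsilon/2$, and the triangle inequality closes the argument.

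The main obstacle is making precise this step-by-step telescoping so that the inner-loop Glauber errors compose additively rather than compounding multiplicatively. One must view a full outer-iteration transition as a Markov kernel on the extended space $\{0,1,\star\}^V$, observe that it sends every input state to within $\epsilon/(2T_1)$ in total variation of the corresponding idealized kernel's output, and then apply the data-processing inequality for $\DTV{\cdot}{\cdot}$ under Markov kernels so that the per-iteration errors accumulate linearly across the $T_1$ outer iterations. This uniform-in-state guarantee is exactly what the $\max$ over feasible starts in the definition of $\mixtl$ supplies, and it is the crux of the argument.
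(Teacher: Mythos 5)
Your proposal is correct and follows essentially the same approach as the paper: compare to a perfect simulation of field dynamics (bounding that against $\mu$ by $\epsilon/2$ via $\mixfd$), show the approximate algorithm tracks the perfect one up to $\epsilon/(2T_1)$ per outer iteration using the worst-case-start definition of $\mixtl$, accumulate over $T_1$ iterations, and finish with the triangle inequality. The paper phrases the accumulation as an iterative coupling plus union bound while you phrase it via a hybrid/telescoping argument with the data-processing inequality, but these are equivalent and both rely on the same crucial point you correctly identify — that $\mixtl$'s maximum over feasible starting configurations makes the per-iteration guarantee uniform in the carried-over state.
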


Let \( (X^{(t)}_{\textnormal{$\mu$-GD}} )_{t \geq 0} \) represent the Glauber dynamics for the distribution \( \mu \) starting from the maximum all-1 state \( X^{(0)}_{\textnormal{$\mu$-GD}} = \*1_V \). 
Our main technical contribution is the following comparison lemma between the Glauber dynamics and the simulation algorithm $\+A_{\mu}(\theta,T_1,T_2)$.

\begin{lemma} \label{lem:comparison}
Let $\mu$ be a monotone system over $\{0,1\}^V$.
Let \( (X^{(t)}_{\textnormal{$\mu$-GD}} )_{t \geq 0} \) be the Glauber dynamics for $\mu$ starting from $X^{(0)}_{\textnormal{$\mu$-GD}} = \*1_V$.
Let $T_1,T_2 \geq 1$, $\theta \in (0,1)$, and $(X^{(t)}_{\textnormal{alg}})_{t=0}^{T_1T_2}$ be the Markov chain generated by $\+A_{\mu}(\theta,T_1,T_2)$.
For any integer $0 \leq t \leq T_1T_2$, 
\begin{align}\label{eq:comparison-dtv}
    \DTV{X^{(t)}_{\textnormal{$\mu$-GD}}}{\mu} \leq \DTV{\textnormal{\textsf{contr}}(X^{(t)}_{\textnormal{alg}})}{\mu}.
\end{align}
\end{lemma}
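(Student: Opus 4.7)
The plan is to lift the problem to $\{0,1,\star\}^V$, apply the Fill--Kahn machinery of \Cref{sec:preliminaries} on the lifted space, and then push the resulting dominance down through the monotone map $\textsf{contr}$. Define the lifted distribution
\[
\tilde{\mu}(X) \;:=\; \mu(\textsf{contr}(X)) \cdot \theta^{|\{v\,:\,X_v=1\}|}\,(1-\theta)^{|\{v\,:\,X_v=\star\}|},
\]
equip $\{0,1,\star\}^V$ with the coordinate-wise order induced by $0 \prec 1 \prec \star$, and let $\tilde{\pi}_0$ be the law of $\textsf{lift}(\*1_V)$. Direct calculation gives $\textsf{contr}_{\ast}(\tilde{\mu}) = \mu$, $\tilde{\pi}_0 = \tilde{\mu}(\cdot \mid \textsf{contr}(X) = \*1_V)$, and $\tilde{\pi}_0/\tilde{\mu}$ is increasing: it equals $1/\mu(\*1_V)$ on the up-set $\{X : X_v \neq 0 \text{ for all } v\}$ and is $0$ elsewhere.

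Let $\tilde{P}_{\textsf{GD}}$ denote Glauber dynamics for $\tilde{\mu}$. A short computation gives $\tilde{\mu}_v^{X_{V\setminus v}}(0) = \mu_v^\sigma(0)$, $\tilde{\mu}_v^{X_{V\setminus v}}(1) = \theta\mu_v^\sigma(1)$, and $\tilde{\mu}_v^{X_{V\setminus v}}(\star) = (1-\theta)\mu_v^\sigma(1)$, where $\sigma = \textsf{contr}(X_{V\setminus v})$; summing the last two masses recovers $\mu_v^\sigma(1)$. Hence $\textsf{contr}$ intertwines $\tilde{P}_{\textsf{GD}}$ with the $\mu$-Glauber chain, and $\textsf{contr}_{\ast}(\tilde{\pi}_0 \tilde{P}_{\textsf{GD}}^t)$ is the law of $X^{(t)}_{\mu\text{-GD}}$ for every $t$. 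On the algorithm side, one step decomposes into (i) the contract-and-lift kernel $K$, which fixes $\textsf{contr}(X)$ and re-samples $\star$-positions from the $\tilde{\mu}$-conditional given $\textsf{contr}$, and (ii) the inner Glauber step $\tilde{P}_{\textsf{alg}}$ for $(\theta \ast \mu)^{\*1_{V\setminus S}}$, which for $X_v \neq \star$ samples $X_v' \in \{0,1\}$ from $(\theta \ast \mu)_v^\sigma$ and for $X_v = \star$ keeps $X_v' = \star$. One checks that $\tilde{P}_{\textsf{GD}}$, $K$, and $\tilde{P}_{\textsf{alg}}$ all lie in $\+{MC}_{\tilde{\mu}}$ under the order $0 \prec 1 \prec \star$.

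The technical core is the chain-level comparison $\tilde{P}_{\textsf{GD}}^t \mc \tilde{Q}_t$, where $\tilde{Q}_t$ is the $t$-step kernel of the algorithm (the appropriate interleaving of $K$'s and $\tilde{P}_{\textsf{alg}}$'s). The key structural identity is the decomposition of the lifted single-site Glauber kernel: conditioned on the updated coordinate $v$ and on $X_v \neq \star$, $\tilde{P}_{\textsf{GD}}$ either samples in $\{0,1\}$ from $(\theta \ast \mu)_v^\sigma$ (which is exactly the inner algorithm step) or sets $X_v' = \star$, with weights $\mu_v^\sigma(0) + \theta\mu_v^\sigma(1)$ and $(1-\theta)\mu_v^\sigma(1)$; the symmetric statement holds conditioned on $X_v = \star$. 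Combined with the stationary-conditional interpretation of $K$, this identity together with monotonicity of $\tilde{\nu}/\tilde{\mu}$ and of test functions $f,g$ yields the one-step inequality $\langle \tilde{P}_{\textsf{GD}} g, f\rangle_{\tilde{\mu}} \leq \langle \tilde{Q}_{\mathrm{step}} g, f\rangle_{\tilde{\mu}}$ via a case analysis on $X_v$ that exploits both directions of the coordinate-level transition inequalities supplied by the decomposition. Iterating through \Cref{lem:comparison-inequality-properties} then gives $\tilde{P}_{\textsf{GD}}^t \mc \tilde{Q}_t$.

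Combining the chain comparison with $\tilde{\pi}_0/\tilde{\mu}$ increasing and applying \Cref{lem:gen-compare} in the lifted space yields $\tilde{\pi}_0 \tilde{P}_{\textsf{GD}}^t \sd \tilde{\pi}_0 \tilde{Q}_t$. Pushing this dominance through the monotone map $\textsf{contr}$ gives stochastic dominance of $\text{law}(X^{(t)}_{\mu\text{-GD}})$ by $\text{law}(\textsf{contr}(X^{(t)}_{\textsf{alg}}))$ on $\{0,1\}^V$. Since the density of $\text{law}(X^{(t)}_{\mu\text{-GD}})$ with respect to $\mu$ remains increasing (because $\mu$-Glauber lies in $\+{MC}_\mu$, is $\mu$-reversible, and $\delta_{\*1_V}/\mu$ is increasing), \Cref{lem:gen-compare} applied on $\{0,1\}^V$ delivers the desired inequality \eqref{eq:comparison-dtv}. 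The main obstacle I anticipate is exactly the step-wise chain comparison in the lifted space: because the algorithm uses the tilted marginal $(\theta \ast \mu)_v^\sigma$ rather than $\mu_v^\sigma$, a naive one-step coupling on $\{0,1\}^V$ fails at the coordinate level, so the proof must leverage the decomposition of $\tilde{P}_{\textsf{GD}}$ above and the FKG-like content hidden in $\mc$ rather than argue coordinate by coordinate.
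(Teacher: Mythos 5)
Your framework is the right one: lift to $\{0,1,\star\}^V$ with the lifted measure $\tilde{\mu}$ (the paper's $\pi$), note that $\tilde{\pi}_0/\tilde{\mu}$ is increasing, verify that $\textsf{contr}$ intertwines the lifted Glauber chain with the $\mu$-Glauber chain, place all three kernels in $\+{MC}_{\tilde{\mu}}$, and push the resulting stochastic dominance through $\textsf{contr}$. All of that matches the paper. The gap is in the technical core. You claim the \emph{chain-level} comparison $\tilde{P}_{\textsf{GD}}^{\,t} \mc \tilde{Q}_t$, obtained by iterating a one-step inequality $\tilde{P}_{\textsf{GD}} \mc \tilde{Q}_{\mathrm{step}}$ through the composition property of $\mc$. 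For the inner steps $\tilde{Q}_{\mathrm{step}}=\tilde{P}_{\textsf{alg}}$ (the paper's $\psim$), this one-step inequality does hold --- it is the content of \Cref{lem:phase-comparison-single-vertex}. But for the contract-and-lift steps $\tilde{Q}_{\mathrm{step}} = K\tilde{P}_{\textsf{alg}}$ (with $K=P_{\textnormal{cl}}$), the inequality $\tilde{P}_{\textsf{GD}} \mc K\tilde{P}_{\textsf{alg}}$ is \emph{false}. The paper flags this explicitly and gives a counterexample in \Cref{sec:counterexample}: take $\nu=\delta_{\star^V}$, the point mass at the maximal state, so $\nu/\tilde{\mu}$ is increasing. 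Then $(\nu\tilde{P}_{\textsf{GD}})(\star^V)=\frac{1}{n}\sum_i (1-\theta)\mu_i^{\*1_{V\setminus\{i\}}}(1)$, typically $\Theta(1)$, whereas $(\nu K\tilde{P}_{\textsf{alg}})(\star^V)=(1-\theta)^{n}$, exponentially small; since $\{\star^V\}$ is an up-set, stochastic dominance requires the first quantity to be at most the second, which fails. The same witness already kills $\tilde{P}_{\textsf{GD}}^{\,1} \mc \tilde{Q}_1$, so the chain-level claim is not merely unproved by your iteration but actually wrong.

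The idea you are missing is that the contract-and-lift kernel $K$ must be inserted on the \emph{Glauber side of the comparison as well}, rather than being absorbed into the algorithm's kernel and compared against a bare Glauber step. The paper defines a modified time-inhomogeneous chain $P^{(t)}_{\pi\textnormal{-mGD}}$ that applies $K$ at exactly the same times the algorithm does, so that each one-step comparison reduces to the valid $\tilde{P}_{\textsf{GD}} \mc \tilde{P}_{\textsf{alg}}$ together with the trivial $K \mc K$; this gives the kernel-level $\prod_i P^{(i)}_{\pi\textnormal{-mGD}} \mc \prod_i P^{(i)}_{\textnormal{alg}}$ via \Cref{lem:comparison-inequality-properties}. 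What makes this substitution legitimate is not a chain-level fact but a trajectory-specific one (\Cref{lem:many-stationary}): $K$ is stationary for \emph{every} distribution $\pi^{(t)}_{\textnormal{GD}}$ along the lifted Glauber trajectory started from $\textsf{lift}(\*1_V)$, because that trajectory is exactly the lift of the $\mu$-Glauber trajectory from $\*1_V$ and $K = P_C P_L$ acts as the identity on lifted laws. Your phrase ``stationary-conditional interpretation of $K$'' gestures at this fact, but the way you deploy it --- folding $K$ into $\tilde{Q}_{\mathrm{step}}$ and comparing against unmodified Glauber --- is precisely the approach the counterexample rules out.
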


The lemma implies that the $(T_1T_2)$-step Glauber dynamics from the maximum all-1 state returns a more accurate sample than the simulation algorithm $\+A_{\mu}(\theta,T_1,T_2)$. 

\Cref{thm:FD-comparison} is a direct corollary of \Cref{lem:simu-error} and \Cref{lem:comparison}.

\subsection{Stochastic dominance in the lifted space}
We outline the proof of \Cref{lem:comparison}. The algorithm $\+A_{\mu}(\theta,T_1,T_2)$ is a Markov chain over space $\{0,1,\star\}^V$, which generates a sequence of random configurations $X^{(t)}_{\textnormal{alg}}$ for $t=0,1,\ldots,T_1T_2$. In particular, $X^{(0)}_{\textnormal{alg}}=\textsf{lift}(\*1_V)$.
Let $\pi^{(t)}_{\textnormal{alg}}$ denote the distribution of $X^{(t)}_{\textnormal{alg}}$. 

Since the Markov chain is over the space $\{0,1,\star\}^V$, we also lift the distribution $\mu$ to $\{0,1,\star\}^V$. Let $\pi$ denote the distribution of $\textsf{lift}(X)$ for $X \sim \mu$, i.e., 
\begin{align}\label{eq:lift-pi}
    \textsf{lift}(X) \sim \pi \text{ if } X \sim \mu.
\end{align}
By definition, it is easy to see that $\textsf{contr}(Y) \sim \mu$ if $Y \sim \pi$.
Let $\Omega(\mu) \subseteq \{0,1\}^V$ denote the support of $\mu$ and $\Omega(\pi) \subseteq \{0,1,\star\}^V$ denote the support of $\pi$.

Let $P_{\pi\textnormal{-GD}}$ be the transition matrix of the Glauber dynamics on $\pi$. Given the current configuration $X \in \Omega(\pi)$, $P_{\pi\textnormal{-GD}}$ picks a variable $v \in V$ uniformly at random and then resamples $X_v$ from $\pi$ conditional on $X_{V \setminus \{v\}}$. Let $(X^{(t)}_{\pi\textnormal{-GD}})_{t \geq 0}$ denote sequence generated by the Glauber dynamics on $\pi$ starting from $X^{(0)}_{\pi\textnormal{-GD}}= \textsf{lift}(\*1_V)$. Let $\pi^{(t)}_{\textnormal{GD}}$ denote the distribution of $X^{(t)}_{\pi\textnormal{-GD}}$.

A key step in proving \Cref{lem:comparison} is a stochastic dominance relation between distributions $\pi^{(t)}_{\textnormal{alg}}$ and $\pi^{(t)}_{\textnormal{GD}}$.
To state this stochastic dominance result,
we need to extend the partial ordering $\preceq$ on $\{0,1\}^V$ to $\{0,1,\star\}^V$. 
Define the ordering
\begin{align*}
    0 < 1 < \star,
\end{align*}
and the following partial ordering for $X,Y \in \{0,1,\star\}^V$
\begin{align*}
    X \preceq Y \iff \forall v \in V, \,\, X_v \leq Y_v.
\end{align*}
With the above partial ordering, we can use \Cref{def:stochastic-dominance} to define a stochastic dominance relation $\sd$ between distributions over $\{0,1,\star\}^V$.
Note that $\pi^{(0)}_{\textnormal{alg}}=\pi^{(0)}_{\textnormal{GD}}$ at the beginning. The following lemma shows that $\pi^{(t)}_{\textnormal{GD}}$ is always stochastically dominated by $\pi^{(t)}_{\textnormal{alg}}$. 

\begin{lemma}\label{lem:stochastic-dominance}
If $\mu$ is a monotone system, then for any $t \geq 0$, $\pi^{(t)}_{\textnormal{GD}} \sd \pi^{(t)}_{\textnormal{alg}}$.
\end{lemma}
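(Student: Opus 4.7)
The plan is to prove the lemma by induction on $t$, strengthening it to a joint invariant: (I) $\pi^{(t)}_{\textnormal{GD}} \sd \pi^{(t)}_{\textnormal{alg}}$, and (II) the density ratio $\pi^{(t)}_{\textnormal{alg}}/\pi$ is an increasing function on $\{0,1,\star\}^V$ under the extended order $0<1<\star$. The base case $t=0$ is immediate: both distributions equal $\textsf{lift}(\*1_V)$, which is supported on the up-set $\{1,\star\}^V$ where the ratio $\pi^{(0)}_{\textnormal{alg}}/\pi$ is a positive constant (zero elsewhere), hence an indicator of an upward-closed set, hence increasing.

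A preliminary step checks that the three transition matrices $P_{\pi\textnormal{-GD}}$, $P_{\textnormal{inner}}$, and $P_{\textnormal{CL}}$ on $\{0,1,\star\}^V$ are each reversible with respect to $\pi$ and stochastically monotone. Reversibility of $P_{\textnormal{CL}}$ is a direct detailed-balance check; reversibility of $P_{\textnormal{inner}}$ uses the identity that, under $\pi$ conditioned on the star-pattern $S=\{v:X_v\ne\star\}$, the $X_S$-marginal equals $(\theta*\mu)^{\*1_{V\setminus S}}$, under which the Glauber action of $P_{\textnormal{inner}}$ is reversible. Invariant (II) is then preserved automatically: if $P = P_1 P_2 \cdots P_k$ is a composition of reversible factors, the identity $(\nu P)/\pi = P_k \cdots P_1 (\nu/\pi)$ together with the stochastic monotonicity of each factor shows that increasing functions are mapped to increasing functions. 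For the inductive step on (I), stochastic monotonicity of $P_{\pi\textnormal{-GD}}$ applied to (I) at $t$ gives $\pi^{(t+1)}_{\textnormal{GD}} \sd \pi^{(t)}_{\textnormal{alg}} P_{\pi\textnormal{-GD}}$, so it suffices to show $\pi^{(t)}_{\textnormal{alg}} P_{\pi\textnormal{-GD}} \sd \pi^{(t)}_{\textnormal{alg}} P_{\text{alg}}^{(t+1)}$; by invariant (II) and \Cref{def:comparison-inequality-relation}, this reduces to the operator comparison $P_{\pi\textnormal{-GD}} \mc P_{\text{alg}}^{(t+1)}$.

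For non-boundary steps where $P_{\text{alg}}^{(t+1)} = P_{\textnormal{inner}}$, I would decompose $P_{\pi\textnormal{-GD}} = \tfrac{1}{n}\sum_v M_v^{\textnormal{GD}}$ and $P_{\textnormal{inner}} = \tfrac{1}{n}\sum_v M_v^{\textnormal{alg}}$ and use linearity of $\mc$ (\Cref{lem:comparison-inequality-properties}) to reduce to the vertex-wise comparison $M_v^{\textnormal{GD}} \mc M_v^{\textnormal{alg}}$. Conditioning on $X_{V\setminus\{v\}}=\tau$ further reduces this to a local $3\times3$ inequality on $\{0,1,\star\}$ under $p = \pi_v^\tau = (p_0, p_1, p_\star)$, for which a direct computation yields the rank-one identity
\[
\langle M_v^{\textnormal{alg}} g, f\rangle_p - \langle M_v^{\textnormal{GD}} g, f\rangle_p = (p_0+p_1)\,p_\star\,(g_\star-\tilde g)(f_\star-\tilde f),
\]
with $\tilde g = (p_0 g_0+p_1 g_1)/(p_0+p_1)$ and $\tilde f$ analogous; monotonicity of $f, g$ yields $g_\star\ge\tilde g$ and $f_\star\ge\tilde f$, so the difference is non-negative.

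The main obstacle will be the boundary-step comparison $P_{\pi\textnormal{-GD}} \mc P_{\textnormal{CL}} P_{\textnormal{inner}}$: the composite $P_{\textnormal{CL}} P_{\textnormal{inner}}$ is not reversible with respect to $\pi$, so the clean Fill--Kahn framework must be extended, and a na\"ive per-step coupling fails because $P_{\textnormal{CL}}$ can destroy stars at positions where $P_{\pi\textnormal{-GD}}$ has just created them. I plan to work with an extended $\mc$ relation that does not require reversibility of the right-hand operator and to verify the bilinear inequality $\langle h, P_{\pi\textnormal{-GD}} f\rangle_\pi \le \langle h, P_{\textnormal{CL}} P_{\textnormal{inner}} f\rangle_\pi$ directly for increasing $h = \nu/\pi$ and $f$. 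Reversibility of $P_{\textnormal{CL}}$ and $P_{\textnormal{inner}}$ individually rewrites the right-hand side as $\langle P_{\textnormal{inner}} P_{\textnormal{CL}} h, f\rangle_\pi$, reducing the task to showing $\langle (P_{\textnormal{inner}} P_{\textnormal{CL}} - P_{\pi\textnormal{-GD}})h, f\rangle_\pi \ge 0$; I expect this to follow by combining the rank-one inner-step identity with a $\pi$-covariance analysis that exploits the tensor structure $P_{\textnormal{CL}} = \prod_v P_{\textnormal{CL}}^{(v)}$ and the increasing ratio $h$ coming from invariant (II).
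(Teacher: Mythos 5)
Your per-vertex analysis of the non-boundary step is correct: the rank-one identity
$\langle M_v^{\textnormal{alg}} g, f\rangle_p - \langle M_v^{\textnormal{GD}} g, f\rangle_p = (p_0+p_1)\,p_\star\,(g_\star-\tilde g)(f_\star-\tilde f)$
is a clean closed form of the coefficient bookkeeping the paper does in \Cref{claim:g-i-f}, and together with the increasing-ratio invariant and \Cref{lem:comparison-inequality-properties} it gives exactly $P_{\pi\textnormal{-GD}} \mc \psim$ (the paper's $\psim$ is your $P_{\textnormal{inner}}$). Invariant (II) is also correct and is preserved for the reason you give.

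However, your plan for the boundary step cannot work: the operator comparison $P_{\pi\textnormal{-GD}} \mc P_{\textnormal{CL}}\,P_{\textnormal{inner}}$ is \emph{provably false}, and the paper records an explicit counterexample in \Cref{sec:counterexample}. Take $\nu$ to be the point mass on the all-$\star$ configuration $\sigma$, which is the top element of $\Omega(\pi)$, so $\nu/\pi$ is increasing. Then $(\nu P_{\pi\textnormal{-GD}})(\sigma)=\frac{1}{n}\sum_i p_i$ with $p_i=(1-\theta)\mu_i^{\*1_{V\setminus\{i\}}}(1)$, which is typically $\Theta(1)$, whereas $(\nu P_{\textnormal{CL}} P_{\textnormal{inner}})(\sigma)=(1-\theta)^n$ is exponentially small, because after contracting to $\*1_V$ every coordinate must independently re-lift to $\star$. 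Since $\sigma$ is maximal, $\sd$ would force $(1-\theta)^n\ge\frac{1}{n}\sum_i p_i$. This is the same as your proposed bilinear inequality $\langle h, P_{\pi\textnormal{-GD}} f\rangle_\pi \le \langle h, P_{\textnormal{CL}} P_{\textnormal{inner}} f\rangle_\pi$ evaluated at $h=\nu/\pi$ and $f=\*1_{\{\sigma\}}$; no covariance structure will make it true.

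The ingredient you are missing is the paper's \Cref{lem:many-stationary}: \emph{every} $\pi^{(t)}_{\textnormal{GD}}$ is stationary for the contract-lift chain, $\pi^{(t)}_{\textnormal{GD}} P_{\textnormal{cl}}=\pi^{(t)}_{\textnormal{GD}}$. This lets you replace the Glauber step at a boundary time by $P_{\textnormal{cl}} P_{\pi\textnormal{-GD}}$ without changing the GD marginal, and then the comparison to be verified is $P_{\textnormal{cl}} P_{\pi\textnormal{-GD}} \mc P_{\textnormal{cl}}\,P_{\textnormal{inner}}$, which follows from $P_{\textnormal{cl}}\mc P_{\textnormal{cl}}$ and your non-boundary comparison $P_{\pi\textnormal{-GD}}\mc P_{\textnormal{inner}}$ via composition. (You do \emph{not} need the composite to be reversible: \Cref{lem:comparison-inequality-properties} only requires each factor to lie in $\+{MC}_\pi$, so the extension of the Fill--Kahn framework you anticipate having to build is unnecessary.) With that substitution your inductive structure goes through; without it, the boundary step is not a technical obstacle but a false statement.
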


We now explain the intuition behind why \Cref{lem:stochastic-dominance} can be used to prove \Cref{lem:comparison}.
One can verify that $\pi$ is also a monotone system. 
Using the properties of $\pi^{(0)}_{\textnormal{GD}}$ (see \Cref{lem:pi-0-gd}) and the Glauber dynamics, one can verify that $\frac{\pi^{(t)}_{\textnormal{GD}}}{\pi}: \Omega(\pi) \to \mathbb{R}_{\geq 0}$ is an increasing function for any $t \geq 0$.\footnote{See \Cref{sec:preliminaries} for the definition of the function $\frac{\pi^{(t)}_{\textnormal{GD}}}{\pi}$ and the definition of increasing functions.
The statement holds because the Glauber dynamics on $\pi$ is stochastically monotone.
For a formal verification, one can refer to the proof in \Cref{sec:proof-of-the-comparison-result}. The analysis in \Cref{sec:proof-of-the-comparison-result} is for $\mu^{(t)}_{\textnormal{GD}}$ but the same analysis can be applied to $\pi^{(t)}_{\textnormal{GD}}$.
}  Then \Cref{lem:gen-compare} immediately implies that $d_{\text{TV}}(\pi^{(T_1T_2)}_{\textnormal{GD}},\pi) \leq d_{\text{TV}}(\pi^{(T_1T_2)}_{\textnormal{alg}},\pi)$, which means the Glauber dynamics generates a more accurate sample than the simulation algorithm.
This result is similar to the result in \Cref{lem:comparison}. The main difference is that \Cref{lem:comparison} focuses on $\mu$ instead of $\pi$. Specifically, $X^{(T_1T_2)}_{\mu\textnormal{-GD}}$ in \Cref{lem:comparison} is generated by the Glauber dynamics on $\mu$ and \Cref{lem:comparison} considers the contracted configuration $\textsf{contr}(X^{(T_1T_2)}_{\textnormal{alg}})$ instead of $X^{(T_1T_2)}_{\textnormal{alg}}$ itself. 
By definition in~\eqref{eq:lift-pi}, $\mu$ and $\pi$ are closely related. 
This discrepancy can be resolved by analyzing the contracting operation and comparing the Glauber dynamics on $\mu$ to the Glauber dynamics on $\pi$. 
The details of proving \Cref{lem:comparison} assuming \Cref{lem:stochastic-dominance} are deferred to \Cref{sec:comparison-between-original-and-lifted-Glauber-dynamics-1}.

\subsection{Comparison between the Glauber dynamics and the simulation algorithm}


We outline the proof of \Cref{lem:stochastic-dominance}. Note that we cannot directly use the standard censor inequality in \cite{peres2013can} to prove \Cref{lem:stochastic-dominance}.  Here are two reasons: (1) in Line \Cref{item:resample} of $\+A_{\mu}(\theta,T_1,T_2)$, an update on $v$ is ignored if $X_v = \star$. Whether an update at $v$ is ignored \emph{depends} on the current configuration $X$. (2) For vertex $v$ with $X_v \neq \star$, the simulation algorithm and the Glauber dynamics on $\pi$ uses \emph{different} rules to update $X_v$, where the simulation algorithm resamples from the conditional distribution induced by $(\theta * \mu)$ while the Glauber dynamics  resamples from the conditional distribution induced by $\pi$.

We use more general technique in \cite{FillK13}, namely, verifying the comparison inequality relation in \Cref{def:comparison-inequality-relation}.
However, instead of comparing the algorithm with the Glauber dynamics on $\pi$, we compare the algorithm with a \emph{modified Glauber dynamics}. 
We first formulate the simulation algorithm $\+A_{\mu}(\theta,T_1,T_2)$ as a Markov chain over $\{0,1,\star\}^V$, then define the modified Glauber dynamics, and finally compare the algorithm with the modified Glauber dynamics.

\paragraph{Simulation algorithm as a Markov chain}
The algorithm consists of $T_1$ phases, where each phase begins with contracting and lifting operations, and then simulates the Glauber dynamics on $(\theta *\mu)^{\*1_{V \setminus S}}$ for $T_2$ steps (see \Cref{fig:lifted-simulation-algorithm} for an example). Define the following two transition matrices from the simulation algorithm.
\begin{itemize}
    \item $P_{\textnormal{cl}}$: For any $X \in \Omega(\pi)$, transform $X$ to a random $Y = \textsf{lift}(\textsf{contr}(X)) \in \Omega(\pi)$.
    \item $\psim$: For any $X \in \Omega(\pi)$, apply the transitions defined in Line~\ref{item:random-variable} and Line~\ref{item:resample} of the simulation algorithm to transform $X$ to a random $Y \in \Omega(\pi)$.
\end{itemize}
The distribution $\pi^{(t)}_{\textnormal{alg}}$ of $X^{(t)}_{\textnormal{alg}}$ satisfies the following recurrence relation
\begin{align}\label{eq:recurrence-relation-simulation-algorithm}
\pi^{(t+1)}_{\textnormal{alg}} = \pi^{(t)}_{\textnormal{alg}}P_{\textnormal{alg}}^{(t+1)}, \text{ where } P_{\textnormal{alg}}^{(t+1)} = \begin{cases}
    P_{\textnormal{cl}} \psim & \text{if } t \text{ mod } T_2 = 0, \\
    \psim & \text{if } t \text{ mod } T_2 > 0.
\end{cases}
\end{align}
We remark that in $P_{\textnormal{alg}}^{(t+1)}$, $(t+1)$ is the time index of the transition matrix but not the exponent. We always use the bracket notation in upper index to denote the time index.


\paragraph{A modified Glauber dynamics}
Recall that $P_{\pi\textnormal{-GD}}$ is the transition matrix of the Glauber dynamics on $\pi$. The distribution $\pi^{(t)}_{\textnormal{GD}}$ satisfies the following simple recurrence relation
\begin{align}\label{eq:recurrence-relation-Glauber-dynamics}
    \pi^{(t+1)}_{\textnormal{GD}} = \pi^{(t)}_{\textnormal{GD}} P_{\pi\textnormal{-GD}}.
\end{align}
Note that two recurrences~\eqref{eq:recurrence-relation-simulation-algorithm} and~\eqref{eq:recurrence-relation-Glauber-dynamics} are in the difference forms because $(P_{\textnormal{alg}}^{(t)})_{t \geq 0}$ is time-inhomogeneous but $P_{\pi\textnormal{-GD}}$ is time-homogeneous. To do the comparison, we need to modify the Glauber dynamics. The modification is based on the following property of the chain $P_{\textnormal{cl}}$.
\begin{lemma}\label{lem:many-stationary}
Recall that $\textsf{lift}(\*1_V) \sim \pi^{(0)}_{\textnormal{GD}}$ and $\pi^{(t)}_{\textnormal{GD}} = \pi^{(t-1)}_{\textnormal{GD}}P_{\pi\textnormal{-GD}}$ is the sequence of distributions generated by the Glauber dynamics on $\pi$. Then for any $t \geq 0$,
\begin{align*}
    \pi^{(t)}_{\textnormal{GD}} P_{\textnormal{cl}} = \pi^{(t)}_{\textnormal{GD}}.
\end{align*}
\end{lemma}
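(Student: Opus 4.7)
The plan is to prove the lemma by induction on $t$, identifying a structural invariant for $\pi^{(t)}_{\textnormal{GD}}$ that is equivalent to being fixed by $P_{\textnormal{cl}}$. First I would characterize the fixed-point distributions of $P_{\textnormal{cl}}$. Writing $L(Y):=\prod_{v\in V} L_v(\textsf{contr}(Y)_v,Y_v)$ for the per-coordinate lift probability (with $L_v(0,0)=1$, $L_v(1,1)=\theta$, $L_v(1,\star)=1-\theta$, and $0$ otherwise), a direct expansion from the definition of $P_{\textnormal{cl}}$ gives
\[
(\nu P_{\textnormal{cl}})(Y) \;=\; \nu^{\textsf{contr}}(\textsf{contr}(Y)) \cdot L(Y),
\]
where $\nu^{\textsf{contr}}$ denotes the pushforward of $\nu$ under $\textsf{contr}$. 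Hence $\nu P_{\textnormal{cl}}=\nu$ iff $\nu(Y)=\hat\nu(\textsf{contr}(Y))\cdot L(Y)$ for some distribution $\hat\nu$ on $\Omega(\mu)$; call such a $\nu$ \emph{lift-consistent}. The goal reduces to showing that $\pi^{(t)}_{\textnormal{GD}}$ is lift-consistent for every $t\geq 0$.

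For the base case $t=0$, since $Y\sim\pi^{(0)}_{\textnormal{GD}}=\textnormal{law}(\textsf{lift}(\*1_V))$ satisfies $\textsf{contr}(Y)=\*1_V$ almost surely, $\pi^{(0)}_{\textnormal{GD}}(Y)=\mathbf{1}[\textsf{contr}(Y)=\*1_V]\cdot L(Y)$ is lift-consistent. For the inductive step, since $P_{\pi\textnormal{-GD}}=\frac{1}{n}\sum_{v\in V} P_v$ where $P_v$ is the single-site Glauber update at $v$ for $\pi$, it suffices to show that each $P_v$ preserves lift-consistency. The crux is the factorization identity
\[
\pi^{Y_{V\setminus\{v\}}}_v(Y_v) \;=\; \mu^{\textsf{contr}(Y_{V\setminus\{v\}})}_v\!\left(\textsf{contr}(Y_v)\right)\cdot L_v(\textsf{contr}(Y_v),Y_v),
\]
obtained by plugging the explicit weight $\pi(Y)\propto \mu(\textsf{contr}(Y))\,\theta^{|\{u:Y_u=1\}|}(1-\theta)^{|\{u:Y_u=\star\}|}$ into the definition of conditional probability. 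In words, the Glauber step on $\pi$ at $v$ depends on $Y_{V\setminus\{v\}}$ only through its contraction, and splits into a $\mu$-Glauber update at $v$ followed by an independent lift of $Y_v$.

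Given lift-consistency of $\nu$, applying the factorization identity to $(\nu P_v)(Y)$ and marginalizing out $s$ in $\sum_{s}\nu(Y_{V\setminus\{v\}},s)$ (which collapses the lift at $v$ and yields the contracted marginal $\hat\nu_{V\setminus\{v\}}$) produces
\[
(\nu P_v)(Y)\;=\;(\hat\nu P^{\mu}_v)(\textsf{contr}(Y))\cdot L(Y),
\]
where $P^\mu_v$ is the single-site Glauber update for $\mu$ at $v$. Thus $\nu P_v$ is again lift-consistent, with new pushforward $\hat\nu P^\mu_v$; the induction closes and the conclusion $\pi^{(t)}_{\textnormal{GD}} P_{\textnormal{cl}}=\pi^{(t)}_{\textnormal{GD}}$ follows. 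The main obstacle is the bookkeeping needed to verify the single-site factorization identity and to carry out the marginalization cleanly; beyond the explicit weight of $\pi$ nothing else is required, and in particular the monotonicity of $\mu$ plays no role here.
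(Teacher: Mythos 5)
Your proof is correct and relies on the same underlying mechanism as the paper: the per-site factorization of the $\pi$-conditional marginal into the $\mu$-conditional marginal times the single-site lift kernel, propagated inductively through the Glauber updates. The paper packages exactly this content as \Cref{lem:comparison-lifted} (namely $\pi^{(t)}_{\textnormal{GD}}=\textsf{lift}(\mu^{(t)}_{\textnormal{GD}})$, identifying your pushforward $\hat\nu$ with $\mu^{(t)}_{\textnormal{GD}}$) and then obtains the present lemma in two lines from $P_{\textnormal{cl}}=P_C P_L$ together with $P_L P_C = I$; you simply inline that induction under the name of ``lift-consistency.''
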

The lemma states a very interesting property that contracting-lifting chain $P_{\textnormal{cl}}$ has multiple stationary distributions and \emph{every} $\pi^{(t)}_{\textnormal{GD}}$ is a stationary distribution of $P_{\textnormal{cl}}$. 
The lemma is proved in \Cref{sec:proof-comparison-inequality-relation}.
Using this lemma, we can write the following new recurrence relation for $\pi^{(t)}_{\textnormal{GD}}$
\begin{align}\label{eq:recurrence-relation-Glauber-dynamics-modified}
   \pi^{(t+1)}_{\textnormal{GD}} = \pi^{(t)}_{\textnormal{GD}} P_{\pi\textnormal{-mGD}}^{(t+1)}, \text{ where } P_{\pi\textnormal{-mGD}}^{(t+1)} = \begin{cases}
        P_{\textnormal{cl}} P_{\pi\textnormal{-GD}} & \text{if } t \text{ mod } T_2 = 0, \\
        P_{\pi\textnormal{-GD}} & \text{if } t \text{ mod } T_2 > 0.
    \end{cases}
\end{align}
The above recurrence relation gives a modified time-inhomogeneous Markov chain $(P_{\pi\textnormal{-mGD}}^{(t)})_{t \geq 1}$ for $\pi$. For the transition from time $t$ to time $t+1$, if $t \text{ mod } T_2 = 0$, then the transition matrix is $P_{\textnormal{cl}} P_{\pi\textnormal{-GD}}$; otherwise, the transition matrix is $P_{\pi\textnormal{-GD}}$. 
Thanks to \Cref{lem:many-stationary}, this modified Markov chain generates the same sequence of distributions $(\pi^{(t)}_{\textnormal{GD}})_{t \geq 1}$ as the Glauber dynamics on $\pi$. Now, the good property is that $(P_{\textnormal{alg}}^{(t)})_{t \geq 1}$ and $(P_{\pi\textnormal{-mGD}}^{(t)})_{t \geq 1}$ are in the same form.

\paragraph{Comparison between two Markov chains}
We now apply the general technique in \Cref{sec:preliminaries} to compare the two time-inhomogeneous Markov chains in~\eqref{eq:recurrence-relation-simulation-algorithm} and~\eqref{eq:recurrence-relation-Glauber-dynamics-modified}. 
Recall that by~\eqref{eq:mc-set}, $\+{MC}_\pi$ is the family of Markov chains that are reversible with respect to $\pi$ and stochastically monotone, where the stochastic monotone property is defined in \Cref{def:stochastically-monotone}.
The following properties for $P_{\pi\textnormal{-GD}}$, $P_{\textnormal{cl}}$, and $\psim$ can be verified by their definitions. 
\begin{lemma}\label{lem:properties-of-transition-matrices}
    If $\mu$ is a monotone system, then $P_{\textnormal{cl}} \in \+{MC}_\pi$, $\psim \in \+{MC}_\pi$, and $P_{\pi\textnormal{-GD}} \in \+{MC}_\pi$.
\end{lemma}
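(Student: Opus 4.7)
} The plan is to handle the two defining properties of $\+{MC}_\pi$ (reversibility with respect to $\pi$ and stochastic monotonicity in the extended partial order $0<1<\star$) separately for each of the three chains. A useful preliminary step is to verify that $\pi$ itself is a monotone system on $\{0,1,\star\}^V$: using $\pi(Y)\propto \mu(\textsf{contr}(Y))\,\theta^{|\{v:Y_v=1\}|}(1-\theta)^{|\{v:Y_v=\star\}|}$, a direct calculation shows that for any pinning $\sigma$ and any vertex $v$,
\[
\pi^\sigma_v(0)=\mu^{\textsf{contr}(\sigma)}_v(0),\qquad \pi^\sigma_v(1)=\theta\,\mu^{\textsf{contr}(\sigma)}_v(1),\qquad \pi^\sigma_v(\star)=(1-\theta)\,\mu^{\textsf{contr}(\sigma)}_v(1),
\]
so $\pi^\sigma_v(\{1,\star\})=\mu^{\textsf{contr}(\sigma)}_v(1)$ and $\pi^\sigma_v(\{\star\})=(1-\theta)\,\mu^{\textsf{contr}(\sigma)}_v(1)$. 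Comparing $\sigma\preceq\tau$ uses the fact that the contracting map is monotone $\{0,1,\star\}^{V\setminus\{v\}}\to\{0,1\}^{V\setminus\{v\}}$, together with monotonicity of $\mu$, to conclude $\pi^\sigma_v\sd \pi^\tau_v$.

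For $P_{\pi\text{-GD}}$, reversibility with respect to $\pi$ is the textbook property of Glauber dynamics, and stochastic monotonicity follows from the preliminary step: a standard monotone coupling (pick the same $v$ and couple the resamplings at $v$ using the guarantee $\pi^{\sigma}_v\sd \pi^\tau_v$ that $\pi$ is a monotone system) yields $P_{\pi\text{-GD}}(X,\cdot)\sd P_{\pi\text{-GD}}(X',\cdot)$ for $X\preceq X'$, which is the third equivalent condition in Proposition~\ref{prop:stochastically-monotone-equivalent}.

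For $P_{\textnormal{cl}}$, note that $P_{\textnormal{cl}}(X,Y)$ is nonzero only when $\textsf{contr}(X)=\textsf{contr}(Y)$, and in that case $P_{\textnormal{cl}}(X,Y)=\theta^{|\{v:Y_v=1\}|}(1-\theta)^{|\{v:Y_v=\star\}|}$; combining this with the formula for $\pi$ yields a symmetric expression for $\pi(X)P_{\textnormal{cl}}(X,Y)$, establishing reversibility. For stochastic monotonicity, given $X\preceq X'$ one has $\textsf{contr}(X)\preceq\textsf{contr}(X')$, and then one couples the two independent lifts vertex by vertex, using identical coin flips at every $v$ with $\textsf{contr}(X)_v=\textsf{contr}(X')_v=1$ and using $0\leq\textnormal{anything}$ (in the extended order) at those $v$ where $\textsf{contr}(X)_v<\textsf{contr}(X')_v$.

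For $\psim$, reversibility is checked on single-flip transitions: for $X,Y$ differing only at coordinate $v$ with $X_v,Y_v\in\{0,1\}$, both sides of the detailed balance equation reduce, after writing $\pi$ and the marginals of $(\theta*\mu)$ explicitly, to the same symmetric quantity $\tfrac{1}{|V|}\mu(\tau^0)\mu(\tau^1)\theta/(\mu(\tau^0)+\mu(\tau^1)\theta)$ times common tilting factors, where $\tau^b$ is the completion on $V\setminus\{v\}$; the $\star$-preserving transitions contribute only to the diagonal. The main obstacle, and the place where care is needed, is the stochastic monotonicity of $\psim$, because the update rule distinguishes $X_v=\star$ (stay) from $X_v\in\{0,1\}$ (resample). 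I will couple by choosing the same $v$ and then analyzing four sub-cases: (i) $X_v=X'_v=\star$, where both stay; (ii) $X_v,X'_v\in\{0,1\}$, where $\textsf{contr}(X)_{V\setminus\{v\}}\preceq\textsf{contr}(X')_{V\setminus\{v\}}$ combined with monotonicity of the tilted distribution $(\theta*\mu)$ (immediate from monotonicity of $\mu$, since the product tilt is an independent field) allows a coupling with new $X_v\leq X'_v$ in $\{0,1\}$; (iii) $X_v\in\{0,1\}$, $X'_v=\star$, where $X'_v$ stays at $\star$ and the new $X_v\in\{0,1\}<\star$ trivially; (iv) $X_v=\star$, $X'_v\in\{0,1\}$ is ruled out by $X\preceq X'$. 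In every case the ordering is preserved at $v$ and outside $v$, giving $\psim(X,\cdot)\sd\psim(X',\cdot)$. Together these three verifications complete the lemma.
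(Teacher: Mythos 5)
Your proposal is correct and follows essentially the same route as the paper: prove $\pi$ (and $\theta*\mu$) is a monotone system via the explicit formulas for the conditional marginals, build the stochastic-monotone couplings for $P_{\pi\text{-GD}}$, $P_{\textnormal{cl}}$, and $\psim$ (with the same case split on $X_v=\star$ versus $X_v\in\{0,1\}$ for $\psim$), and verify reversibility of $P_{\textnormal{cl}}$ and $\psim$ by direct computation using the formula $\pi(Y)\propto\mu(\textsf{contr}(Y))\theta^{C_1(Y)}(1-\theta)^{C_\star(Y)}$. The only cosmetic difference is that you inline the coupling argument for $P_{\textnormal{cl}}$ whereas the paper factors it out as a standalone lemma about $\textsf{lift}$ and $\textsf{contr}$ preserving stochastic dominance.
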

The proof of \Cref{lem:properties-of-transition-matrices} is deferred to \Cref{sec:proof-comparison-inequality-relation}. 

Now, we can use the comparison inequality relation $\mc$ in \Cref{def:comparison-inequality-relation} to compare the two Markov chains.
We prove the following lemma for two transition matrices.
\begin{lemma}\label{lem:phase-comparison}
    If $\mu$ is a monotone system, then $P_{\pi\textnormal{-mGD}}^{(t)} \mc P_{\textnormal{alg}}^{(t)}$ for all $t \geq 1$. 
\end{lemma}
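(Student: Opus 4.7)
The plan is to reduce \Cref{lem:phase-comparison} to a single-site comparison and then verify it via a direct computation on a three-point fiber. Since $P_{\textnormal{alg}}^{(t)}$ equals $\psim$ when $t\not\equiv 1\pmod{T_2}$ and equals $P_{\textnormal{cl}}\psim$ when $t\equiv 1\pmod{T_2}$, and analogously $P_{\pi\textnormal{-mGD}}^{(t)}$ equals either $P_{\pi\textnormal{-GD}}$ or $P_{\textnormal{cl}}P_{\pi\textnormal{-GD}}$, and since $P_{\textnormal{cl}}\mc P_{\textnormal{cl}}$ holds trivially with $P_{\textnormal{cl}}\in\+{MC}_\pi$ (\Cref{lem:properties-of-transition-matrices}), the composition part of \Cref{lem:comparison-inequality-properties} reduces both cases to proving $P_{\pi\textnormal{-GD}}\mc\psim$. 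Writing $P_{\pi\textnormal{-GD}}=\frac{1}{n}\sum_v P_v^{\pi\textnormal{-GD}}$ and $\psim=\frac{1}{n}\sum_v P_v^{\psim}$ for the single-site updates and invoking the convex-combination part of the same lemma, we are further reduced to showing $P_v^{\pi\textnormal{-GD}}\mc P_v^{\psim}$ for each $v\in V$. Along the way I would verify that $P_v^{\pi\textnormal{-GD}},P_v^{\psim}\in\+{MC}_\pi$, which is a routine check of reversibility and stochastic monotonicity using that both $\pi$ and $(\theta*\mu)$ are monotone systems.

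For the single-site statement, take any distribution $\nu$ with $\nu/\pi$ increasing and show $\nu P_v^{\pi\textnormal{-GD}}\sd \nu P_v^{\psim}$. Both operators leave $X_{V\setminus\{v\}}$ unchanged, so the output distributions share the same marginal on $V\setminus\{v\}$; it suffices, on each fiber $y\in\{0,1,\star\}^{V\setminus\{v\}}$, to couple the fiber conditional distributions respecting the induced total order $0<1<\star$ on the value at $v$. The fiber conditional $\tilde\nu(c):=\nu(y,c)/\nu(y)$ then has $\tilde\nu/\pi_v^y$ increasing, inherited from $\nu/\pi$ up to a positive multiplicative constant. On this fiber, $\tilde\nu P_v^{\pi\textnormal{-GD}}=\pi_v^y$ (the Glauber step resamples from stationary regardless of input), while $\tilde\nu P_v^{\psim}$ places mass $\tilde\nu(\star)$ at $\star$ and mass $1-\tilde\nu(\star)$ on $\{0,1\}$ distributed as $(\theta*\mu)_v^{y'}$, with $y'=\textsf{contr}(y)$. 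Setting $\alpha=\mu_v^{y'}(0)$ and $\beta=\mu_v^{y'}(1)$ and using the explicit formulas $\pi_v^y(0)=\frac{\alpha}{\alpha+\beta}$, $\pi_v^y(1)=\frac{\theta\beta}{\alpha+\beta}$, $\pi_v^y(\star)=\frac{(1-\theta)\beta}{\alpha+\beta}$, and $(\theta*\mu)_v^{y'}(1)=\frac{\theta\beta}{\alpha+\theta\beta}$, a direct calculation shows that the two required inequalities $(\tilde\nu P_v^{\psim})(\{\star\})\geq \pi_v^y(\{\star\})$ and $(\tilde\nu P_v^{\psim})(\{1,\star\})\geq \pi_v^y(\{1,\star\})$ both collapse to the single inequality $\tilde\nu(\star)\geq \pi_v^y(\star)$. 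This last inequality holds because $\tilde\nu/\pi_v^y$ is a nonnegative function increasing in the order $0<1<\star$ whose $\pi_v^y$-expectation equals $1$, so its maximum value, attained at $\star$, is at least $1$.

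The main obstacle is the single-site calculation above, and in particular the collapse of the two stochastic dominance inequalities to one. Pointwise dominance $P_v^{\pi\textnormal{-GD}}(X,\cdot)\sd P_v^{\psim}(X,\cdot)$ in fact fails whenever $X_v\neq \star$, since the Glauber step places positive mass on $\star$ while the simulation step never outputs $\star$ from a $\{0,1\}$ input. This is precisely why the weaker relation $\mc$---with its increasing-$\nu/\pi$ hypothesis---is the correct notion here: the ``excess mass on $\star$'' produced by Glauber is absorbed exactly by the inequality $\tilde\nu(\star)\geq \pi_v^y(\star)$ supplied by the Radon--Nikodym condition, and the specific compatibility between $\pi_v^y$ and $(\theta*\mu)_v^{y'}$ (both governed by the same $\mu_v^{y'}$, tilted by $\theta$) is what makes the two dominance inequalities reduce to this single one.
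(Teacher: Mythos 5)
Your proof follows the paper's route essentially step for step: the same reduction via composition and convex-combination to a single-site comparison $P^i_{\pi\textnormal{-GD}}\mc\psim^i$, the same fiberwise analysis using the explicit formulas $\pi_v^y(0)=\frac{\alpha}{\alpha+\beta}$, $\pi_v^y(1)=\frac{\theta\beta}{\alpha+\beta}$, $\pi_v^y(\star)=\frac{(1-\theta)\beta}{\alpha+\beta}$, and the same decisive use of the increasing hypothesis on $\nu/\pi$. The one presentational improvement is your observation that both CDF inequalities on the three-point fiber collapse to the single inequality $\tilde\nu(\star)\geq\pi_v^y(\star)$, handled cleanly by the expectation-equals-one argument; the paper instead proves the two signed coefficients $A(\sigma^\star)\geq 0$ and $A(\sigma^0)\leq 0$ separately (using an equivalent test-function/Abel-summation formulation), though after unwinding the algebra both of its inequalities reduce to that same condition.
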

Using \Cref{lem:phase-comparison} and the second property of $\mc$ in \Cref{lem:comparison-inequality-properties}, we have
\begin{align*}
  \forall t \geq 1, \quad P_{\pi\textnormal{-mGD}}^{(1)}P_{\pi\textnormal{-mGD}}^{(2)} \cdots P_{\pi\textnormal{-mGD}}^{(t)} \mc P_{\textnormal{alg}}^{(1)}P_{\textnormal{alg}}^{(2)} \cdots P_{\textnormal{alg}}^{(t)}.  
\end{align*}
Note that $\pi^{(0)}_{\textnormal{GD}} = \pi^{(0)}_{\textnormal{alg}}$ are the distribution of the initial configuration $\textsf{lift}(\*1_V)$. One can verify that $\frac{\pi^{(0)}_{\textnormal{GD}}}{\pi}$ is an increasing function (see \Cref{lem:pi-0-gd}). The definition of $\mc$ in \Cref{def:comparison-inequality-relation} implies \Cref{lem:stochastic-dominance} because $\pi^{(t)}_{\textnormal{GD}} = \pi^{(0)}_{\textnormal{GD}}\prod_{i=1}^t P_{\pi\textnormal{-mGD}}^{(i)}$ and $\pi^{(t)}_{\textnormal{alg}} = \pi^{(0)}_{\textnormal{alg}}\prod_{i=1}^t P_{\textnormal{alg}}^{(i)}$. The formal proof of \Cref{lem:stochastic-dominance} assuming \Cref{lem:phase-comparison} is given in \Cref{sec:proof-stochastic-dominance}.

\begin{remark}
In the above proof, we prove $\pi^{(t)}_{\textnormal{GD}} \sd \pi^{(t)}_{\textnormal{alg}}$ in \Cref{lem:stochastic-dominance} via the comparison between the simulation algorithm and the modified Glauber dynamics in \Cref{lem:phase-comparison}. One might wonder whether we can directly compare the simulation algorithm and the standard Glauber dynamics by showing that $P_{\pi\textnormal{-GD}} \mc P_{\textnormal{alg}}^{(t)}$ for all $t \geq 1$. The answer is no. A counterexample is given in Appendix \ref{sec:counterexample}. Hence, although $\pi^{(t)}_{\textnormal{GD}}$ is defined by the standard Glauber dynamics, when proving the stochastic dominance result for $\pi^{(t)}_{\textnormal{GD}}$, it is \emph{not} necessary to apply the comparison argument to the standard Glauber dynamics. This idea may be of independent interest.
\end{remark}

We give some high-level ideas for the proof of \Cref{lem:phase-comparison}. By comparing~\eqref{eq:recurrence-relation-simulation-algorithm} and~\eqref{eq:recurrence-relation-Glauber-dynamics-modified}, it suffices to show that $P_{\pi\textnormal{-GD}} \mc \psim$. In both $P_{\pi\textnormal{-GD}} $ and $\psim$, the Markov chains first pick a vertex $i \in V$ uniformly at random, and then update the value of $i$ according to their rules. Let $P_{\pi\textnormal{-GD}}^i$ and $\psim^i$ denote the transition matrix conditional on picking vertex $i \in V$. It suffices to show that $P_{\pi\textnormal{-GD}}^i \mc \psim^i$. Fix a distribution $\nu$ such that $\frac{\nu}{\pi}$ is an increasing function. We need to show that $\nu P_{\pi\textnormal{-GD}}^i \sd \nu \psim^i$. One challenge is as follows. Suppose $X \sim \nu$. If $X_i \in \{0,1\}$, then $P_{\pi\textnormal{-GD}}^i$ may update the value of $X_i$ to $\star$, but $\psim$ must update the value of $X_i$ to $\{0,1\}$. Since $\star > 1 > 0$, it seems that $P_{\pi\textnormal{-GD}}^i$ can update $X$ to a larger configuration. However, we can still show $\nu P_{\pi\textnormal{-GD}}^i \sd \nu \psim^i$ by using the fact that $\frac{\nu}{\pi}$ is an increasing function. The intuition is that the increasing property suggests that $X_i$ can also take the value $\star$ with certain probability. In case of $X_i = \star$, $\psim^i$ always keeps the value of $X_i$ as $\star$. Overall, we can show the desired stochastic dominance result. The detailed proof of \Cref{lem:phase-comparison} is given in \Cref{sec:proof-dominance}.

\paragraph{Paper organization}
In \Cref{sec:comparison-between-original-and-lifted-Glauber-dynamics-1}, we first prove \Cref{lem:simu-error} and then prove \Cref{lem:comparison} assuming \Cref{lem:stochastic-dominance}. After that, our task is reduced to proving \Cref{lem:stochastic-dominance}. 
In \Cref{sec:proof-comparison-inequality-relation}, we verify some properties of Markov chains stated in \Cref{lem:many-stationary} and \Cref{lem:properties-of-transition-matrices}. With these properties, we can prove \Cref{lem:stochastic-dominance} assuming \Cref{lem:phase-comparison} in \Cref{sec:proof-stochastic-dominance}. We prove \Cref{lem:phase-comparison} in \Cref{sec:proof-dominance}. This finishes the proof of the main technical result in \Cref{thm:FD-comparison}. Finally,  \Cref{sec:proofs-of-general-results-and-applications} is devoted to proving \Cref{thm:main-comparison} and applications.

\section{Analysis of Glauber dynamics and simulation algorithm}\label{sec:comparison-between-original-and-lifted-Glauber-dynamics-1}
Our main result in \Cref{thm:FD-comparison} is a corollary of \Cref{lem:simu-error} and \Cref{lem:comparison}.
In this section, we first prove \Cref{lem:simu-error} in \Cref{sec:simu-error}.
Then, we prove \Cref{lem:comparison} assuming the stochastic dominance in \Cref{lem:stochastic-dominance} holds.
Specifically, in \Cref{sec:comparison-between-original-and-lifted-Glauber-dynamics-2} and \Cref{sec:analysis-of-contracting-operation}, we give a comparison result of the Glauber dynamics on $\pi$ and $\mu$ and some properties of the lifting and contracting operations, which are used in the proof of \Cref{lem:comparison} in \Cref{sec:proof-of-the-comparison-result}.

\subsection{Correctness of the simulation algorithm}\label{sec:simu-error}
\begin{proof}[Proof of \Cref{lem:simu-error}]
If we can perfectly simulate the field dynamics for $T_1$ steps, then the total variation distance between the output and $\mu$ is at most $\epsilon/2$. In the simulation algorithm, every transition of the field dynamics is simulated by running the Glauber dynamics for $T_2$ steps. By the definition of the mixing time of tilted distributions, for each transition, we can couple the perfect simulation with the simulation algorithm successfully with probability at least $1-\epsilon/(2T_1)$. By a union bound over $T_1$ steps, the total variation distance between the output of the simulation algorithm and the output of the perfect simulation is at most $\epsilon/2$. The lemma follows by the triangle inequality of the total variation distance.
\end{proof}

\subsection{Comparison between Glauber dynamics on \texorpdfstring{$\pi$}{} and \texorpdfstring{$\mu$}{}}\label{sec:comparison-between-original-and-lifted-Glauber-dynamics-2}

We need a middle step to relate the Glauber dynamics on $\pi$ with the Glauber dynamics on $\mu$. 
Recall that $P_{\pi\textnormal{-GD}}$ and $P_{\mu\textnormal{-GD}}$ are the transition matrices of the Glauber dynamics on $\pi$ and $\mu$, respectively. 
Let $(X^{(t)}_{\textnormal{$\pi$-GD}})_{t \geq 0}$ (resp. $(X^{(t)}_{\textnormal{$\mu$-GD}})_{t \geq 0}$) be the sequence generated by the Glauber dynamics for $\pi$ (resp. $\mu$) starting from $X^{(0)}_{\textnormal{$\pi$-GD}} = \textsf{lift}(\*1_V)$ (resp. $X^{(0)}_{\textnormal{$\mu$-GD}} = \*1_V$).
\begin{lemma}\label{lem:comparison-lifted}
    For any integer $t \geq 0$, $X^{(t)}_{\textnormal{$\pi$-GD}}$ and $\textnormal{\textsf{lift}}(X^{(t)}_{\textnormal{$\mu$-GD}})$ are identically distributed, and moreover, $\textnormal{\textsf{contr}}(X^{(t)}_{\textnormal{$\pi$-GD}})$ and $X^{(t)}_{\textnormal{$\mu$-GD}}$ are identically distributed.
\end{lemma}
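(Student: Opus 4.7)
The plan is to prove the first identity by induction on $t$ and then derive the second as a one-line corollary, using that $\mathsf{contr} \circ \mathsf{lift}$ is the identity on $\{0,1\}^V$. The base case $t=0$ is immediate from the definitions: $X^{(0)}_{\pi\textnormal{-GD}} = \mathsf{lift}(\*1_V)$ and $X^{(0)}_{\mu\textnormal{-GD}} = \*1_V$, so $\mathsf{lift}(X^{(0)}_{\mu\textnormal{-GD}})$ is literally the random variable defining $X^{(0)}_{\pi\textnormal{-GD}}$.

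The key technical ingredient is a marginal identity for $\pi$ that I would establish separately. Concretely, for any $v \in V$ and any feasible $y \in \{0,1,\star\}^{V\setminus\{v\}}$, writing $x = \mathsf{contr}(y) \in \{0,1\}^{V\setminus\{v\}}$, I claim
\begin{align*}
    \pi_v^{y}(0) = \mu_v^{x}(0), \qquad \pi_v^{y}(1) = \theta \cdot \mu_v^{x}(1), \qquad \pi_v^{y}(\star) = (1-\theta) \cdot \mu_v^{x}(1).
\end{align*}
In other words, sampling from $\pi_v^y$ is the same as sampling $Z \sim \mu_v^x$ and then applying the per-coordinate random lift to $Z$. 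This is proved by unfolding the definition $\pi(y) = \sum_{x'} \mu(x') \Pr[\mathsf{lift}(x')=y]$ and noting that on the support of $\pi$ the preimage under $\mathsf{contr}$ is a unique $x'$, so the weight factorizes as $\mu(x') \cdot \theta^{\#1\text{'s in } y}(1-\theta)^{\#\star\text{'s in } y}$. Fixing $y_{V\setminus\{v\}}$ and letting $y_v$ vary over $\{0,1,\star\}$, the common combinatorial factor cancels after normalization and the three ratios above appear.

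For the inductive step, I would couple the two chains by selecting the same uniformly random variable $v$ in both. By the inductive hypothesis, we may couple $X^{(t)}_{\pi\textnormal{-GD}}$ and $X^{(t)}_{\mu\textnormal{-GD}}$ so that $X^{(t)}_{\pi\textnormal{-GD}} = \mathsf{lift}(X^{(t)}_{\mu\textnormal{-GD}})$ almost surely. Then on $V\setminus\{v\}$ the lift relation is automatically preserved into step $t+1$. For coordinate $v$, the Glauber dynamics on $\pi$ resamples from $\pi_v^{X^{(t)}_{\pi\textnormal{-GD}}|_{V\setminus\{v\}}}$; by the marginal identity, this is distributionally the same as drawing $X^{(t+1)}_{\mu\textnormal{-GD}}|_v \sim \mu_v^{X^{(t)}_{\mu\textnormal{-GD}}|_{V\setminus\{v\}}}$ and then independently lifting that single coordinate. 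Hence the joint distribution at time $t+1$ still satisfies $X^{(t+1)}_{\pi\textnormal{-GD}} = \mathsf{lift}(X^{(t+1)}_{\mu\textnormal{-GD}})$, closing the induction.

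The main (small) obstacle is just verifying the marginal identity cleanly, because one needs to be careful about which $y_v$ values are feasible and about the fact that the lift distribution conditional on $x_v=1$ splits mass between $1$ and $\star$ but leaves $x_v=0$ intact. Once that identity is in hand, the induction is routine bookkeeping. The second conclusion of the lemma is then immediate: applying the deterministic map $\mathsf{contr}$ to both sides of $X^{(t)}_{\pi\textnormal{-GD}} \stackrel{d}{=} \mathsf{lift}(X^{(t)}_{\mu\textnormal{-GD}})$ and using $\mathsf{contr}(\mathsf{lift}(X)) = X$ pointwise gives $\mathsf{contr}(X^{(t)}_{\pi\textnormal{-GD}}) \stackrel{d}{=} X^{(t)}_{\mu\textnormal{-GD}}$.
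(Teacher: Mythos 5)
Your proof is correct and, at heart, rests on the same key fact as the paper's: the marginal of $\pi$ at a vertex $v$, conditioned on $y_{V\setminus\{v\}}$, is obtained by sampling from $\mu_v^{\textsf{contr}(y)_{V\setminus\{v\}}}$ and then independently lifting the single coordinate. The paper verifies this relation in-line within a direct computation of the conditional distributions $\pi^{(t+1)}_{\textnormal{GD}}\!\mid_i$ and $\mu^{(t+1)}_{\textnormal{GD}}\!\mid_i$ and shows the explicit identity $(\pi^{(t+1)}_{\textnormal{GD}}\!\mid_i)(\sigma) = \theta^{C_1(\sigma)}(1-\theta)^{C_\star(\sigma)}\cdot(\mu^{(t+1)}_{\textnormal{GD}}\!\mid_i)(\textsf{contr}(\sigma))$; you instead isolate the marginal identity as a standalone fact (which also appears in the paper's later proof of \Cref{lem:monotone-pi}, see \eqref{eq:pi-0-i} and \eqref{eq:pi-1-i}) and repackage the inductive step as a coupling. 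Both are fine; yours is arguably a touch more conceptual and avoids carrying the combinatorial factors through the induction. The one place where your write-up is slightly thinner than it should be is the closing sentence of the inductive step: ``hence the joint distribution at time $t+1$ still satisfies $X^{(t+1)}_{\pi\textnormal{-GD}} = \textsf{lift}(X^{(t+1)}_{\mu\textnormal{-GD}})$'' requires observing that the lift coins on $V\setminus\{v\}$ were independent of everything at time $t$, the resampling of $v$ in the $\mu$-chain uses fresh randomness independent of those coins, and the lift coin used at $v$ is fresh and independent of the rest --- so that, conditional on $X^{(t+1)}_{\mu\textnormal{-GD}}$, the per-coordinate lift coins are still i.i.d. and correctly distributed. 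This is easy but should be said explicitly. The derivation of the second statement from the first via $\textsf{contr}\circ\textsf{lift}=\mathrm{id}$ matches the paper.
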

\begin{proof}

    We first use induction to show that $X^{(t)}_{\textnormal{$\pi$-GD}}$ and $\textnormal{\textsf{lift}}(X^{(t)}_{\textnormal{$\mu$-GD}})$ are identically distributed for any integer $t\geq 0$.
    Let $\mu^{(t)}_{\textnormal{GD}}$ be the distribution of $X^{(t)}_{\mu\textnormal{-GD}}$.
    Recall that $\pi^{(t)}_{\textnormal{GD}}$ denotes the distribution of $X^{(t)}_{\pi\textnormal{-GD}}$.
    When $t=0$, $X^{(0)}_{\pi\textnormal{-GD}}=\textnormal{lift}(\*1_V)=\textnormal{lift}(X^{(0)}_{\mu\textnormal{-GD}})$.

    Assume $X^{(t)}_{\textnormal{$\pi$-GD}}$ and $\textnormal{\textsf{lift}}(X^{(t)}_{\textnormal{$\mu$-GD}})$ are identically distributed for some $t\in \mathbb{N}$. We prove the statement for $t+1$.
    Let $\mu^{(t+1)}_{\textnormal{GD}}\mid_i$ be the distribution of $X^{(t+1)}_{\textnormal{$\mu$-GD}}$ conditional on $i\in V$ is update by the Glauber dynamics. Then we have $\mu^{(t+1)}_{\textnormal{GD}}=\frac{1}{n}\sum_{i\in V}\mu^{(t+1)}_{\textnormal{GD}}\mid_i$. Similarly we can define $\pi^{(t+1)}_{\textnormal{GD}}\mid_i$ and $\pi^{(t+1)}_{\textnormal{GD}}=\frac{1}{n}\sum_{i\in V}\pi^{(t+1)}_{\textnormal{GD}}\mid_i$. Define two functions $C_1,C_\star:\{0,1,\star\}^V \to \mathbb{N}$ that count the number of 1s and $\star$s in a configuration. Formally, for any $\sigma\in \{0,1,\star\}^V$,
    \begin{align}\label{def:c-sigma}
    \begin{split}
    C_1(\sigma)=\abs{\{i\in V\mid\sigma_i=1 \}};\\
    C_\star(\sigma)=\abs{\{i\in V\mid\sigma_i=\star \}}.   
    \end{split}
    \end{align}
    Now, we fix a variable $i \in V$ and fix a configuration $\sigma\in \Omega(\pi) \subseteq \{0,1,\star\}^V$. Define $\tau=\textnormal{contr}(\sigma) \in \Omega(\mu)$. We only need to consider feasible $\sigma,\tau$ because Glauber dynamics stays in feasible space.
    Define $\sigma^{x}$ and $\tau^y$ for $x\in \{0,1,\star\}$ and $y\in\{0,1\}$ by
    \begin{align*}
    &\sigma^x_i=x,\quad \tau^y_i = y;\\
    &\sigma^x_j = \sigma_j,\quad \tau^y_j=\tau_j,\quad \forall j\in V\setminus \{i\}.
    \end{align*}
    In words, $\sigma^x$ and $\tau^y$ are configurations obtained from $\sigma$ and $\tau$ by changing the value of $i$ to $x$ and $y$, respectively.
    We can compute $(\pi^{(t+1)}_{\textnormal{GD}}\mid_i)(\sigma)$ and $(\mu^{(t+1)}_{\textnormal{GD}}\mid_i)(\tau)$ using terms of $\pi^{(t)}_{\textnormal{GD}}$ and $\mu^{(t+1)}_{\textnormal{GD}}$, respectively. By the transition of Glauber dynamics, the value of $i$ is resampled from the conditional distribution, we have
    \begin{align}\label{eq:update-pi-mu}
    \begin{split}
     (\pi^{(t+1)}_{\textnormal{GD}}\mid_i)(\sigma) &= \sum_{x\in \{0,1,\star\}}\pi^{(t)}_{\textnormal{GD}}(\sigma^x)\frac{\pi(\sigma)}{\sum_{z\in \{0,1,\star\}}\pi(\sigma^z)};\\
    (\mu^{(t+1)}_{\textnormal{GD}}\mid_i)(\tau) &= \sum_{y\in \{0,1\}}\mu^{(t)}_{\textnormal{GD}}(\tau^y)\frac{\mu(\tau)}{\sum_{z\in \{0,1\}}\mu(\tau^z)}.   
    \end{split}
    \end{align}
    By definition of $\pi$, $\textsf{lift}(\mu)=\pi$ and $\pi(\sigma) = \mu(\tau)\theta^{C_1(\sigma)}(1-\theta)^{C_\star(\sigma)}$. Hence, $\pi(\sigma^1) + \pi(\sigma^\star)$ equals to $\mu(\tau^1) \theta^{C_1(\sigma^0)}(1-\theta)^{C_\star(\sigma^0)}$. This is because, $\tau^1$ is lifted $\sigma^1$ or $\sigma^\star$ if and only if every $j \neq i$ with $\tau_j = 1$ are lifted correctly.
    It holds that
    \begin{align*}
    \frac{\pi(\sigma)}{\sum_{z\in \{0,1,\star\}}\pi(\sigma^z)} &=\frac{\mu(\tau)\theta^{C_1(\sigma)}(1-\theta)^{C_\star(\sigma)}}{\mu(\tau^0)\theta^{C_1(\sigma^0)}(1-\theta)^{C_\star(\sigma^0)}+\mu(\tau^1)\theta^{C_1(\sigma^0)}(1-\theta)^{C_\star(\sigma^0)}}\\
    &=\frac{\mu(\tau)\theta^{\*1\{\sigma_i=1\}} (1-\theta)^{\*1\{\sigma_i=\star\}}}{\mu(\tau^0)+\mu(\tau^1)},\quad \text{where $\*1\{\cdot\}$ is the indicator function.}
    \end{align*}
    By induction hypothesis, $\textsf{lift}(\mu^{(t)}_{\textnormal{GD}})=\pi^{(t)}_{\textnormal{GD}}$. By the definition of the lifting, we can compute the sum of $\pi^{(t)}_{\textnormal{GD}}(\sigma^z)$ for $z\in \{0,1,\star\}$. Again, the sum of $\pi^{(t)}_{\textnormal{GD}}(\sigma^1) + \pi^{(t)}_{\textnormal{GD}}(\sigma^\star)$ equals to $\mu^{(t)}_{\textnormal{GD}}(\tau^1) \theta^{C_1(\sigma^0)}(1-\theta)^{C_\star(\sigma^0)}$. We have the following equation
    \begin{align*}
    \sum_{z\in \{0,1,\star\}}\pi^{(t)}_{\textnormal{GD}}(\sigma^z)&=\mu^{(t)}_{\textnormal{GD}}(\tau^0)\theta^{C_1(\sigma^0)}(1-\theta)^{C_\star(\sigma^0)}+\mu^{(t)}_{\textnormal{GD}}(\tau^1)\theta^{C_1(\sigma^0)}(1-\theta)^{C_\star(\sigma^0)}\\
    &=(\mu^{(t)}_{\textnormal{GD}}(\tau^0)+\mu^{(t)}_{\textnormal{GD}}(\tau^1))\theta^{C_1(\sigma^0)}(1-\theta)^{C_\star(\sigma^0)}.
    \end{align*}
    Notice that $\theta^{\*1\{\sigma_i=1\}} (1-\theta)^{\*1\{\sigma_i=\star\}}\theta^{C_1(\sigma^0)}(1-\theta)^{C_\star(\sigma^0)}=\theta^{C_1(\sigma)}(1-\theta)^{C_\star(\sigma)}$. Combine above two equations and~\eqref{eq:update-pi-mu}, we have
    \begin{align*}
    (\pi^{(t+1)}_{\textnormal{GD}}\mid_i)(\sigma)&= \frac{\mu(\tau)\theta^{\*1\{\sigma_i=1\}} (1-\theta)^{\*1\{\sigma_i=\star\}}}{\mu(\tau^0)+\mu(\tau^1)} (\mu^{(t)}_{\textnormal{GD}}(\tau^0)+\mu^{(t)}_{\textnormal{GD}}(\tau^1))\theta^{C_1(\sigma^0)}(1-\theta)^{C_\star(\sigma^0)}\\
    &=\theta^{C_1(\sigma)}(1-\theta)^{C_\star(\sigma)}\sum_{x\in \{0,1\}}\mu^{(t)}_{\textnormal{GD}}(\tau^x) \frac{\mu(\tau)}{\sum_{z\in \{0,1\}}\mu(\tau^z)}\\
    &=\theta^{C_1(\sigma)}(1-\theta)^{C_\star(\sigma)} \cdot (\mu^{(t+1)}_{\textnormal{GD}}\mid_i)(\tau).
    \end{align*}
    The above equation can be interpreted as follows: to draw a random sample $\sigma$ from $\pi^{(t+1)}_{\textnormal{GD}}\mid_i$, we first draw a sample $\tau \sim \mu^{(t+1)}_{\textnormal{GD}}\mid_i$, and then apply the lifting operation to $\tau$. We slightly abuse the notation by using $\textsf{lift}(\nu)$ to denote the distribution of $\textsf{lift}(X)$ for $X \sim \nu$. The above equation shows $\pi^{(t+1)}_{\textnormal{GD}}\mid_i = \textsf{lift}(\mu^{(t+1)}_{\textnormal{GD}}\mid_i)$. We have 
    \begin{align*}
    \pi^{(t+1)}_{\textnormal{GD}}=\frac{1}{n}\sum_{i\in V}\pi^{(t+1)}_{\textnormal{GD}}\mid_i = \frac{1}{n} \sum_{i\in V} \textnormal{\textsf{lift}}(\mu^{(t+1)}_{\textnormal{GD}}\mid_i) =\textnormal{\textsf{lift}}(\mu^{(t+1)}_{\textnormal{GD}}),
    \end{align*}
    where the last equality follows from $\mu^{(t+1)}_{\textnormal{GD}} = \frac{1}{n}\sum_{i\in V} \mu^{(t+1)}_{\textnormal{GD}}\mid_i$ and it is straightforward to verify that $\frac{1}{n} \sum_{i\in V} \textnormal{\textsf{lift}}(\mu^{(t+1)}_{\textnormal{GD}}\mid_i) =\textnormal{\textsf{lift}}(\mu^{(t+1)}_{\textnormal{GD}})$. Hence, the above equation shows that $X^{(t+1)}_{\textnormal{$\pi$-GD}} \sim \pi^{(t+1)}_{\textnormal{GD}}$ and $\textnormal{\textsf{lift}}(X^{(t+1)}_{\textnormal{$\mu$-GD}}) \sim \textsf{lift}(\mu^{(t+1)}_{\textnormal{GD}})$ are identically distributed. This proves the induction step.
    
    We next prove the second result. We slightly abuse the notation by using $\textsf{contr}(\nu)$ to denote the distribution of $\textsf{contr}(X)$ for $X \sim \nu$. 
    For any $\tau \in \{0,1\}^V$, it is easy to see $\tau = \textsf{contr}(\textsf{lift}(\tau))$. 
    Hence, for any distribution $\nu$ over $\{0,1\}^V$,  $\nu  =\textsf{contr}(\textsf{lift}(\nu))$. Then for any integer $t\geq 0$, 
    \begin{align*}
    \textnormal{\textsf{contr}}(\pi^{(t)}_{\textnormal{GD}}) = \textnormal{\textsf{contr}}(\textsf{lift}(\mu^{(t)}_{\textnormal{GD}}))= \mu^{(t)}_{\textnormal{GD}}, 
    \end{align*}
    where the first equality holds because of the first result $\pi^{(t)}_{\textnormal{GD}} = \textsf{lift}(\mu^{(t)}_{\textnormal{GD}})$ of the lemma.
    \end{proof}

\subsection{Analysis of contracting operation}\label{sec:analysis-of-contracting-operation}
We also need the following lemma, which shows that the stochastic dominance relation is preserved by the $\textsf{contr}$ operation. The following lemma gives the result for both the lifting and contracting operations. The result for the lifting operation will be used in later sections.
\begin{lemma}\label{lem:lift-and-contr-keep-dominate} 
Let $\nu$ and $\nu'$ be two distributions over $\{0,1\}^V$. Let $\nu_{L}$ be the distribution of $\textsf{lift}(X)$ for $X \sim \nu$ and $\nu_{L}'$ be the distribution of $\textsf{lift}(Y)$ for $Y \sim \nu'$. If $\nu \sd \nu'$, then $\nu_{L} \sd \nu_{L}'$.

Let $\nu$ and $\nu'$ be two distributions over  $\{0,1,\star\}^V$. Let $\nu_{C}$ be the distribution of $\textsf{contr}(X)$ for $X \sim \nu$ and $\nu_{C}'$ the distribution of $\textsf{contr}(Y)$ for $Y \sim \nu'$. If $\nu \sd \nu'$, then $\nu_{C} \sd \nu_{C}'$.
\end{lemma}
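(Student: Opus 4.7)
The plan is to use the coupling characterization of stochastic dominance from \Cref{def:stochastic-dominance}: for each direction, start from a monotone coupling of $\nu$ and $\nu'$ and construct a monotone coupling of the image distributions. Both parts reduce to checking a handful of coordinate cases under the extended order $0 < 1 < \star$.

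For the lifting direction, I would take a coupling $(X,Y)$ of $(\nu,\nu')$ with $X \preceq Y$ almost surely, then draw an independent uniform $U_v \in [0,1]$ for each $v \in V$ and use it to lift \emph{both} $X_v$ and $Y_v$ by the same threshold rule: a coordinate with value $1$ is sent to $\star$ if $U_v < 1 - \theta$ and to $1$ otherwise, while a $0$ stays a $0$. Call the resulting random configurations $\tilde X$ and $\tilde Y$; by construction $\tilde X \sim \nu_L$ and $\tilde Y \sim \nu_L'$. I then verify $\tilde X_v \leq \tilde Y_v$ coordinatewise. The only nontrivial cases arise from $(X_v,Y_v) \in \{(0,0),(0,1),(1,1)\}$ (the case $(1,0)$ is ruled out by $X \preceq Y$). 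In $(0,0)$ and $(1,1)$ the shared $U_v$ produces identical outputs; in $(0,1)$ the output is $0$ on the $X$-side and either $1$ or $\star$ on the $Y$-side, both dominating $0$. Hence $\tilde X \preceq \tilde Y$ almost surely and so $\nu_L \sd \nu_L'$.

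For the contracting direction, I would take a coupling $(X,Y)$ of the $\{0,1,\star\}^V$-valued $\nu, \nu'$ with $X \preceq Y$ almost surely, and simply consider the deterministic coupling $(\textsf{contr}(X), \textsf{contr}(Y))$. It suffices to observe that the single-coordinate map $\textsf{contr}$ is monotone with respect to the orders involved: enumerating the pairs $(X_v,Y_v)$ compatible with $X_v \leq Y_v$, namely
\begin{equation*}
(0,0),\,(0,1),\,(0,\star),\,(1,1),\,(1,\star),\,(\star,\star),
\end{equation*}
we get contracted pairs $(0,0),(0,1),(0,1),(1,1),(1,1),(1,1)$ respectively, each satisfying $\textsf{contr}(X)_v \leq \textsf{contr}(Y)_v$. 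Thus $\textsf{contr}(X) \preceq \textsf{contr}(Y)$ almost surely, giving $\nu_C \sd \nu_C'$.

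I do not expect any real obstacle here: the argument is essentially bookkeeping once the right coupling is written down. The only mildly delicate point is in the lifting case, where one must share the uniform random variable $U_v$ across the two lifts so that the marginal lifting rule is respected simultaneously while the order is preserved whenever $X_v = Y_v = 1$. Using two independent copies of the randomness would also give the correct marginals but would break the coupling in that case, so the proof must emphasize the shared-randomness construction.
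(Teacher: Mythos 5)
Your proof is correct and follows essentially the same route as the paper: both parts start from a monotone coupling $(X,Y)$ with $X \preceq Y$ a.s.\ and then push it through $\textsf{lift}$ (via a shared per-coordinate randomness, which is exactly the ``perfect coupling'' the paper invokes when $X_v = Y_v$) and through the deterministic $\textsf{contr}$, verifying coordinatewise monotonicity by case analysis. The only cosmetic difference is that you make the shared uniform $U_v$ explicit; the paper phrases this as coupling the lifts perfectly when the values agree.
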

\begin{proof}   
Consider two distributions $\nu,\nu'\in \{0,1\}^V$ such that $\nu\preceq\nu'$. By \Cref{def:stochastic-dominance}, there exists a coupling $(X,Y)$ where $X\sim \nu$, $Y\sim \nu'$ such that $X\preceq Y$. We use this $(X,Y)$ to construct a coupling $(P,Q)$ between $\nu_L$ and $\nu_L'$.
For every $i\in V$, $X_i\leq Y_i$. Since the lifting operation lifts each value of $i\in V$ independently, we only need couple the lifting operation on each $i \in V$ such that $\textsf{lift}(X)_i\leq \textsf{lift}(Y)_i$.
If $X_i=Y_i$, then $\textsf{lift}(X)_i=\textsf{lift}(Y)_i$ can be coupled perfectly. If $X_i \neq Y_i$, then the only possible case is $X_i=0$ and $Y_i=1$, and $\textsf{lift}(X)_i=0$, so $\textsf{lift}(X)_i < \textsf{lift}(Y)_i$ because $\textsf{lift}(Y)_i \in \{1,\star\}$.

The contracting operation is a deterministic operation. The second property is easier to verify. 
Consider $\nu,\nu'$ over $\{0,1,\star\}^V$ satisfying $\nu\sd \nu'$.
There exists a coupling $(X,Y)$ where $X\sim \nu$, $Y\sim \nu'$ such that $X\preceq Y$.
We apply the contracting operation to both $X$ and $Y$ to obtain a coupling between $\nu_C$ and $\nu_C'$.
For any $i\in V$, if $X_i=0$, then $\textsf{contr}(X)_i=0\leq \textsf{contr}(Y)_i$. If $X_i=1$ or $X_i=\star$, then $Y_i=1$ or $Y_i=\star$, which implies $\textsf{contr}(X)_i= \textsf{contr}(Y)_i=1$. Hence, there is a coupling between $\nu_C$ and $\nu_C'$ such that $\textsf{contr}(X) \preceq \textsf{contr}(Y)$.
\end{proof}

\subsection{Proof of the comparison result}\label{sec:proof-of-the-comparison-result}
To prove \Cref{lem:comparison}, we need the following lemma.
\begin{lemma}\label{lem:reversible-transition}
Let $P$ be a reversible chain with respect to distribution $\mu$. For any distribution $\nu$, $P \frac{\nu}{\mu} = \frac{\nu P}{\mu}$, where $\frac{\nu}{\mu}$ is a column vector such that $\frac{\nu}{\mu}(\sigma) = \frac{\nu(\sigma)}{\mu(\sigma)}$ and $\nu,\mu$ are row vectors.
\end{lemma}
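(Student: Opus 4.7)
The plan is to prove this by a direct computation that unfolds both sides as entry-wise expressions and uses the detailed balance equation once. Since $P$ is reversible with respect to $\mu$, the defining identity is $\mu(\sigma) P(\sigma, \tau) = \mu(\tau) P(\tau, \sigma)$ for every pair $\sigma, \tau$, and the whole argument consists of applying this substitution in the natural place.

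Concretely, I would fix an arbitrary state $\sigma$ and compute the $\sigma$-entry of the left-hand side using the convention that $P$ acts on column vectors from the left:
\begin{align*}
\left(P \frac{\nu}{\mu}\right)(\sigma) = \sum_\tau P(\sigma, \tau) \cdot \frac{\nu(\tau)}{\mu(\tau)}.
\end{align*}
Then I would use detailed balance in the form $P(\sigma, \tau) = \frac{\mu(\tau)}{\mu(\sigma)} P(\tau, \sigma)$ (valid whenever $\mu(\sigma) > 0$; entries where $\mu(\sigma) = 0$ are outside the support and can be handled separately or simply excluded from the underlying space $\Omega$) to rewrite the sum as
\begin{align*}
\sum_\tau \frac{\mu(\tau)}{\mu(\sigma)} P(\tau, \sigma) \cdot \frac{\nu(\tau)}{\mu(\tau)} = \frac{1}{\mu(\sigma)} \sum_\tau \nu(\tau) P(\tau, \sigma) = \frac{(\nu P)(\sigma)}{\mu(\sigma)},
\end{align*}
which by definition equals $\frac{\nu P}{\mu}(\sigma)$. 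Since $\sigma$ was arbitrary, this gives the claimed identity.

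There is essentially no obstacle here; the only thing to be mildly careful about is the convention of rows versus columns (the paper is using $P$ as a linear operator on column vectors when written on the left, and $\nu P$ as a row-vector times matrix product), and the mild bookkeeping issue of $\mu(\sigma) = 0$ states, which is handled by restricting the ambient space to $\Omega = \mathrm{supp}(\mu)$ as is done throughout the paper. No further lemmas are required; the statement is purely a restatement of detailed balance in ratio form, and the proof is a two-line calculation.
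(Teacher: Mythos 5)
Your proof is correct and uses exactly the same direct computation as the paper: unfold the $\sigma$-entry, apply detailed balance once, and regroup to recognize $\frac{(\nu P)(\sigma)}{\mu(\sigma)}$. The extra remarks about the $\mu(\sigma)=0$ boundary and row/column conventions are sound but not needed beyond what the paper implicitly assumes.
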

\begin{proof}
    To verify, for any $x$, we have $(P \frac{\nu}{\mu})(x) = \sum_{y} P(x,y) \frac{\nu(y)}{\mu(y)}$. Since $P$ is reversible, it can be written as $\sum_{y} P(y,x) \frac{\nu(y)}{\mu(x)} = \frac{(\nu P)(x)}{\mu(x)}$.
\end{proof}
Now, we are ready to prove \Cref{lem:comparison}.
\begin{proof}[Proof of \Cref{lem:comparison} assuming \Cref{lem:stochastic-dominance}]
    By \Cref{lem:stochastic-dominance}, $\pi^{(t)}_{\textnormal{GD}} \sd \pi^{(t)}_{\textnormal{alg}}$ for any $t \geq 0$. For any distribution $\nu$, we use $\textsf{contr}(\nu)$ to denote the distribution of $\textsf{contr}(X)$ for $X \sim \nu$. By \Cref{lem:comparison-lifted}, $\textsf{contr}(\pi^{(t)}_{\textnormal{GD}}) = \mu^{(t)}_{\textnormal{GD}}$.
    Using \Cref{lem:lift-and-contr-keep-dominate}, we have $\textsf{contr}(\pi^{(t)}_{\textnormal{GD}}) = \mu^{(t)}_{\textnormal{GD}} \sd \textsf{contr}(\pi^{(t)}_{\textnormal{alg}})$.
   
    We next show that $\frac{\mu^{(t)}_{\textnormal{GD}}}{\mu}: \Omega(\mu) \to \mathbb{R}_{\geq 0}$ is an increasing function.
   Since $\mu$ is a monotone system, by \Cref{def:monotone-system}, the third definition of stochastically monotone chain in \Cref{prop:stochastically-monotone-equivalent}, and the transition of Glauber dynamics, it is straightforward to verify that the Glauber dynamics $P_{\mu\textnormal{-GD}}$ is stochastically monotone. Since $\mu^{(0)}_{\textnormal{GD}}$ takes the maximum configuration $\*1_V$ with probability 1, $\frac{\mu^{(0)}_{\textnormal{GD}}}{\mu}$ is an increasing function. Because $P_{\mu\textnormal{-GD}}$ is reversible with respect to $\mu$, by \Cref{lem:reversible-transition}, for any integer $i\geq 0$,
    $P_{\mu\textnormal{-GD}} \frac{\mu^{(i)}_{\textnormal{GD}}}{\mu} = \frac{\mu^{(i)}_{\textnormal{GD}}P_{\mu\textnormal{-GD}}}{\mu} = \frac{\mu^{(i+1)}_{\textnormal{GD}}}{\mu}$.
    Since $P_{\mu\textnormal{-GD}}$ is stochastically monotone, using \Cref{def:stochastically-monotone}, we verify that $\frac{\mu^{(t)}_{\textnormal{GD}}}{\mu}: \Omega(\mu) \to \mathbb{R}_{\geq 0}$ is an increasing function.

    Note that $X^{(t)}_{\textnormal{alg}}$ in \Cref{lem:comparison} follows the law of $\pi^{(t)}_{\textnormal{alg}}$ and $X^{(t)}_{\textnormal{$\mu$-GD}}$ follows the law of $\mu^{(t)}_{\textnormal{GD}}$.
    With the above two properties, the lemma is proved by \Cref{lem:gen-compare}.
\end{proof}

\section{Verification of properties of Markov chains}\label{sec:proof-comparison-inequality-relation}
In this section, we prove the properties of Markov chains stated in \Cref{lem:many-stationary} and \Cref{lem:properties-of-transition-matrices}.

\Cref{lem:many-stationary} is a straightforward consequence of \Cref{lem:comparison-lifted}.
\begin{proof}[Proof of \Cref{lem:many-stationary}]
The Markov chain $P_{\textnormal{cl}}$ can be decomposed as $P_CP_L$. Using \Cref{lem:comparison-lifted}, $\pi^{(t)}_{\textnormal{GD}} P_C = \mu^{(t)}_{\textnormal{GD}}$ and $\mu^{(t)}_{\textnormal{GD}} P_L = \pi^{(t)}_{\textnormal{GD}}$. Therefore, $\pi^{(t)}_{\textnormal{GD}} P_{\textnormal{cl}} = \pi^{(t)}_{\textnormal{GD}}$, where $\mu^{(t)}_{\textnormal{GD}}$ is distribution of $X^{(t)}_{\mu\textnormal{-GD}}$ in \Cref{lem:comparison-lifted}, which is the sequence of distributions generated by Glauber dynamics on $\mu$ starting from $\*1_V$. This proves the lemma.

The above proof suggests that for any distribution $\mu$ over $\Omega(\mu)$, $\mu P_{\textnormal{L}}$ over $\Omega(\pi)$ is a stationary distribution of the chain $P_{\textnormal{cl}}$. This is because $\mu P_{\textnormal{L}}P_{\textnormal{cl}} = \mu P_{\textnormal{L}}P_{\textnormal{C}}P_{\textnormal{L}} = \mu P_{\textnormal{L}}$ as $P_{\textnormal{L}}P_{\textnormal{C}} = I$.
\end{proof}

To prove \Cref{lem:properties-of-transition-matrices}, we need to show that Markov chains $P_{\pi\textnormal{-GD}}$, $P_{\textnormal{cl}}$ and $\psim$ are stochastically monotone and reversible, which are proved in \Cref{sec:proof-stochastically-monotone} and \Cref{sec:proof-reversible}.

\subsection{Stochastic monotonicity of Markov chains}\label{sec:proof-stochastically-monotone}
We first show that $\pi$ is a monotone system over $\Omega(\pi) \subseteq \{0,1,\star\}^V$.
We extend the definition of a monotone system in \Cref{def:monotone-system} to the case that the distribution is over $\{0,1,\star\}^V$. We say the lifted distribution $\pi$ is a monotone system if for any $i \in V$, for any feasible and comparable configurations $\sigma\preceq \tau$ with $\sigma,\tau \in \{0,1,\star\}^{V \setminus \{i\}}$, the conditional distribution $\pi^{\sigma}_i$ is stochastically dominated by $\pi^{\tau}_i$, which means there exists a coupling $(x,y)$ of $\pi^{\sigma}_i$ and $\pi^{\tau}_i$ such that $x\leq y$ with probability~1, where we use the ordering $0 < 1 < \star$.
By the monotonicity of the distribution $\mu$, we have the following property for the lifted distribution $\pi$, which implies that the Glauber dynamics $P_{\pi\textnormal{-GD}}$ is stochastically monotone.
\begin{lemma}\label{lem:monotone-pi}
If $\mu$ is a monotone system, then $\pi$ is also a monotone system. This implies the Glauber dynamics $P_{\pi\textnormal{-GD}}$ on $\pi$ is stochastically monotone.
\end{lemma}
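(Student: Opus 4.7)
The plan is to unwind the definition of $\pi$ explicitly on a single coordinate, reduce the claim to the monotonicity of $\mu$, and then deduce the Glauber statement by the standard coupling. First, I would compute the one-site conditional of $\pi$: for any feasible pinning $\sigma \in \{0,1,\star\}^{V \setminus \{i\}}$, letting $\sigma' := \textsf{contr}(\sigma) \in \{0,1\}^{V \setminus \{i\}}$ and using $\pi(\rho) = \mu(\textsf{contr}(\rho)) \theta^{C_1(\rho)}(1-\theta)^{C_\star(\rho)}$, a short computation gives
\begin{align*}
\pi_i^\sigma(0) \;\propto\; \mu(\sigma'{}^0), \qquad \pi_i^\sigma(1) \;\propto\; \theta\,\mu(\sigma'{}^1), \qquad \pi_i^\sigma(\star) \;\propto\; (1-\theta)\,\mu(\sigma'{}^1),
\end{align*}
after cancelling the common factor $\theta^{C_1(\sigma)}(1-\theta)^{C_\star(\sigma)}$. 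Equivalently, $\pi_i^\sigma(0) = \mu_i^{\sigma'}(0)$ and $\pi_i^\sigma(\{1,\star\}) = \mu_i^{\sigma'}(1)$, with the split inside $\{1,\star\}$ being a Bernoulli$(\theta)$ that does \emph{not} depend on $\sigma$.

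Next, I would verify that the $\textsf{contr}$ map is order-preserving: if $\sigma \preceq \tau$ in $\{0,1,\star\}^{V \setminus \{i\}}$ using $0 < 1 < \star$, then $\textsf{contr}(\sigma) \preceq \textsf{contr}(\tau)$ in $\{0,1\}^{V \setminus \{i\}}$. This is immediate because $\sigma_j = 0$ implies $\textsf{contr}(\sigma)_j = 0$, and $\sigma_j \in \{1,\star\}$ forces $\tau_j \in \{1,\star\}$, whence both contractions equal $1$. Applying the monotonicity of $\mu$ to the pair $\textsf{contr}(\sigma) \preceq \textsf{contr}(\tau)$ yields $\mu_i^{\textsf{contr}(\sigma)}(1) \leq \mu_i^{\textsf{contr}(\tau)}(1)$, so by the displayed identities $\pi_i^\sigma(\{1,\star\}) \leq \pi_i^\tau(\{1,\star\})$. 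I would then exhibit the monotone coupling explicitly: draw $U$ uniformly on $[0,1]$; set $X = Y = 0$ if $U < 1 - \pi_i^\tau(\{1,\star\})$; set $X = 0$ and $Y$ drawn from Bernoulli$(\theta)$ on $\{1,\star\}$ if $U$ lies in the next interval; and otherwise set $X = Y$ jointly on $\{1,\star\}$ via the same Bernoulli$(\theta)$ coin. In every case $X \leq Y$ under $0 < 1 < \star$, giving $\pi_i^\sigma \sd \pi_i^\tau$ and hence monotonicity of $\pi$.

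For the second sentence, I would invoke the standard argument for Glauber dynamics on monotone systems, as summarized in \Cref{prop:stochastically-monotone-equivalent}: by the third equivalent definition it suffices to show $P_{\pi\textnormal{-GD}}(\sigma,\cdot) \sd P_{\pi\textnormal{-GD}}(\tau,\cdot)$ for any comparable $\sigma \preceq \tau$ in $\Omega(\pi)$. Couple the two chains by picking the same index $i \in V$ uniformly at random; on the event that $i$ is chosen, the updated values are drawn from $\pi_i^{\sigma_{V\setminus\{i\}}}$ and $\pi_i^{\tau_{V\setminus\{i\}}}$ respectively, which can be coupled monotonically by the result just proved (note $\sigma_{V\setminus\{i\}} \preceq \tau_{V\setminus\{i\}}$ and both pinnings are feasible). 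On the other coordinates the two configurations remain unchanged and thus still satisfy $\sigma_j \leq \tau_j$, completing the coupling.

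The only subtlety I anticipate is bookkeeping with the extended ordering $0 < 1 < \star$: one must be careful that although $\star$ is the largest element, increasing a coordinate in the pinning from $1$ to $\star$ does \emph{not} change its contraction and so does not change the marginal distribution at $i$. This is exactly what makes the argument work, because the mass that $\pi_i^\sigma$ assigns to the upper block $\{1,\star\}$ is governed purely by $\mu_i^{\textsf{contr}(\sigma)}(1)$, to which the monotonicity hypothesis on $\mu$ applies directly.
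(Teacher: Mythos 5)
Your proposal is correct and takes essentially the same approach as the paper: compute the single-site conditional of $\pi$ in terms of $\mu$'s conditional marginal under $\textsf{contr}$, note that $\textsf{contr}$ is order-preserving so the monotonicity of $\mu$ transfers, and construct a monotone coupling whose split inside $\{1,\star\}$ is a $\sigma$-independent Bernoulli$(\theta)$ coin, which is precisely the two-stage sample-then-lift coupling the paper uses; the final step invoking the third characterization in \Cref{prop:stochastically-monotone-equivalent} and a same-index coupling is the standard argument the paper leaves implicit.
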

\begin{proof}
Given any configuration $\sigma\in \Omega(\pi)$ and any $i \in V$, for any $x\in \{0,1,\star\}$, let $\sigma^x$ be the configuration obtained from $\sigma$ by setting $\sigma_i=x$.
The conditional distribution of $\pi$ is
\begin{align*}
\pi^{\sigma_{V \setminus \{i\}}}_i(0)=\frac{\pi(\sigma^0)}{\pi(\sigma^0)+\pi(\sigma^\star)+\pi(\sigma^1)}.
\end{align*}
Note that $\pi(\sigma^0) = \mu(\textsf{Contr}(\sigma^0)) \theta^{C_1(\sigma^0)}(1-\theta)^{C_\star(\sigma^0)}$, where $C_1$ and $C_\star$ are defined in~\eqref{def:c-sigma}, and $\pi(\sigma^\star) + \pi(\sigma^1) = \mu(\textsf{Contr}(\sigma^1)) \theta^{C_1(\sigma^1)-1}(1-\theta)^{C_\star(\sigma^1)}$. The second equation holds because to obtain $\sigma^1$ or $\sigma^\star$ from $\pi$, we first need to sample $X \sim \mu$ such that $X = \textsf{Contr}(\sigma^1)$, and for any $j \neq i$ with $X_j = 1$, if $\sigma_j = 1$, we need to keep the value of $X_j$ in the lifting operation, which happens with probability $\theta$; if $\sigma_j = \star$, we need to lift the value of $X_j$ to $\star$ in the lifting operation, which happens with probability $1-\theta$. Combining these two equations, we have 
\begin{align}\label{eq:pi-0-i}
    \pi^{\sigma_{V \setminus \{i\}}}_i(0) = \frac{\mu(\textsf{Contr}(\sigma^0))}{\mu(\textsf{Contr}(\sigma^0))+\mu(\textsf{Contr}(\sigma^1))}.
\end{align}
Similarly, we have 
\begin{align}\label{eq:pi-1-i}
    \pi^{\sigma_{V \setminus \{i\}}}_i(1) = \frac{\theta \mu(\textsf{Contr}(\sigma^1))}{\mu(\textsf{Contr}(\sigma^0))+\mu(\textsf{Contr}(\sigma^1))},\quad \pi^{\sigma_{V \setminus \{i\}}}_i(\star) =  \frac{(1-\theta)\mu(\textsf{Contr}(\sigma^1))}{\mu(\textsf{Contr}(\sigma^0))+\mu(\textsf{Contr}(\sigma^1))}.
\end{align}

Hence, we can use the following process to sample from the distribution $\pi^{\sigma_{V \setminus \{i\}}}_i$: (1) Sample $x \sim  \mu_i^{\textsf{Contr}(\sigma)_{V \setminus \{i\}}}$; (2) If $x_i = 0$, then output $0$; (3) If $x_i = 1$, then output $1$ with probability $\theta$ and output $\star$ with the rest probability $1-\theta$.

For $\sigma, \tau \in \Omega(\pi)$ such that $\sigma\preceq \tau$, construct the following coupling of $\pi^{\sigma_{V \setminus \{i\}}}_i$ and $\pi^{\tau_{V \setminus \{i\}}}_i$.
\begin{enumerate}
    \item Since $\sigma \preceq \tau$, we have $\textsf{Contr}(\sigma) \preceq \textsf{Contr}(\tau)$. Since $\mu$ is a monotone system, we can sample $(x,y)$ from the coupling of $\mu^{\textsf{Contr}(\sigma)_{V \setminus \{i\}}}_i$ and $\mu^{\textsf{Contr}(\tau)_{V \setminus \{i\}}}_i$ such that  $x \leq y$.
    \item If $x= 0$, then keep $x = 0$. Randomly lift the value of $y$ if $y = 1$. In this case, $0 = x \leq y$.
    \item If $x = 1$, then $1 = y \geq x = 1$. Use the perfect coupling to lift both $x$ and $y$, so that $x = y$.
\end{enumerate}
The above coupling of $\pi^{\sigma_{V \setminus \{i\}}}_i$ and $\pi^{\tau_{V \setminus \{i\}}}_i$ satisfies that $x \leq y$ with probability~1. Therefore, $\pi^{\sigma_{V \setminus \{i\}}}_i \preceq \pi^{\tau_{V \setminus \{i\}}}_i$ for any $i \in V$ and any  $\sigma \preceq \tau$. 

Using the third definition in \Cref{prop:stochastically-monotone-equivalent}, it is straightforward to show that the Glauber dynamics $P_{\pi\textnormal{-GD}}$ is stochastically monotone.
\end{proof}

Next, we show that the chain $\psim$ is stochastically monotone. We need the following lemma, which says the monotone property is preserved by tilted distributions.
\begin{lemma}\label{lem:monotone-pi-sgd}
If $\mu$ over $\{0,1\}^V$ is a monotone system, then for any $\theta > 0$, the tilted distribution $(\theta * \mu)$ is a monotone system.
\end{lemma}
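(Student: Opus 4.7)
The plan is to reduce the claim to a direct calculation about conditional odds, then invoke the monotonicity of $\mu$. Since every probability space here is over $\{0,1\}$, stochastic dominance just means $(\theta*\mu)^\sigma_v(1)\le (\theta*\mu)^\tau_v(1)$, which in turn is equivalent to the comparison of odds ratios $(\theta*\mu)^\sigma_v(1)/(\theta*\mu)^\sigma_v(0)\le (\theta*\mu)^\tau_v(1)/(\theta*\mu)^\tau_v(0)$ (with the usual convention that a zero denominator gives odds $+\infty$). So the first step is to fix $v\in V$ and a comparable feasible pair $\sigma\preceq\tau$ in $\{0,1\}^{V\setminus\{v\}}$, denote by $\sigma^0,\sigma^1$ the extensions of $\sigma$ with $\sigma_v=0,1$, and similarly for $\tau$, and write the relevant quantities down.

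Next I would compute directly. By definition of the tilted distribution,
\[
\frac{(\theta*\mu)^\sigma_v(1)}{(\theta*\mu)^\sigma_v(0)}
=\frac{\mu(\sigma^1)\theta^{\|\sigma^1\|_1}}{\mu(\sigma^0)\theta^{\|\sigma^0\|_1}}
=\theta\cdot\frac{\mu(\sigma^1)}{\mu(\sigma^0)}
=\theta\cdot\frac{\mu^\sigma_v(1)}{\mu^\sigma_v(0)},
\]
and analogously for $\tau$. So the key identity is that passing from $\mu$ to $\theta*\mu$ multiplies the conditional odds of $\{v=1\}$ by the constant $\theta>0$, uniformly over every pinning of $V\setminus\{v\}$.

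Now I would invoke monotonicity of $\mu$: since $\sigma\preceq\tau$, Definition~\ref{def:monotone-system} gives $\mu^\sigma_v(1)\le\mu^\tau_v(1)$, equivalently $\mu^\sigma_v(1)/\mu^\sigma_v(0)\le\mu^\tau_v(1)/\mu^\tau_v(0)$. Multiplying both sides by $\theta>0$ preserves the inequality, so
\[
\frac{(\theta*\mu)^\sigma_v(1)}{(\theta*\mu)^\sigma_v(0)}\le\frac{(\theta*\mu)^\tau_v(1)}{(\theta*\mu)^\tau_v(0)},
\]
which is exactly $(\theta*\mu)^\sigma_v(1)\le(\theta*\mu)^\tau_v(1)$, i.e., the stochastic dominance $(\theta*\mu)^\sigma_v\preceq_{\mathrm{sd}}(\theta*\mu)^\tau_v$.

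I do not anticipate a real obstacle here; the only mild nuisance is the corner case where $\mu(\sigma^0)=0$ or $\mu(\tau^0)=0$, so some odds are infinite. This is handled either by adopting the convention $\mu^\sigma_v(1)/\mu^\sigma_v(0)=+\infty$ when $\mu^\sigma_v(0)=0$ (in which case the monotonicity of $\mu$ forces $\mu^\tau_v(1)/\mu^\tau_v(0)=+\infty$ as well, since $\sigma\preceq\tau$ implies $\mu^\sigma_v(1)\le\mu^\tau_v(1)$ and hence $\mu^\tau_v(1)=1$), or simply by rewriting the final inequality as the polynomial inequality $\mu(\sigma^1)\mu(\tau^0)\le\mu(\tau^1)\mu(\sigma^0)$, which is the cross-multiplied form of $\mu^\sigma_v(1)\le\mu^\tau_v(1)$ and makes sense without any division. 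Both formulations give the result immediately.
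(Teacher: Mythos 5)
Your proof is correct and is essentially the same argument as the paper's: the paper computes the tilted conditional marginal $(\theta*\mu)^\sigma_v(1)=\frac{\theta\,\mu^\sigma_v(1)}{\theta\,\mu^\sigma_v(1)+\mu^\sigma_v(0)}$ and appeals to monotonicity of this expression in $\mu^\sigma_v(1)$, while you express the identical fact in terms of odds being multiplied by $\theta$. The odds-ratio phrasing and your treatment of the zero-denominator corner case are minor cosmetic additions, not a different route.
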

\begin{proof}
Fix two feasible configurations $\sigma$ and $\tau$ in $\Omega(\mu) = \Omega(\theta * \mu)$ such that $\sigma\preceq \tau$. We need to show that for any $i \in V$, $(\theta * \mu)^{\sigma_{V \setminus \{i\}}}_i(1) \leq (\theta * \mu)^{\tau_{V \setminus \{i\}}}_i(1)$. A simple calculation shows that
\begin{align}\label{eq:theta-mu-monotone}
    (\theta * \mu)^{\sigma_{V \setminus \{i\}}}_i(1) = \frac{\theta \mu^{\sigma_{V \setminus \{i\}}}_i(1)}{\theta \mu^{\sigma_{V \setminus \{i\}}}_i(1) + \mu^{\sigma_{V \setminus \{i\}}}_i(0)} \text{ and } (\theta * \mu)^{\tau_{V \setminus \{i\}}}_i(1) = \frac{\theta \mu^{\tau_{V \setminus \{i\}}}_i(1)}{\theta \mu^{\tau_{V \setminus \{i\}}}_i(1) + \mu^{\tau_{V \setminus \{i\}}}_i(0)}.
\end{align}
The lemma follows from $\mu^{\sigma_{V \setminus \{i\}}}_i(1) \leq \mu^{\tau_{V \setminus \{i\}}}_i(1)$ by the monotonicity of $\mu$.
\end{proof}

\begin{lemma}\label{lem:monotone-sgd}
If $\mu$ is a monotone system, then the chain $\psim$ is stochastically monotone.
\end{lemma}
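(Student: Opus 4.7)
The plan is to verify the third equivalent condition for stochastic monotonicity in \Cref{prop:stochastically-monotone-equivalent}: for any comparable pair $\sigma \preceq \tau$ in $\Omega(\pi)$, I will exhibit a coupling of $\psim(\sigma,\cdot)$ and $\psim(\tau,\cdot)$ that produces $(\sigma',\tau')$ with $\sigma' \preceq \tau'$ almost surely. The coupling first picks the \emph{same} uniformly random index $i \in V$ for both chains, and then couples the update at coordinate $i$ according to the three cases described below. All coordinates $j \neq i$ remain unchanged, and since $\sigma_j \leq \tau_j$ by hypothesis, monotonicity at those coordinates is automatic. Thus the whole task reduces to coupling the update at $i$ so that $\sigma'_i \leq \tau'_i$.

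The three cases are driven by how $(\sigma_i,\tau_i)$ sits in the order $0 < 1 < \star$. \textbf{Case 1:} $\sigma_i = \star$. Since $\sigma_i \leq \tau_i$ and $\star$ is the maximum, $\tau_i = \star$ as well; the $\psim$ rule keeps both coordinates at $\star$, so $\sigma'_i = \tau'_i = \star$. \textbf{Case 2:} $\sigma_i \in \{0,1\}$ and $\tau_i = \star$. Then $\psim$ on $\tau$ keeps $\tau'_i = \star$, while $\psim$ on $\sigma$ produces some $\sigma'_i \in \{0,1\}$. Since $0,1 < \star$, we get $\sigma'_i \leq \tau'_i$ regardless of how the resample is performed. \textbf{Case 3:} both $\sigma_i, \tau_i \in \{0,1\}$. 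Here $\psim$ resamples $\sigma'_i \sim (\theta * \mu)^{\textsf{contr}(\sigma)_{V \setminus \{i\}}}_i$ and $\tau'_i \sim (\theta * \mu)^{\textsf{contr}(\tau)_{V \setminus \{i\}}}_i$. From $\sigma \preceq \tau$ one checks that $\textsf{contr}(\sigma) \preceq \textsf{contr}(\tau)$ (if $\sigma_j = 0$ the inequality $\textsf{contr}(\sigma)_j = 0 \leq \textsf{contr}(\tau)_j$ is trivial, and if $\sigma_j \in \{1,\star\}$ then $\tau_j \in \{1,\star\}$ forces $\textsf{contr}(\sigma)_j = \textsf{contr}(\tau)_j = 1$), and both contracted pinnings are feasible for $\theta * \mu$ since $\sigma,\tau \in \Omega(\pi)$. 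By \Cref{lem:monotone-pi-sgd}, $(\theta * \mu)$ is a monotone system, so the two marginal distributions above admit a monotone coupling with $\sigma'_i \leq \tau'_i$; we use this coupling.

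Combining the three cases, we obtain a valid coupling of $\psim(\sigma,\cdot)$ and $\psim(\tau,\cdot)$ under which $\sigma' \preceq \tau'$ with probability $1$, which gives $\psim(\sigma,\cdot) \sd \psim(\tau,\cdot)$. By \Cref{prop:stochastically-monotone-equivalent}, $\psim$ is stochastically monotone. I expect the routine part to be Cases 1--2, which are immediate from the ordering; the only substantive step is Case 3, where the monotonicity of $(\theta * \mu)$ is precisely what gives the coupling, so the main ``content'' of the proof is the observation that $\textsf{contr}$ preserves the partial order and the invocation of \Cref{lem:monotone-pi-sgd}. No new machinery beyond the lemmas already established in this section should be required.
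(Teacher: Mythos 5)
Your proof is correct and follows essentially the same approach as the paper's: verifying the third condition in \Cref{prop:stochastically-monotone-equivalent} by coupling on the same random index $i$, splitting into cases according to $(\sigma_i,\tau_i)$, noting that $\textsf{contr}$ preserves the partial order, and invoking \Cref{lem:monotone-pi-sgd} for the resampling case. The only cosmetic difference is that you split the paper's single case $\tau_i=\star$ into two sub-cases according to $\sigma_i$.
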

\begin{proof}
We use the third definition in \Cref{prop:stochastically-monotone-equivalent} to prove the result. Fix two feasible and comparable configurations $\sigma$ and $\tau$ in $\Omega(\pi) \subseteq \{0,1,\star\}^V$ such that $\sigma\preceq \tau$. We need to show that $\psim(\sigma,\cdot) \sd \psim(\tau,\cdot)$. The Markov chain $\psim$ picks vertex $i$ uniformly at random and then updates the value of $i$. We show that stochastic dominance holds conditional on any $i \in V$ is picked by the Markov chain. 
Since $\sigma\preceq \tau$, we have following cases.

Case $\tau_i = \star$. In this case, $\psim$ must keep the value of $\tau_i = \star$. Since $\star > 1 > 0$ is the largest value, it is straightforward to verify the stochastic dominance result.
 
Case $\tau_i \in \{0,1\}$. In this case, it must hold that $\sigma_i \in \{0,1\}$. The Markov chain $\psim$ resamples $\tau_i$ from the distribution $(\theta * \mu)^{\tau'}_i$, where $\tau'$ is a configuration over $V \setminus \{i\}$ such that $\tau' = \textsf{Contr}(\tau)_{V \setminus \{i\}}$. Similarly, $\psim$ resamples $\sigma_i$ from the distribution $(\theta * \mu)^{\sigma'}_i$, where $\sigma'$ is a configuration over $V \setminus \{i\}$ such that $\sigma' = \textsf{Contr}(\sigma)_{V \setminus \{i\}}$. Since $\sigma \preceq \tau$, by the definition of contracting operation, we have $\sigma' \preceq \tau'$. We know that $(\theta * \mu)$ is a monotone system by \Cref{lem:monotone-pi-sgd}. Therefore, we have $(\theta * \mu)^{\sigma'}_i \sd(\theta * \mu)^{\tau'}_i$.
\end{proof}

\begin{lemma}\label{lem:monotone-pcl}
The chain $P_{\textnormal{cl}}$ is always stochastically monotone.
\end{lemma}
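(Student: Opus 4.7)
The plan is to verify the third equivalent characterization of stochastic monotonicity from \Cref{prop:stochastically-monotone-equivalent}: for any feasible comparable pair $\sigma \preceq \tau$ in $\Omega(\pi)$, exhibit a coupling of $P_{\textnormal{cl}}(\sigma,\cdot)$ and $P_{\textnormal{cl}}(\tau,\cdot)$ that stays in the order. Write $P_{\textnormal{cl}} = P_C P_L$, where $P_C$ denotes the deterministic contracting step (mapping $\Omega(\pi)$ into $\Omega(\mu)$) and $P_L$ the random lifting step. Because the lifting is applied coordinate-wise and independently, it suffices to handle the two steps separately and compose the couplings.

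First, I would observe that $\textsf{contr}$ is monotone as a function: recall the convention $0 < 1 < \star$ on $\{0,1,\star\}$; since $\textsf{contr}$ sends $0 \mapsto 0$ and both $1, \star \mapsto 1$, we have $\sigma_v \leq \tau_v \Rightarrow \textsf{contr}(\sigma)_v \leq \textsf{contr}(\tau)_v$ for every $v \in V$, hence $\textsf{contr}(\sigma) \preceq \textsf{contr}(\tau)$. Next, I would couple the two independent lifts vertex by vertex. For $v$ with $\textsf{contr}(\sigma)_v = \textsf{contr}(\tau)_v$, reuse the same randomness so the lifted values agree. For $v$ with $\textsf{contr}(\sigma)_v = 0 < 1 = \textsf{contr}(\tau)_v$, the $\sigma$-side lift is forced to $0$, while the $\tau$-side lift lands in $\{1,\star\}$; in every case $0 \leq \textsf{lift}(\textsf{contr}(\tau))_v$. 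This produces $\textsf{lift}(\textsf{contr}(\sigma)) \preceq \textsf{lift}(\textsf{contr}(\tau))$ almost surely, which by \Cref{def:stochastic-dominance} gives $P_{\textnormal{cl}}(\sigma,\cdot) \sd P_{\textnormal{cl}}(\tau,\cdot)$, as required.

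An equivalent shortcut is to invoke \Cref{lem:lift-and-contr-keep-dominate} twice on the point-mass distributions $\delta_\sigma \sd \delta_\tau$: the contracting half of that lemma gives $\delta_{\textsf{contr}(\sigma)} \sd \delta_{\textsf{contr}(\tau)}$, and the lifting half then yields exactly $P_{\textnormal{cl}}(\sigma,\cdot) \sd P_{\textnormal{cl}}(\tau,\cdot)$. There is no substantive obstacle here. The noteworthy feature, worth flagging at the end of the proof, is that in contrast to \Cref{lem:monotone-sgd} for $\psim$ and \Cref{lem:monotone-pi} for $P_{\pi\textnormal{-GD}}$, the argument uses no property of $\mu$ whatsoever: monotonicity of $P_{\textnormal{cl}}$ is built into the contract and lift primitives themselves, which is why the statement holds unconditionally.
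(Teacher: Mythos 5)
Your proposal is correct and takes essentially the same approach as the paper: the paper also reduces to \Cref{lem:lift-and-contr-keep-dominate}, merely phrasing it via the second characterization in \Cref{prop:stochastically-monotone-equivalent} (general distributions) rather than the third (Dirac masses), a distinction without a difference. Your explicit vertex-by-vertex coupling is just a self-contained unpacking of that lemma, and your closing observation — that no property of $\mu$ is used — is accurate and a reasonable point to highlight.
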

\begin{proof}
The chain $P_{\textnormal{cl}}$ can be viewed as a Markov chain defined over the whole space $\{0,1,\star\}^V$. We use the second definition in \Cref{prop:stochastically-monotone-equivalent} to prove the result. For two distributions $\nu$ and $\nu'$ over $\{0,1,\star\}^V$ such that $\nu \sd \nu'$, we need to show that for any $\nu \psim \sd \nu' \psim$. This result is a straightforward consequence of \Cref{lem:lift-and-contr-keep-dominate}.
\end{proof}

\subsection{Reversibility of Markov chains}\label{sec:proof-reversible}
\begin{lemma}\label{lem:reversible-pcl}
All Markov chains $P_{\pi\textnormal{-GD}}$, $P_{\textnormal{cl}}$ and $\psim$ are reversible with respect to $\pi$.
\end{lemma}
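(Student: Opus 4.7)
The plan is to verify detailed balance separately for each of the three chains, using the explicit relationship $\pi(\sigma)=\mu(\textsf{contr}(\sigma))\,\theta^{C_1(\sigma)}(1-\theta)^{C_\star(\sigma)}$ together with formulas analogous to \eqref{eq:pi-0-i} and \eqref{eq:pi-1-i} from the proof of \Cref{lem:monotone-pi}. Throughout, I only need to check $\pi(\sigma)P(\sigma,\tau)=\pi(\tau)P(\tau,\sigma)$ for each transition matrix $P$ and all $\sigma,\tau\in\Omega(\pi)$.

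\emph{Chain $P_{\pi\textnormal{-GD}}$.} This is standard: single-site Glauber dynamics that resamples coordinate $i$ from the true conditional $\pi_i^{\sigma_{V\setminus\{i\}}}$ automatically satisfies detailed balance with respect to $\pi$. I would spell out the one-line check that $\pi(\sigma)\pi_i^{\sigma_{V\setminus\{i\}}}(\tau_i)=\pi(\tau)\pi_i^{\tau_{V\setminus\{i\}}}(\sigma_i)$ when $\sigma,\tau$ differ only at coordinate $i$, since $\sigma_{V\setminus\{i\}}=\tau_{V\setminus\{i\}}$ makes both sides equal $\pi(\sigma)\pi(\tau)/\pi_{V\setminus\{i\}}(\sigma_{V\setminus\{i\}})$.

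\emph{Chain $P_{\textnormal{cl}}$.} Observe that $P_{\textnormal{cl}}(\sigma,\tau)=0$ unless $\textsf{contr}(\sigma)=\textsf{contr}(\tau)$, and in that case the lifting step gives $P_{\textnormal{cl}}(\sigma,\tau)=\theta^{C_1(\tau)}(1-\theta)^{C_\star(\tau)}$, depending only on $\tau$. Therefore
\[
\pi(\sigma)P_{\textnormal{cl}}(\sigma,\tau)=\mu(\textsf{contr}(\sigma))\,\theta^{C_1(\sigma)+C_1(\tau)}(1-\theta)^{C_\star(\sigma)+C_\star(\tau)},
\]
and since $\textsf{contr}(\sigma)=\textsf{contr}(\tau)$ this expression is symmetric in $\sigma$ and $\tau$, giving detailed balance.

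\emph{Chain $\psim$.} Here $\psim$ picks $i\in V$ uniformly; if $\sigma_i=\star$ the chain does nothing, so detailed balance holds trivially between $\sigma$ and $\tau\neq\sigma$ (both sides are $0$). Otherwise, nontrivial transitions go between $\sigma,\tau$ with $\sigma_{V\setminus\{i\}}=\tau_{V\setminus\{i\}}$ and $\sigma_i,\tau_i\in\{0,1\}$. Writing $\eta:=\textsf{contr}(\sigma)_{V\setminus\{i\}}=\textsf{contr}(\tau)_{V\setminus\{i\}}$, the transition probability (conditional on selecting $i$) is $(\theta*\mu)_i^{\eta}(\tau_i)$. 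Using $\pi(\sigma)=\mu(\eta,\sigma_i)\,\theta^{C_1(\sigma)}(1-\theta)^{C_\star(\sigma)}$ and $(\theta*\mu)(\eta,x)\propto\mu(\eta,x)\theta^{\|\eta\|_1+x}$, the ratio
\[
\frac{\pi(\sigma)(\theta*\mu)_i^{\eta}(\tau_i)}{\pi(\tau)(\theta*\mu)_i^{\eta}(\sigma_i)}=\frac{\mu(\eta,\sigma_i)\mu(\eta,\tau_i)\,\theta^{C_1(\sigma)+\tau_i}(1-\theta)^{C_\star(\sigma)}}{\mu(\eta,\tau_i)\mu(\eta,\sigma_i)\,\theta^{C_1(\tau)+\sigma_i}(1-\theta)^{C_\star(\tau)}}
\]
reduces to $\theta^{(C_1(\sigma)+\tau_i)-(C_1(\tau)+\sigma_i)}(1-\theta)^{C_\star(\sigma)-C_\star(\tau)}$. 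Since $\sigma,\tau$ differ only at $i$ where neither takes value $\star$, we get $C_\star(\sigma)=C_\star(\tau)$ and $C_1(\sigma)-C_1(\tau)=\sigma_i-\tau_i$, so the ratio is $1$.

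The main obstacle is the $\psim$ case, because $\psim$ updates in the lifted space $\{0,1,\star\}^V$ but resamples using a conditional law of the tilted distribution on $\{0,1\}^V$. The delicate bookkeeping is to match the $\theta$ and $(1-\theta)$ factors in $\pi$ with those in the tilted distribution after contracting. Once the $\star$-case and the $\{0,1\}$-case are separated cleanly, the computation collapses as above.
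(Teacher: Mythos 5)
Your proposal is correct and follows essentially the same route as the paper: verify the detailed balance equation for each of the three chains by appealing to the product formula $\pi(\sigma)=\mu(\textsf{contr}(\sigma))\,\theta^{C_1(\sigma)}(1-\theta)^{C_\star(\sigma)}$. The paper phrases the $\psim$ case by fixing $\sigma_i=0,\tau_i=1$ and plugging in the explicit conditional formulas \eqref{eq:pi-0-i}, \eqref{eq:pi-1-i}, and \eqref{eq:theta-mu-monotone}, while you work symmetrically with the ratio and cancel $\theta^{\|\eta\|_1}$ directly, but the underlying bookkeeping — that $C_\star(\sigma)=C_\star(\tau)$ and $C_1(\sigma)-C_1(\tau)=\sigma_i-\tau_i$ for a $\{0,1\}$-valued single-site move — is identical.
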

\begin{proof}
The reversibility of $P_{\pi\textnormal{-GD}}$ is well-known. We prove the reversibility of $P_{\textnormal{cl}}$ and $\psim$ with respect to $\pi$.
Note that starting from any feasible configuration, the two chains always stay in feasible space $\Omega(\pi)$. We only need to prove the detailed balance equation for any pair of configurations in $\Omega(\pi)$.

Fix two configurations $\sigma$ and $\tau$ in $\Omega(\pi)$. For $P_{\textnormal{cl}}$, we need to show that 
\[\pi(\sigma) P_{\textnormal{cl}}(\sigma,\tau) = \pi(\tau) P_{\textnormal{cl}}(\tau,\sigma).\]
By the definition of the contracting and lifting operation, if $\{v \in V \mid \sigma_v = 0\} \neq \{v \in V \mid \tau_v = 0\}$, then $P_{\textnormal{cl}}(\sigma,\tau) = P_{\textnormal{cl}}(\tau,\sigma) = 0$. Assume the two sets are the same, which means $\textsf{Contr}(\sigma) = \textsf{Contr}(\tau)$. Recall that $C_1,C_\star$ in \eqref{def:c-sigma} count the number of $1$s and $\star$s in the configuration. We have $P_{\textnormal{cl}}(\sigma,\tau) = \theta^{C_1(\tau)}(1 - \theta)^{C_\star(\tau)}$ and $P_{\textnormal{cl}}(\tau,\sigma) = \theta^{C_1(\sigma)}(1 - \theta)^{C_\star(\sigma)}$. The reversibility is proved by noting that $\pi(\sigma) = \mu(\textsf{Contr}(\sigma))\theta^{C_1(\sigma)}(1 - \theta)^{C_\star(\sigma)}$ and $\pi(\tau) = \mu(\textsf{Contr}(\tau))\theta^{C_1(\tau)}(1 - \theta)^{C_\star(\tau)}$.

For $\psim$, we need to show the following detailed balance equation.
\[\pi(\sigma) \psim(\sigma,\tau) = \pi(\tau) \psim(\tau,\sigma).\]
If $\sigma = \tau$, then the equality holds trivially. Assume $\sigma \neq \tau$. By the definition of $\psim$, we only need to consider the case where there exists exactly one $i \in V$ such that $\sigma_i \neq \tau_i$ and the two values are $0$ or $1$. Otherwise, the transition probabilities on both sides are 0. Assume $\sigma_i = 0$ and $\tau_i = 1$. Let $\rho = \sigma_{V \setminus \{i\}} = \tau_{V \setminus \{i\}}$. The above equality is equivalent to 
\[\pi^\rho_i(0) \psim(\sigma,\tau) = \pi^\rho_i(1) \psim(\tau,\sigma).\]
Using~\eqref{eq:pi-0-i} and~\eqref{eq:pi-1-i}, we have
\begin{align}\label{eq:pi-rho-i}
    \pi^{\rho}_i(0) = \frac{\mu(\textsf{Contr}(\sigma))}{\mu(\textsf{Contr}(\sigma))+\mu(\textsf{Contr}(\tau))}, \quad  \pi^{\rho}_i(1) = \frac{\theta \mu(\textsf{Contr}(\tau))}{\mu(\textsf{Contr}(\sigma))+\mu(\textsf{Contr}(\tau))},
\end{align}
By the definition of the Markov chain $\psim$, using~\eqref{eq:theta-mu-monotone} and a simple normalization, we have 
\begin{align}\label{eq:psim-rho}
    \begin{split}
        \psim(\sigma,\tau) &= (\theta * \mu)^{\textsf{Contr}(\sigma)_{V \setminus \{i\}}}_{i}(1) =\frac{\theta \mu(\textsf{Contr}(\tau))}{\theta \mu(\textsf{Contr}(\tau))+ \mu(\textsf{Contr}(\sigma))},\\
        \psim(\tau,\sigma) &= (\theta * \mu)^{\textsf{Contr}(\tau)_{V \setminus \{i\}}}_{i}(0) =\frac{\mu(\textsf{Contr}(\sigma))}{\theta \mu(\textsf{Contr}(\tau))+ \mu(\textsf{Contr}(\sigma))}.
    \end{split}
\end{align}
Combining~\eqref{eq:pi-rho-i} and~\eqref{eq:psim-rho} proves the detailed balance equation.
\end{proof}

\section{Proof of stochastic dominance}\label{sec:proof-stochastic-dominance}
In this section, we assume that  \Cref{lem:phase-comparison} holds and prove the stochastic dominance result in \Cref{lem:stochastic-dominance}. The proof of \Cref{lem:phase-comparison} is deferred to \Cref{sec:proof-dominance}. 

Recall that $\pi^{(0)}_{\textnormal{GD}}$ is the distribution of $\textsf{lift}(\*1_V)$. The following lemma gives the property that $\pi^{(0)}_{\textnormal{GD}}$ is increasing, which will be used in the proof of \Cref{lem:stochastic-dominance}.

\begin{lemma}\label{lem:pi-0-gd}
The function $\frac{\pi^{(0)}_{\textnormal{GD}}}{\pi}$ is an increasing function.
\end{lemma}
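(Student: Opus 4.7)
The plan is to compute the ratio $\pi^{(0)}_{\textnormal{GD}}/\pi$ pointwise on $\Omega(\pi)$ and observe that it takes only two values: zero or a single positive constant. Concretely, I will first unpack $\pi^{(0)}_{\textnormal{GD}}$ using the definition of the lifting operation. Since $\textsf{lift}(\*1_V)$ independently maps each coordinate $v$ to $1$ with probability $\theta$ and to $\star$ with probability $1-\theta$, the support of $\pi^{(0)}_{\textnormal{GD}}$ is exactly $\{1,\star\}^V$ and, for $\sigma$ in this set,
\[
\pi^{(0)}_{\textnormal{GD}}(\sigma) = \theta^{C_1(\sigma)}(1-\theta)^{C_\star(\sigma)}.
\]

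Next I will use the formula $\pi(\sigma) = \mu(\textsf{contr}(\sigma))\,\theta^{C_1(\sigma)}(1-\theta)^{C_\star(\sigma)}$ already derived during the proof of \Cref{lem:monotone-pi}. For $\sigma \in \{1,\star\}^V$ we have $\textsf{contr}(\sigma) = \*1_V$, so the $\theta$- and $(1-\theta)$-factors cancel and the ratio collapses to the constant $1/\mu(\*1_V)$. For any $\sigma \in \Omega(\pi)$ that contains a coordinate equal to $0$, the numerator vanishes, so the ratio is $0$. In summary,
\[
\frac{\pi^{(0)}_{\textnormal{GD}}}{\pi}(\sigma) = \begin{cases} 1/\mu(\*1_V) & \text{if } \sigma \in \{1,\star\}^V, \\ 0 & \text{otherwise.} \end{cases}
\]

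Finally I will verify the increasing property. Take any $\sigma \preceq \tau$ in $\Omega(\pi)$ under the ordering $0 < 1 < \star$. If $\sigma \notin \{1,\star\}^V$, then the left-hand side is $0$ and the inequality is trivial. If $\sigma \in \{1,\star\}^V$, then every coordinate satisfies $\tau_v \geq \sigma_v \geq 1 > 0$, so $\tau \in \{1,\star\}^V$ as well, and both sides equal the same constant $1/\mu(\*1_V)$. Hence $\tfrac{\pi^{(0)}_{\textnormal{GD}}}{\pi}(\sigma) \leq \tfrac{\pi^{(0)}_{\textnormal{GD}}}{\pi}(\tau)$, as required.

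There is no real obstacle here: the lemma is essentially a direct computation, and the key observation is simply that starting from $\*1_V$ and applying $\textsf{lift}$ produces a distribution whose support is the upward closure of $\*1_V$ (with the $\textsf{lift}$ probabilities chosen to exactly match the $\theta, 1-\theta$ factors appearing in $\pi$), so the ratio is constant on its support and zero elsewhere.
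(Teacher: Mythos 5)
Your proof is correct and takes essentially the same approach as the paper: compute $\pi^{(0)}_{\textnormal{GD}}(\sigma)$ and $\pi(\sigma)$ explicitly via the lifting factors, observe that the $\theta$- and $(1-\theta)$-powers cancel so that the ratio is $1/\mu(\*1_V)$ on $\{1,\star\}^V$ and $0$ elsewhere, and then check monotonicity by cases.
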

\begin{proof}
    Recall that two functions $C_1$ and $C_\star$ are defined in~\eqref{def:c-sigma}. 
    \begin{align*}
     \forall \sigma\in \{0,1,\star\}^V,\quad   \pi^{(0)}_{\textnormal{GD}}(\sigma) = \begin{cases}
            0, & \text{if } \exists i\in V, \sigma_i=0;\\
            \theta^{C_1(\sigma)}(1-\theta)^{C_\star(\sigma)}, & \text{otherwise}.
        \end{cases}  
    \end{align*}
    Consider two configurations $\sigma$ and $\tau$ such that $\sigma\preceq \tau$. We show that $\frac{\pi^{(0)}_{\textnormal{GD}}(\sigma)}{\pi(\sigma)}\leq \frac{\pi^{(0)}_{\textnormal{GD}}(\tau)}{\pi(\tau)}$. By definition of $\preceq$, $\sigma_i\leq\tau_i$ for all $i\in V$. If there exists an $i\in V$ such that $\sigma_i=0$,
    \begin{align*}
    \frac{\pi^{(0)}(\sigma)}{\pi(\sigma)}=0\leq \frac{\pi^{(0)}(\tau)}{\pi(\tau)}.
    \end{align*}
    We only need to consider the case where $1\leq \sigma_i\leq \tau_i$ for all $i\in V$.
    By the definition of $\pi$, for these configurations, $\pi(\sigma)=\mu(\*1_V)\theta^{C_1(\sigma)}(1-\theta)^{C_\star(\sigma)}$. We have
    \begin{align*}
    \frac{\pi^{(0)}_{\textnormal{GD}}(\sigma)}{\pi(\sigma)}=\frac{\theta^{C_1(\sigma)}(1-\theta)^{C_\star(\sigma)}}{\mu(\*1_V)\theta^{C_1(\sigma)}(1-\theta)^{C_\star(\sigma)}}=\frac{1}{\mu(\*1_V)}=\frac{\pi^{(0)}_{\textnormal{GD}}(\tau)}{\pi(\tau)}.
    \end{align*}
    Therefore, $\frac{\pi^{(0)}_{\textnormal{GD}}}{\pi}$ is an increasing function. 
\end{proof}

Now, we are ready to prove \Cref{lem:stochastic-dominance}.

\begin{proof}[Proof of \Cref{lem:stochastic-dominance} assuming \Cref{lem:phase-comparison}]
    Recall that $\pi^{(0)}_{\textnormal{GD}} = \pi^{(0)}_{\textnormal{alg}}$ is the distribution of the initial configuration $\textsf{lift}(\*1_V)$.
Recall that $\pi^{(t)}_{\textnormal{alg}}$ are defined by the recurrence relation \eqref{eq:recurrence-relation-simulation-algorithm}. Recall that $\pi^{(t)}_{\textnormal{GD}}$ is defined by $\pi^{(t)}_{\textnormal{GD}} = \pi^{(t-1)}_{\textnormal{GD}}P_{\pi\textnormal{-GD}}$. By \Cref{lem:many-stationary}, $\pi^{(t)}_{\textnormal{GD}}$ also satisfies the recurrence relation given in \eqref{eq:recurrence-relation-Glauber-dynamics-modified}. Hence, we can write 
\begin{align}\label{eq:recurrence-relation-Glauber-dynamics-modified-2}
\pi^{(t)}_{\textnormal{GD}} = \pi^{(0)}_{\textnormal{GD}}\prod_{i=1}^t P_{\pi\textnormal{-mGD}}^{(i)} \quad\text{and}\quad \pi^{(t)}_{\textnormal{alg}} = \pi^{(0)}_{\textnormal{alg}}\prod_{i=1}^t P_{\textnormal{alg}}^{(i)}. 
\end{align}
Using \Cref{lem:phase-comparison} and the second property of $\mc$ in \Cref{lem:comparison-inequality-properties}, we have
\begin{align*}
  \forall t \geq 1, \quad \prod_{i=1}^t P_{\pi\textnormal{-mGD}}^{(i)}  \mc \prod_{i=1}^t P_{\textnormal{alg}}^{(i)}.  
\end{align*}
By \Cref{lem:pi-0-gd}, $\frac{\pi^{(0)}_{\textnormal{GD}}}{\pi} = \frac{\pi^{(0)}_{\textnormal{alg}}}{\pi}$ is an increasing function. By the definition of $\mc$ in \Cref{def:comparison-inequality-relation} and~\eqref{eq:recurrence-relation-Glauber-dynamics-modified-2}, we have $\pi^{(t)}_{\textnormal{GD}} \sd \pi^{(t)}_{\textnormal{alg}}$.
\end{proof}

\section{Proof of the comparison inequality relation}\label{sec:proof-dominance}

In this section, we prove \Cref{lem:phase-comparison}. Fix a time $t$, there are two cases (1) $P^{(t)}_{\textnormal{alg}} = \psim$ and $P^{(t)}_{\pi\textnormal{-mGD}} = P_{\pi\textnormal{-GD}}$; (2) $P^{(t)}_{\textnormal{alg}} = P_{\textnormal{cl}}\psim$ and $P^{(t)}_{\pi\textnormal{-mGD}} = P_{\textnormal{cl}}P_{\pi\textnormal{-GD}}$. Since $P_{\textnormal{cl}} \mc P_{\textnormal{cl}}$, by the second property of $\mc$ in \Cref{lem:comparison-inequality-properties}, to show that $P^{(t)}_{\pi\textnormal{-GD}} \mc P^{(t)}_{\textnormal{alg}}$, we only need to show that $P_{\pi\textnormal{-GD}} \mc \psim$. For any vertex $i \in V$, let $\psim^i$ denote the transition in $\psim$ conditioned on that the vertex $i \in V$ is being selected, and let $P^i_{\pi\textnormal{-GD}}$ denote the transition in $P_{\pi\textnormal{-GD}}$ conditioned on that the vertex $i \in V$ is being selected. It holds that 
\begin{align*}
   \psim = \frac{1}{n}\sum_{i\in V} \psim^i, \quad P_{\pi\textnormal{-GD}} = \frac{1}{n}\sum_{i\in V} P^i_{\pi\textnormal{-GD}}.
\end{align*}
Using the first property of $\mc$ in \Cref{lem:comparison-inequality-properties}, to prove that $P_{\pi\textnormal{-GD}} \mc \psim$, it suffices to show that for any vertex $i \in V$, $P^i_{\pi\textnormal{-GD}} \mc \psim^i$. Hence, the proof of \Cref{lem:phase-comparison} can be reduced to proving the following lemma.
\begin{lemma}\label{lem:phase-comparison-single-vertex}
For any vertex $i \in V$, $P^i_{\pi\textnormal{-GD}} \mc \psim^i$.
\end{lemma}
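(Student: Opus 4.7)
The plan is to fix any distribution $\nu$ on $\Omega(\pi)$ with $\nu/\pi$ increasing and prove $\nu P^i_{\pi\textnormal{-GD}} \sd \nu \psim^i$ by reduction to a single one-dimensional inequality. Since both chains touch only the $i$-coordinate, the output marginal on $V \setminus \{i\}$ coincides with that of $\nu$ under both transitions, so coupling these marginals perfectly reduces the task to showing, for each fixed $\sigma \in \{0,1,\star\}^{V \setminus \{i\}}$ in the support of the $\nu$-marginal, that the conditional law of the new $i$-value under $\nu \psim^i$ stochastically dominates the one under $\nu P^i_{\pi\textnormal{-GD}}$ with respect to $0 < 1 < \star$. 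Under $P^i_{\pi\textnormal{-GD}}$ the new value is drawn from $\pi^\sigma_i$ on $\{0,1,\star\}$, whereas under $\psim^i$ it equals $\star$ with probability proportional to $\nu(\sigma^\star)$ and is otherwise drawn from $q_\sigma := (\theta*\mu)^{\textsf{contr}(\sigma)}_i$ on $\{0,1\}$; here $\sigma^x$ denotes the extension of $\sigma$ placing $x$ at $i$, and both laws admit closed forms via \eqref{eq:pi-0-i}, \eqref{eq:pi-1-i}, \eqref{eq:theta-mu-monotone}.

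Stochastic dominance on the three-point chain is equivalent to two tail inequalities, at thresholds $y=1$ and $y=\star$. Writing $\tau = \textsf{contr}(\sigma)$, $\alpha := \mu(\tau^0)$, $\beta := \mu(\tau^1)$, $a := \nu(\sigma^0)$, $b := \nu(\sigma^1)$, $c := \nu(\sigma^\star)$ and clearing denominators, a short manipulation shows that \emph{both} tail inequalities reduce to the single inequality
\begin{align*}
c\alpha + c\theta\beta \;\geq\; (1-\theta)\beta(a+b).
\end{align*}
On the other hand, applying the hypothesis that $\nu/\pi$ is increasing to the comparable triple $\sigma^0 \preceq \sigma^1 \preceq \sigma^\star$, and using the identities $\pi(\sigma^0) = \alpha K$, $\pi(\sigma^1) = \theta\beta K$, $\pi(\sigma^\star) = (1-\theta)\beta K$ for a common factor $K > 0$, translates the chain $\nu(\sigma^0)/\pi(\sigma^0) \leq \nu(\sigma^1)/\pi(\sigma^1) \leq \nu(\sigma^\star)/\pi(\sigma^\star)$ into exactly the two ingredient inequalities
\begin{align*}
b\alpha \;\geq\; a\theta\beta \qquad \text{and} \qquad c\theta \;\geq\; b(1-\theta).
\end{align*}
Multiplying the second by $\beta$ gives $c\theta\beta \geq b(1-\theta)\beta$, while chaining the two yields $c\alpha \geq b(1-\theta)\alpha/\theta \geq a(1-\theta)\beta$; summing produces the required inequality. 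The degenerate cases $\alpha = 0$ or $\beta = 0$ force the corresponding $\nu$-weights to vanish (as $\nu$ is supported in $\Omega(\pi)$) and are immediate.

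The main obstacle is conceptual rather than computational: a pointwise coupling of the two chains cannot succeed, because on an input already at $\star$ the chain $P^i_{\pi\textnormal{-GD}}$ can \emph{decrease} the $i$-value, while on an input in $\{0,1\}$ the same chain can \emph{increase} the value all the way up to $\star$, which $\psim^i$ never reaches. The hypothesis $\nu/\pi$ increasing is what guarantees enough $\nu$-mass at $\sigma^\star$ to compensate the ``upward jumps to $\star$'' that $P^i_{\pi\textnormal{-GD}}$ creates from the $\{0,1\}$-sector, and the cross-term $c\alpha \geq a\beta(1-\theta)$ arising in the algebra is exactly the quantitative form of this compensation. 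Once this is seen, the proof is a short exchange argument; verifying that the two seemingly different tail inequalities collapse to the same condition is the small surprise.
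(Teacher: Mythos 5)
Your proof is correct, and it follows the same underlying strategy as the paper: decompose into fibers over $\sigma \in \{0,1,\star\}^{V\setminus\{i\}}$ and verify stochastic dominance of the two three-point laws on $\{0 < 1 < \star\}$. The paper packages the fiber-wise comparison through the test-function formulation (it introduces a conditional-expectation function $g$ and does a sign analysis of the coefficients $A(\sigma^\star), A(\sigma^1), A(\sigma^0)$, which is equivalent to the two tail inequalities you write), and it verifies the $A(\sigma^\star) \geq 0$ and $A(\sigma^0) \leq 0$ conditions separately. What your write-up adds is the observation that, after clearing denominators, the two tail conditions (at thresholds $\star$ and $1$) both reduce to the single inequality $c\alpha + c\theta\beta \geq (1-\theta)\beta(a+b)$ — i.e.\ $\nu(\sigma^\star)/\pi(\sigma^\star) \geq (\nu(\sigma^0)+\nu(\sigma^1))/(\pi(\sigma^0)+\pi(\sigma^1))$ — which is not remarked in the paper; this is a modest but genuine simplification (essentially, because the middle point $\sigma^1$ never needs to be isolated). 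Your derivation of this inequality from the two chained ratio inequalities $b\alpha \geq a\theta\beta$ and $c\theta \geq b(1-\theta)$, and your treatment of the degenerate cases $\alpha = 0$ or $\beta = 0$ (using that $\nu$ lives on $\Omega(\pi)$, so infeasibility forces the corresponding $\nu$-weights to vanish), are both correct.
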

\begin{proof}
Recall that $\Omega(\pi)$ is the support of $\pi$.
Fix a vertex $i \in V$. 
By the definition of $\mc$, we need to show that for any distribution $\nu$ over $\Omega(\pi)$ such that the function $\frac{\nu}{\pi}:\Omega(\pi) \to \mathbb{R}_{\geq 0}$ is increasing, $\nu P^i_{\pi\textnormal{-GD}} \preceq_{\text{sd}} \nu \psim^i$. Define two distributions
\begin{align*}
\nug = \nu P_{\pi\textnormal{-GD}}, \quad \nus = \nu \psim^i.
\end{align*}
To show $\nug \sd \nus$, it suffices to show that for any increasing function $f:\{0,1,\star\}^V\to \mathbb{R}_{\geq 0}$\footnote{To prove the stochastic dominance, we only need to consider the increasing functions $f$ that are supported on $\Omega(\pi)$. Here, we consider a more general class of functions over $\{0,1,\star\}^V$, which is sufficient for the proof. The benefit is that in the rest of the proof (such as~\eqref{eq:exp-nu-i-1}), $f(\sigma)$ is well-defined for any $\sigma \in \{0,1,\star\}^V$.},
\begin{align}\label{eq:ineq-i-each}
    \E[\nug]{f}\leq \E[\nus]{f}.
\end{align}

For $x\in \{0,1,\star\}$ and a configuration $\sigma\in \{0,1,\star\}^V$, define $\sigma^x$ to be a configuration by changing the value of $\sigma_i$ to $x$, and keeping other values the same as $\sigma$. Formally,
\begin{align*}
    \begin{cases}
&\sigma^x_{i} = x,\\
&\sigma^x_j = \sigma_j,\quad \forall j\in V\setminus \{i\}.   
\end{cases}
\end{align*}

By the definition of $P^i_{s\textnormal{-GD}}$,
we have the following relation between $\nus$ and $\nu$. For any $\sigma\in \{0,1,\star\}^V$ such that $\sigma_{V \setminus \{i\}}$ is a feasible partial configuration on $V \setminus \{i\}$, it holds that
\begin{align}\label{eq:recur-alg}
\begin{split}
&\nus(\sigma^{\star}) = \nu(\sigma^{\star}),\\
&\nus(\sigma^1)=(\nu(\sigma^0)+\nu(\sigma^1))\frac{\theta \mu(\textsf{contr}(\sigma^1))}{\mu(\textsf{contr}(\sigma^0))+\theta \mu(\textsf{contr}(\sigma^1))},\\
&\nus(\sigma^0)=(\nu(\sigma^0)+\nu(\sigma^1))\frac{\mu(\textsf{contr}(\sigma^0))}{\mu(\textsf{contr}(\sigma^0))+\theta \mu(\textsf{contr}(\sigma^1))}.
\end{split}
\end{align}
We explain the above three equations in the following way. Let $X \sim \nu$ and $X'$ be the configuration obtained by applying $P^i_{s\textnormal{-GD}}$ to $X$.
The first equation is because $X'_i = \star$ if and only if $X_i = \star$.
For the second and the third equations, if $X' = \sigma^0$ or $\sigma^1$, then $X_i = 0$ or $1$ and $X'_i$ is resampled from $(\theta * \mu)_i^\tau$ for $\tau = \textsf{contr}(\sigma)_{V \setminus \{i\}}$. Note that in the contracting operation, the value $\star$ is mapped to $1$. We can calculate
\begin{align}\label{eq:recur-alg-2}
    (\theta * \mu)_i^{\tau}(1) = \frac{(\theta * \mu)(\textsf{contr}(\sigma^1))}{(\theta * \mu)(\textsf{contr}(\sigma^0))+(\theta * \mu)(\textsf{contr}(\sigma^1))}.
\end{align}
Let $\Vert \sigma \Vert_1$ be the number of $1$s in $\sigma$. Note that $\Vert \textsf{contr}(\sigma^1) \Vert_1 = \Vert \textsf{contr}(\sigma^0) \Vert_1 + 1$. Hence, the second and the third equations in~\eqref{eq:recur-alg} hold by~\eqref{eq:recur-alg-2}.

Consider the expectation $\E[\nus]{f}$. It can be written as
\begin{align}\label{eq:exp-nu-i-1}
\E[\nus]{f}=\sum_{\sigma\in \{0,1,\star\}^V} \nus(\sigma)f(\sigma)=\frac{1}{3} \sum_{\sigma\in \{0,1,\star\}^V}\sum_{x\in\{0,1,\star\} } \nus(\sigma^x)f(\sigma^x). 
\end{align}
The last equation holds because we enumerate all full configurations $\sigma \in \{0,1,\star\}^V$. For $\sigma = \sigma^{x}$ where $x\in \{0,1,\star\}$, the second sum over $x$ is the same. Hence, any $\sigma \in \{0,1,\star\}^V$ is counted $3$ times in the last summation.
Note that for the last sum over $\sigma$, we only need to consider $\sigma$'s that $\sigma_{V \setminus \{i\}}$ is a feasible partial configuration on $V \setminus \{i\}$. Otherwise, $\pi(\sigma^x) = 0$ for all $x\in \{0,1,\star\}$. Since Markov chain $P^i_{s\textnormal{-GD}}$ always stays in feasible space, $\nus(\sigma^x) = 0$ for all $x$.
Fix an arbitrary one of such configurations $\sigma \in \{0,1,\star\}^V$. Consider the term $\sum_{x\in \{0,1,\star\}} \nus(\sigma^x)f(\sigma^x)$ in~\eqref{eq:exp-nu-i-1}.
Using the relation in~\eqref{eq:recur-alg}, we have
\begin{align}\label{eq:exp-nu-i}
&\sum_{x\in \{0,1,\star\}} \nus(\sigma^x)f(\sigma^x)= \nu(\sigma^\star) f(\sigma^\star)+(\nu(\sigma^0)+\nu(\sigma^1))\frac{\theta \mu(\textsf{contr}(\sigma^1))f(\sigma^1)}{\mu(\textsf{contr}(\sigma^0))+\theta \mu(\textsf{contr}(\sigma^1))}\notag\\
&\qquad\qquad\qquad\qquad\qquad+(\nu(\sigma^0)+\nu(\sigma^1))\frac{ \mu(\textsf{contr}(\sigma^0))f(\sigma^0)}{\mu(\textsf{contr}(\sigma^0))+\theta \mu(\textsf{contr}(\sigma^1))} \notag\\
&=\nu(\sigma^\star) f(\sigma^\star)+(\nu(\sigma^0)+\nu(\sigma^1))\frac{ \mu(\textsf{contr}(\sigma^0))f(\sigma^0)+\theta \mu(\textsf{contr}(\sigma^1))f(\sigma^1)}{\mu(\textsf{contr}(\sigma^0))+\theta \mu(\textsf{contr}(\sigma^1))}.  
\end{align}

Define function $g:\Omega(\pi)\to \mathbb{R}$ by: for any $\sigma\in\Omega(\pi)$,
\begin{align}\label{def:g-i}
\begin{split}
&g(\sigma^\star)=f(\sigma^\star);\\
&g(\sigma^0)=g(\sigma^1)=\frac{ \mu(\textsf{contr}(\sigma^0))f(\sigma^0)+\theta \mu(\textsf{contr}(\sigma^1))f(\sigma^1)}{\mu(\textsf{contr}(\sigma^0))+\theta \mu(\textsf{contr}(\sigma^1))}. 
\end{split}
\end{align}
The definition of $g$ depends only on $f$. Then
\begin{align}\label{eq:exp-nu-i-2}
\E[\nus]{f} = \frac{1}{3} \sum_{\sigma\in \{0,1,\star\}^V}\sum_{x\in\{0,1,\star\} } \nus(\sigma^x)f(\sigma^x) = \frac{1}{3} \sum_{\sigma\in \{0,1,\star\}^V}\sum_{x\in\{0,1,\star\} } \nu(\sigma^x)\cdot g(\sigma^x) =  \E[\nu]{g}.
\end{align}
In the last equation, we use the fact that $\nu$ is a distribution over $\Omega(\pi)$.
Next, we claim the following relation between $\E[\nu]{g}$ and $\E[\nug]{f}$.
\begin{claim}\label{claim:g-i-f}
$\E[\nu]{g}\geq \E[\nug]{f}$.
\end{claim}
The proof of \Cref{claim:g-i-f} requires the condition that $\frac{\nu}{\pi}$ is an increasing function. The formal proof is given later.
Let us first assume that \Cref{claim:g-i-f} holds.
Combining the claim with~\eqref{eq:exp-nu-i-2},
\begin{align}\label{eq:ineq-i-each-2}
  \E[\nug]{f}\leq \E[\nus]{f}.
\end{align}
This verifies~\eqref{eq:ineq-i-each} and proves the lemma.
\end{proof}

We finish the proof by proving the technical claim.

\begin{proof}[Proof of \Cref{claim:g-i-f}]
To prove $\E[\nu]{g}\geq \E[\nug]{f}$, we need to compute the values of two expectations.
The expectation is taken over the randomness of Glauber dynamics, which is a simpler process than the simulation algorithm.
We have the following relation between $\nug$ and $\nu$: for any $\sigma \in \{0,1,\star\}^V$ such that $\sigma_{V \setminus i}$ is feasible,
\begin{align*}
&\nug(\sigma^0) =(\nu(\sigma^0)+\nu(\sigma^1)+\nu(\sigma^\star))\frac{\pi(\sigma^0)}{\pi(\sigma^0)+\pi(\sigma^1)+\pi(\sigma^\star)};\\
&\nug(\sigma^1) =(\nu(\sigma^0)+\nu(\sigma^1)+\nu(\sigma^\star))\frac{\pi(\sigma^1)}{\pi(\sigma^0)+\pi(\sigma^1)+\pi(\sigma^\star)};\\
&\nug(\sigma^\star) =(\nu(\sigma^0)+\nu(\sigma^1)+\nu(\sigma^\star))\frac{\pi(\sigma^\star)}{\pi(\sigma^0)+\pi(\sigma^1)+\pi(\sigma^\star)}.
\end{align*}

Using the same argument as in~\eqref{eq:exp-nu-i-1}, we have 
\begin{align*}
    \E[\nug]{f}&=\sum_{\sigma\in \{0,1,\star\}^V} \nug(\sigma)f(\sigma)=\frac{1}{3} \sum_{\sigma\in \{0,1,\star\}^V}\sum_{x\in\{0,1,\star\} } \nug(\sigma^x)f(\sigma^x),\\
    \E[\nu]{g}&=\sum_{\sigma\in \{0,1,\star\}^V} \nu(\sigma)g(\sigma)=\frac{1}{3} \sum_{\sigma\in \{0,1,\star\}^V}\sum_{x\in\{0,1,\star\} } \nu(\sigma^x)g(\sigma^x). 
\end{align*}
Note that both $\nug$ and $\nu$ are over $\Omega(\pi)$.
Fix a $\sigma \in \{0,1,\star\}^V$ such that $\sigma_{V \setminus i}$ is feasible. It suffices to show that 
\begin{align}\label{eq:ineq-eta-tilde-eta}
    \sum_{x\in \{0,1,\star\}}\nu(\sigma^x) g(\sigma^x)-\sum_{x\in \{0,1,\star\}}\nug(\sigma^x) f(\sigma^x)\geq 0.    
\end{align}

We can compute
\begin{align*}
\sum_{x\in \{0,1,\star\}}\nug(\sigma^x) f(\sigma^x) =(\nu(\sigma^0)+\nu(\sigma^1)+\nu(\sigma^\star))\frac{\sum_{x\in \{0,1,\star\}}\pi(\sigma^x)f(\sigma^x)}{\pi(\sigma^0)+\pi(\sigma^1)+\pi(\sigma^\star)}.
\end{align*}
To simplify the notation, define the value $h^\sigma \in [0,1]$ by
\begin{align}\label{def:h-sigma}
h^\sigma =\frac{ \theta \mu(\textsf{contr}(\sigma^1))}{\mu(\textsf{contr}(\sigma^0))+\theta \mu(\textsf{contr}(\sigma^1))}.  
\end{align}
By the definition of $g$ in~\eqref{def:g-i} and the definition of $h^\sigma$, we have
\begin{align*}
\sum_{x\in \{0,1,\star\}}\nu(\sigma^x) g(\sigma^x)=\nu(\sigma^\star)f(\sigma^\star)+(\nu(\sigma^0)+\nu(\sigma^1))\frac{ \mu(\textsf{contr}(\sigma^0))f(\sigma^0)+\theta \mu(\textsf{contr}(\sigma^1))f(\sigma^1)}{\mu(\textsf{contr}(\sigma^0))+\theta \mu(\textsf{contr}(\sigma^1))}\\
=\nu(\sigma^\star)f(\sigma^\star)+(\nu(\sigma^0)+\nu(\sigma^1))\left(h^\sigma f(\sigma^1)+(1-h^\sigma )f(\sigma^0)\right).
\end{align*}
Consider the difference $\sum_{x\in \{0,1,\star\}}\nu(\sigma^x) g(\sigma^x)-\sum_{x\in \{0,1,\star\}}\nug(\sigma^x) f(\sigma^x)$. We have
\begin{align*}
\sum_{x\in \{0,1,\star\}}(\nu(\sigma^x) g(\sigma^x)-\nug(\sigma^x) f(\sigma^x))=f(\sigma^\star)\left(\nu(\sigma^\star)-\frac{(\nu(\sigma^0)+\nu(\sigma^1)+\nu(\sigma^\star))\pi(\sigma^\star)}{\pi(\sigma^0)+\pi(\sigma^1)+\pi(\sigma^\star)}\right)\\
\qquad\qquad\qquad\qquad+f(\sigma^1)\left((\nu(\sigma^0)+\nu(\sigma^1))h^\sigma -\frac{(\nu(\sigma^0)+\nu(\sigma^1)+\nu(\sigma^\star))\pi(\sigma^1)}{\pi(\sigma^0)+\pi(\sigma^1)+\pi(\sigma^\star)}\right)\\
\qquad\qquad\qquad+f(\sigma^0)\left((\nu(\sigma^0)+\nu(\sigma^1))(1-h^\sigma )-\frac{(\nu(\sigma^0)+\nu(\sigma^1)+\nu(\sigma^\star))\pi(\sigma^0)}{\pi(\sigma^0)+\pi(\sigma^1)+\pi(\sigma^\star)}\right)\\
\defeq f(\sigma^\star)A(\sigma^\star) + f(\sigma^1)A(\sigma^1) + f(\sigma^0)A(\sigma^0),\quad\qquad
\end{align*}
where we use $A(\sigma^x)$ to denote the coefficient of $f(\sigma^x)$.
Since $f$ is increasing, $f(\sigma^\star)\geq f(\sigma^1)\geq f(\sigma^0)$. 
It is straightforward to check that the sum of coefficients is 0:
\begin{align}\label{eq:coefficients-sum-0}
A(\sigma^\star) + A(\sigma^1) + A(\sigma^0) = 0.
\end{align}
We show that 
\begin{align}\label{eq:positive-and-negative-coefficients}
\begin{split}
A(\sigma^\star) &= \nu(\sigma^\star)-\frac{(\nu(\sigma^0)+\nu(\sigma^1)+\nu(\sigma^\star))\pi(\sigma^\star)}{\pi(\sigma^0)+\pi(\sigma^1)+\pi(\sigma^\star)}\geq 0;\\
A(\sigma^0) &= (\nu(\sigma^0)+\nu(\sigma^1))(1-h^\sigma) -\frac{(\nu(\sigma^0)+\nu(\sigma^1)+\nu(\sigma^\star))\pi(\sigma^0)}{\pi(\sigma^0)+\pi(\sigma^1)+\pi(\sigma^\star)}\leq 0.    
\end{split}
\end{align}
Since $f$ is increasing, $f(\sigma^\star)\geq f(\sigma^1)\geq f(\sigma^0)$. Assuming~\eqref{eq:positive-and-negative-coefficients} holds, we have
\begin{align*}
    &\sum_{x\in \{0,1,\star\}}(\nu(\sigma^x) g(\sigma^x)-\nug(\sigma^x) f(\sigma^x)) =    f(\sigma^\star)A(\sigma^\star) + f(\sigma^1)A(\sigma^1) + f(\sigma^0)A(\sigma^0)\\
    &= (f(\sigma^\star) - f(\sigma^1))A(\sigma^\star) + f(\sigma^1)(A(\sigma^\star) + A(\sigma^1))+ f(\sigma^0)A(\sigma^0)\\
    &\overset{{\text{by~\eqref{eq:coefficients-sum-0}}}}{=} (f(\sigma^\star) - f(\sigma^1))A(\sigma^\star) - A(\sigma^0)(f(\sigma^1)-f(\sigma^0)) \overset{\text{by}~\eqref{eq:positive-and-negative-coefficients}}{\geq} 0.
\end{align*}
This verifies~\eqref{eq:ineq-eta-tilde-eta} and proves~the lemma.

We now prove that~\eqref{eq:positive-and-negative-coefficients}.
Since $\frac{\nu}{\pi}$ is increasing, $\frac{\nu(\sigma^\star)}{\pi(\sigma^\star)}\geq \frac{\nu(\sigma^1)}{\pi(\sigma^1)}\geq \frac{\nu(\sigma^0)}{\pi(\sigma^0)}$. 
For the first inequality in~\eqref{eq:positive-and-negative-coefficients}, we may assume $\pi(\sigma^\star) > 0$ as otherwise the inequality holds trivially. The inequality becomes
$\frac{\nu(\sigma^\star)}{\pi(\sigma^\star)}\geq \frac{\nu(\sigma^1)+\nu(\sigma^0)+\nu(\sigma^\star)}{\pi(\sigma^1)+\pi(\sigma^0)+\pi(\sigma^\star)}$. This is implied by the increasing property of $\frac{\nu}{\pi}$.

We now prove the second inequality in~\eqref{eq:positive-and-negative-coefficients}.
Recall that functions $C_1,C_\star$ are defined in~\eqref{def:c-sigma}, where $C_1$ counts the number of 1s in a configuration and $C_\star$ counts the number of $\star$s. 
By the definition of $\pi$, for any $\tau \in \{0,1,\star\}^V$, we have $\pi(\tau) = \mu(\textsf{contr}(\tau))(1-\theta)^{C_\star(\tau)}\theta^{C_1(\tau)}$.
By definition of $h^\sigma$ in~\eqref{def:h-sigma}, we have
\begin{align*}
1-h^\sigma &=\frac{\mu(\textsf{contr}(\sigma^0))}{\mu(\textsf{contr}(\sigma^0))+\theta \mu(\textsf{contr}(\sigma^1))}\\
(\ast)\quad &=\frac{\pi(\sigma^0)\cdot (1/\theta)^{C_1(\sigma^0)}(1/(1-\theta))^{C_\star(\sigma^0)}}{\pi(\sigma^0)\cdot (1/\theta)^{C_1(\sigma^0)}\ (1/(1-\theta))^{C_\star(\sigma^0)}+\theta\pi(\sigma^1)\cdot (1/\theta)^{C_1(\sigma^1)}(1/(1-\theta))^{C_\star(\sigma^1)}} \\
&=\frac{\pi(\sigma^0)}{\pi(\sigma^0)+\pi(\sigma^1)}.
\end{align*}
The equality $(\ast)$ is true because $C_1(\sigma^0) + 1 = C_1(\sigma^1)$ and $C_\star(\sigma^0) = C_\star(\sigma^1)$.
To prove the second inequality in~\eqref{eq:positive-and-negative-coefficients}, we only need to prove
\begin{align}\label{eq:2}
\frac{(\nu(\sigma^0)+\nu(\sigma^1)+\nu(\sigma^\star))\pi(\sigma^0)}{\pi(\sigma^0)+\pi(\sigma^1)+\pi(\sigma^\star)}\geq \frac{(\nu(\sigma^0)+\nu(\sigma^1))\pi(\sigma^0)}{\pi(\sigma^0)+\pi(\sigma^1)}.
\end{align}
By the definition $\pi$, it is straightforward to check that $\pi(\sigma^1) > 0$ if and only if $\pi(\sigma^\star) > 0$ because $\textsf{contr}(\sigma^1)=\textsf{contr}(\sigma^\star)$. There are two cases
\begin{itemize}
    \item If $\pi(\sigma^1)=\pi(\sigma^\star) = 0$ or $\pi(\sigma^0) = 0$, then~\eqref{eq:2} holds trivially.
    \item The remaining case is $\pi(\sigma^0), \pi(\sigma^1),\pi(\sigma^\star) > 0$. We need to show $\frac{\nu(\sigma^0)+\nu(\sigma^1)+\nu(\sigma^\star)}{\pi(\sigma^0)+\pi(\sigma^1)+\pi(\sigma^\star)}\geq \frac{\nu(\sigma^0)+\nu(\sigma^1)}{\pi(\sigma^0)+\pi(\sigma^1)}$. Note that $\frac{a+b}{c+d} \geq \frac{a}{c}$ if $\frac{b}{d} \geq \frac{a}{c}$ for positive $c,d > 0$ and $a,b \geq 0$. Since $\frac{\nu}{\pi}$ is increasing, we have $\frac{\nu(\sigma^\star)}{\pi(\sigma^\star)}\geq \frac{\nu(\sigma^1)}{\pi(\sigma^1)}\geq \frac{\nu(\sigma^0)}{\pi(\sigma^0)}$. Thus~\eqref{eq:2} holds.
\end{itemize}
Combining the two cases proves~\eqref{eq:2}.
\end{proof}

\section{Proofs of general results and applications}\label{sec:proofs-of-general-results-and-applications}

In this section, we first use the comparison result in \Cref{thm:FD-comparison} to prove the general result in \Cref{thm:main-comparison}. For applications, the main task is to verify the entropic independence condition and then apply our general result. The entropic independence results for the random cluster model and the bipartite hardcore model are given in \cite{CZ23} and \cite{Chen0Y23}. However, if we directly apply the results, we will get $n (\log n)^{O(1/\delta)}$ and $n^2 (\log n)^{O(1/\delta)}$ mixing time for the random cluster model and the bipartite hardcore model, respectively, where $\max_{v \in V}\lambda_v \leq 1 - \delta$ for the random cluster model and $\lambda \leq (1-\delta)\lambda_c(\Delta_L)$ for the bipartite hardcore model. We give a more refined analysis of the entropic independence and obtain the improved $n (\log n)^{O(1)}$ and $n^2 (\log n)^{O(1)}$ mixing time for the random cluster model and the bipartite hardcore model, respectively, where the exponent of the $\log n$ is a universal constant.

\subsection{Mixing from entropic independence}\label{sec:proof-GR}

The following theorem was introduced in \cite{ChenE22}. Also see the theorem in \cite{Chen0Y23}.
\begin{theorem}[\text{\cite{ChenE22} and \cite{Chen0Y23}}]\label{thm:SIandMS-KL-Decay}
    Let $\theta\in (0,1)$. Let $\alpha: [0,-\log \theta]\to \mathbb{R}_{> 0}$ be an integrable function. Let $\mu$ be a distribution over $\{0,1\}^V$. 
    Suppose for any $t \in [0,-\log \theta]$, for any $\Lambda \subseteq V$ such that $\mu_{\Lambda}(\*1_\Lambda) > 0$, the distribution $(e^{-t} * \mu)^{\*1_\Lambda}$ is $\alpha(t)$-entropically independent.
    Then the mixing time of field dynamics $\mixfd(\mu,\epsilon)$ satisfies 
    \begin{align*}
        \mixfd(\mu,\epsilon)\leq \kappa^{-1}\left(\log \log \frac{1}{\mu_{\min}} +\log \frac{1}{2\epsilon^2} \right)+1,
    \end{align*}
    where $\kappa:=\exp({-\int_0^{-\log \theta}4\alpha(t)\d t})$ and $\mu_{\min}=\min_{\sigma\in \{0,1\}^V:\mu(\sigma)>0}\mu(\sigma)$.
    \end{theorem}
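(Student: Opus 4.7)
The plan is to establish a one-step entropy contraction for the field dynamics and then convert it into a mixing time bound through Pinsker's inequality. Concretely, I would prove that for any starting distribution $\nu^{(0)}$ supported on $\Omega$ and $\nu^{(t)} := \nu^{(0)} P_{\theta,\mu}^t$,
\begin{align*}
    \mathrm{D}_{\textnormal{KL}}(\nu^{(t+1)} \parallel \mu) \leq (1 - \kappa)\,\mathrm{D}_{\textnormal{KL}}(\nu^{(t)} \parallel \mu).
\end{align*}
Iterating this $T$ times and using the trivial initial bound $\mathrm{D}_{\textnormal{KL}}(\nu^{(0)} \parallel \mu) \leq \log(1/\mu_{\min})$ gives $\mathrm{D}_{\textnormal{KL}}(\nu^{(T)} \parallel \mu) \leq (1-\kappa)^T \log(1/\mu_{\min})$. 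Pinsker's inequality then yields $\DTV{\nu^{(T)}}{\mu}^2 \leq \tfrac{1}{2}(1-\kappa)^T \log(1/\mu_{\min})$, and demanding that the right-hand side be at most $\epsilon^2$, together with the elementary bound $\log(1/(1-\kappa)) \geq \kappa$, produces exactly the threshold $T \geq \kappa^{-1}\bigl(\log\log(1/\mu_{\min}) + \log(1/(2\epsilon^2))\bigr)$ claimed in the statement; the additive $+1$ simply absorbs the ceiling to the next integer.

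The heart of the argument is therefore the one-step contraction factor $\kappa$. Following the stochastic-localization / annealing viewpoint of \cite{ChenE22, Chen0Y23}, I would parameterize the tilt by $t = -\log \theta' \in [0,-\log \theta]$ and regard one step of the field dynamics at parameter $\theta$ as the composition of infinitesimal down-up transitions along this interpolation path. At scale $t$, an infinitesimal "down" step thins each current $1$-coordinate with an infinitesimal probability, and the subsequent "up" step resamples from the pinned tilted measure $(e^{-t}*\mu)^{\*1_\Lambda}$, which by hypothesis is $\alpha(t)$-entropically independent. The standard quantitative relation between entropic independence and one-step entropy contraction of a down-up walk (see \cite{AJKPV22, ChenE22}) then yields an infinitesimal KL dissipation rate of $4\alpha(t)\,dt$, where the factor $4$ comes from the loss incurred in passing from entropic independence on marginals to full KL contraction. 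Integrating along the annealing path from $0$ to $-\log \theta$ produces the cumulative rate $\kappa = \exp\bigl(-\int_0^{-\log \theta} 4\alpha(t)\,dt\bigr)$.

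The main obstacle is this local-to-global integration step. I would need to verify carefully that the infinitesimal down-up composition along the path in $\theta'$ really agrees with a single transition of the field dynamics at the final parameter $\theta$, and that the entropic-independence assumption --- which is a statement about an arbitrary auxiliary distribution $\nu$ --- transports correctly through the pinning by $\*1_\Lambda$ at every intermediate scale. Once those bookkeeping details are in place, the remainder (iterating the contraction, invoking Pinsker, and bounding the initial KL by $\log(1/\mu_{\min})$) is routine. Since the statement is quoted essentially verbatim from \cite{ChenE22, Chen0Y23}, I would ultimately defer the technical proof of the contraction inequality to those works and only check that our formulation of the field dynamics in \Cref{def:field-dynamics} matches theirs up to trivial reformulation.
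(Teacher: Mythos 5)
Your proposal is correct and follows essentially the same route as the paper's appendix proof: both reduce the theorem to a one-step KL contraction $\mathrm{D}_{\textnormal{KL}}(\nu P_{\theta,\mu}\parallel\mu)\leq(1-\kappa)\mathrm{D}_{\textnormal{KL}}(\nu\parallel\mu)$ obtained by integrating the infinitesimal entropy dissipation rate $4\alpha(t)\,\mathrm{d}t$ along the negative-field localization path $\mu_t=(e^{-t}*\mu)^{\*1_{R_t}}$, then finish with $\mathrm{D}_{\textnormal{KL}}(\nu^{(0)}\parallel\mu)\leq\log(1/\mu_{\min})$ and Pinsker. The paper likewise defers the local-to-global bookkeeping to \cite{ChenE22,Chen0Y23}, its only new observation being that one can keep $\alpha(t)$ inside the integral (yielding $\kappa=\exp(-\int_0^{-\log\theta}4\alpha(t)\,\mathrm{d}t)$) rather than replacing it by its maximum as those references do, which you have captured correctly.
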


    \Cref{thm:SIandMS-KL-Decay} is almost proved in \cite{ChenE22} and \cite{Chen0Y23}. The only difference is that they use a lower bound $\kappa'$ of $\kappa$, where $\kappa' = \exp(- \log \theta \cdot \max_{t \in [0,-\log \theta]} 4\alpha(t))$. \Cref{thm:SIandMS-KL-Decay} can be proved by a simple modification of the proof in \cite{Chen0Y23}. In Appendix~\ref{sec:mixing-of-field-dynamics-from-entropic-independence}, we recall some key definitions and proof steps in \cite{Chen0Y23} and show how to do the modification.

    \Cref{thm:main-comparison} is a straightforward corollary of \Cref{thm:FD-comparison} and \Cref{thm:SIandMS-KL-Decay}.

\subsection{Random cluster model}
In this section, we apply our general result on the random cluster model $\mu_{p,\lambda}^{\textnormal{RC}}$ in~\eqref{eq:weight-rc} and prove \Cref{thm:RC-model}. The random cluster model is defined as a distribution over subsets of edges $S \subseteq E$ of a graph $G=(V,E)$. We can map every subset $S$ to a configuration $\sigma \in \{0,1\}^E$ by setting $\sigma_e = 1$ if $e \in S$ and $\sigma_e = 0$ otherwise. Then $\mu_{p,\lambda}^{\textnormal{RC}}$ can be viewed as a distribution over $\{0,1\}^E$. To apply our general result, we consider the following flipped distribution of $\mu_{p,\lambda}^{\textnormal{RC}}$.

\begin{definition}[Flipped distribution]\label{def:flipped-distribution}
    Let $V$ be a set of variables and $\mu$ be a distribution over $\{0,1\}^V$.
    For any configuration $\sigma\in \{0,1\}^V$, let $\bar{\sigma}_i=1-\sigma_i$ for all $i\in V$. The \emph{flipped distribution} $\bar{\mu}$ of $\mu$ is defined by $\forall \sigma\in \{0,1\}^V, \bar{\mu}(\sigma) = \mu(\bar{\sigma})$.
\end{definition}

The Glauber dynamics for $\mu_{p,\lambda}^{\textnormal{RC}}$ and the Glauber dynamics for $\bar{\mu}_{p,\lambda}^{\textnormal{RC}}$ are essentially\footnote{Let $P$ be the Glauber dynamics on $\mu$ and $\bar{P}$ be the Glauber dynamics on $\bar{\mu}$. For any two configurations $\sigma,\tau$, it holds that $P(\sigma,\tau)=\bar{P}(\bar{\sigma},\bar{\tau})$. $P$ starting from $\sigma$ mixes in $T$ steps if and only if $\bar{P}$ starting from $\bar{\sigma}$ mixes in $T$ steps.} the same process.
To show the rapid mixing of Glauber dynamics, by \Cref{thm:main-comparison}, we will show that $\bar{\mu}_{p,\lambda}^{\textnormal{RC}}$ is a monotone system and there exist $\theta\in (0,1)$ and an integrable function $\alpha(t)$ such that
\begin{itemize}
    \item $\mixtl(\bar{\mu}_{p,\lambda}^{\textnormal{RC}},\theta,\epsilon) = \tilde{O}(m)$, where $m = |E|$ and $\mixtl(\bar{\mu}_{p,\lambda}^{\textnormal{RC}},\theta,\epsilon)$ is defined in \eqref{def:mixtl}.
    \item $(\e^{-t} *\bar{\mu}_{p,\lambda}^{\textnormal{RC}})$ is $\alpha(t)$-entropically independent for $t\in [0,\ln \frac{1}{\theta}]$;
\end{itemize}

We start with the proof of the monotonicity of $\bar{\mu}_{p,\lambda}^{\textnormal{RC}}$. It is well-known that the random cluster model $\mu_{p,\lambda}^{\textnormal{RC}}$ is a monotone system (e.g. see~\cite[Lem. 8.2]{feng2023swendsen}). We use the following lemma to prove the monotonicity of the flipped distribution $\bar{\mu}_{p,\lambda}^{\textnormal{RC}}$.
\begin{lemma}\label{lem:monotone-flipped}
    If $\mu$ over $\{0,1\}^V$ is a monotone system, then $\bar{\mu}$ is also a monotone system.    
    \end{lemma}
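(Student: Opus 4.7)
The plan is to unpack the definitions and exploit the fact that the flip map $\sigma \mapsto \bar{\sigma}$ is an order-reversing involution on $\{0,1\}^V$.

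First, I would fix any vertex $v \in V$ and any two feasible comparable pinnings $\sigma \preceq \tau$ in $\{0,1\}^{V \setminus \{v\}}$ for $\bar{\mu}$, and translate the required inequality $\bar{\mu}^{\sigma}_v(1) \leq \bar{\mu}^{\tau}_v(1)$ into a statement about $\mu$. By \Cref{def:flipped-distribution}, for any $x \in \{0,1\}$,
\begin{align*}
    \bar{\mu}^{\sigma}_v(x) \;=\; \frac{\bar{\mu}(\sigma, x)}{\bar{\mu}(\sigma, 0) + \bar{\mu}(\sigma, 1)} \;=\; \frac{\mu(\bar{\sigma}, 1-x)}{\mu(\bar{\sigma}, 0) + \mu(\bar{\sigma}, 1)} \;=\; \mu^{\bar{\sigma}}_v(1-x),
\end{align*}
so in particular $\bar{\mu}^{\sigma}_v(1) = \mu^{\bar{\sigma}}_v(0) = 1 - \mu^{\bar{\sigma}}_v(1)$, and analogously for $\tau$. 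Also, feasibility of $\sigma,\tau$ in $\bar{\mu}$ is equivalent to feasibility of $\bar{\sigma},\bar{\tau}$ in $\mu$.

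Next I would observe that $\sigma \preceq \tau$ implies $\bar{\tau} \preceq \bar{\sigma}$, since flipping inverts each coordinate inequality. Applying the monotonicity of $\mu$ (\Cref{def:monotone-system}) to the pair $\bar{\tau} \preceq \bar{\sigma}$ gives $\mu^{\bar{\tau}}_v(1) \leq \mu^{\bar{\sigma}}_v(1)$. Substituting back through the identity above,
\begin{align*}
    \bar{\mu}^{\sigma}_v(1) \;=\; 1 - \mu^{\bar{\sigma}}_v(1) \;\leq\; 1 - \mu^{\bar{\tau}}_v(1) \;=\; \bar{\mu}^{\tau}_v(1),
\end{align*}
which is exactly the stochastic-dominance condition $\bar{\mu}^{\sigma}_v \sd \bar{\mu}^{\tau}_v$ required by \Cref{def:monotone-system}.

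There is no real obstacle here; the only thing to be careful about is the bookkeeping on feasibility (so that the conditional distributions in the second definition of monotonicity are well-defined on both sides) and the direction of the order reversal under flipping, both of which are immediate. The argument is a two-line consequence of the involution property once the identity $\bar{\mu}^{\sigma}_v(1) = 1 - \mu^{\bar{\sigma}}_v(1)$ is written down.
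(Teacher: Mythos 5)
Your proof is correct and follows essentially the same route as the paper: translate the conditional marginal of $\bar{\mu}$ into one of $\mu$ via the flip, use that flipping reverses the partial order, and apply the monotonicity of $\mu$. The only difference is that you spell out the identity $\bar{\mu}^{\sigma}_v(x)=\mu^{\bar{\sigma}}_v(1-x)$ a bit more explicitly, which the paper takes as immediate.
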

    \begin{proof}
    We only need to prove that: for any $v\in V$, any feasible and comparable pinnings $\sigma,\tau \in \{0,1\}^{V\setminus \{v\}}$ with $\sigma\preceq \tau$, we have $\bar{\mu}_v^{\sigma}(1)\leq \bar{\mu}_v^{\tau}(1)$.
    Note that $\bar{\tau}\preceq \bar{\sigma}$. By our assumption, $\mu_v^{\bar{\tau}}(1)\leq \mu_v^{\bar{\sigma}}(1)$ and then $\mu_v^{\bar{\tau}}(0)\geq \mu_v^{\bar{\sigma}}(0)$. Then $\bar{\mu}_v^{\sigma}(1) = \mu_v^{\bar{\sigma}}(0)\leq \mu_v^{\bar{\tau}}(0) = \bar{\mu}_v^{\tau}(1)$.
    \end{proof}

Next, we choose the parameter $\theta$ and the function $\alpha(t)$ as follows:
\begin{align}\label{eq:def-the}
    \theta=\frac{p_{\min}\min \{10^{-7},\frac{1-\lambda_{\max}}{27}\} }{\log n},
 \end{align}
 where $n$ is the number of vertices in $G$, $\lambda_{\max}=\max_{i\in V}\lambda_i$ and $p_{\min}=\min_{i\in E}p_i$, and 
 \begin{align}\label{eq:alpha-t}
     \alpha(t)=\begin{cases}
         3(1-\lambda_{\max})^{-2} & \textnormal{if } 0\leq t \leq \log\frac{1}{\theta_0},\\
         5*10^4 & \textnormal{if } \log\frac{1}{\theta_0} < t\leq \log\frac{1}{\theta},
     \end{cases} \quad \text{where } \theta_0 = \frac{1}{2}p_{\min}(1-\lambda_{\max})^{2}.
 \end{align}
Note that the function $\alpha$ has two parts and only the first part depends on $\lambda_{\max}$. As we can see from the later proof, this is the main reason why we can make the degree of the  $\mathrm{polylog}(n)$ factor in the mixing time independent of $\delta_\lambda$ in \Cref{thm:RC-model}.

We analyze the mixing time of tilted distributions $\mixtl(\bar{\mu}_{p,\lambda}^{\textnormal{RC}},\theta,\epsilon)$. By the definition in~\eqref{def:mixtl}, we need to consider $\pi = (\theta * \bar{\mu}_{p,\lambda}^{\textnormal{RC}})^{\*1_\Lambda}$ with pinning $\*1_\Lambda$ on an arbitrary subset $\Lambda\subseteq E$. Since the Glauber dynamics on $\pi$ and $\bar{\pi}$ are essentially the same, we can analyze the mixing time by considering $\bar{\pi}$.
Note that $\bar{\pi}$ is a random cluster model $(\mu_{p',\lambda}^{\textnormal{RC}})^{\*0_{\Lambda}}$ such that the values in $\Lambda$ are fixed to be 0 and the parameters $p'_e$ in every edge satisfies $\frac{p'_e}{1-p'_e} = \frac{1}{\theta} \cdot \frac{p_e}{1-p_e}$.
Hence, $\bar{\pi}$ fixes all values in $\Lambda$ as 0. In the subset $E \setminus \Lambda$, it is another random cluster model on subgraph $G' = (V, E \setminus \Lambda)$ with parameters $p'_e$ on every edge.
By our choice of parameter $\theta$ in~\eqref{eq:def-the}, all parameters $p_e'$ are large enough. Using the mixing result in \cite[Lem. 3.8]{CZ23}, the mixing time of Glauber dynamics for $\bar{\pi}$ starting from an arbitrary initial configuration is $O(m \log m \log \frac{1}{\epsilon})$. Since the same mixing time bound applies to $\pi$, we have
 \begin{align}\label{eq:mixtl-rc}
    \mixtl(\bar{\mu}_{p,\lambda}^{\textnormal{RC}},\theta,\epsilon) = O\left(m \log m \log \frac{1}{\epsilon}\right).
\end{align}

Now, we only need to show the following entropic independence result.
\begin{lemma}\label{lem:RC-EI}
$\forall\, t\in [0,\log \frac{1}{\theta}]$, for any $\Lambda \subseteq E$,  $(\e^{-t} *\bar{\mu}_{p,\lambda}^{\textnormal{RC}})^{\*1_\Lambda}$ is $\alpha(t)$-entropically independent for  $\alpha$ defined in \eqref{eq:alpha-t}.
\end{lemma}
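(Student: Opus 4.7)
\smallskip

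The plan is to first reduce the statement to an entropic independence claim about a (non-flipped) random cluster model with shifted edge parameters, and then split the range of $t$ into the two regimes that define $\alpha(t)$. For the reduction, I would first observe that entropic independence is invariant under the flipping involution $\sigma \leftrightarrow \bar\sigma$: a direct calculation using the bijection shows that $D_{\mathrm{KL}}(\bar\nu\|\mu)=D_{\mathrm{KL}}(\nu\|\bar\mu)$ and $D_{\mathrm{KL}}(\bar\nu_i\|\mu_i)=D_{\mathrm{KL}}(\nu_i\|\bar\mu_i)$, so $\bar\mu$ is $\alpha$-entropically independent iff $\mu$ is. Next, from the weight formula \eqref{eq:weight-rc}, a term-by-term computation shows that $(e^{-t}*\bar\mu_{p,\lambda}^{\textnormal{RC}})^{\*1_\Lambda}$ is the flip of $\mu_{p^{(t)},\lambda}^{\textnormal{RC}}$ on the subgraph $G'=(V,E\setminus\Lambda)$, where the shifted edge parameters satisfy $\frac{p^{(t)}_e}{1-p^{(t)}_e}=e^{t}\cdot\frac{p_e}{1-p_e}$; the pinning $\*1_\Lambda$ in the flipped tilted model translates to deleting the edges of $\Lambda$ from the graph. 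Hence it suffices to prove $\alpha(t)$-entropic independence for $\mu_{p^{(t)},\lambda}^{\textnormal{RC}}$ on $G'$ for each $t\in[0,\log(1/\theta)]$.

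For $t\in[0,\log(1/\theta_0)]$, the shifted edge parameter satisfies $\frac{p^{(t)}_e}{1-p^{(t)}_e}\le\frac{1}{\theta_0}\cdot\frac{p_e}{1-p_e}$, so $p^{(t)}_e$ is still bounded away from $1$ in a way controlled by $1-\lambda_{\max}$. In this range I would apply the entropic independence machinery already developed for the random cluster model in \cite{CZ23} (via the log-concavity / local-spectral analysis of the RC generating polynomial), carefully tracking the dependence on $\lambda_{\max}$ through the Lipschitz/log-ratio estimates used there, to obtain the bound $3(1-\lambda_{\max})^{-2}$. The key point is that when edge activities are moderate, the contribution of the connected-component factors $(1+\prod_{v\in C}\lambda_v)\in[1,2]$ to the Hessian/log-concavity bound can be absorbed into a $(1-\lambda_{\max})^{-2}$ factor.

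For $t\in(\log(1/\theta_0),\log(1/\theta)]$, we have $\frac{p^{(t)}_e}{1-p^{(t)}_e}\ge \frac{1}{\theta_0}\cdot\frac{p_e}{1-p_e}\ge\frac{2}{(1-\lambda_{\max})^{2}}$, i.e.\ the edge activities are strongly biased so that the RC measure concentrates on configurations with many edges present. I would exploit this strong bias to decouple the $\lambda$-dependence from the bound: since the multiplicative contribution of each connected-component factor lies in $[1,2]$ regardless of $\lambda$, and the edge factors are large enough to dominate the local-spectral analysis, one can re-run the EI argument of \cite{CZ23} and bound the relevant log-ratios by an absolute constant rather than by $(1-\lambda_{\max})^{-2}$. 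The explicit numerical constant $5\cdot 10^{4}$ would then fall out of a conservative (but universal) bookkeeping of these logs of $2$'s and the structural constants from the spectral-independence-to-entropic-independence pipeline.

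The main obstacle is the second regime, where we need an entropic independence constant that does \emph{not} degenerate as $\lambda_{\max}\to 1$. A naive application of the RC-model EI bound of \cite{CZ23} always picks up a $(1-\lambda_{\max})^{-2}$-type factor, which would force the exponent of the $\log n$ factor in \Cref{thm:RC-model} to depend on $\delta_\lambda$, as in the prior work. The refinement required is to observe that once $e^t\ge 1/\theta_0=2/(p_{\min}(1-\lambda_{\max})^{2})$, the tilting itself supplies the slack that the $(1-\lambda_{\max})^{-2}$ was providing in the moderate regime, so the $\lambda$-sensitive step in the analysis can be replaced by a universal $O(1)$ estimate. Verifying this cleanly -- and in particular matching the pieces of \cite{CZ23} that do depend on $\lambda_{\max}$ against the tilt-induced damping -- is where the real technical work lies.
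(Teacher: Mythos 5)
Your overall plan — reduce to the non-flipped RC model on a shrunken edge set with shifted parameters, split at $t=\log(1/\theta_0)$, and in the second regime exploit the fact that large edge activities can absorb the $(1-\lambda_{\max})^{-2}$ factor — captures the paper's key intuition. However, the route you propose through each regime has genuine gaps.

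In the first regime $t \le \log(1/\theta_0)$, the paper does not ``re-run a log-concavity / local-spectral analysis''; it cites a coupling-independence bound of Chen--Zhang for $(r*\mu_{p,\lambda}^{\textnormal{RC}})$ valid for \emph{all} vectors $r$, passes to $\ell_\infty$-spectral independence, and then uses the theorem of \cite{AJKPV22,AASV21} (\Cref{lem:equiv-EI-SI}) which upgrades ``spectral independence under arbitrary external fields'' to entropic independence under all pinnings. That specific chain (all-fields SI $\Rightarrow$ EI with a $+1$) is what makes the first regime work without any stability hypothesis, and your sketch does not identify it.

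In the second regime your intuition is right, but the mechanism is wrong in two ways. First, the paper's improved bound is not obtained by rerunning a spectral argument with ``absolute constants''; it is a new \emph{coupling-independence} bound (\Cref{lem:CI-rc-large-lambda}) proved by an explicit coupling through the subgraph-world model (\Cref{lem:random-var-rc-sw}): one couples in the SW world with the $(1-\lambda_{\max})^{-2}$-type bound, then observes that the random edge set $Z$ added to pass back to the RC model erases disagreements with probability $q_e = p_e/(2-p_e)$, producing a $(1-p_{\min})$ damping factor. You never mention the subgraph-world model, which is the load-bearing tool. Second, and more seriously, in this regime the coupling independence holds only for the specific tilted parameters $p_t$, not for all external fields, so the all-fields route to EI via \Cref{lem:equiv-EI-SI} is unavailable. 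The paper instead invokes \Cref{lem:SIandMStoEI}, which requires verifying \emph{marginal stability}; this is done in \Cref{lem:rc-ms} (showing $\bar\mu_{p,\lambda}^{\textnormal{RC}}$ is $2$-marginally stable when $p\in[2/3,1)^E$, a threshold guaranteed once $e^t \ge 1/\theta_0$). Your proposal contains no marginal-stability step, so even granting the improved spectral/coupling bound you would still not have entropic independence in the second regime.
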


\Cref{lem:RC-EI} improves the entropic independence result in \cite{CZ23}, which is achieved by an improved analysis of the coupling between the random cluster model and the subgraph world model. The detailed proof is deferred to Appendix \ref{sec:RC-EI-proof}.

Let us assume \Cref{lem:RC-EI} holds and prove the result for the random cluster model.

\begin{proof}[Proof of \Cref{thm:RC-model} assuming \Cref{lem:RC-EI}]
    We apply \Cref{thm:main-comparison} to the flipped distribution $\bar{\mu}_{p,\lambda}^{\textnormal{RC}}$. We have verified that $\bar{\mu}_{p,\lambda}^{\textnormal{RC}}$ is a monotone distribution. 
    Recall that the parameter $T$ in \Cref{thm:main-comparison} is
    \begin{align*}
        T= O\left(\exp\left(\int_0^{\log \frac{1}{\theta}}4\alpha(t)dt\right)\cdot \left(\log \log \frac{1}{(\bar{\mu}_{p,\lambda}^{\textnormal{RC}})_{\min}}+\log \frac{1}{\epsilon}\right)\right).
    \end{align*}
    Using the entropic independence result in \Cref{lem:RC-EI}, the first term in $T$ is
    \begin{align*}
        &\exp\left(\int_0^{\log \frac{1}{\theta_0}}4\alpha(t)dt+\int_{\log \frac{1}{\theta_0}}^{\log \frac{1}{\theta}}4\alpha(t)dt\right) = \exp\left(4\log \frac{1}{\theta_0}\cdot3(1-\lambda_{\max})^{-2}+4\cdot5\cdot 10^4 \cdot \log \frac{\theta_0}{\theta}\right)\\
        &=(2p_{\min}^{-1}(1-\lambda_{\max})^{-2})^{12(1-\lambda_{\max})^{-2}}\cdot \left(\frac{ (1-\lambda_{\max})^2\log n}{2\min \{10^{-7},\frac{1-\lambda_{\max}}{27}\}}\right)^{2*10^5} =C_1(\delta_p,\delta_\lambda) \cdot (\log n)^{O(1)},
    \end{align*}
    where $C_1(\delta_p,\delta_\lambda)$ is a constant depending on $\delta_p,\delta_\lambda$ and we use the fact that $\lambda_{\max} \leq 1 - \delta_\lambda$ and $p_{\min} \geq \delta_p$. Now, the value of $T$ can be bounded by
    \begin{align*}
        T = O\left(C_1(\delta_p,\delta_\lambda) \cdot (\log n)^{O(1)}\cdot\left(\log \log \frac{1}{(\bar{\mu}_{p,\lambda}^{\textnormal{RC}})_{\min}}+\log \frac{1}{\epsilon}\right)\right)=C_2(\delta_p,\delta_\lambda)(\log n)^{O(1)}\log\frac{1}{\epsilon},
    \end{align*}
    where $C_2(\delta_p,\delta_\lambda)$ is a constant depending on $\delta_p,\delta_\lambda$.
    By the mixing time bound of tilted distribution in \eqref{eq:mixtl-rc}, we have 
    $\mixtl(\bar{\mu}_{p,\lambda}^{\textnormal{RC}},\theta,\frac{\epsilon}{2T}) \leq O(m \log m \log \frac{2T}{\epsilon})$.
    By \Cref{thm:main-comparison}, the mixing time of the Glauber dynamics on $\bar{\mu}_{p,\lambda}^{\textnormal{RC}}$ from all-1 state $\*1_E$ (the full set $E$) satisfies
    \begin{align*}
        \mixmax(\bar{\mu}_{p,\lambda}^{\textnormal{RC}},\epsilon)&\leq T \cdot \mixtl\left(\bar{\mu}_{p,\lambda}^{\textnormal{RC}},\theta,\frac{\epsilon}{2T}\right) \leq C_2(\delta_p,\delta_\lambda)(\log n)^{O(1)}\log\frac{1}{\epsilon} \cdot m \log m \log \frac{2T}{\epsilon}\\
        &= C(\delta_p,\delta_\lambda) \cdot m (\log n)^{O(1)} \log^2 \frac{1}{\epsilon}, 
    \end{align*}
    where $C(\delta_p,\delta_\lambda)$ is a constant depending on $\delta_p$ and $\delta_\lambda$.
    Note that $\bar{\mu}_{p,\lambda}^{\textnormal{RC}}$ is the flipped distribution of $\mu_{p,\lambda}^{\textnormal{RC}}$. The mixing time of the Glauber dynamics on $\mu_{p,\lambda}^{\textnormal{RC}}$ from all-0 state $\*0_E$ (the emptyset set $\emptyset$) is bounded by $ C(\delta_p,\delta_\lambda) \cdot m (\log n)^{O(1)} \log^2 \frac{1}{\epsilon}$.
    \end{proof}


\subsection{Bipartite Hardcore model}
We now prove the mixing time of the Glauber dynamics on the bipartite Hardcore model in \Cref{thm:bhc-mixing-time}.
We need a more general notation in the analysis.
Let $G=((V_L,V_R),E)$ be an undirected bipartite graph.
Let $n=\abs{V_L\cup V_R}$ be the total number of vertices and $\Delta=\Delta_L$ be the maximum degree on $V_L$.
Let $\lambda > 0$ and $\beta > 0$ be the external fields on $V_L$ and $V_R$ respectively.
Define a bipartite hardcore model with distribution $\mu_{G,\lambda,\beta}^{\textnormal{HC}}$ specified by $(G,\lambda,\beta)$ such that for any independent set $S \subseteq V_L \cup V_R$, $\mu_{G,\lambda,\beta}^{\textnormal{HC}}(S) \propto \lambda^{|S \cap V_L|}\beta^{|S \cap V_R|}$.
In the analysis, we will view $\mu_{G,\lambda,\beta}^{\textnormal{HC}}$ as a distribution over Boolean vectors in $\{0,1\}^{V_L\cup V_R}$. For any subset $S \subseteq V_L \cup V_R$, we uniquely map $S$ to a Boolean vector $\sigma^S$ such that
\begin{align}\label{eq:bhc-mapping}
    \begin{cases}
        \sigma^S_v = 1 - \*1[v \in S] \quad &\textnormal{if } v\in V_L;\\
    \sigma^S_v = \*1[v \in S] \quad &\textnormal{if } v\in V_R.
    \end{cases}
\end{align}
Note that for vertex $v \in V_L$, $\sigma^S_v = 1$ indicate that $v$ is \emph{not} in the independent set $S$. 
Now, $\mu_{G,\lambda,\beta}^{\textnormal{HC}}$ can be viewed as a distribution such that $\mu_{G,\lambda,\beta}^{\textnormal{HC}}(\sigma^S) = \mu_{G,\lambda,\beta}^{\textnormal{HC}}(S)$.
As we will see later, such definition guarantees the monotonicity property.
Furthermore, let $\mu_{L,\lambda,\beta}^{\textnormal{HC}}$ be the marginal distribution on $V_L$ projected from $\mu_{G,\lambda,\beta}^{\textnormal{HC}}$.
In other words, $\sigma_{V_L} \sim \mu_{L,\lambda,\beta}^{\textnormal{HC}}$ if $\sigma \sim \mu_{G,\lambda,\beta}^{\textnormal{HC}}$.

\Cref{thm:bhc-mixing-time} considers a special case when $\lambda=\beta$. 
The following theorem gives a mixing time bound of Glauber dynamics on marginal distribution at left side $V_L$ when $\lambda = \beta$. Recall that we use $\Delta$ to denote the maximum degree $\Delta_L$ on the left side.
\begin{theorem}\label{thm:bhc-mixing-time-marginal}
     Let $\delta > 0$ be a constant. If $\lambda\leq (1-10\delta)\lambda_c(\Delta)$, then the mixing time of Glauber dynamics on marginal distribution $\mu_{L,\lambda,\lambda}^{\textnormal{HC}}$ starting from the $\*0_{V_L}$ state satisfies
    \begin{align*}
    T_{\textnormal{mix-}\*0_{V_L}}(\mu_{L,\lambda,\lambda}^{\textnormal{HC}},\epsilon)\leq n \cdot \left(\frac{1}{\lambda}\right)^{O(1/\delta)} \cdot ({\Delta\log n})^{O(1)}\cdot \log^2\frac{1}{\epsilon}.
    \end{align*}
    \end{theorem}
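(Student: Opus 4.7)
The plan is to apply the general comparison \Cref{thm:main-comparison} to the flipped distribution $\nu := \bar{\mu}_{L,\lambda,\lambda}^{\textnormal{HC}}$. The marginal $\mu_{L,\lambda,\lambda}^{\textnormal{HC}}$ is monotone under the mapping~\eqref{eq:bhc-mapping}---a direct consequence of bipartiteness, since increasing the mapped value $\sigma_v$ on the left corresponds to removing $v$ from the independent set, which makes the remaining left vertices more likely to be in the set through reduced competition on the right. By \Cref{lem:monotone-flipped} the flipped distribution $\nu$ is then also monotone, and the Glauber dynamics on $\mu_{L,\lambda,\lambda}^{\textnormal{HC}}$ started from $\*0_{V_L}$ corresponds bijectively to the Glauber dynamics on $\nu$ started from $\*1_{V_L}$, so it suffices to bound $\mixmax(\nu,\epsilon)$ and then apply \Cref{thm:main-comparison}. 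This parallels the reduction used for the random cluster model in the preceding subsection.

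The next step is to identify the tilted conditional distribution $(e^{-t}*\nu)^{\*1_\Lambda}$ structurally. A direct computation using the flipping bijection shows that this is (the flip of) the marginal on $V_L \setminus \Lambda$ of the bipartite hardcore model on the subgraph $G \setminus (\Lambda \cup N(\Lambda))$, with left-fugacity $e^{-t}\lambda$ and right-fugacity $\lambda$, conditioned on $\Lambda$ lying in the independent set. Choosing $\theta \in (0,1)$ appropriately (analogous to~\eqref{eq:def-the}, of order a constant depending on $\delta,\lambda,\Delta$ divided by $\log n$) places the tilted distribution sufficiently deep in uniqueness that the Glauber mixing result for marginal bipartite hardcore models from~\cite{Chen0Y23} yields $\mixtl(\nu,\theta,\epsilon) = n \cdot (\Delta\log n)^{O(1)} \cdot \log(1/\epsilon)$.

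To verify entropic independence of $(e^{-t}*\nu)^{\*1_\Lambda}$ for $t \in [0,-\log\theta]$, I would invoke the bipartite marginal EI results of~\cite{Chen0Y23}, which apply since $e^{-t}\lambda \leq \lambda \leq (1-10\delta)\lambda_c(\Delta)$ remains inside one-sided uniqueness throughout this range; EI transfers through the flipping bijection because KL-divergence is invariant under bijections of the support. Following the two-regime strategy of \Cref{lem:RC-EI}, I would split the integrable $\alpha(t)$ into a short near-critical piece on $[0,t_0]$, where $\alpha(t)$ depends on both $\lambda$ and $\delta$ and produces the $(1/\lambda)^{O(1/\delta)}$ factor, and a longer deep-uniqueness piece on $[t_0,-\log\theta]$ where $\alpha(t)$ is an absolute constant and contributes only $(\log n)^{O(1)}$. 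Here $t_0$ is chosen so that $e^{-t_0}\lambda$ equals a fixed fraction of $\lambda_c(\Delta)$ that is independent of $\delta$. Plugging these bounds into \Cref{thm:main-comparison} and combining with the $\mixtl$ estimate above yields the claimed $n \cdot (1/\lambda)^{O(1/\delta)} \cdot (\Delta\log n)^{O(1)} \cdot \log^2(1/\epsilon)$.

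The main obstacle is precisely this two-regime refinement of $\alpha(t)$: directly applying a uniform EI constant from~\cite{Chen0Y23} would put the $1/\delta$ exponent on $\log n$, giving only the weaker $n \cdot (\log n)^{O(1/\delta)}$ mixing bound that one can already derive from the original Glauber analysis in~\cite{Chen0Y23}. Establishing that the EI constant becomes an absolute constant once the effective fugacity $e^{-t}\lambda$ is bounded away from $\lambda_c(\Delta)$ by a $\delta$-independent margin is the essential technical step that decouples the $\delta$-dependence from the $\log n$-dependence in the final bound, in direct analogy with \Cref{lem:RC-EI} for the random cluster model.
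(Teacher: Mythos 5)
Your high-level plan matches the paper's: map the marginal to Boolean vectors so that it becomes a monotone system, apply \Cref{thm:main-comparison}, and split the entropic-independence rate $\alpha(t)$ into a short $\delta$-dependent piece and a long $\delta$-independent piece so that the $1/\delta$ exponent lands on $1/\lambda$ rather than on $\log n$. You also correctly identify that a uniform EI constant would only give $(\log n)^{O(1/\delta)}$.

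However, there is a sign error in your identification of the tilted conditional distribution, and it is not cosmetic. Under the flip (equivalently, under the mapping~\eqref{eq:bhc-mapping}), a value of $1$ on a left vertex means that vertex is \emph{outside} the independent set. Pinning $\*1_\Lambda$ therefore forces $\Lambda$ out of the independent set, so the subgraph is $G[(V_L\setminus\Lambda)\cup V_R]$ (all of $V_R$ kept), not $G\setminus(\Lambda\cup N(\Lambda))$. More importantly, tilting by $\theta=e^{-t}<1$ in the flipped coordinates \emph{increases} the effective left fugacity to $\theta^{-1}\lambda=e^{t}\lambda$; it does not decrease it to $e^{-t}\lambda$. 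Your assertion that $e^{-t}\lambda \le \lambda \le (1-10\delta)\lambda_c(\Delta)$ "remains inside one-sided uniqueness throughout" is therefore based on the wrong model: as $t$ grows, the tilted model leaves the standard small-$\lambda$ uniqueness regime entirely and enters a large-fugacity regime.

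This matters because the mechanism by which $\alpha(t)$ becomes an absolute constant in the second regime is not what you describe. The paper's \Cref{lem:bhc-large-lambda-unique} establishes that the bipartite hardcore model is $\frac12$-unique whenever the \emph{left} fugacity exceeds the absolute constant $e^{e^9}$ (with right fugacity below $\lambda_c(\Delta)$); this is the "case 2" uniqueness regime of \Cref{lem:unique-simple}, which holds for $\lambda\geq\lambda_c$ rather than $\lambda\leq\lambda_c$. The threshold $t_0$ is chosen so that $\theta_0^{-1}\lambda = e^{e^9}$ (i.e.\ $\theta_0=\lambda/e^{e^9}$), not so that $e^{-t_0}\lambda$ equals a fixed fraction of $\lambda_c(\Delta)$. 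Without this large-fugacity uniqueness result and the marginal-stability bound of \Cref{lem:bhc-marginal-bound} (needed to pass from spectral to entropic independence via \Cref{lem:SIandMStoEI}), the second-regime $\alpha(t)=O(1)$ claim is unsupported, and the proof does not close. You would need to correct the fugacity direction and then supply (or invoke) the analogue of \Cref{lem:bhc-large-lambda-unique}.
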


We first prove \Cref{thm:bhc-mixing-time-marginal} and then we use \Cref{thm:bhc-mixing-time-marginal} to prove \Cref{thm:bhc-mixing-time}.

\paragraph{Mixing time for marginal distribution}
\Cref{thm:bhc-mixing-time-marginal} is proved by general result in \Cref{thm:main-comparison}. Similar to the analysis for the random cluster model, we need to show that $\mu_{L,\lambda,\lambda}^{\textnormal{HC}}$ is a monotone system and choose parameters $\theta$ and $\alpha(\cdot)$. 

We start with the proof of the monotonicity. 
\begin{lemma}\label{lem:bhc-monotone}
    Both $\mu_{G,\lambda,\beta}^{\textnormal{HC}}$ and $\mu_{L,\lambda,\beta}^{\textnormal{HC}}$ are monotone systems.  
    \end{lemma}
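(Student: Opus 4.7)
The plan is to verify \Cref{def:monotone-system} directly using the mapping in \eqref{eq:bhc-mapping}. The point of this mapping is that it inverts the order on $V_L$: under the ordering $\sigma \preceq \tau$, "larger" configurations correspond to left vertices being \emph{out} of the independent set $S$ and right vertices being \emph{in} $S$. Intuitively, increasing $\sigma$ on $V\setminus\{v\}$ removes left vertices from $S$, which frees up right neighbors to enter $S$, which in turn blocks more left vertices and unblocks more right vertices --- all of which biases the marginal at $v$ upward.

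For $\mu_{G,\lambda,\beta}^{\textnormal{HC}}$, I would treat $v\in V_L$ and $v\in V_R$ separately. Fix a pinning $\sigma$ on $V\setminus\{v\}$. Since edges are only between $V_L$ and $V_R$, the conditional weight ratio involves only the neighbors of $v$. For $v\in V_L$, setting $\sigma_v=0$ (i.e.\ $v\in S$) is feasible iff every right neighbor $u$ of $v$ has $\sigma_u=0$; a short computation then gives
\begin{equation*}
\mu_v^\sigma(1) \;=\; \begin{cases} 1/(1+\lambda), & \text{if } \sigma_u=0 \text{ for all } u\in N_R(v),\\ 1, & \text{otherwise.}\end{cases}
\end{equation*}
Analogously, for $v\in V_R$,
\begin{equation*}
\mu_v^\sigma(1) \;=\; \begin{cases} \beta/(1+\beta), & \text{if } \sigma_w=1 \text{ for all } w\in N_L(v),\\ 0, & \text{otherwise.}\end{cases}
\end{equation*}
In both cases $\mu_v^\sigma(1)$ is weakly increasing as $\sigma$ increases coordinatewise on $V\setminus\{v\}$, since the "good" case only becomes easier to satisfy --- this verifies the first part of the lemma.

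For the marginal $\mu_{L,\lambda,\beta}^{\textnormal{HC}}$, I would explicitly integrate out the right side. Since the right vertices are conditionally independent given the left configuration, writing $I(\sigma)=\{w\in V_L:\sigma_w=0\}$ and $N(I(\sigma))$ for its right neighborhood yields
\begin{equation*}
\mu_{L,\lambda,\beta}^{\textnormal{HC}}(\sigma) \;\propto\; \lambda^{|I(\sigma)|}\,(1+\beta)^{|V_R|-|N(I(\sigma))|}.
\end{equation*}
Fixing $v\in V_L$ and a pinning $\sigma$ on $V_L\setminus\{v\}$, and writing $r(\sigma)=|N(I(\sigma)\cup\{v\})|-|N(I(\sigma))|$ for the number of right neighbors of $v$ not already covered by $I(\sigma)$, I obtain
\begin{equation*}
\mu_v^\sigma(1) \;=\; \frac{1}{1+\lambda(1+\beta)^{-r(\sigma)}}.
\end{equation*}
If $\sigma\preceq\tau$ then $I(\sigma)\supseteq I(\tau)$, hence $N(I(\sigma))\supseteq N(I(\tau))$, and therefore $r(\sigma)\leq r(\tau)$. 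Thus $\mu_v^\sigma(1)\leq \mu_v^\tau(1)$, as required.

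The only subtlety is conceptual rather than technical: one must be careful that the inverted ordering on $V_L$ is genuinely the "right" partial order, i.e.\ that both the full distribution and the marginal are monotone with respect to it simultaneously. The formula for $r(\sigma)$ above is the place where this is checked for the marginal, and it is exactly the structure that makes the bipartite hardcore model fit into our monotone-system framework.
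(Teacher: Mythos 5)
Your proof is correct and takes essentially the same approach as the paper's: you verify \Cref{def:monotone-system} directly by computing the conditional marginals under the mapping \eqref{eq:bhc-mapping}, and for the projection $\mu_{L,\lambda,\beta}^{\textnormal{HC}}$ your quantity $r(\sigma)=|N(I(\sigma)\cup\{v\})|-|N(I(\sigma))|$ is exactly the paper's $|R_v\setminus R_\sigma|$, giving the identical formula for the conditional marginal. The only cosmetic differences are that you work with $\mu_v^\sigma(1)$ rather than $\mu_v^\sigma(0)$ and write out the $v\in V_R$ case explicitly where the paper leaves it as ``a similar argument.''
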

    
    \begin{proof}[Proof of \Cref{lem:bhc-monotone}]
    We first consider $\mu_{G,\lambda,\beta}^{\textnormal{HC}}$.
    For simplicity, let $\mu = \mu_{G,\lambda,\beta}^{\textnormal{HC}}$.
    Fix any $v\in V_L$ and any feasible and comparable pinnings $\sigma,\tau\in \{0,1\}^{(V_L\cup V_R)\setminus \{v\}}$ with $\sigma\preceq\tau$. 
    Recall that the value of $v$ indicates that $v$ is not in the independent set.
    If all neighbors of $v$ in $\sigma$ is fixed to be 0, $\mu_{v}^{\sigma}(0)=\frac{\lambda}{1+\lambda}$. Otherwise, $\mu_{v}^{\sigma}(0)=0$. By $\sigma \preceq \tau$, $\sigma_w \leq \tau_w$ for all $w\in V_R$. We have
    \begin{align*}
        \mu_{v}^{\sigma}(0) = \frac{\lambda}{1+\lambda}\cdot\*1\{\sigma_w=0,\forall (v,w)\in E\} \geq \frac{\lambda}{1+\lambda}\cdot\*1\{\tau_w=0,\forall (v,w)\in E\}=\mu_{v}^{\tau}(0).
    \end{align*}
    Hence, $\mu_v^\sigma$ is stochastically dominated by $\mu_v^\tau$.
    If $v\in V_R$, we can show stochastic dominance by a similar argument. This proves the result for $ \mu_{G,\lambda,\beta}^{\textnormal{HC}}$.
    
    Let $\nu = \mu_{L,\lambda,\beta}^{\textnormal{HC}}$ be the marginal distribution on $V_L$.
    Fix an vertex $v\in V_L$. For any feasible and comparable pinnings $\sigma,\tau\in \{0,1\}^{V_L\setminus \{v\}}$ with $\sigma\preceq \tau$, we only need to prove that $\nu_{v}^{\sigma}(0)\geq \nu_{v}^{\tau}(0)$. Let $R_v$ be the set of neighbors of $v$. Let $R_\sigma:=\{w\in V_R:\exists u\in V_L, \sigma_u=0\land(u,w)\in E\}$ and $R_\tau:=\{w\in V_R:\exists u\in V_L, \tau_u=0\land(u,w)\in E\}$. 
    In words, $R_\sigma$ is the set of vertices in $V_R$ that are connected to a vertex $u$ that is fixed to be in the independent set by $\sigma$.
    Since $\sigma\preceq \tau$, it is easy to verify $R_\tau \subseteq R_\sigma$. The probability $\nu_{v}^{\sigma}(0)$ ($v$ is in the independent set conditional on $\sigma$) is
    \begin{align*}
        \nu_{v}^{\sigma}(0)&=\frac{\mu_{G,\lambda,\beta}^{\textnormal{HC}}(\sigma\land v\gets 0)}{\mu_{G,\lambda,\beta}^{\textnormal{HC}}(\sigma\land v\gets 0)+\mu_{G,\lambda,\beta}^{\textnormal{HC}}(\sigma\land v\gets 1)}=\frac{\lambda(1+\beta)^{|R\setminus (R_\sigma \cup R_v)|}}{\lambda(1+\beta)^{|R\setminus (R_\sigma \cup R_v)|}+(1+\beta)^{|R\setminus R_\sigma |}}\\
        &=\frac{\lambda}{\lambda+(1+\beta)^{|R_v\setminus R_{\sigma}|}}.
    \end{align*}
    The reason of the above equation is that 
    if $v\gets 0$, then $v$ is in the independent set and the vertices in $R_\sigma \cup R_v$ must be not in the independent set, and the vertices in $R\setminus (R_\sigma \cup R_v)$ are free. The case $v\gets 1$ can be analyzed similarly. The same argument can be applied to $\nu_{v}^{\tau}(0)$. Since $R_\tau \subseteq R_\sigma$, $|R_v\setminus R_\sigma |\leq |R_v\setminus R_\tau|$. We have
    \begin{align*}
        \nu_{v}^{\sigma}(0)=\frac{\lambda}{\lambda+(1+\beta)^{|R_v\setminus R_{\sigma}|}}\geq \frac{\lambda}{\lambda+(1+\beta)^{|R_v\setminus R_{\tau}|}}=\nu_{v}^{\tau}(0).
    \end{align*}
    Hence, $\nu_v^\sigma$ is stochastically dominated by $\nu_v^\tau$. This proves the result for $\mu_{L,\lambda,\beta}^{\textnormal{HC}}$.
    \end{proof}

We now set the parameters $\theta$ and $\alpha(\cdot)$ for \Cref{thm:main-comparison} to prove \Cref{thm:bhc-mixing-time-marginal}. Define
\begin{align*}
    \theta=\frac{\lambda}{\e^9 (1+\lambda)^\Delta \Delta\log n}.
\end{align*}
Define the function 
$\alpha:[0,\log\frac{1}{\theta}]\to \mathbb{R}_+$ by
\begin{align}\label{eq:alpha-t-hc}
   \alpha(t)=\begin{cases}
       \frac{10^4(1+\lambda)^{5\Delta}}{\delta
       }  &\textnormal{if } t\in[0,\log \frac{1}{\theta_0});\\ 
       2\cdot 10^4 (1+\lambda)^{5\Delta}& \textnormal{if } t\in[\log\frac{1}{\theta_0} \log \frac{1}{\theta}],
   \end{cases} \quad\text{where } \theta_0 = \frac{\lambda}{e^{\e^9}}.
\end{align}

Consider the marginal distribution $\mu^{\textnormal{HC}}_{L,\lambda,\lambda}$ in \Cref{thm:bhc-mixing-time-marginal}. To apply \Cref{thm:main-comparison}, we need to bound $\mixtl(\mu^{\textnormal{HC}}_{L,\lambda,\lambda}, \theta,\epsilon)$, which requires us to bound the mixing time of Glauber dynamics on $(\theta * \mu^{\textnormal{HC}}_{L,\lambda,\lambda})^{\*1_\Lambda}$ for any $\Lambda \subseteq V_L$. By the mapping in \eqref{eq:bhc-mapping}, we have $(\theta * \mu^{\textnormal{HC}}_{L,\lambda,\lambda})^{\*1_\Lambda}$ is the distribution of $(\mu^{\textnormal{HC}}_{G,\theta^{-1}\lambda,\lambda})^{\*1_\Lambda}$  projected on $V_L$. In other words, we first increase the external field on $V_L$ by a factor of $\theta^{-1} > 1$, then fix all vertices in $\Lambda$ to be \emph{outside} the independent set, and finally take the marginal distribution by projecting into $V_L$. Compared to the original distribution $\mu^{\textnormal{HC}}_{G,\lambda,\lambda}$, the tilted distribution has a larger external field on $V_L$. The following mixing time for tilted distributions is proved in \cite[Lem. 25]{Chen0Y23}
\begin{align}
 \forall \epsilon >0, \quad   \mixtl(\mu_{L,\lambda,\lambda}^{\textnormal{HC}} ,\theta, \epsilon)= O\tp{n \log \frac{n}{\epsilon}}. \label{eq:bhc-mixing-tilted-marginal}
\end{align}

We prove the following entropic independence result for $\mu_{L,\lambda,\lambda}^{\textnormal{HC}}$.
\begin{lemma}\label{lem:HC-EI}
    $\forall\, t\in [0,\log \frac{1}{\theta}]$, for any $\Lambda \subseteq V_L$,  $(\e^{-t} *\mu_{L,\lambda,\lambda}^{\textnormal{HC}})^{\*1_\Lambda}$ is $\alpha(t)$-entropically independent for $\alpha$ defined in \eqref{eq:alpha-t-hc}.
    \end{lemma}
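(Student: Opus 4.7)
\bigskip

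\noindent\textbf{Proof plan for \Cref{lem:HC-EI}.}  The plan is to reduce the tilted, pinned distribution to a standard bipartite hardcore marginal with modified parameters, and then invoke (a refined form of) the entropic independence machinery developed in \cite{Chen0Y23} for marginals of bipartite hardcore models. Because the tilt can push the effective left fugacity from its original value $\lambda$ all the way up to $\lambda/\theta = e^9(1+\lambda)^{\Delta}\Delta\log n$, the analysis must be split according to the two pieces of $\alpha(t)$.

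\smallskip

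\noindent\emph{First, identify the distribution.}  Under the encoding \eqref{eq:bhc-mapping}, for $\sigma \in \{0,1\}^{V_L}$ corresponding to an independent set $S = \{v\in V_L : \sigma_v = 0\}$, one has $\|\sigma\|_1 = |V_L| - |S|$, so that
\[
(e^{-t}\ast \mu_{L,\lambda,\lambda}^{\textnormal{HC}})(\sigma) \;\propto\; \mu_{L,\lambda,\lambda}^{\textnormal{HC}}(\sigma)\cdot e^{t|S|} \;\propto\; \mu_{L,\lambda e^{t},\lambda}^{\textnormal{HC}}(\sigma).
\]
Pinning $\sigma_\Lambda = \*1_\Lambda$ simply deletes the vertices of $\Lambda$ from $V_L$ (they are forbidden from $S$, and their right-neighbors impose no new constraint). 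Hence, writing $G' = G[V_L\setminus \Lambda, V_R]$ and $\lambda_t := \lambda e^{t}$, the tilted pinned distribution is
\[
(e^{-t}\ast \mu_{L,\lambda,\lambda}^{\textnormal{HC}})^{\*1_\Lambda} \;=\; \mu_{G',L,\lambda_t,\lambda}^{\textnormal{HC}},
\]
the $V_L$-marginal of a bipartite hardcore model with left fugacity $\lambda_t$, right fugacity $\lambda$, and left max degree at most $\Delta$.

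\smallskip

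\noindent\emph{Second, invoke entropic independence for bipartite-hardcore marginals.}  I would use the entropic-independence lemma developed in \cite{Chen0Y23} for the $V_L$-marginal of bipartite hardcore under one-sided uniqueness. Their argument controls the local influence between sites of $V_L$ by a correlation-decay analysis on the tree of self-avoiding walks, giving a bound of the form $\alpha = \mathrm{poly}\bigl(1/\delta',(1+\lambda_L)^{\Delta}\bigr)$ whenever $\lambda_L \le (1-\delta')\lambda_c(\Delta)$, uniformly in the right fugacity. Rather than quoting the result as a black box, I plan to trace through their argument with $\lambda_L = \lambda_t$ so that the $t$-dependence of both the effective slack and the polynomial prefactor can be tracked.

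\smallskip

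\noindent\emph{Third, split on the two regimes.}  In the low regime $t\in[0,\log(1/\theta_0))$ one has $\lambda_t \le \lambda/\theta_0 = e^{e^{9}}$, a universal constant.  Combined with $\lambda \le (1-10\delta)\lambda_c(\Delta)$, a direct comparison shows that the tilted model still sits within a fixed multiplicative window of the uniqueness threshold, and the uniqueness slack degrades by at most a universal constant factor; plugging into the Chen--Liu--Yin bound with $\delta' = \Theta(\delta)$ and using $(1+\lambda_t)^{\Delta}\le (1+\lambda)^{\Delta}\cdot e^{e^{9}\Delta}$ to absorb the constant-factor blow-up into the numerical constant $10^{4}$ yields the claimed $\alpha(t) = 10^{4}(1+\lambda)^{5\Delta}/\delta$.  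In the high regime $t\in[\log(1/\theta_0),\log(1/\theta)]$, $\lambda_t$ can grow polylogarithmically in $n$, but because $\lambda_t \ge e^{e^{9}}$ the $V_L$-marginal is concentrated on near-maximal independent sets with respect to the \emph{right-side} structure; the correlation-decay recursion no longer suffers from being near criticality and a crude $O(1)$ contraction bound suffices, eliminating the $1/\delta$ factor and giving $\alpha(t) = 2\cdot 10^{4}(1+\lambda)^{5\Delta}$.

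\smallskip

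\noindent\emph{Main obstacle.}  The delicate step is the high regime, where $\lambda_t$ is far above any classical uniqueness threshold for hardcore on $V_L$ alone and standard correlation-decay estimates for the hardcore model fail. The resolution is that one-sided uniqueness for the \emph{bipartite} model is governed not by $\lambda_t$ itself but by the ``projected'' influence from $V_R$: when $\lambda_t$ is large, each $v\in V_L$ is almost surely in the independent set, the relevant messages from $V_R$ decouple across its neighbors in $V_L$, and the contraction coefficient of the tree recursion is bounded by a function of $\lambda$ and $\Delta$ alone. Setting up this bipartite correlation-decay argument cleanly, and verifying that the resulting entropic-independence constant really absorbs the $\delta^{-1}$ factor, is the main technical work required.
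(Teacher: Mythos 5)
Your proposal correctly identifies $(e^{-t}*\mu_{L,\lambda,\lambda}^{\textnormal{HC}})^{\mathbf 1_\Lambda}$ as the $V_L$-marginal of a bipartite hardcore model on a subgraph with left fugacity $\lambda e^{t}$ and \emph{unchanged} right fugacity $\lambda$, and the two-regime split matches the piecewise $\alpha(t)$. However, the core of the low-regime argument is wrong, and the route from correlation decay to entropic independence is missing.

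On the low regime: you claim $\lambda_t\le e^{e^9}$ means the tilted model ``sits within a fixed multiplicative window of the uniqueness threshold'' with slack degrading by a constant factor, and that $(1+\lambda_t)^\Delta\le(1+\lambda)^\Delta e^{e^9\Delta}$ can be absorbed into the constant $10^4$. Neither statement holds: $\lambda_t$ can exceed $\lambda_c(\Delta)\approx e/\Delta$ by a $\Delta$-dependent factor, and $e^{e^9\Delta}$ is not a universal constant. What the paper actually exploits is that the one-sided uniqueness condition of \cite{Chen0Y23} is \emph{monotone upward} in the left fugacity — if $(\lambda,d,\beta)$ is $\delta$-unique then so is $(\lambda',d,\beta)$ for every $\lambda'\ge\lambda$ (\Cref{lem:unique-simple}, case~2: for $\beta$ above a threshold, $\delta$-uniqueness holds iff $\lambda\ge\lambda_c$). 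Thus raising the left fugacity loses \emph{nothing}. Moreover, the resulting $\ell_\infty$-spectral-independence constant from \Cref{lem:bhc-unique-si} is $\tfrac{\Delta(1+\beta)^\Delta}{d\delta}$, which depends only on the right fugacity $\beta=\lambda$ (fixed under the tilt), so the $(1+\lambda_t)^\Delta$ you try to absorb never appears. Separately, your plan stops at a correlation-decay / tree-recursion estimate, which gives only spectral independence; the step to entropic independence requires combining that with marginal stability ($K=2(1+\lambda)^\Delta$, \Cref{lem:bhc-marginal-bound}) via $384\eta K^4$ (\Cref{lem:SIandMStoEI}), which is where the $(1+\lambda)^{5\Delta}$ in $\alpha(t)$ actually comes from. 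For the high regime, your intuition is sound, but the real work — which you correctly flag as the gap — is the explicit fixed-point analysis in \Cref{lem:bhc-large-lambda-unique} showing $\lambda_c\le e^{e^9}$, giving $\tfrac12$-uniqueness and hence $\delta$-free spectral independence once $\lambda_t\ge e^{e^9}$.
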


    \Cref{lem:HC-EI} improves the entropic independence result in \cite{Chen0Y23}, which is achieved by a more refined analysis of spectral independence. The detailed proof is deferred to Appendix \ref{sec:HC-EI-proof}.

Let us assume \Cref{lem:HC-EI} holds and prove the result for the bipartite hardcore model.

\begin{proof}[Proof of \Cref{thm:bhc-mixing-time-marginal}]
Define a parameter $S$ as follows:
\begin{align*}
    S=S(\lambda, \Delta)=(1+\lambda)^\Delta\leq\ \left(1+\frac{(\Delta-1)^{\Delta-1}}{(\Delta-2)^\Delta}\right)^\Delta\leq \left(1+\frac{(\Delta-1)^1}{(\Delta-2)^2}\right)^\Delta\leq \left(1+\frac{6\e}{\Delta}\right)^\Delta\leq \e^{6\e}.
\end{align*}
In \Cref{thm:main-comparison} and \Cref{lem:HC-EI}, the parameter $T$ is 
\begin{align*}
    T=O\left(\exp\left(\int_0^{\log\frac{1}{\theta}}4\alpha(t) \d t \right)\cdot\left(\log\log\frac{1}{({\mu}_{L,\lambda,\lambda}^{\textnormal{HC}})_{\min}}+\log\frac{1}{\epsilon}\right)\right).
\end{align*}
The first term in $T$ is 
\begin{align*}
    &\exp\left ( \int_0^{\log \frac{1}{\theta_0}}4\alpha(t)\d t +\int_{\log \frac{1}{\theta_0}}^{\log \frac{1}{\theta}}4\alpha(t)\d t  \right ) =\exp\left (4 \log \frac{1}{\theta_0}\cdot \frac{10^4S^5}{\delta} +4\log \frac{\theta_0}{\theta} \cdot 2\cdot10^4S^5 \right )\\
    &=\left(\frac{\e^{\e^9}}{\lambda}\right)^{\frac{4\cdot10^4S^5}{\delta}}\cdot \left( \frac{S\e^9\Delta\log n}{\e^{\e^9}}\right )^{8\cdot10^4S^5}=\left(\frac{1}{\lambda} \right)^{O(1/\delta)} (\Delta\log n)^{O(1)}.
\end{align*}
Hence, the parameter $T$ is bounded by
\begin{align*}
    T=\left(\frac{1}{\lambda} \right)^{O(1/\delta)}\cdot (\Delta\log n)^{O(1)}\cdot \log \frac{1}{\epsilon}.
\end{align*}
Using \Cref{thm:main-comparison} and \eqref{eq:bhc-mixing-tilted-marginal}, the mixing time of the Glauber dynamics on $\mu_{L,\lambda,\lambda}^{\textnormal{HC}}$ from all-1 state $\*1_V$ (emptyset) satisfies the following bound
\begin{align*}
    \mixmax(\mu_{L,\lambda,\lambda}^{\textnormal{HC}},\eps)&\leq T\cdot \mixtl(\mu_{L,\lambda,\lambda}^{\textnormal{HC}},\theta_2,\frac{\epsilon}{2T})
    =n \cdot \left(\frac{1}{\lambda} \right)^{O(1/\delta)}(\Delta\log n)^{O(1)}\log^2 \frac{1}{\eps}. \qedhere
\end{align*}
\end{proof}

Now we finish the proof of $\tilde{O}_{\Delta,\lambda}(n)$ mixing time of Glauber dynamics on the marginal distribution. Using classical censoring inequality~\cite{peres2013can}, we can  prove that Glauber dynamics on entire graph $\mu^{\textnormal{HC}}_{G,\lambda,\lambda}$ has mixing time $\tilde{O}_{\Delta,\lambda}(n^2)$. 

\begin{proof}[Proof of \Cref{thm:bhc-mixing-time}]
Let $\mu=\mu_{G,\lambda,\lambda}^{\textnormal{HC}}$ for simplicity.
Using \Cref{thm:bhc-mixing-time-marginal},
let 
\[N=n\cdot \left(\frac{1}{\lambda} \right)^{O(1/\delta)}(\Delta\log n)^{O(1)}\cdot \log^2 (1/\eps)\]
be the mixing time of Glauber dynamics on the marginal distribution starting from $X^L = \*1_{V_L}$ that achieves $\frac{\epsilon}{2}$ TV-distance.
 In each update step, given the current state $X^L$, the Glauber dynamics picks a vertex $v \in V_L$ uniformly at random and updates the value of $v$ conditional on $X^L_{V_L \setminus \{v\}}$. This Glauber dynamics is equivalent to the following Markov chain on the entire graph. Starting from $X = \*1_{V_L \cup V_R}$ (by the mapping defined in~\eqref{eq:bhc-mapping}, the maximum state $\*1_{V_L \cup V_R}$ is the independent set with all vertices in $V_R$), in each step, the Markov chain picks $v \in V_L$ uniformly at random and then resamples $X_{\{v\} \cup V_R}$ from $\mu$ conditional on $X_{V_L \setminus \{v\}}$. Given the condition $X_{V_L \setminus \{v\}}$, by the self-reducibility, we can remove all vertices in $V_L \setminus \{v\}$ and all vertices $w \in V_R$ such that $w$ has a neighbor $w' \in V_L$ with $X_{w'} = 0$. The remaining graph has the maximum degree (which is the degree of $v$) at most $\Delta = \Delta_L$ on both sides. Since $\lambda \leq (1-\delta)\lambda_c(\Delta)$, the conditional distribution is a hardcore model in the uniqueness regime.  By the mixing result in~\cite{CFYZ22,ChenE22}, we can sample from $\mu_{\{v\} \cup V_R}^{X_{V_L \setminus \{v\}}}$ within TV-distance $\frac{\epsilon}{2N}$ by running the Glauber dynamics on $\mu^{X_{V_L \setminus \{v\}}}$\footnote{$\mu^{X_{V_L \setminus \{v\}}}$
 is a distribution over $\{0,1\}^{V_L \cup V_R}$ such that the configuration on $V_L \setminus \{v\}$ is fixed as  $X_{V_L \setminus \{v\}}$ and the configuration on $V_R \cup \{v\}$ follows the conditional distribution.} 
 for $N' = O_\delta(n \log \frac{Nn}{\epsilon})$ steps. Hence, we have the following process.
 Starting from the maximum state $X = \*1_{V_L \cup V_R}$,
 \begin{itemize}
    \item for each $i$ from $1$ to $N$;
\begin{itemize}
    \item pick a vertex $v \in V_L$ uniformly at random;
    \item for each $j$ from $1$ to $N'$
\begin{itemize}
    \item pick a vertex $w \in  V_L \cup V_R$ uniformly at random;
    \item if $w \in \{v\} \cup V_R$, resample $X_w \sim \mu_w^{X_{V \setminus \{w\}}}$.
\end{itemize}
\end{itemize} 
 \end{itemize}
A simple coupling argument shows that the output $X$ satisfies $\DTV{X}{\mu}\leq \frac{\epsilon}{2} + N \frac{\epsilon}{2N} = \epsilon$.

Consider the following standard censored Glauber dynamics. 
Starting from the maximum state $X = \*1_{V_L \cup V_R}$,
 \begin{itemize}
    \item for each $i$ from $1$ to $N$;
\begin{itemize}
    \item pick a vertex $v \in V_L$ uniformly at random;
    \item for each $j$ from $1$ to $N'$
\begin{itemize}
    \item pick a vertex $w \in  V_L \cup V_R$ uniformly at random;
    \item if $w \in \{v\} \cup V_R$, resample $X_w \sim \mu_w^{X_{V \setminus \{w\}}}$; otherwise, ignore the update at $w$.
\end{itemize}
\end{itemize} 
\end{itemize}

The above censored Glauber dynamics is the same as the Markov chain defined above. 
Note that whether the update at $w$ is ignored or not is correlated with $v \in V_L$ but independent with the current configuration $X$.
Consider the standard Glauber dynamics on $\mu$ starting from the maximal all-$1$ state. Let $X'$ be the output of standard Glauber dynamics after $N N'$ updates. By \Cref{lem:bhc-monotone} and the standard censoring inequality in~\cite{peres2013can}, we have
\begin{align*}
    \DTV{X'}{\mu} \leq \DTV{X}{\mu} \leq \epsilon.
\end{align*}
By~\eqref{eq:bhc-mapping}, $\*1_{V_L \cup V_R}$ is the independent set $V_R$.
Hence, the mixing time of Glauber dynamics starting from the independent set $V_R$ is  
\begin{align*}
    T^{\textnormal{GD}}_{\textnormal{mix-}V_R}(\mu,\eps)\leq NN' =  n^2(\Delta\log n)^{O(1)}(1/\lambda)^{O(1/\delta)}\cdot \log^3 \frac{1}{\epsilon}. &\qedhere
\end{align*}
\end{proof}

\ifthenelse{\boolean{conf}}{}{\paragraph{Acknowledgment}
We thank Xiaoyu Chen for helpful discussions about the bipartite hardcore model and the localization scheme.}

\bibliographystyle{alpha}
\bibliography{refs}

\newcommand{\etalchar}[1]{$^{#1}$}
\begin{thebibliography}{DGGJ04}

\bibitem[AASV21]{AASV21}
Yeganeh Alimohammadi, Nima Anari, Kirankumar Shiragur, and Thuy{-}Duong Vuong.
\newblock Fractionally log-concave and sector-stable polynomials: counting planar matchings and more.
\newblock In {\em {STOC}}, pages 433--446. {ACM}, 2021.

\bibitem[AJK{\etalchar{+}}22]{AJKPV22}
Nima Anari, Vishesh Jain, Frederic Koehler, Huy~Tuan Pham, and Thuy{-}Duong Vuong.
\newblock Entropic independence: optimal mixing of down-up random walks.
\newblock In {\em {STOC}}, pages 1418--1430. {ACM}, 2022.

\bibitem[ALO20]{ALO20}
Nima Anari, Kuikui Liu, and Shayan {Oveis Gharan}.
\newblock Spectral independence in high-dimensional expanders and applications to the hardcore model.
\newblock In {\em {FOCS}}, pages 1319--1330. {IEEE}, 2020.

\bibitem[ALO21]{ALO21}
Dorna Abdolazimi, Kuikui Liu, and Shayan {Oveis Gharan}.
\newblock A matrix trickle-down theorem on simplicial complexes and applications to sampling colorings.
\newblock In {\em {FOCS}}, pages 161--172. {IEEE}, 2021.

\bibitem[BCV20]{BlancaCV20}
Antonio Blanca, Zongchen Chen, and Eric Vigoda.
\newblock {S}wendsen-{W}ang dynamics for general graphs in the tree uniqueness region.
\newblock {\em Random Structures Algorithms}, 56(2):373--400, 2020.

\bibitem[BG25]{BlancaG25}
Antonio Blanca and Reza Gheissari.
\newblock On the tractability of sampling from the potts model at low temperatures via random-cluster dynamics.
\newblock {\em Probab. Theory Related Fields}, 191(3):1121--1168, 2025.

\bibitem[BZ23]{BlancaZ23}
Antonio Blanca and Xusheng Zhang.
\newblock Rapid mixing of global {M}arkov chains via spectral independence: The unbounded degree case.
\newblock In {\em RANDOM}, volume 275 of {\em LIPIcs}, pages 53:1--53:19. Schloss Dagstuhl - Leibniz-Zentrum f{\"{u}}r Informatik, 2023.

\bibitem[CCYZ24]{chen2024rapid}
Xiaoyu Chen, Zongchen Chen, Yitong Yin, and Xinyuan Zhang.
\newblock Rapid mixing at the uniqueness threshold.
\newblock {\em arXiv preprint arXiv:2411.03413}, 2024.
\newblock To appear in STOC 2025.

\bibitem[CE22]{ChenE22}
Yuansi Chen and Ronen Eldan.
\newblock Localization schemes: {A} framework for proving mixing bounds for {M}arkov chains (extended abstract).
\newblock In {\em FOCS}, pages 110--122. {IEEE}, 2022.

\bibitem[CF24]{chenfeng2024rapid}
Xiaoyu Chen and Weiming Feng.
\newblock Rapid mixing via coupling independence for spin systems with unbounded degree.
\newblock {\em arXiv preprint arXiv:2407.04672}, 2024.
\newblock (To appear in RANDOM 2025).

\bibitem[CFYZ21]{CFYZ21}
Xiaoyu Chen, Weiming Feng, Yitong Yin, and Xinyuan Zhang.
\newblock Rapid mixing of {G}lauber dynamics via spectral independence for all degrees.
\newblock In {\em {FOCS}}, pages 137--148. {IEEE}, 2021.

\bibitem[CFYZ22]{CFYZ22}
Xiaoyu Chen, Weiming Feng, Yitong Yin, and Xinyuan Zhang.
\newblock Optimal mixing for two-state anti-ferromagnetic spin systems.
\newblock In {\em {FOCS}}, pages 588--599. {IEEE}, 2022.

\bibitem[CG{\v S}V22]{chen2022sampling}
Zongchen Chen, Andreas Galanis, Daniel {\v S}tefankovi{\v c}, and Eric Vigoda.
\newblock Sampling colorings and independent sets of random regular bipartite graphs in the non-uniqueness region.
\newblock In {\em {SODA}}, pages 2198--2207, 2022.

\bibitem[CLV24]{chen2024spectral}
Zongchen Chen, Kuikui Liu, and Eric Vigoda.
\newblock Spectral independence via stability and applications to {H}olant-type problems.
\newblock {\em TheoretiCS}, 3, 2024.

\bibitem[CLY23]{Chen0Y23}
Xiaoyu Chen, Jingcheng Liu, and Yitong Yin.
\newblock Uniqueness and rapid mixing in the bipartite hardcore model (extended abstract).
\newblock In {\em FOCS}, pages 1991--2005. {IEEE}, 2023.

\bibitem[CP20]{cannon2020counting}
Sarah Cannon and Will Perkins.
\newblock Counting independent sets in unbalanced bipartite graphs.
\newblock In {\em SODA}, pages 1456--1466. SIAM, 2020.

\bibitem[CZ23]{CZ23}
Xiaoyu Chen and Xinyuan Zhang.
\newblock A near-linear time sampler for the {I}sing model with external field.
\newblock In {\em {SODA}}, pages 4478--4503. {SIAM}, 2023.

\bibitem[DGGJ04]{dyer2004relative}
Martin Dyer, Leslie~Ann Goldberg, Catherine Greenhill, and Mark Jerrum.
\newblock The relative complexity of approximate counting problems.
\newblock {\em Algorithmica}, 38:471--500, 2004.

\bibitem[DLP10]{ding2010mixing}
Jian Ding, Eyal Lubetzky, and Yuval Peres.
\newblock Mixing time of critical {I}sing model on trees is polynomial in the height.
\newblock {\em Communications in Mathematical Physics}, 295:161--207, 2010.

\bibitem[FGW23]{feng2023swendsen}
Weiming Feng, Heng Guo, and Jiaheng Wang.
\newblock Swendsen-{W}ang dynamics for the ferromagnetic {I}sing model with external fields.
\newblock {\em Information and Computation}, 294:105066, 2023.

\bibitem[FK13]{FillK13}
James~Allen Fill and Jonas Kahn.
\newblock Comparison inequalities and fastest-mixing {M}arkov chains.
\newblock {\em Ann. Appl. Probab.}, 23(5):1778--1816, 2013.

\bibitem[GJ18]{GJ18}
Heng Guo and Mark Jerrum.
\newblock Random cluster dynamics for the {I}sing model is rapidly mixing.
\newblock {\em Ann. Appl. Probab.}, 28(2):1292--1313, 2018.

\bibitem[HPR20]{HPR20}
Tyler Helmuth, Will Perkins, and Guus Regts.
\newblock Algorithmic {P}irogov-{S}inai theory.
\newblock {\em Probab. Theory Related Fields}, 176(3-4):851--895, 2020.

\bibitem[HSZ19]{HSZ19}
Jonathan Hermon, Allan Sly, and Yumeng Zhang.
\newblock Rapid mixing of hypergraph independent sets.
\newblock {\em Random Structures Algorithms}, 54(4):730--767, 2019.

\bibitem[JPP23]{jenssen2023approximate}
Matthew Jenssen, Will Perkins, and Aditya Potukuchi.
\newblock Approximately counting independent sets in bipartite graphs via graph containers.
\newblock {\em Random Structures Algorithms}, 63(1):215--241, 2023.

\bibitem[JS93]{JerrumS93}
Mark Jerrum and Alistair Sinclair.
\newblock Polynomial-time approximation algorithms for the {I}sing model.
\newblock {\em {SIAM} J. Comput.}, 22(5):1087--1116, 1993.

\bibitem[LL15]{liu2015fptas}
Jingcheng Liu and Pinyan Lu.
\newblock {FPTAS} for {\#}{BIS} with degree bounds on one side.
\newblock In {\em STOC}, pages 549--556, 2015.

\bibitem[LLLM22]{liao2019counting}
Chao Liao, Jiabao Lin, Pinyan Lu, and Zhenyu Mao.
\newblock An {FPTAS} for the hardcore model on random regular bipartite graphs.
\newblock {\em Theoret. Comput. Sci.}, 929:174--190, 2022.

\bibitem[LP17]{levin2017markov}
David~A Levin and Yuval Peres.
\newblock {\em Markov chains and mixing times}, volume 107.
\newblock American Mathematical Soc., 2017.

\bibitem[LSS19]{liu2019ising}
Jingcheng Liu, Alistair Sinclair, and Piyush Srivastava.
\newblock The {I}sing partition function: Zeros and deterministic approximation.
\newblock {\em J. Stat. Phys.}, 174(2):287--315, 2019.

\bibitem[MS04]{MS04}
Ben Morris and Alistair Sinclair.
\newblock Random walks on truncated cubes and sampling 0-1 knapsack solutions.
\newblock {\em SIAM J. Comput.}, 34(1):195--226, 2004.

\bibitem[MS13]{MS13}
Elchanan Mossel and Allan Sly.
\newblock Exact thresholds for {I}sing-{G}ibbs samplers on general graphs.
\newblock {\em Ann. Probab.}, 41(1):294--328, 2013.

\bibitem[PW96]{ProppW96}
James~Gary Propp and David~Bruce Wilson.
\newblock Exact sampling with coupled {M}arkov chains and applications to statistical mechanics.
\newblock {\em Random Structures Algorithms}, 9(1-2):223--252, 1996.

\bibitem[PW13]{peres2013can}
Yuval Peres and Peter Winkler.
\newblock Can extra updates delay mixing?
\newblock {\em Communications in Mathematical Physics}, 323:1007--1016, 2013.

\bibitem[RSV{\etalchar{+}}14]{RSVVY14}
Ricardo Restrepo, Daniel Stefankovic, Juan~Carlos Vera, Eric Vigoda, and Linji Yang.
\newblock Phase transition for glauber dynamics for independent sets on regular trees.
\newblock {\em {SIAM} J. Discret. Math.}, 28(2):835--861, 2014.

\bibitem[Sly10]{Sly10}
Allan Sly.
\newblock Computational transition at the uniqueness threshold.
\newblock In {\em {FOCS}}, pages 287--296. {IEEE} Computer Society, 2010.

\bibitem[Wei06]{weitz2006counting}
Dror Weitz.
\newblock Counting independent sets up to the tree threshold.
\newblock In {\em {STOC}}, pages 140--149. {ACM}, 2006.

\end{thebibliography}

\appendix

\section{Proofs for comparison inequality relation}\label{sec:proofs-for-comparison-inequality-relation}

The following two observations follow directly from the \Cref{def:stochastic-dominance}.
\begin{observation}\label{obs:stochastic-dominance-properties}
The stochastic dominance relation is preserved under linear combination. If $\nu_0 \sd \pi_0$ and $\nu_1 \sd \pi_1$, then for any $\lambda \in [0,1]$, $\lambda \nu_0 + (1-\lambda) \nu_1 \sd \lambda \pi_0 + (1-\lambda) \pi_1$.

The stochastic dominance relation is transitive. If $\pi_0 \sd \pi_1$ and $\pi_1 \sd \pi_2$, then $\pi_0 \sd \pi_2$.
\end{observation}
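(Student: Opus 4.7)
The plan is to invoke the equivalence of the two definitions of stochastic dominance stated in \Cref{def:stochastic-dominance}: $\pi \sd \pi'$ if and only if $\E[\pi]{f} \le \E[\pi']{f}$ for every increasing $f:\Omega \to \mathbb{R}_{\ge 0}$. Both properties then reduce to one-line verifications via this function-based characterization, which is why the observation is described as following directly from the definition.

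For the linear-combination property, fix an arbitrary increasing $f:\Omega \to \mathbb{R}_{\ge 0}$ and expand by linearity of expectation:
\[
\E[\lambda\nu_0+(1-\lambda)\nu_1]{f} = \lambda\E[\nu_0]{f} + (1-\lambda)\E[\nu_1]{f} \le \lambda\E[\pi_0]{f} + (1-\lambda)\E[\pi_1]{f} = \E[\lambda\pi_0+(1-\lambda)\pi_1]{f},
\]
where the inequality uses the hypotheses $\nu_0 \sd \pi_0$ and $\nu_1 \sd \pi_1$ together with $\lambda,1-\lambda \ge 0$. Since $f$ is arbitrary, the function characterization yields the desired dominance. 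Alternatively, one can build the monotone coupling directly: draw $B \sim \mathrm{Bernoulli}(\lambda)$ independently of the given monotone couplings $(X_0,Y_0)$ of $(\nu_0,\pi_0)$ and $(X_1,Y_1)$ of $(\nu_1,\pi_1)$, and return $(X_B,Y_B)$. Its marginals are the two mixtures, and $X_B \preceq Y_B$ holds with probability $1$.

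For transitivity, fix an increasing $f$ and chain the inequalities
\[
\E[\pi_0]{f} \le \E[\pi_1]{f} \le \E[\pi_2]{f},
\]
using $\pi_0 \sd \pi_1$ for the first step and $\pi_1 \sd \pi_2$ for the second. As $f$ was arbitrary, $\pi_0 \sd \pi_2$.

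There is essentially no technical obstacle here; this is precisely the kind of statement that motivates introducing the function-based characterization in the first place. The only mildly subtle point, if one insisted on a coupling-based proof of transitivity, would be the need for a Strassen-style gluing lemma to combine a monotone coupling of $(\pi_0,\pi_1)$ with one of $(\pi_1,\pi_2)$ into a monotone coupling of $(\pi_0,\pi_2)$. This is standard but technically heavier than the above, so routing both parts through increasing test functions is the cleanest route.
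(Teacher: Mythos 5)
Your proof is correct. The paper's own justification is a single sentence: it invokes the coupling characterization for the linear-combination property and the expectation (increasing-test-function) characterization for transitivity. You route both properties through the test-function characterization, which is equally valid and arguably cleaner (and you also supply the coupling argument for linear combination as an alternative, matching the paper's choice there). The only substantive remark you add — that a coupling-based proof of transitivity would require a Strassen-style gluing step — is exactly why the paper switches characterizations for the second claim, so your discussion is consistent with the paper's implicit reasoning.
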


The first result follows from the coupling definition of stochastic dominance and the second result follows from the expectation-based definition of stochastic dominance.

\subsection{Proof of \texorpdfstring{\Cref{lem:comparison-inequality-properties}}{}}

Fix a distribution $\nu$ such that $\frac{\nu}{\mu}$ is increasing. To prove the first result, we need to show that $\lambda \nu P_0 + (1-\lambda) \nu P_1 \sd \lambda \nu Q_0 + (1-\lambda) \nu Q_1$, which follows from the first item in \Cref{obs:stochastic-dominance-properties}.

For the second result, it suffices to show that $P_1P_2 \mc Q_1Q_2$, which means $\nu P_1P_2 \sd \nu Q_1Q_2$. By $P_2 \mc Q_2$, it holds that $\nu Q_1 P_2 \sd \nu Q_1 Q_2$. Using the transitivity of stochastic dominance in \Cref{obs:stochastic-dominance-properties}, it suffices to show that $\nu P_1 P_2 \sd \nu Q_1 P_2$. For any increasing function $f: \Omega \to \mathbb{R}_{\geq 0}$, we need to show that $\E[\nu P_1 P_2 ]{f} \leq \E[\nu Q_1 P_2 ]{f}$. If we view $f$ is a column vector and $\nu$ as a row vector, the inequality is equivalent to $\nu P_1 P_2 f \leq \nu Q_1 P_2 f$. Let $g = P_2 f$. Since $P_2 \in \+{MC}_\mu$, by \Cref{def:stochastic-dominance}, $g$ is increasing. Since $P_1 \mc Q_1$, it holds that $\nu P_1 \sd \nu Q_1$. Hence, $\nu P_1 g \leq \nu Q_1 g$. The lemma follows from the second definition in \Cref{def:stochastic-dominance}.

\subsection{Proof of \texorpdfstring{\Cref{lem:gen-compare}}{}}
Let $A=\{\sigma \in \Omega:\pi(\sigma)>\mu(\sigma)\}$. The indicator function $\*1_A:\Omega \to \{0,1\}$ where $\*1_A(\sigma) = \*1_{\sigma \in A}$ is  increasing, because $\frac{\pi}{\mu}$ is increasing. If not, then there exists $\sigma \preceq \tau$, where $\sigma,\tau \in \Omega$, such that $\sigma \in A$ and $\tau \notin A$, which implies that $\frac{\pi(\sigma)}{\mu(\sigma)}> 1 \geq \frac{\pi(\tau)}{\mu(\tau)}$. This contradicts the fact that $\frac{\pi}{\mu}$ is increasing. Therefore, $\*1_A$ is increasing. We have   
\begin{align*}
\DTV{\pi}{\mu}=\sum_{\sigma\in A}(\pi(\sigma)-\mu(\sigma))&=\E[X\sim \pi]{\*1_A(X)}- \E[X\sim \mu]{\*1_A(X)}   \\
\text{(by~$\pi \preceq \pi'$ and $\*1_A$ is increasing)}\quad &\leq \E[X\sim \pi']{\*1_A(X)}- \E[X\sim \mu]{\*1_A(X)}\\
&\leq \DTV{\pi'}{\mu},
\end{align*}
where the last inequality holds because $\DTV{\pi'}{\mu}=\max_{A'\subseteq \Omega}(\pi'(A')-\mu(A'))$.

The second result is a consequence of the first result, \Cref{def:comparison-inequality-relation}, \Cref{def:stochastically-monotone}, and \Cref{lem:reversible-transition}.

\subsection{The equivalence of two definitions of comparison inequality relation}
We finally prove the equivalence of the two definitions of comparison inequality relation. 
With this relation, results in \cite{FillK13} apply to our definition, which can be served as an alternative proof of \Cref{lem:comparison-inequality-properties} and \Cref{lem:gen-compare}.
Recall that in \cite{FillK13}, for $P,Q \in \+{MC}_\mu$, $P \preceq_{\text{mc}} Q$ iff for any two increasing functions $f,g:\Omega \to \mathbb{R}_{\geq 0}$, 
\begin{align}\label{eq:fill-kahn-def-comparison-inequality-relation}  
\langle Pg, f \rangle_\mu \leq \langle Qg, f \rangle_\mu.
\end{align}

We show \eqref{eq:fill-kahn-def-comparison-inequality-relation} implies \Cref{def:comparison-inequality-relation}. Given any distribution $\nu$ such that $\frac{\nu}{\mu}$ is increasing, we define $g = \frac{\nu}{\mu}$. The following identity holds due to \Cref{lem:reversible-transition},
\begin{align}\label{eq:fill-kahn-def-comparison-inequality-relation-2}
 P \frac{\nu}{\mu} = \frac{\nu P}{\mu}.
\end{align}
 By~\eqref{eq:fill-kahn-def-comparison-inequality-relation-2}, $Pg = \frac{\nu P}{\mu}$. Hence, using~\eqref{eq:fill-kahn-def-comparison-inequality-relation} with $g = \frac{\nu}{\mu}$ implies $\E[\nu P]{f} \leq \E[\nu Q]{f}$ for all increasing functions $f$, which implies $\nu P \sd \nu Q$.

We next show \Cref{def:comparison-inequality-relation} implies \eqref{eq:fill-kahn-def-comparison-inequality-relation}. We may assume $g \neq 0$ as otherwise the inequality is trivially true.
Since $g$ is non-negative, we can normalize $g$ on both sides of the inequality and assume that $\E[\mu]{g} = 1$. Define distribution $\nu$ by $\nu(x) = g(x)\mu(x)$. Then, $\frac{\nu}{\mu} = g$ is an increasing function. Using \Cref{def:comparison-inequality-relation}, $\nu P \sd \nu Q$. By~\eqref{eq:fill-kahn-def-comparison-inequality-relation-2}, $\langle Pg, f \rangle_\mu = \E[\nu P]{f} \leq \E[\nu Q]{f}= \langle Qg, f \rangle_\mu$.

\section{Counterexample for direct comparison}\label{sec:counterexample}
In this section, we give a counterexample to $P_{\pi\textnormal{-GD}} \mc P_{\textnormal{alg}}^{(t)}$ for all $t \geq 1$. 
Consider the case where $(t-1) \text{ mod } T_2 = 0$. By~\eqref{eq:recurrence-relation-simulation-algorithm}, $P^{(t)}_{\textnormal{alg}} = P^{}_{\textnormal{cl}} \psim $. If $P_{\pi\textnormal{-GD}} \mc P^{}_{\textnormal{cl}} \psim $, then for any distribution $\nu$ such that $\frac{\nu}{\pi}$ is an increasing function,  $\nu P_{\pi\textnormal{-GD}} \sd \nu P^{}_{\textnormal{cl}} \psim $. Let $\sigma$ denote the configuration that all vertices takes the value $\star$. Hence, $\sigma$ is the maximum configuration of $\Omega(\pi)$. Suppose $\nu$ takes the value $\sigma$ with probability 1. It is easy to see that $\frac{\nu}{\pi}$ is an increasing function. For any $i \in V$, let $p_i$ denote the probability that $i$ takes the value $\star$ in $\pi$ conditional on all other vertices take the value $\star$. By~\eqref{eq:pi-1-i}, $p_i = (1-\theta)\mu_i^{\*1_{V \setminus \{i\}}}(1)$.
We have 
\begin{align*}
    (\nu P_{\pi\textnormal{-GD}})(\sigma) = \frac{1}{|V|}\sum_{i \in V} p_i.  
\end{align*}
Consider the distribution $\nu P^{}_{\textnormal{cl}} \psim$. Suppose $X \sim \nu$ and we apply $P^{}_{\textnormal{cl}}$ on $X$ to obtain $Y$ and we apply $\psim$ on $Y$ to obtain $Z$. Note that $Z \sim \nu P^{}_{\textnormal{cl}} \psim$. By the definition of $\psim$, $Z = \sigma$ if and only if $Y = \sigma$. By the definition of $P^{}_{\textnormal{cl}}$, we have 
\begin{align*}
    (\nu P^{}_{\textnormal{cl}} \psim)(\sigma) = (1-\theta)^{|V|}. 
\end{align*}
If $P_{\pi\textnormal{-GD}} \mc P^{}_{\textnormal{cl}} \psim $, since $\sigma$ is the maximum configuration of $\Omega(\pi)$, then it must hold that 
\begin{align*}
    (\nu P^{}_{\textnormal{cl}} \psim)(\sigma) = (1-\theta)^{|V|} \geq \frac{1}{|V|}\sum_{i \in V} p_i = (\nu P_{\pi\textnormal{-GD}})(\sigma).
\end{align*}
However, the LHS is exponentially small but the RHS is the average of $p_i$. It is easy to find a monotone system $\mu$ and a constant $\theta \in (0,1)$ to make the above inequality fail.

\begin{remark}
    The main reason for the failure is that \Cref{def:comparison-inequality-relation} is quite strong, which requires the stochastic dominance relation to hold for all distributions $\nu$ such that $\frac{\nu}{\pi}$ is an increasing function. It might be possible that one can first weaken the definition of comparison inequality relation to only consider distributions $\nu$ that can be generated by the Markov chain and then go through our proofs and the proofs in \cite{FillK13} to do some direct comparison to show \Cref{lem:stochastic-dominance}. However, we show that by making a simple modification to the Glauber dynamics, we can indeed verify the strong requirement in \Cref{def:comparison-inequality-relation}.
\end{remark}

\section{Mixing of field dynamics from entropic independence}\label{sec:mixing-of-field-dynamics-from-entropic-independence}

In this section, we recall some key definitions and proof steps in \cite{Chen0Y23} and show how to do the modification to prove \Cref{thm:SIandMS-KL-Decay}.

The theorem is proved by the following ``negative-field" localization process in \cite{ChenE22}.
Let $\mu$ be a distribution over $\Omega \subseteq \{0,1\}^V$. A \emph{negative-field localization process} $(\mu_t)_{t\geq 0}$ for $\mu$ is a continuous-time stochastic processes defined by $\mu_t:=(\e^{-t}*\mu)^{\*1_{R_t}}$ for all $t\geq 0$, where $\mu_t$ is a random distribution. The pinning $\*1_{R_t}$ is the all-one pinning on a random set $R_t\subseteq V$, which is defined by another stochastic processes.
Let $R_0=\emptyset$. Suppose that $(R_t)_{0\leq t\leq t_\star}$ has been generated for some time $t_\star\geq 0$.
If $R_{t_\star}=V$, we set $R_t=R_{t_\star}$ for all $t>t_\star$.
Otherwise, we iteratively generate the next stopping time $\tau>t_\star$ and define the process $(R_t)_{t_\star<t\leq \tau}$ as follows:  
\begin{itemize}
\item For each $i\in V\setminus R_{t_\star}$, let $T_i \in \mathbb{R} \cup \{+\infty\}$ be mutually independent variables defined by 
\begin{align*}
    \Pr[]{T_i> s}:=\exp\left(-\int_{t_\star}^s (\e^{-r} * \mu)_i^{\*1_{R_{t_\star}}}(1) \d r\right),\quad \forall s\geq t_\star;
\end{align*}
\item
Let $\tau:=\min_{i\in V\setminus R_{t_\star}}T_i$ and $J \in V\setminus R_{t_\star}$ such that $T_J = \tau$. If $\tau = +\infty$, let $R_t=R_{t_\star}$ for all $t>t_\star$. Otherwise, include $J$ to $R_{t_\star}$ at time $\tau$. This process could be exactly shown by 
\begin{align*}
         R_t = \begin{cases}
            R_{t_\star} & \text{ if } t_\star<t<\tau;\\
            R_{t_\star}\cup \{J\} & \text{ if } t = \tau.
        \end{cases}
    \end{align*}
\end{itemize}

For any distribution $\nu$ over $\Omega$ and any $f:\Omega\to \mathbb{R}_{\geq 0}$, the entropy of $f$ with respect to $\nu$ is defined by $\textnormal{Ent}_\nu[f]:=\E[\nu]{f\log f}-\E[\nu]{f}\log \E[\nu]{f}$, where $\E[\nu]{f\log f} = \sum_{\sigma\in \Omega} \nu(\sigma) f(\sigma)\log f(\sigma)$.

\begin{lemma}[\text{\cite[Prop. 41]{ChenE22}}]\label{lem:Ent-t-t+h}
Let $\mu$ be a distribution over $\{0,1\}^V$. Let $(\mu_t)_{t\geq 0}$ be a negative-field localization process for $\mu$. If $\mu_t$ is $\alpha$-entropically independent for some $ \alpha \geq 0$ and $t \geq 0$, then for any function $f:\Omega \to \mathbb{R}_{\geq 0}$ and $h\geq 0$, we have $\E[]{\textnormal{Ent}_{\mu_{t+h}}[f]\mid \mu_t} \geq \textnormal{Ent}_{\mu_t}[f](1-4h\alpha)+o(h)$.
\end{lemma}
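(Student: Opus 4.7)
The plan is to analyze the infinitesimal behavior of the negative-field localization process starting from $\mu_t$ and expand $\E{\textnormal{Ent}_{\mu_{t+h}}[f] \mid \mu_t}$ to first order in $h$. Write $\nu := \mu_t = (\e^{-t} * \mu)^{\*1_{R_t}}$, and for each $i \in V \setminus R_t$ let $p_i := \nu_i(1)$. By the Poisson-type construction of the stopping time $\tau$, conditioned on $\mu_t$, one of three events happens over $[t,t+h]$: no vertex is pinned, with probability $1 - h \sum_i p_i + o(h)$, in which case $\mu_{t+h} = \e^{-h} * \nu$; exactly one vertex $i$ is pinned, with probability $h p_i + o(h)$, after which $\mu_{t+h}$ is essentially $\nu^{i \gets 1}$ (with an additional tilt of order $h$ on the remaining variables); or two or more vertices are pinned, a contribution of $o(h)$ absorbed into the error.

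Next I would compute the entropy in each case. For the single-pinning event at vertex $i$, the chain rule for entropy gives
\[\textnormal{Ent}_{\nu}[f] = \E[X_i \sim \nu_i]{\textnormal{Ent}_{\nu^{i \gets X_i}}[f]} + \textnormal{Ent}_{\nu_i}[g_i], \quad g_i(x) := \E[Y \sim \nu^{i \gets x}]{f(Y)},\]
so that the expected entropy loss incurred by a random pinning at $i$ is exactly $\textnormal{Ent}_{\nu_i}[g_i]$. For the no-pinning branch, a direct calculation of the derivative of $\textnormal{Ent}_{\e^{-h} * \nu}[f]$ at $h=0$ produces a drift term of the same general form. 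Combining the three cases and Taylor-expanding each entropy to first order in $h$ yields an identity of the shape
\[\E{\textnormal{Ent}_{\mu_{t+h}}[f]\mid \mu_t} = \textnormal{Ent}_{\nu}[f] - h \cdot R(\nu,f) + o(h),\]
where $R(\nu,f)$ is a rate quantity that, after algebraic simplification, is dominated by a constant multiple of $\sum_i \textnormal{Ent}_{\nu_i}[g_i]$.

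Finally, to convert this marginal-entropy bound into the statement of the lemma, I would invoke the $\alpha$-entropic independence of $\nu = \mu_t$. After normalizing $f$ so that $\E[\nu]{f} = 1$ (which rescales both sides equally), the entropy $\textnormal{Ent}_\nu[f]$ coincides with $\mathrm{D}_{\textnormal{KL}}(f \nu \parallel \nu)$ and each $\textnormal{Ent}_{\nu_i}[g_i]$ coincides with $\mathrm{D}_{\textnormal{KL}}((f \nu)_i \parallel \nu_i)$, so the hypothesis gives $\sum_i \textnormal{Ent}_{\nu_i}[g_i] \le \alpha \, \textnormal{Ent}_\nu[f]$. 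Combined with the previous step this yields $R(\nu,f) \le 4\alpha \, \textnormal{Ent}_\nu[f]$ and hence the claimed bound $(1-4h\alpha)$ up to $o(h)$. The main obstacle will be the careful bookkeeping in the middle step: tracking all first-order contributions in $h$ coming from both the jump (pinning) and drift (tilting) parts of the process, and extracting the correct constant $4$ rather than, say, $1$ or $2$. The factor appears to arise from summing the pinning contribution $\sum_i p_i\,(\textnormal{Ent}_\nu[f] - \textnormal{Ent}_{\nu^{i \gets 1}}[f])$ with a comparably-signed tilting contribution, followed by a Jensen-type step that reduces the resulting expression to $\sum_i \textnormal{Ent}_{\nu_i}[g_i]$ without further loss, and verifying this last reduction cleanly is where the technical work concentrates.
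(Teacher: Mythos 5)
Note first that the paper does not prove this lemma: it is imported verbatim from \cite[Prop.~41]{ChenE22} and used as a black box in Appendix~\ref{sec:mixing-of-field-dynamics-from-entropic-independence}, so there is no in-paper proof to compare against; you are effectively being asked to reconstruct the cited argument.

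On its own terms your outline has the right shape, but the central step is underspecified in a way that hides the actual mechanism. The chain-rule identity you write is correct, but it concerns pinning $i$ to a value $X_i\sim\nu_i$. The negative-field process never does that: when a jump at $i$ occurs it always pins to $1$, so the entropy increment of a jump is $\textnormal{Ent}_{\nu^{i\gets 1}}[f]-\textnormal{Ent}_\nu[f]$, a quantity with no definite sign and not bounded by $\textnormal{Ent}_{\nu_i}[g_i]$. The chain rule alone will not deliver the estimate; the ``tilting contribution'' does not merely reinforce the jump contribution, it cancels a specific piece of it. Concretely, normalize $\E[\nu]{f}=1$, write $\phi(x)=x\log x$, $p_i:=\nu_i(1)$, $a_i:=\E[\nu^{i\gets1}]{f}$. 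Expanding the no-jump branch yields $\textnormal{Ent}_{\e^{-h}*\nu}[f]=\textnormal{Ent}_\nu[f]-h\sum_i\mathrm{Cov}_\nu(\phi(f)-f,\sigma_i)+o(h)$, and one checks
\[
\mathrm{Cov}_\nu(\phi(f)-f,\sigma_i)=p_i\bigl(\textnormal{Ent}_{\nu^{i\gets1}}[f]-\textnormal{Ent}_\nu[f]\bigr)+p_i\bigl(a_i\log a_i-a_i+1\bigr).
\]
Adding the jump branch $h\sum_i p_i\bigl(\textnormal{Ent}_{\nu^{i\gets1}}[f]-\textnormal{Ent}_\nu[f]\bigr)$ cancels the first term exactly, leaving
\[
\E[]{\textnormal{Ent}_{\mu_{t+h}}[f]\mid\mu_t}=\textnormal{Ent}_\nu[f]-h\sum_i p_i\bigl(a_i\log a_i-a_i+1\bigr)+o(h).
\]
(Equivalently, $(\mu_t)$ is a measure-valued martingale to first order, so the linear part $\E[\mu_t]{f\log f}$ of the entropy is preserved and the whole first-order loss sits in the convex term $\phi(\E[\mu_{t+h}]{f})$.) Finally, with $q'=p_ia_i$, $q=p_i$, the elementary pointwise bound $\phi(x)\ge x-1$ applied to $x=\tfrac{1-q'}{1-q}$ gives $p_i(a_i\log a_i-a_i+1)\le \mathrm{D}_{\textnormal{KL}}((f\nu)_i\,\|\,\nu_i)=\textnormal{Ent}_{\nu_i}[g_i]$; this is not a Jensen step. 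Entropic independence then closes the argument, in fact with constant $1$ in place of the stated $4$, so the factor $4$ is slack rather than something your bookkeeping needs to produce. Your proposal's gap is that it replaces this exact cancellation and exact pointwise inequality with vague ``comparably-signed'' and ``Jensen-type'' language; once you make those two steps explicit as above, the proof closes.
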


The following proof follows from the analysis in \cite[Sec. 9]{Chen0Y23}.

\begin{proof}[Proof of \Cref{thm:SIandMS-KL-Decay}]
Let $(\mu_t)_{t\geq 0}$ be a negative-field localization of $\mu$. By assumption, for any possible $R\subseteq V$ and $t\in [0,-\log \theta]$, $\mu_t=(\e^{-t}*\mu)^{\*1_R}$ is $\alpha(t)$-entropically independent. Then by \Cref{lem:Ent-t-t+h}, for any $h\geq 0$, $(\mu_t)_{0\leq t\leq -\log \theta}$, and function $f:\Omega\to \mathbb{R}_{\geq 0}$,
\begin{align}\label{eq:ent-t-t+h}
    \E[]{\textnormal{Ent}_{\mu_{t+h}}[f]\mid \mu_t}\geq \textnormal{Ent}_{\mu_{t}}[f](1-4\alpha(t)h) +o(h).
\end{align}
We can assume that $\textnormal{Ent}_{\mu}[f] \neq 0$. By taking expectation and logarithm on both sides of~\eqref{eq:ent-t-t+h}, 
\begin{align*}
    \log \E[]{\textnormal{Ent}_{\mu_{t+h}}[f]}-\log \E[]{\textnormal{Ent}_{\mu_{t}}[f]}&\geq -4\alpha(t)h+o(h),\\
     \frac{\d \log \E[]{\textnormal{Ent}_{\mu_{t}}[f]}}{\d t}&\geq -4\alpha(t).
\end{align*}
So far, all the calculations are the same as in \cite{Chen0Y23}.
We now integrate from $t = 0$ to $t = -\log \theta$ to get the following inequality:
\begin{align*}
    \frac{\E[]{\textnormal{Ent}_{\mu_{t}}[f]}}{\textnormal{Ent}_{\mu}[f]}&\geq \exp\left(-\int_{0}^{-\log \theta}4\alpha(t)dt\right)=\kappa,\\
        \textnormal{Ent}_\mu[f]&\leq \frac{1}{\kappa} \E[]{\textnormal{Ent}_{\mu_{t}}[f]}.
\end{align*}
By analysis in \cite{Chen0Y23}, the above inequalities implies
$\mathrm{D}_{\textnormal{KL}}(\nu P_{\theta,\mu}\parallel \mu P_{\theta,\mu})\leq (1-\kappa)\mathrm{D}_{\textnormal{KL}}(\nu \parallel \mu)$.
Using Pinsker's inequality, we can bound the TV distance by the KL-divergence. See \cite[Sec. 9]{Chen0Y23} for more details of how to bound the TV distance.
\end{proof}

\section{Proof of entropic independence}

\subsection{Preliminaries}
The following spectral independence condition was introduced by Anari, Liu, and Oveis Gharan~\cite{ALO20}, which is closely related to entropic independence.
\begin{definition}[$\ell_\infty$-spectral independence \cite{ALO20}]
Let $\eta > 1$ be a constant.
A distribution $\mu$ over $\{0,1\}^V$ is said to be $\eta$-$\ell_\infty$-spectrally independent if for any $\Lambda \subseteq V$ with $|\Lambda| \leq |V| - 2$ and any feasible pinning $\sigma \in \{0,1\}^{\Lambda}$, the $\ell_\infty$-norm of the influence matrix $\Psi_\mu^\sigma$ is at most $\eta$:
\begin{align*}
  \Vert \Psi_\mu^\sigma \Vert_\infty =  \max_{u \in V} \sum_{v \in V} |\Psi_\mu^\sigma(u,v)| \leq \eta,
\end{align*} 
where the influence matrix $\Psi_\mu^\sigma \in \mathbb{R}^{V \times V}$ is defined by 
\begin{align*}
\forall u,v \in V, \quad \Psi_\mu^\sigma(u,v) \defeq \begin{cases}
\mu^{\sigma \land (u \gets 1)}_v(1) - \mu^{\sigma \land (u \gets 0)}_v(1) &\text{ if } \mu_u^\sigma(0),\mu_v^\sigma(1)>0;\\
0 & \text{ otherwise},
\end{cases}
\end{align*}
where $\mu^{\sigma \land (u \gets x)}_v(\cdot)$ is the marginal distribution at $v$ conditional on $\sigma$ and $u$ is set to $x$.
\end{definition}

By \cite{AJKPV22} and \cite{AASV21}, entropic independence under arbitrary external fields can be applied by spectral independence is bounded under arbitrary external fields.

\begin{lemma}[\text{\cite[Lem. 69]{AJKPV22} and \cite[Thm. 4]{AASV21}}]\label{lem:equiv-EI-SI}
    Let $\alpha > 0$.
Let $\mu$ be a distribution over $\{0,1\}^V$ with $V = [n]$. If for any $\lambda=(\lambda_1,\cdots,\lambda_n)\in \mathbb{R}_{> 0}^{V}$, $(\lambda*\mu)$ is $\alpha$-$\ell_\infty$-spectrally independent, then for any $\lambda'=(\lambda_1',\cdots,\lambda_n')\in \mathbb{R}_{> 0}^{V}$, $(\lambda'*\mu)$ is $\alpha+1$-entropically independent under all pinnings.
\end{lemma}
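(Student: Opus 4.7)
The statement is a synthesis of two results already in the literature, so the plan is to reduce it to them via the generating polynomial formalism rather than reprove anything from scratch. Recall that the generating polynomial of $\mu$ is
\[
g_\mu(z_1,\ldots,z_n) \;=\; \sum_{\sigma\in\Omega}\mu(\sigma)\prod_{i\in V} z_i^{\sigma_i},
\]
and that for a positive vector $\lambda\in\mathbb{R}_{>0}^V$ the tilted distribution $(\lambda*\mu)$ corresponds to evaluating/reweighting $g_\mu$ at $\lambda$. The influence matrix $\Psi_{\lambda*\mu}$ is (up to a standard change of variables) a multiple of the Hessian of $\log g_\mu$ at the point $\lambda$, so the hypothesis $\|\Psi_{\lambda*\mu}\|_\infty\le\alpha$ for every $\lambda$ says precisely that the log-Hessian of $g_\mu$ has uniformly bounded $\ell_\infty$-norm on the positive orthant.

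The first step is to invoke \cite[Thm.~4]{AASV21}, which translates this uniform $\ell_\infty$-spectral bound into a \emph{fractional log-concavity} statement: the polynomial $g_\mu$ is $\tfrac{1}{\alpha+1}$-fractionally log-concave on $\mathbb{R}_{>0}^V$, i.e.\ $g_\mu^{1/(\alpha+1)}$ is concave when restricted to positive rays. The extra $+1$ here is not cosmetic; it is exactly the slack needed to absorb the diagonal contribution of the log-Hessian that the $\ell_\infty$-norm of the (off-diagonal-heavy) influence matrix does not directly control. The second step is then to apply \cite[Lem.~69]{AJKPV22}, which states that $\tfrac{1}{\alpha+1}$-fractional log-concavity at the evaluation point corresponding to $\lambda'$ implies that $(\lambda'*\mu)$ is $(\alpha+1)$-entropically independent. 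Since $\lambda'\in\mathbb{R}_{>0}^V$ is arbitrary, this yields entropic independence under all external fields.

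It remains to handle \emph{pinnings}, which are not a priori covered by the field-only statement. The plan is to argue that $\ell_\infty$-spectral independence under all external fields is inherited by every conditional distribution $\mu^\tau$ for a feasible pinning $\tau\in\{0,1\}^\Lambda$. Concretely, a pinning $\tau_u=1$ (resp.\ $0$) at $u\in\Lambda$ is the pointwise limit of the tilted distributions obtained by sending $\lambda_u\to\infty$ (resp.\ $\lambda_u\to 0$), and the influence matrix depends continuously on $\lambda$, so the bound $\|\Psi\|_\infty\le\alpha$ passes to the limit on the free coordinates $V\setminus\Lambda$. Applying the field-only version of the lemma to $\mu^\tau$ (with its arbitrary external fields on $V\setminus\Lambda$) then yields $(\alpha+1)$-entropic independence of every pinned, tilted distribution.

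The only genuine obstacle is bookkeeping the constants across the translation: making sure the $\ell_\infty$ normalization used in the definition of $\Psi$ here matches the one under which \cite[Thm.~4]{AASV21} produces fractional log-concavity with parameter $\tfrac{1}{\alpha+1}$ (as opposed to $\tfrac{1}{\alpha}$), and then that \cite[Lem.~69]{AJKPV22} accepts this parameter directly. Everything else is a mechanical combination of the two cited results plus the continuity/limit argument for the pinning reduction, so the proof in the paper can plausibly be stated in just a few lines by citing \cite{AASV21,AJKPV22} and handling the pinning by a limiting argument.
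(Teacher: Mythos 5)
The paper does not supply a proof of this lemma at all: it is stated as a black-box citation to \cite{AJKPV22} and \cite{AASV21}, with no argument reproduced. There is therefore no ``paper's own proof'' to compare against; the most I can do is sanity-check your reconstruction.

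Your outline --- translate uniform $\ell_\infty$-spectral independence across all external fields into $\tfrac{1}{\alpha+1}$-fractional log-concavity of the generating polynomial via \cite[Thm.~4]{AASV21}, then convert fractional log-concavity into $(\alpha+1)$-entropic independence via \cite[Lem.~69]{AJKPV22} --- is the standard chain through which these two references are usually combined, and the $+1$ bookkeeping (diagonal of the log-Hessian vs.\ the off-diagonal influence matrix) is the right place to locate the additive shift. One small point worth flagging: in the version of $\ell_\infty$-spectral independence used in this paper, the condition $\lVert \Psi_\mu^\sigma\rVert_\infty\le\eta$ is required to hold for \emph{every} feasible pinning $\sigma\in\{0,1\}^\Lambda$ with $|\Lambda|\le|V|-2$. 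So once you know that $(\lambda*\mu)$ is $\alpha$-$\ell_\infty$-spectrally independent for every positive $\lambda$, you already know that for every feasible pinning $\tau$ and every positive field $\lambda'$ the conditional $(\lambda'*\mu)^\tau=(\lambda'*\mu^\tau)$ has influence matrix bounded by $\alpha$, directly from the hypothesis. The limiting argument you append (sending $\lambda_u\to 0$ or $\infty$) is therefore not needed given the paper's definition; you can simply apply the field-only implication to each $\mu^\tau$ as a distribution on $\{0,1\}^{V\setminus\Lambda}$. This does not make your argument wrong, only slightly more circuitous than necessary, and of course neither of us can verify the finer constant-matching details against a proof that the paper does not contain.
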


Next, the entropic independence can be implied by spectral independence and marginally stable.
We need the following marginally stable condition introduced in~\cite{CFYZ22,ChenE22} and the definition of tilted distributions.

\begin{definition}[Marginally stable \cite{CFYZ22,ChenE22}]
Let $K \geq 0$. A distribution $\mu$ over $\{0,1\}^V$ is said to be $K$-marginally stable if for any $\Lambda \subseteq V$, any feasible pinning $\tau \in \{0,1\}^{\Lambda}$, any $S \subseteq \Lambda$, and any vertex $v \in V \setminus \Lambda$, the following holds:
\begin{align*}
 R^\tau_v \leq K \cdot R^{\tau_S}_v \text{ and } \mu^\tau_v(0) \geq K^{-1},
\end{align*}
where $R^{\tau}_v = {\mu^\tau_v(1)}/{\mu^\tau_v(0)}$ is the marginal ratio and $R^{\tau_S}_v$ is defined analogously by replacing the pinning $\tau$ with a smaller pinning $\tau_S$ on subset $S \subseteq \Lambda$.
\end{definition}

\begin{lemma}[\text{\cite[Thm. 67]{ChenE22}}]\label{lem:SIandMStoEI}
Let $\mu$ be a distribution on $\{0,1\}^V$ and $\eta\geq 1$. If $\mu$ is $\eta$-$\ell_\infty$-spectrally independent and $K$-marginally stable, then $\mu$ is $384\eta K^4$-entropically independent under all pinnings. 
\end{lemma}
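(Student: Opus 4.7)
The plan is to go through \Cref{lem:equiv-EI-SI}, which reduces proving entropic independence of $\mu$ under all pinnings to a uniform bound on the $\ell_\infty$-spectral-independence constant of the tilted family $\{(\lambda * \mu)\}_{\lambda \in \mathbb{R}_{>0}^V}$. So the task becomes: show that for every positive external field $\lambda$ and every feasible pinning $\sigma$, the influence matrix $\Psi_{\lambda * \mu}^\sigma$ has $\ell_\infty$-norm at most $C\eta K^c$ for an explicit small constant $c$; then \Cref{lem:equiv-EI-SI} adds a $+1$ and an absorbed universal constant, giving the claimed $384\eta K^4$.

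The core step is an entry-wise comparison of influence matrices under tilting. Pinning and tilting commute in the sense that $(\lambda * \mu)^\sigma = \lambda|_{V \setminus \mathrm{dom}(\sigma)} * \mu^\sigma$ up to normalization, so after replacing $\mu$ by $\mu^\sigma$ (which is also $K$-marginally stable by definition) it suffices to handle the case $\sigma = \emptyset$. I would then expand $(\lambda * \mu)_v^{u \gets a}(1)$ from the definition of the tilted distribution and, after a short algebraic manipulation, obtain a factorization
\[\Psi_{\lambda * \mu}(u, v) \;=\; F_v(\lambda;\mu)\cdot \Psi_\mu(u, v),\]
where $F_v$ is a rational function in $\lambda_v$ and in the four marginals $\mu_v^{u\gets a}(b)$, $a,b\in\{0,1\}$. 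Every denominator appearing in $F_v$ has the form $\lambda_v\mu_v^{u\gets a}(1)+\mu_v^{u\gets a}(0)$, which by $K$-marginal-stability of $\mu$ is at least $K^{-1}\min(1,\lambda_v)$, while every factor in the numerator is controlled by $\lambda_v$ times two marginals bounded by $1$.

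The main obstacle will be carrying out this bookkeeping cleanly across both regimes $\lambda_v\geq 1$ and $\lambda_v<1$, since the naive bound loses a $\lambda_v$-dependent factor in one regime or the other. The way out is to normalize $F_v$ by pulling a factor of $\lambda_v$ from the numerator and pairing it with the $\lambda_v$-term of each denominator; this makes $F_v$ uniformly bounded in $\lambda_v$ and reduces the estimate to $\lambda_v$-free marginal quantities where $K$-marginal-stability applies directly, yielding $|F_v|\leq CK^3$. Once the per-entry bound $|\Psi_{\lambda*\mu}^\sigma(u,v)|\leq CK^3|\Psi_\mu^\sigma(u,v)|$ is in hand, summing over $v$ and maximizing over $u$ gives $\Vert\Psi_{\lambda*\mu}^\sigma\Vert_\infty \leq CK^3\eta$, and plugging this into \Cref{lem:equiv-EI-SI} produces the $384\eta K^4$ entropic-independence constant under all pinnings.
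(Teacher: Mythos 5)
The paper does not actually prove this lemma; it imports it verbatim from \cite[Thm.~67]{ChenE22} and uses it as a black box, so what is really at stake is whether your reconstruction is a valid proof on its own.

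Your route through \Cref{lem:equiv-EI-SI} requires establishing $\ell_\infty$-spectral independence of $\lambda*\mu$ for \emph{every} external field $\lambda\in\mathbb{R}_{>0}^V$, and the factorization you write down is correct: setting $a_i=\mu_v^{\sigma\land(u\gets i)}(1)$ and $b_i=1-a_i$, one has $\Psi_{\lambda*\mu}^\sigma(u,v)=F_v\,\Psi_\mu^\sigma(u,v)$ with
\begin{align*}
F_v \;=\; \frac{\lambda_v}{(\lambda_v a_1 + b_1)(\lambda_v a_0 + b_0)}.
\end{align*}
For $\lambda_v\le 1$ the marginal-stability bound $b_i\ge K^{-1}$ gives $F_v\le K^2$, and that part of the argument is fine. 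The gap is the regime $\lambda_v>1$, where the ``pull out a factor of $\lambda_v$'' normalization does not save you. Rewriting as $F_v=\big[(\lambda_v a_1+b_1)(a_0+b_0/\lambda_v)\big]^{-1}$, the second factor collapses to $a_0$ as $\lambda_v\to\infty$, and $K$-marginal stability gives \emph{no} lower bound on $a_i=\mu_v^{\tau}(1)$; it only bounds $\mu_v^{\tau}(0)$ from below. Concretely, take $a_0=\varepsilon$, $a_1=2\varepsilon$ and $\lambda_v=1/\varepsilon$: then $b_0,b_1\approx 1$ so $K$-marginal stability holds with small $K$, yet $F_v\approx 1/(6\varepsilon)\to\infty$, while $\Psi_\mu^\sigma(u,v)=\varepsilon\to 0$ and $\Psi_{\lambda*\mu}^\sigma(u,v)\to 1/6$. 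Maximizing over $\lambda_v$ one finds $\max_{\lambda_v>0}F_v=\big(b_0b_1(\sqrt{r_0}+\sqrt{r_1})^2\big)^{-1}$ with $r_i=a_i/b_i$, which blows up as $r_0,r_1\to 0$. So there is no bound of the form $|\Psi_{\lambda*\mu}^\sigma(u,v)|\le CK^c\,|\Psi_\mu^\sigma(u,v)|$ under the hypotheses of the lemma, and summing over $v$ cannot produce the claimed $\ell_\infty$-spectral-independence constant $C\eta K^3$ for the tilted family. This is precisely the reason a one-sided notion like marginal stability was introduced in~\cite{CFYZ22,ChenE22}: it is designed for models (random cluster, bipartite hardcore) where the ``$1$''-marginals can be arbitrarily small, and these are exactly the models on which an entry-wise transfer of spectral independence to arbitrary positive tilts breaks down. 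The argument in \cite{ChenE22} for Theorem~67 does not go through \Cref{lem:equiv-EI-SI}; you would need to consult their proof, which bounds entropy contraction along the negative-field localization directly.
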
 

Finally, for applications, sometimes it is easier to obtain the following coupling independence condition, which is a sufficient condition for spectral independence.

\begin{definition}[Coupling independence]
    A distribution $\mu$ over $\{0,1\}^V$ is $C$-coupling independent if for any $\Lambda\subseteq V$ with $|\Lambda|\leq |V|-2$, $i\in V$ and any feasible pinning $\sigma\in \{0,1\}^\Lambda$ such that $\mu^{\sigma}_i(0),\mu^{\sigma}_i(1)>0$, there exists a coupling $(X,Y)$ between $\mu^{\sigma\land (i\gets 0)}$ and $\mu^{\sigma\land (i\gets 1)}$ such that 
    \begin{align*}
        \E[]{\abs{X\oplus Y}}\leq C.
    \end{align*}
\end{definition}

By using the standard coupling inequality, the following lemma is folklore.

\begin{lemma}[\text{\cite[Prop. 4.3]{CZ23}}]\label{lem:CI-SI}
If a distribution $\mu$ over $\{0,1\}^V$ is $C$-coupling independent, then $\mu$ is $C$-$\ell_\infty$-spectrally independent.
\end{lemma}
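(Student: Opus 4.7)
}
The plan is a direct reduction using the standard coupling inequality on each coordinate, summed across $v$. Fix any pinning $\sigma \in \{0,1\}^\Lambda$ with $|\Lambda| \le |V|-2$ and any $u \in V$. If either $\mu_u^\sigma(0) = 0$ or $\mu_u^\sigma(1) = 0$, then the $u$-th row of $\Psi_\mu^\sigma$ is zero by the definition of the influence matrix, so $\sum_v |\Psi_\mu^\sigma(u,v)| = 0 \le C$ and we are done. Assume therefore that both $\mu_u^{\sigma}(0), \mu_u^{\sigma}(1) > 0$, so the two conditional distributions $\mu^{\sigma \land (u \gets 0)}$ and $\mu^{\sigma \land (u \gets 1)}$ are both well-defined over $\{0,1\}^{V \setminus (\Lambda \cup \{u\})}$.

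Apply the $C$-coupling independence hypothesis with the pinning $\sigma$ and variable $u$: there exists a coupling $(X,Y)$ of $\mu^{\sigma \land (u \gets 0)}$ and $\mu^{\sigma \land (u \gets 1)}$ such that
\begin{align*}
\E[]{|X \oplus Y|} \;=\; \sum_{v \in V} \Pr[]{X_v \neq Y_v} \;\le\; C,
\end{align*}
where we may assume $X$ and $Y$ both agree with $\sigma$ on $\Lambda$ and take the pinned values $0$ and $1$ respectively at $u$ (so these coordinates contribute $0$ to the sum, except for $u$ itself which contributes exactly $1$).

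For each $v \in V$ with $\mu_v^\sigma(1) > 0$, the coupling inequality gives
\begin{align*}
|\Psi_\mu^\sigma(u,v)| \;=\; |\Pr[]{Y_v = 1} - \Pr[]{X_v = 1}| \;\le\; \Pr[]{X_v \neq Y_v}.
\end{align*}
For $v$ with $\mu_v^\sigma(1) = 0$ or $v \in \Lambda$, the corresponding entry $\Psi_\mu^\sigma(u,v)$ is zero by definition, so the inequality above holds trivially. Summing over $v$,
\begin{align*}
\sum_{v \in V} |\Psi_\mu^\sigma(u,v)| \;\le\; \sum_{v \in V} \Pr[]{X_v \neq Y_v} \;\le\; C.
\end{align*}
Taking the maximum over $u$ yields $\|\Psi_\mu^\sigma\|_\infty \le C$, which is the required spectral independence bound.

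There is no real obstacle here; the only care needed is bookkeeping of the degenerate cases ($\mu_u^\sigma(0)\mu_u^\sigma(1) = 0$ or $\mu_v^\sigma(1) = 0$) where the influence matrix is defined to be zero, so that the coupling is not required to exist. The argument is entirely classical and works for any distribution over $\{0,1\}^V$, not only monotone systems.
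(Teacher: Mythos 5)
Your proof is correct and is the standard folklore argument; the paper itself does not prove this lemma but simply cites it from \cite{CZ23} as ``folklore by the standard coupling inequality,'' which is exactly the reduction you give: bound each influence entry $|\Psi_\mu^\sigma(u,v)|$ by $\Pr[X_v \neq Y_v]$ via the coupling form of total variation distance, then sum over $v$ to recover $\E[]{|X\oplus Y|}\le C$. Your bookkeeping of the degenerate cases (rows of the influence matrix that are identically zero when no coupling is required to exist, and pinned coordinates contributing zero) is sensible; the only mild caveat is that you must read the paper's definition of $\Psi_\mu^\sigma(u,v)$ as also implicitly requiring $\mu_u^\sigma(1)>0$ so that the conditional $\mu^{\sigma\land(u\gets 1)}$ is well-defined, and you must use a consistent convention across the two definitions regarding whether the coordinate $u$ itself is counted — both include it here (the diagonal entry is $1$ and the coupling has $X_u\neq Y_u$ deterministically), so the argument goes through.
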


In the proofs, we often consider the flipped distribution $\bar{\mu}$ of $\mu$ in \Cref{def:flipped-distribution}. The following observation is easy to verify from the definition of the flipped distribution.
\begin{observation}\label{ob:CI}
The following holds for any distribution $\mu$ over $\{0,1\}^V$:
\begin{itemize}
    \item $\mu$ is $\eta$-$\ell_\infty$-spectrally independent if and only if $\bar{\mu}$ is $\eta$-$\ell_\infty$-spectrally independent.
    \item $\mu$ is $C$-coupling independent if and only if $\bar{\mu}$ is $C$-coupling independent.
    \item $\mu$ is $\alpha$-entropically independent if and only if $\bar{\mu}$ is $\alpha$-entropically independent.
\end{itemize}
\end{observation}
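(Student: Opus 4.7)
The plan is to exploit the fact that the flipping map $\sigma \mapsto \bar{\sigma}$ is a bijection of $\{0,1\}^V$ that sends $\mu$ to $\bar{\mu}$ in a measure-preserving way, so that every quantity appearing in the three definitions transforms in a controlled (and often invariant) manner. As a preparatory observation I would record the elementary identity that, for any $\Lambda \subseteq V$, any pinning $\tau \in \{0,1\}^\Lambda$, and any $v \in V \setminus \Lambda$,
\begin{align*}
    \bar{\mu}^{\tau}_v(1) \;=\; \mu^{\bar{\tau}}_v(0), \qquad \bar{\mu}^{\tau}_v(0) \;=\; \mu^{\bar{\tau}}_v(1),
\end{align*}
which follows directly from $\bar{\mu}(\sigma) = \mu(\bar{\sigma})$ and a bijective reparametrization in the numerator and denominator of the conditional probability. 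Also, $\tau$ is a feasible pinning for $\bar{\mu}$ if and only if $\bar{\tau}$ is feasible for $\mu$, and the map $\tau \mapsto \bar{\tau}$ is a bijection on the set of feasible pinnings of a given support.

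For the first bullet, I would plug the preparatory identity into the definition of $\Psi$. A short calculation gives
\begin{align*}
    \Psi_{\bar{\mu}}^{\sigma}(u,v) \;=\; \bar{\mu}^{\sigma \wedge (u \gets 1)}_v(1) - \bar{\mu}^{\sigma \wedge (u \gets 0)}_v(1) \;=\; \mu^{\bar{\sigma} \wedge (u \gets 0)}_v(0) - \mu^{\bar{\sigma} \wedge (u \gets 1)}_v(0) \;=\; \Psi_\mu^{\bar{\sigma}}(u,v),
\end{align*}
where the last step uses $\mu^{\cdot}_v(0) = 1 - \mu^{\cdot}_v(1)$. Taking absolute values, summing over $v$, and maximizing over $u$ shows $\|\Psi_{\bar{\mu}}^\sigma\|_\infty = \|\Psi_\mu^{\bar{\sigma}}\|_\infty$, and since $\bar{\sigma}$ ranges over feasible pinnings of $\mu$ exactly when $\sigma$ ranges over feasible pinnings of $\bar{\mu}$, the two spectral independence bounds are equivalent.

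For the second bullet, given a coupling $(X,Y)$ of $\mu^{\sigma \wedge (i \gets 0)}$ and $\mu^{\sigma \wedge (i \gets 1)}$ with $\mathbb{E}[|X \oplus Y|] \leq C$, I would produce the coupling $(\bar{X},\bar{Y})$ of $\bar{\mu}^{\bar{\sigma} \wedge (i \gets 1)}$ and $\bar{\mu}^{\bar{\sigma} \wedge (i \gets 0)}$, noting that the Hamming distance is invariant under coordinatewise flipping, i.e.\ $|\bar{X} \oplus \bar{Y}| = |X \oplus Y|$. The converse direction uses the fact that $\bar{\bar{\mu}} = \mu$. For the third bullet, the key is that for any distribution $\nu$ on $\{0,1\}^V$ with flipped version $\bar{\nu}$ one has $\mathrm{D}_{\text{KL}}(\bar{\nu} \parallel \bar{\mu}) = \mathrm{D}_{\text{KL}}(\nu \parallel \mu)$ (bijective reparametrization of the sum defining KL), and similarly on each coordinate $\mathrm{D}_{\text{KL}}(\bar{\nu}_i \parallel \bar{\mu}_i) = \mathrm{D}_{\text{KL}}(\nu_i \parallel \mu_i)$ because the single-coordinate flip is also a bijection on $\{0,1\}$. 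Substituting these identities into the defining inequality \eqref{eq:entropic-independence} for $\bar{\mu}$ with an arbitrary test distribution $\nu'$ on $\Omega(\bar{\mu})$ and setting $\nu$ to be the distribution with $\nu(\sigma) = \nu'(\bar{\sigma})$ recovers the analogous inequality for $\mu$, and vice versa.

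There is no serious obstacle here; the only thing to be careful about is keeping the bijections $\sigma \leftrightarrow \bar{\sigma}$, $\tau \leftrightarrow \bar{\tau}$, and $\nu \leftrightarrow \bar{\nu}$ consistently aligned when translating statements about feasible pinnings or about the support of test distributions between $\mu$ and $\bar{\mu}$. Once that bookkeeping is done, all three equivalences become immediate consequences of the fact that flipping is a coordinatewise measure-preserving involution.
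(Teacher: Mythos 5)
Your proposal is correct and matches the paper's approach exactly: the paper treats all three items as immediate consequences of the coordinatewise flipping bijection (stating the first two as ``trivial from the definitions'' and noting for the third that \eqref{eq:entropic-independence} holds for $(\mu,\nu)$ iff it holds for $(\bar{\mu},\bar{\nu})$). You have simply written out the bookkeeping—the identities $\bar{\mu}^{\tau}_v(1)=\mu^{\bar{\tau}}_v(0)$, $\Psi_{\bar{\mu}}^{\sigma}=\Psi_{\mu}^{\bar{\sigma}}$, Hamming-distance invariance, and KL invariance under reparametrization—that the paper leaves implicit.
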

The first two items are trivial from the definitions. For the third item, note that~\eqref{eq:entropic-independence} holds for $\mu$ and $\nu$ if and only if it holds for $\bar{\mu}$ and $\bar{\nu}$.

\subsection{Entropic independence of random cluster model}\label{sec:RC-EI-proof}

We now prove \Cref{lem:RC-EI}.
For $0 \leq t \leq \log\frac{1}{\theta_0}$, we can use the following result by Chen and Zhang \cite{CZ23}.

\begin{lemma}[\text{\cite[Lem. 4.4 \& 4.5]{CZ23}}]\label{lem:RC-CI}
For any vector $r \in (0,\infty)^E$,
$(r * \mu_{p,\lambda}^{\textnormal{RC}})$ is $2(1-\lambda_{\max})^{-2}$-coupling independent.
\end{lemma}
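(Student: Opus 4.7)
The first step is a clean reduction. The tilted distribution $(r \ast \mu_{p,\lambda}^{\textnormal{RC}})$ multiplies the weight of every subset $S \subseteq E$ by $\prod_{e \in S} r_e$, which simply modifies the edge parameters so that the new ratio is $p'_e/(1-p'_e) = r_e \cdot p_e/(1-p_e)$, while leaving the vertex activities $\lambda$ and the cluster weights $(1+\prod_{v \in C}\lambda_v)$ unchanged. Hence $(r \ast \mu_{p,\lambda}^{\textnormal{RC}})$ is itself a random cluster model with the same $\lambda$, and it suffices to prove $C$-coupling independence for any random cluster model $\mu_{p,\lambda}^{\textnormal{RC}}$ with $C = 2(1-\lambda_{\max})^{-2}$, uniformly in $p$. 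Crucially, this means the target bound cannot depend on any edge parameters or on the graph geometry, only on $\lambda_{\max}$.

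Next, fix a feasible pinning $\tau$ on $\Lambda \subseteq E$ and an unpinned edge $e=\{u,v\}$ with $\mu^{\tau}_e(0), \mu^{\tau}_e(1) > 0$. Since the random cluster model is a monotone system, the two conditional distributions $\mu^{\tau \land (e \gets 0)}$ and $\mu^{\tau \land (e \gets 1)}$ are stochastically comparable, so there is a canonical monotone coupling $(X,Y)$ with $X \subseteq Y$, $X_e=0$, $Y_e=1$. The task reduces to bounding $\mathbb{E}[|Y\setminus X|]$, since the symmetric difference is exactly this quantity.

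To get explicit control, the plan is to build a grand coupling through the Edwards--Sokal-type joint vertex--edge representation of the random cluster model. In that representation, one samples a vertex labelling $\xi\in\{0,1\}^V$ together with the edge set so that each connected component of the realised subgraph is independently declared ``active'' (carrying the weight $\prod_{w\in C}\lambda_w$) or ``inactive'' (carrying weight $1$), and each unpinned edge is then included independently with the appropriate probability depending on the labels at its endpoints. Under this joint representation, the only difference between $X$ and $Y$ is that adding $e$ merges the components of $u$ and $v$; the induced weight ratio on components equals
\[
\frac{1+\prod_{w\in C_u\cup C_v}\lambda_w}{(1+\prod_{w\in C_u}\lambda_w)(1+\prod_{w\in C_v}\lambda_w)},
\]
which differs from $1$ by a quantity controlled by the product activities of the two components. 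A disagreement at a further edge $f\neq e$ can arise only when $f$ lies on a path of simultaneous exploration decisions leading out from $u$ or $v$ inside the grand coupling.

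The main obstacle is to bound the resulting disagreement propagation. I would set this up as a branching/percolation process whose step mean is bounded by $\lambda_{\max}$: each additional component visited along an exploration path contributes at most a factor of $\lambda_{\max}$ to the probability of disagreement at its adjacent edges, because the merged-versus-unmerged weight discrepancy sits on top of a base weight $\geq 1$ and the active-component factor $\prod_w \lambda_w$ is at most $\lambda_{\max}^{|C|}$. Summing the disagreement probabilities over paths of all lengths, rooted at either $u$ or $v$, produces two geometric series of the form $\sum_{k\geq 0}(k+1)\lambda_{\max}^k = (1-\lambda_{\max})^{-2}$, whose total is $2(1-\lambda_{\max})^{-2}$. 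Verifying rigorously that this branching structure dominates the disagreement, and that the per-step mean is indeed $\leq \lambda_{\max}$ after properly accounting for the Edwards--Sokal randomness, is the technical heart of the argument; the rest of the calculation is a routine geometric sum.
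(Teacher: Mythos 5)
The reduction in your first paragraph is correct: a tilted random cluster model is again a random cluster model with the edge ratios rescaled, so the target bound must indeed be uniform in $p$ and depend only on $\lambda_{\max}$. However, your proof diverges from the actual route and, more importantly, leaves a gap that I do not think can be filled as stated. The cited result in \cite{CZ23}, as well as this paper's closely related \Cref{lem:CI-rc-large-lambda}, go through the \emph{subgraph-world model} $\mu^{\textnormal{SW}}_{p',\eta}$ (see \Cref{lem:random-var-rc-sw} and \Cref{lem:sw-CI}): one proves coupling independence $\tfrac{1}{2\eta_{\min}^2}$ for the subgraph-world model with $\eta_v = \tfrac{1-\lambda_v}{1+\lambda_v}$, and then transfers to the random cluster model via the exact union coupling $Y = X \cup Z$, observing $\tfrac{1}{2\eta_{\min}^2} = \tfrac{(1+\lambda_{\max})^2}{2(1-\lambda_{\max})^2} \leq 2(1-\lambda_{\max})^{-2}$. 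You instead propose an Edwards--Sokal (vertex-spin) coupling together with a branching/percolation disagreement argument.

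The gap is precisely the step you flag as ``the technical heart of the argument.'' The claim that ``each additional component visited along an exploration path contributes at most a factor of $\lambda_{\max}$'' and therefore the disagreement is dominated by a branching process with mean $\lambda_{\max}$ is not substantiated, and I do not see how it would be. In the Edwards--Sokal representation, conditioning on the edge $e$ changes the joint law of spins and edges non-locally: merging the components of $u$ and $v$ alters the weight of an \emph{entire component} through the ratio $\tfrac{1+\prod_{w\in C_u\cup C_v}\lambda_w}{(1+\prod_{w\in C_u}\lambda_w)(1+\prod_{w\in C_v}\lambda_w)}$, which is not obviously a per-step geometric cost, and the induced change in spin marginals re-weights every further edge incident to that (possibly very large) component simultaneously. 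There is no natural sense in which disagreements ``step out'' one edge at a time with rate $\lambda_{\max}$ under this representation; the dependency structure is component-level, not edge-level. The reason the subgraph-world route works is that it is a genuine high-temperature expansion in which each vertex of odd degree carries a small weight $\eta_v$, so a canonical-paths/worm-type disagreement bound really is controlled locally; the Edwards--Sokal coupling does not have this feature. The coincidence that $\sum_{k\ge 0}(k+1)\lambda_{\max}^k = (1-\lambda_{\max})^{-2}$ reproduces the stated constant does not repair the absence of a well-defined branching structure to sum over. To make this approach rigorous you would need to exhibit an explicit coupling of the two conditional Edwards--Sokal measures and prove the domination claim, neither of which is sketched; as it stands the proposal does not constitute a proof.
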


Note that $r * \bar{\mu}_{p,\lambda}^{\textnormal{RC}} = \overline{r^{-1} * \mu_{p,\lambda}^{\textnormal{RC}}}$, where $r^{-1}_e  = \frac{1}{r_e}$ for all $e \in E$.
By \Cref{lem:RC-CI}, \Cref{ob:CI}, and \Cref{lem:CI-SI}, $(r * \bar{\mu}_{p,\lambda}^{\textnormal{RC}})$ is $2(1-\lambda_{\max})^{-2}$-$\ell_\infty$-spectrally independent for all $r \in (0,\infty)^E$.
By \Cref{lem:equiv-EI-SI}, $(r * \bar{\mu}_{p,\lambda}^{\textnormal{RC}})$ is $2(1-\lambda_{\max})^{-2}$-entropically independent for all $r \in (0,\infty)^E$.
In particular, this implies the $\alpha(t)$-entropic independence result for $0 \leq t \leq \log\frac{1}{\theta_0}$ in \Cref{lem:RC-EI}, where the pinning on $\Lambda$ can be achieved by letting $r_e \to \infty$ for all $e \in \Lambda$.

Next, we give an improved bound for entropic independence for $(\e^{-t}*\bar{\mu}_{p,\lambda}^{\textnormal{RC}})^{\*1_\Lambda}$ when $\log\frac{1}{\theta_0} < t < \log \frac{1}{\theta}$. 
We first prove the entropic independence for $(\e^{-t}*\bar{\mu}_{p,\lambda}^{\textnormal{RC}})$ without pinning. At the end of the proof, we show how to generalize the result to the case with pinning.

We need to define another model, the subgraph-world model, to help us analyze the entropic independence of the random cluster model.

\begin{definition}[Subgraph-world model]
Let $G=(V,E)$ be a graph,  $p\in [0,1]^E$ and $\eta \in [0,1]^V$ be parameters. The subgraph-world model $\mu_{p,\eta}^{\textnormal{SW}}$ specified by $G,p$, and $\eta$ is defined by 
\begin{align*}
    \forall S\subseteq E,\quad \mu_{p,\eta}^{\textnormal{SW}}(S)\propto \prod_{e\in S} \frac{p_e}{1-p_e} \prod_{v\in V:\abs{S\cap E_v}\equiv 1 (\textnormal{mod }2)}\eta_v,
\end{align*}
where $E_v$ is the set of edges incident to vertex $v$.
\end{definition}

Previous works \cite{GJ18,feng2023swendsen} show the following relation between the subgraph-world model and the random cluster model.

\begin{lemma}[\text{\cite[Lem. 4.5]{feng2023swendsen}}]\label{lem:random-var-rc-sw}
Let $\mu_{p,\lambda}^{\textnormal{RC}}$ be the random cluster model with $p\in [0,1]^E$ and $\lambda\in [0,1)^V$. Let $\mu_{p',\eta}^{\textnormal{SW}}$ be the subgraph-world model with $p'=\frac{p}{2}$ and $\eta_v =\frac{1-\lambda_v}{1+\lambda_v}$ for all $v \in V$.
Let $X\sim \mu_{p',\eta}^{\textnormal{SW}}$. 
Let $Z \subseteq E$ be a random subset that includes each $i \in E$ independently with probability $q_i$, where $q_i = \frac{p_i}{2-p_i}$.
Then the random variable $Y:=X\cup Z \sim \mu_{p,\lambda}^{\textnormal{RC}}$.
\end{lemma}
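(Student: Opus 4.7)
The plan is to verify by direct computation that $\Pr[Y = S] \propto w^{\textnormal{RC}}_{p,\lambda}(S)$ for every $S \subseteq E$; since both sides integrate to $1$ after normalization, this will identify the law of $Y$ as $\mu^{\textnormal{RC}}_{p,\lambda}$. I would first condition on the value of $X$ and sum over $X \subseteq S$ (the event $X \cup Z = S$ forces $X \subseteq S$ and $S \setminus X \subseteq Z \subseteq S$). By independence of $Z$ across edges, the conditional probability $\Pr[Z \cup X = S \mid X]$ reduces to $\prod_{e \in S \setminus X} q_e \prod_{e \in E \setminus S}(1-q_e)$, giving
\[
\Pr[Y = S] \;\propto\; \prod_{e \in E \setminus S}(1-q_e)\sum_{X \subseteq S}\prod_{e \in X}\frac{p'_e}{1-p'_e}\prod_{e \in S \setminus X} q_e \prod_{v:\,|X \cap E_v|\text{ odd}} \eta_v .
\]
The critical algebraic coincidence is $\frac{p'_e}{1-p'_e}=\frac{p_e/2}{1-p_e/2}=\frac{p_e}{2-p_e}=q_e$, which collapses the edge-dependent part of the sum to the $X$-independent factor $\prod_{e \in S} q_e$, leaving the residual vertex sum $F(S):=\sum_{X\subseteq S}\prod_{v:\,|X\cap E_v|\text{ odd}}\eta_v$ as the only nontrivial object to evaluate.

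The second step is to factor $F(S)$ over connected components. Since the parity $|X \cap E_v|\bmod 2$ depends only on edges of $X$ incident to $v$, and each edge of $S$ lies in exactly one component of $(V,S)$, we have $F(S)=\prod_{C \in \kappa(V,S)} f(C)$, where $f(C)=\sum_{X\subseteq E_C}\prod_{v\in C:\,|X\cap E_v|\text{ odd}}\eta_v$ and $E_C$ denotes the edge set of $C$. To evaluate $f(C)$, I would use the fact that the parity-degree map $\{0,1\}^{E_C}\to\{0,1\}^C$ is $\mathbb{F}_2$-linear with image equal to the ``even-sum'' subspace and kernel of dimension $|E_C|-|C|+1$ (the cycle space), so each even-parity vector is attained by exactly $2^{|E_C|-|C|+1}$ choices of $X$. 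This yields
\[
f(C)=2^{|E_C|-|C|+1}\cdot\tfrac{1}{2}\Bigl(\prod_{v\in C}(1+\eta_v)+\prod_{v\in C}(1-\eta_v)\Bigr).
\]
Substituting $\eta_v=(1-\lambda_v)/(1+\lambda_v)$ gives $1+\eta_v=2/(1+\lambda_v)$ and $1-\eta_v=2\lambda_v/(1+\lambda_v)$, and the expression telescopes to $f(C)=2^{|E_C|}(1+\prod_{v\in C}\lambda_v)/\prod_{v\in C}(1+\lambda_v)$; one should verify that this formula also holds in the degenerate case of a single isolated vertex.

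Finally I would assemble the pieces. Taking the product over components, $\sum_C|E_C|=|S|$ and the denominators $\prod_{v\in C}(1+\lambda_v)$ multiply to the $S$-independent constant $\prod_{v\in V}(1+\lambda_v)$. Absorbing $\prod_{e\in E}(1-q_e)$ into the proportionality constant gives
\[
\Pr[Y=S]\;\propto\;\Bigl(\prod_{e\in S}\frac{2q_e}{1-q_e}\Bigr)\prod_{C\in\kappa(V,S)}\!\Bigl(1+\prod_{v\in C}\lambda_v\Bigr),
\]
and a direct check shows $\frac{2q_e}{1-q_e}=\frac{p_e}{1-p_e}$, which matches $w^{\textnormal{RC}}_{p,\lambda}(S)$ exactly. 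The main obstacle is the linear-algebraic evaluation of $f(C)$; this is essentially the high-temperature (subgraph) expansion of an Ising model, and the key conceptual point is recognizing the cycle-space kernel so the formula applies uniformly across components of arbitrary topology, including trees and isolated vertices.
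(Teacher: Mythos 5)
Your proof is correct, and I verified the key checkpoints: the identity $\tfrac{p'_e}{1-p'_e}=\tfrac{p_e/2}{1-p_e/2}=q_e$ that collapses the edge sum; the factorization of $F(S)$ over $\kappa(V,S)$ (valid because the parities $|X\cap E_v|$ of vertices in different components depend on disjoint sets of edges); the $\mathbb{F}_2$-linear evaluation of $f(C)$ (for a connected component $C$, the parity-degree map has kernel the cycle space of dimension $|E_C|-|C|+1$ and image the even-sum subspace of dimension $|C|-1$, so a rank--nullity count gives $2^{|E_C|}$ total preimages as required); the telescoping under $\eta_v=(1-\lambda_v)/(1+\lambda_v)$; the isolated-vertex degeneracy $f(\{v\})=1$, which is consistent only after the $S$-independent factor $\prod_{v\in V}(1+\lambda_v)^{-1}$ is absorbed into the normalization; and the final identity $\tfrac{2q_e}{1-q_e}=\tfrac{p_e}{1-p_e}$. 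Note, however, that the paper itself gives no proof of this lemma --- it is stated as a citation to~\cite[Lem.~4.5]{feng2023swendsen} --- so there is no in-paper argument to compare against. The route taken in that reference and in most of the literature (going back to Grimmett--Janson and to Guo--Jerrum) threads through the Ising model: one combines the Edwards--Sokal coupling (random cluster~$\leftrightarrow$~Ising) with the high-temperature (``random-current''/subgraph) expansion (Ising~$\leftrightarrow$~subgraph world), and the coupling $Y=X\cup Z$ falls out of matching the two representations of the Ising partition function. Your proof bypasses the Ising intermediary by integrating out $X$ directly, at the price of explicitly invoking the cycle-space linear algebra that the Ising-mediated proof hides inside the standard even-subgraph identity; the two are essentially the same computation in different bookkeeping, but yours is more self-contained and makes the mechanism transparent.
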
 

Chen and Zhang \cite{CZ23} show the following coupling independence result.
\begin{lemma}[\text{\cite[Lem. 4.4]{CZ23}}]\label{lem:sw-CI}
The subgraph world model $\mu_{p,\eta}^{\textnormal{SW}}$ is $\frac{1}{2\eta_{\min}^2}$-coupling independent, where $\eta_{\min} = \min_{v\in V}\eta_v$.
\end{lemma}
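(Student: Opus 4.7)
The plan is to exhibit a coupling between $X \sim \mu_{p,\eta}^{\mathrm{SW},\sigma \land (i \gets 0)}$ and $Y \sim \mu_{p,\eta}^{\mathrm{SW},\sigma \land (i \gets 1)}$ and then bound the expected Hamming distance using the parity structure of the subgraph-world model. The central observation is parity-based: writing $P_v(S) = |S \cap E_v| \bmod 2$, two configurations $X,Y$ with $X_i = 0$, $Y_i = 1$ and $X|_\Lambda = Y|_\Lambda$ must have $P_v(X) = P_v(Y)$ for all $v$ outside the endpoints of $i = \{u,v'\}$ if and only if $X \oplus Y \oplus \{i\}$ has even degree at every vertex. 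In other words, the non-$i$ part of the symmetric difference must contain an odd walk from $u$ to $v'$ (possibly together with closed cycles), so any valid coupling ``pays'' at least this much disagreement.

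First, I would set up a walk-based coupling by running a disagreement-propagation process starting at $i$. One endpoint, say $u$, has mismatched parity across the two samples; to fix it, the coupling reveals at $u$ whether the two samples agree on each edge in $E_u$, and with an appropriate probability picks a single edge $e' = \{u,w\} \notin \Lambda$ along which to propagate the disagreement. The vertex $w$ now has mismatched parity, and the process continues until the walk terminates at $v'$ and closes the parity gap. The transition probabilities at each step can be read off directly from ratios of conditional SW weights $\mu^{\sigma\land\rho}(e \text{ in})/\mu^{\sigma\land\rho}(e \text{ out})$, which are either $\frac{p_e}{1-p_e}\eta_u \eta_w$ or $\frac{p_e}{1-p_e}(\eta_u\eta_w)^{-1}$ depending on the current parity state.

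Next, I would bound the expected length of this random walk by a generating-function argument. Every additional length-two extension of the walk (crossing one extra vertex) contributes a net multiplicative factor of at most $\eta_{\min}^{2}$ to the continuation probability, since each intermediate vertex $w$ incurs weight $\eta_w$ both on entry and on exit. Summing the resulting geometric-type series over all odd $u$-$v'$ walks (starting from the trivial walk of length $1$ corresponding to the edge $i$ itself) gives $\mathbb{E}[|X \oplus Y|] \le \frac{1}{2\eta_{\min}^2}$, where the factor $\tfrac{1}{2}$ arises from comparing the walks of length $1$ to the total generating series.

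The main obstacle will be verifying that this walk-based coupling has the correct marginals after marginalizing out the edges untouched by the walk, and in tracking the parity contributions tightly enough to obtain the precise constant $\tfrac{1}{2}$ rather than a looser bound of $O(1/\eta_{\min}^{2})$. A cleaner alternative would be to pair up configurations $(X,Y)$ with identical $X \oplus Y$ and sum the SW weights along odd $u$-$v'$ walks via a switching / random-current representation, recasting $\mathbb{E}[|X \oplus Y|]$ as a two-point correlation function in an auxiliary Ising-type model; this may yield the sharp constant more transparently, but still reduces to the same walk estimate at its core.
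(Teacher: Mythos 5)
The paper does not prove this lemma; it is imported as a black box from~\cite[Lem.~4.4]{CZ23}, so there is no internal proof to compare against. Evaluating your proposal on its own merits: the high-level strategy of a disagreement-propagation walk starting from the pivot edge is the right one, and tracking parity imbalances at frontier vertices is indeed the mechanism that drives disagreement in the subgraph-world model. However, there are two concrete problems.

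First, the opening parity observation is misstated. In the subgraph-world model, samples are arbitrary edge-subsets, not even subgraphs, so a coupling is under no obligation to keep $P_v(X)=P_v(Y)$ at every vertex. The relevance of parity is subtler: a parity mismatch at $w$ changes the conditional marginal of every edge incident to $w$ (by a ratio involving $\eta_w^{\pm 2}$), and it is that difference in conditionals that forces the coupling to generate further disagreements. Your framing of parity agreement as a hard constraint that the coupling ``must'' satisfy does not reflect the actual structure of the model.

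Second, and more seriously, the generating-function bookkeeping points in the wrong direction. You claim each length-two extension ``contributes a net multiplicative factor of at most $\eta_{\min}^{2}$ to the continuation probability.'' Since $\eta_{\min}^2 \le 1$, this would make the walk \emph{shorter} and the bound \emph{smaller} as $\eta_{\min}\to 0$, giving something like $\sum_k \eta_{\min}^{2k} = \frac{1}{1-\eta_{\min}^2}$, which stays bounded as $\eta_{\min}\to 0$ and blows up only as $\eta_{\min}\to 1$. The target bound $\frac{1}{2\eta_{\min}^2}$ behaves in exactly the opposite way: it blows up as $\eta_{\min}\to 0$ (where odd-degree vertices are heavily penalized and disagreement propagates far, as it should) and is small as $\eta_{\min}\to 1$ (where edges decouple). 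The statement you need is the converse: the walk \emph{terminates} at each frontier vertex with probability at least on the order of $\eta_{\min}^2$, equivalently the continuation probability is at most $1-\Theta(\eta_{\min}^2)$; summing the resulting geometric series then gives expected length $\Theta(1/\eta_{\min}^2)$. As written, plugging your factor into the series cannot produce $\frac{1}{2\eta_{\min}^2}$, so the key per-step estimate is the missing ingredient and would need to be argued from the actual conditional marginals $(\theta*\mu)$-style ratios you wrote down in the second paragraph.
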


Using the subgraph model, we prove the following new coupling-independence result for the random cluster model. Compared to \Cref{lem:RC-CI}, the following result improves the coupling independence bound. However, this result does not hold for all tilted distributions
\begin{lemma}\label{lem:CI-rc-large-lambda}
    For random cluster model $\mu_{p,\lambda}^{\textnormal{RC}}$ specified by $G,p\in (0,1]^E$, and $\lambda\in [0,1)^V$, $\mu_{p,\lambda}^{\textnormal{RC}}$ is $(1 +4(1-p_{\min})(1-\lambda_{\max})^{-2})$-coupling independent.    
\end{lemma}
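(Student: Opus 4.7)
The plan is to use the coupling $Y = X \cup Z$ from \Cref{lem:random-var-rc-sw} to reduce the coupling independence of the random cluster model to that of the subgraph-world model (\Cref{lem:sw-CI}). Write $\mu = \mu_{p,\lambda}^{\textnormal{RC}}$, fix a feasible pinning $\sigma \in \{0,1\}^\Lambda$ and an edge $i \in E \setminus \Lambda$ with $\mu^\sigma_i(0), \mu^\sigma_i(1) > 0$, and let $\nu_b = \mu^{\sigma \land (i \gets b)}$ for $b \in \{0,1\}$. I will realize both $\nu_0$ and $\nu_1$ jointly through $Y = X \cup Z$, where $X \sim \mu_{p/2,\eta}^{\textnormal{SW}}$ and $Z_j \sim \mathrm{Bern}(q_j)$ independently with $q_j = p_j/(2-p_j)$.

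The first and main step is to understand how conditioning on $\{Y_\Lambda = \sigma\}$ (and then on $\{Y_i = b\}$) reshapes the joint $(X,Z)$ distribution. Let $\Lambda_0 = \{j \in \Lambda : \sigma_j = 0\}$ and $\Lambda_1 = \Lambda \setminus \Lambda_0$. The event $\{Y_\Lambda = \sigma\}$ forces $X_{\Lambda_0} = 0$ and $X_j \vee Z_j = 1$ for each $j \in \Lambda_1$. Marginalizing out $Z_j$ for $j \in \Lambda_1$ converts the latter into an external-field factor $(\mathbf{1}[X_j = 1] + q_j \mathbf{1}[X_j = 0])$ on $X_j$, which can be absorbed into a modified edge activity $p'_j$. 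Hence $X \mid \{Y_\Lambda = \sigma\}$ is again a subgraph-world distribution, crucially with the \emph{same} vertex parameters $\eta$. Further conditioning on $Y_i = 1$ adds an analogous factor on $X_i$, while $Y_i = 0$ pins $X_i = 0$; either way the SW structure (and $\eta$) is preserved. Consequently \Cref{lem:sw-CI} applies with the uniform constant $C_{\textnormal{SW}} = \tfrac{1}{2\eta_{\min}^2} \leq \tfrac{2}{(1-\lambda_{\max})^2}$, using $\eta_v = (1-\lambda_v)/(1+\lambda_v) \geq (1-\lambda_{\max})/2$.

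Next I construct the coupling of $\nu_0$ and $\nu_1$ in two layers. First sample $X'_i$ from its marginal under $\nu_1$. On $\{X'_i = 0\}$, both sides share the same modified SW conditional on $X_i = 0$, so couple $X = X'$ perfectly; on $\{X'_i = 1\}$, invoke \Cref{lem:sw-CI} to couple $X$ (with $X_i = 0$) against $X'$ (with $X'_i = 1$), giving $\E{|X \oplus X'|} \leq C_{\textnormal{SW}}$. Then couple the $Z$'s: for $j \notin \Lambda \cup \{i\}$, $Z_j$ remains $\mathrm{Bern}(q_j)$ independent of $X$ even after conditioning (since the conditioning events do not involve $Z_j$), so I take a common coin $Z_j = Z'_j$; for $j \in \Lambda$, $Y_j$ is already pinned to $\sigma_j$ regardless of how $Z_j$ is coupled.

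Finally, I bound $\E{|Y \oplus Y'|}$. Edge $i$ always contributes $1$; edges in $\Lambda$ contribute $0$. For $j \notin \Lambda \cup \{i\}$, the identity $Y_j = X_j \vee Z_j$ together with $Z_j = Z'_j$ implies $Y_j \neq Y'_j$ only when $Z_j = 0$ and $X_j \neq X'_j$. Since $Z_j \sim \mathrm{Bern}(q_j)$ is independent of $(X, X')$,
\[
\E{|Y \oplus Y'|} \;\leq\; 1 + \sum_{j \notin \Lambda \cup \{i\}}(1-q_j)\,\Pr{X_j \neq X'_j} \;\leq\; 1 + (1-q_{\min})\,C_{\textnormal{SW}}.
\]
Combining $1 - q_{\min} = \tfrac{2(1-p_{\min})}{2-p_{\min}} \leq 2(1-p_{\min})$ with the bound on $C_{\textnormal{SW}}$ yields the claimed constant $1 + 4(1-p_{\min})/(1-\lambda_{\max})^2$. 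The hard part is really the first step---checking that the vertex parameters $\eta$ stay untouched under all the relevant conditionings, so that the SW coupling independence applies uniformly with a constant depending only on $\eta$; everything else is bookkeeping combining the SW coupling with the independent Bernoulli thinning provided by $Z$.
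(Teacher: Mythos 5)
Your proof is correct and follows essentially the same approach as the paper: both realize the random-cluster sample as $Y = X \cup Z$ with $X$ a subgraph-world sample and $Z$ independent Bernoulli noise, split the $\nu_1$ branch according to whether the SW sample has $X'_i = 0$ (perfect coupling) or $X'_i = 1$ (invoke \Cref{lem:sw-CI}), and bound the discrepancy propagated through $Y = X\cup Z$ by the extra thinning factor $1-q_{\min}$, arriving at the same constant $1 + 4(1-p_{\min})(1-\lambda_{\max})^{-2}$. You are in fact more explicit than the paper about why the conditional law of $X$ given $\{Y_\Lambda = \sigma,\, Y_i = b\}$ is again a subgraph-world model with \emph{unchanged} vertex weights $\eta$ (only the edge activities are modified, so \Cref{lem:sw-CI} still yields $1/(2\eta_{\min}^2)$); this is exactly what is needed for the general pinning $\sigma$ required by coupling independence, a step the paper's write-up leaves implicit.
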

\begin{proof}
    Following the definition in \Cref{lem:random-var-rc-sw},
    let $p'=\frac{p}{2}, \eta_v = \frac{1-\lambda_v}{1+\lambda_v}$ for all $v \in V$, and $q_e=\frac{p_e}{2-p_e}$ for all $e \in E$.
    Fix an arbitrary edge $e \in E$.
    Let $(X_0,X_1)$ be the coupling of $(\mu_{p',\eta}^{\textnormal{SW}})^{e\gets 0}$ and $(\mu_{p',\eta}^{\textnormal{SW}})^{e\gets 1}$ that achieves the coupling independence bound in \Cref{lem:sw-CI}:
    \begin{align}\label{eq:sw-CI-bound}
        \E[]{|X_0 \oplus X_1|}\leq \frac{1}{2\eta_{\min}^2}=\frac{(1+\lambda_{\max})^2}{2(1-\lambda_{\max})^2}\leq 2(1-\lambda_{\max})^{-2}.
    \end{align}
    Here, we use $X_0 \oplus X_1$ to denote the symmetric difference of $X_0$ and $X_1$. Note that $|X_0 \oplus X_1|$ is the Hamming distance if we view both $X_0$ and $X_1$ as binary vectors in $\{0,1\}^E$.

    We construct a coupling $(Y_0,Y_1)$ of $(\mu_{p,\lambda}^{\textnormal{RC}})^{e \gets 0}$ and $(\mu_{p,\lambda}^{\textnormal{RC}})^{e \gets 1}$ based on $(X_0,X_1)$.
    The construction of the coupling is guided by the process in \Cref{lem:random-var-rc-sw}. 
    To generate a sample $Y \sim \mu_{p,\lambda}^{\textnormal{RC}}$ from the random cluster model, one can first sample $X \sim \mu_{p',\eta}^{\textnormal{SW}}$ from the subgraph-world model, then sample $Z \sim \otimes_{i\in E}\textnormal{Ber}(q_i)$ from independent Bernoulli trials, and take $Y = X \cup Z$.
    The event $e \notin Y$ occurs if and only if $e \notin X$ and $e \notin Z$. 
    Hence, $Y_0 \sim (\mu_{p,\lambda}^{\textnormal{RC}})^{e \gets 0}$ could be generated by letting $Y_0 = (X_0\cup Z)\setminus\{e\}$, where $X_0 \sim (\mu_{p',\eta}^{\textnormal{SW}})^{e \gets 0}$ is from the coupling in~\eqref{eq:sw-CI-bound}. 
    Next, consider the event $e \in Y$, which happens if and only if either $e \in X$ or $e \notin X \land e \in Z$. 
    Conditional on $e \in Y$, suppose that $e \notin X \land e \in Z$ happens with probability $t_e \in [0,1]$.
    Then $Y_1 \sim (\mu_{p,\lambda}^{\textnormal{RC}})^{e \gets 1}$ could be generated as follows: (1) with probability $t_e$, let $Y_1 = (X_0\cup Z)\cup \{e\}$, where $X_0 \sim (\mu_{p',\eta}^{\textnormal{SW}})^{e\gets 0}$ is from the coupling in~\eqref{eq:sw-CI-bound} ; (2) with probability $1-t_e$, let $Y_1 = (X_1\cup Z)$, where $X_1 \sim (\mu_{p',\eta}^{\textnormal{SW}})^{e\gets 1}$ is from the coupling in~\eqref{eq:sw-CI-bound}.
    Formally, we have the following coupling $(Y_0,Y_1)$:
    \begin{enumerate}
       \item Sample $Z\sim \otimes_{i\in E}\textnormal{Ber}(q_i)$ from independent Bernoulli trials;
       \item with probability $t_e$, let $Y_0=(X_0\cup Z)\setminus \{e\}$ and $Y_1=(X_0\cup Z)\cup \{e\}$;
       \item with probability $1-t_e$, let $Y_0=(X_0\cup Z)\setminus \{e\}$ and $Y_1=(X_1\cup Z)$.  
    \end{enumerate}
    Let us view $X_0,X_1,Y_0,Y_1,Z$ as binary vectors in $\{0,1\}^E$.
    Note that for any edge $f \in E$ with $f \neq e$, $Y_0(f) \neq Y_1(f)$ only if $X_0(f) \neq X_1(f)$ and $Z(f) = 0$ in the above coupling.
    Let $q_{\min} = \min_{e\in E} q_e$.
    Now we can bound $\E[]{|Y_0\oplus Y_1|}$ as follows:
    \begin{align*}
        \E[]{|Y_0\oplus Y_1|}&=t_e|((X_0\cup Z)\setminus \{e\})\oplus ((X_0\cup Z)\cup \{e\})|+(1-t_e)|(X_1\cup Z) \oplus ((X_0\cup Z) \setminus \{e\})| \\
        &\leq t_e + (1-t_e)(1+(1-q_{\min})(|X_0 \oplus X_1|-1))\\
        &\leq1 + (1-t_e)(1-q_{\min})(|X_0 \oplus X_1|)\leq 1 + (1-t_e)(1-\frac{p_{\min}}{2-p_{\min}})2(1-\lambda_{\max})^{-2}\\
        &\leq 1+4(1-t_e)(1-p_{\min})(1-\lambda_{\max})^{-2}\leq 1 +4(1-p_{\min})(1-\lambda_{\max})^{-2}.
     \end{align*}
     Then $\mu_{p,\lambda}^{\textnormal{RC}}$ is $(1 +4(1-p_{\min})(1-\lambda_{\max})^{-2})$-coupling independent.
    \end{proof}

Recall that our goal is to show that $\e^{-t}*\bar{\mu}_{p,\lambda}^{\textnormal{RC}}$ is $\alpha(t)$-entropically independent when $\log\frac{1}{\theta_0} < t \leq \log \frac{1}{\theta}$. By the definition of flipping, we have that the distribution $\e^{-t}*\bar{\mu}_{p,\lambda}^{\textnormal{RC}}$ is the same as the distribution $\bar{\mu}_{p_t,\lambda}^{\textnormal{RC}}$, where for any edge $f \in E$, the parameter $p_t(f)$ satisfies
\begin{align}\label{eq:para-pt}
    \frac{p_t(f)}{1-p_t(f)} = \frac{p(f)}{1-p(f)} \e^t.
\end{align}
Note that $e^t > \frac{1}{\theta_0} = \frac{2}{p_{\min}(1-\lambda_{\max})^2}$. Hence, for any $f \in E$, $\frac{p_t(f)}{1 - p_t(f)} > \frac{p_{\min}}{1-p_{\min}} \cdot \frac{1}{p_{\min}(1-\lambda_{\max})^2}$. Note that $0 < p_t(f) < 1$. We have the following bound 
\begin{align*}
\forall f \in E,\quad  1 - p_t(f) \leq \frac{1-p_t(f)}{p_t(f)} \leq (1 - p_{\min})(1-\lambda_{\max})^2 \leq (1-\lambda_{\max})^2.
\end{align*}
Hence, $1 - \min_{f \in E}p_t(f) < (1-\lambda_{\max})^2$.
Note that $1 + 4(1-\lambda_{\max})^2(1-\lambda_{\max})^{-2} = 5$.
By \Cref{lem:CI-rc-large-lambda}, $\mu_{p_t,\lambda}^{\textnormal{RC}}$ is 5-coupling independent. 
By \Cref{ob:CI}, $\bar{\mu}_{p_t,\lambda}^{\textnormal{RC}}$ is also 5-coupling independent. Since $\e^{-t}*\bar{\mu}_{p,\lambda}^{\textnormal{RC}} \equiv \bar{\mu}_{p_t,\lambda}^{\textnormal{RC}}$, we have the following corollary by \Cref{lem:CI-SI}.
\begin{corollary}\label{cor:EI-rc-large-lambda}
For any $t > \log\frac{1}{\theta_0}$, $\e^{-t}*\bar{\mu}_{p,\lambda}^{\textnormal{RC}}$ is 5-$\ell_\infty$-spectrally independent.
\end{corollary}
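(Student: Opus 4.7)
The plan is to reduce the claim to the improved coupling-independence bound in \Cref{lem:CI-rc-large-lambda}, by recognizing that for $t > \log(1/\theta_0)$ the tilted flipped distribution $\e^{-t}*\bar{\mu}_{p,\lambda}^{\textnormal{RC}}$ is itself a flipped random cluster model whose modified edge parameter $p_t$ is forced close to~$1$. First I would verify the identification $\e^{-t}*\bar{\mu}_{p,\lambda}^{\textnormal{RC}} \equiv \bar{\mu}_{p_t,\lambda}^{\textnormal{RC}}$ with $p_t$ determined by~\eqref{eq:para-pt}; this is a direct manipulation of the weight formula~\eqref{eq:weight-rc}, since the tilt $\e^{-t}$ acts on the $1$-bits of $\sigma$, which under the flip correspond to absent edges in the underlying RC sample, and reabsorbing this into the edge-activity ratio $p/(1-p)$ gives exactly the new ratio $\e^t \cdot p/(1-p)$ while leaving the vertex weights $\lambda_v$ untouched.

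Next, for $t > \log(1/\theta_0) = \log\bigl(2/(p_{\min}(1-\lambda_{\max})^2)\bigr)$, a one-line computation from~\eqref{eq:para-pt} gives $1 - p_t(f) \leq (1-\lambda_{\max})^2$ uniformly over $f \in E$. Plugging this into \Cref{lem:CI-rc-large-lambda} yields coupling independence of $\mu_{p_t,\lambda}^{\textnormal{RC}}$ with constant $1 + 4(1-\min_f p_t(f))(1-\lambda_{\max})^{-2} \leq 1 + 4 = 5$. The flip-invariance recorded in \Cref{ob:CI} transports this bound to $\bar{\mu}_{p_t,\lambda}^{\textnormal{RC}}$, and \Cref{lem:CI-SI} then upgrades coupling independence to $\ell_\infty$-spectral independence with the same constant, which is the claim.

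The main subtlety is not the algebra but the calibration of the threshold $\theta_0 = \tfrac{1}{2}p_{\min}(1-\lambda_{\max})^2$: the naive route through \Cref{lem:RC-CI} only gives $2(1-\lambda_{\max})^{-2}$, which blows up as $\lambda_{\max}\to 1$ and would force a $\delta_\lambda$-dependent exponent in the polylogarithmic factor of the final mixing-time theorem. The improved bound in \Cref{lem:CI-rc-large-lambda}, proved via the subgraph-world coupling of \Cref{lem:random-var-rc-sw}, only starts to beat the naive bound once $1-p_{\min}$ is itself of order $(1-\lambda_{\max})^2$, so the threshold has been chosen exactly so that the $(1-\min_f p_t(f))$ factor cancels the $(1-\lambda_{\max})^{-2}$ blow-up. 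This cancellation is precisely what motivates splitting the time interval in~\eqref{eq:alpha-t} at $\log(1/\theta_0)$ and why the second piece of $\alpha(t)$ can be taken to be an absolute constant $5\cdot 10^4$ rather than a function of $\delta_\lambda$.
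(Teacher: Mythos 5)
Your proposal is correct and follows essentially the same route as the paper: identify the tilted flipped distribution with a flipped RC model at parameter $p_t$, use the threshold $\theta_0$ to force $1-p_t(f)\le(1-\lambda_{\max})^2$, then chain \Cref{lem:CI-rc-large-lambda}, \Cref{ob:CI}, and \Cref{lem:CI-SI}. The closing paragraph on the calibration of $\theta_0$ is accurate motivating commentary but not needed for the proof itself.
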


To obtain the entropic independence for $\e^{-t}*\bar{\mu}_{p,\lambda}^{\textnormal{RC}}$ when $t > \log\frac{1}{\theta_0}$, we need to use \Cref{lem:SIandMStoEI}.
By \Cref{cor:EI-rc-large-lambda}, we already established the spectral independence. 
We only need to verify the following marginal stability for the flipped distribution $\e^{-t}*\bar{\mu}_{p,\lambda}^{\textnormal{RC}} \equiv \bar{\mu}_{p_t,\lambda}^{\textnormal{RC}}$.
We give the following result for general random cluster models.

\begin{lemma}\label{lem:rc-ms}
For any $p \in [\frac{2}{3},1)^E$ and $\lambda \in (0,1]^V$,
$\bar{\mu}_{p,\lambda}^{\textnormal{RC}}$ is $2$-marginally stable.   
\end{lemma}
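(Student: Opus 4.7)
}
The plan is to reduce the marginal stability of the flipped distribution $\bar{\mu}_{p,\lambda}^{\textnormal{RC}}$ to two quantitative estimates on the marginal ratios of the original random cluster model $\mu=\mu_{p,\lambda}^{\textnormal{RC}}$. First I would translate the definitions. Using $\bar{\mu}(\sigma)=\mu(\bar{\sigma})$ and marginalizing out the unpinned edges, one directly verifies that for any pinning $\tau$ on $\Lambda \subseteq E$ and any $e \in E \setminus \Lambda$,
\[
\bar{R}^{\tau}_e \;=\; \frac{1}{R^{\bar{\tau}}_e}, \qquad \bar{\mu}^{\tau}_e(0) \;=\; \mu^{\bar{\tau}}_e(1) \;=\; \frac{R^{\bar{\tau}}_e}{1+R^{\bar{\tau}}_e}.
\]
Hence $2$-marginal-stability of $\bar{\mu}_{p,\lambda}^{\textnormal{RC}}$ is equivalent to proving two statements about $\mu$: (i) $R^{\sigma}_e/R^{\sigma'}_e \le 2$ whenever $\sigma$ and $\sigma'$ are two feasible pinnings with one extending the other; and (ii) $R^{\sigma}_e \ge 1$ for every feasible pinning $\sigma$ on a subset of $E\setminus\{e\}$.

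Next I obtain uniform two-sided bounds on $R^\sigma_e$. Let $u,v$ be the endpoints of $e$, let $\Lambda \subseteq E\setminus\{e\}$ be the support of $\sigma$, and write $S_0(\sigma')$ for the set of ``present'' edges determined by $(\sigma,\sigma_e=0,\sigma')$ when $\sigma'$ ranges over assignments to $E\setminus(\Lambda\cup\{e\})$. A direct computation of the weight ratio shows
\[
R^\sigma_e \;=\; \frac{p_e}{1-p_e}\cdot \mathbb{E}_{\sigma'\sim \mu^{\sigma\land(\sigma_e=0)}_{E\setminus(\Lambda\cup\{e\})}}\bigl[f(\sigma')\bigr],
\]
where, letting $C_u(\sigma'),C_v(\sigma')$ denote the components of $u,v$ in $(V,S_0(\sigma'))$, and setting $a=\prod_{w\in C_u}\lambda_w$, $b=\prod_{w\in C_v}\lambda_w$,
\[
f(\sigma') \;=\; \begin{cases} 1 & \text{if } C_u=C_v,\\[2pt] \dfrac{1+ab}{(1+a)(1+b)} & \text{otherwise}. \end{cases}
\]
Since $\lambda_w\in (0,1]$, we have $a,b\in(0,1]$, and the key (elementary) algebraic identity
$2(1+ab)-(1+a+b+ab)=(1-a)(1-b)\ge 0$, together with $1+ab\le 1+a+b+ab$, gives $f(\sigma')\in[\tfrac12,1]$ in both cases. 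Consequently, for every feasible pinning $\sigma$,
\[
R^\sigma_e \;\in\; \left[\tfrac{1}{2}\cdot\tfrac{p_e}{1-p_e},\ \tfrac{p_e}{1-p_e}\right].
\]

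Conclusion (i) is now immediate: any two such marginal ratios lie in an interval of multiplicative width $2$, so their ratio is at most $2$. Conclusion (ii) uses the hypothesis $p_e\ge 2/3$, which gives $\tfrac{p_e}{1-p_e}\ge 2$ and hence $R^\sigma_e\ge 1$, so $\mu^\sigma_e(1)\ge 1/2$. Translating back through the flipping identities yields $\bar{R}^{\tau}_e\le 2\,\bar{R}^{\tau_S}_e$ and $\bar{\mu}^{\tau}_e(0)\ge 1/2$, which is exactly the $2$-marginal-stability of $\bar{\mu}_{p,\lambda}^{\textnormal{RC}}$. The only genuinely technical point is the conditional-expectation representation of $R^\sigma_e$ and checking that $f(\sigma')$ is precisely the cluster-weight correction coming from (possibly) merging $C_u$ and $C_v$ when $e$ is added; beyond that, the whole argument reduces to the one-line inequality $(1-a)(1-b)\ge 0$.
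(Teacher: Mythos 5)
Your proof is correct and follows essentially the same route as the paper's: both reduce to bounding the weight ratio $\mu(S\cup\{e\})/\mu(S)$, identify the cluster-merging correction factor $\frac{1+D_uD_v}{(1+D_u)(1+D_v)}$ arising when $e$ merges two components, and bound it in $[\tfrac12,1]$ via the elementary inequality $(1-D_u)(1-D_v)\ge 0$, then translate through the flipping identities. The only cosmetic difference is that you package the bound as a conditional-expectation representation of $R^\sigma_e$, whereas the paper states the pointwise weight-ratio bound and divides the corresponding sums directly.
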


\begin{proof}
We view $\bar{\mu}_{p,\lambda}^{\textnormal{RC}}$ as a distribution over $2^E$ and $\{0,1\}^E$ interchangeably.
Fix arbitrary $\Lambda\subseteq E$, pinning $\tau\in \{0,1\}^{\Lambda}$, subset $S\subseteq \Lambda$ and edge $e=\{u,v\}\in E\setminus \Lambda$. Let $C_u, C_v$ denote the sets of vertices of the connected components in subgraph $G_S = (V,S)$ containing $u$ and $v$, respectively. Note that $C_u=C_v$ if $u$ and $v$ are in the same connected component. Let $D_u=\prod_{w\in C_u}\lambda_w\leq 1$ and $D_v=\prod_{w\in C_v}\lambda_w\leq 1$.
Note that $e \notin S$.
Let $S' = S \cup \{e\}$.
By taking subgraph $G_{S'} = (V,S')$ into consideration,
there are two cases: (1) $u$ and $v$ are in the same connected component in both $G_S$ and $G_{S'}$; (2) $u$ and $v$ are in the same connected component in $G_{S'}$, and $u$ and $v$ are not in the same connected component corresponding in $G_{S}$. For the first case, 
\begin{align*}
    \mu_{p,\lambda}^{\textnormal{RC}}(S)/ \mu_{p,\lambda}^{\textnormal{RC}}(S') = \frac{1-p_e}{p_e}\frac{1+D_u}{1+D_u} = \frac{1-p_e}{p_e}.
\end{align*}
For the second case, by $0\leq D_u,D_v<1$ we have
\begin{align*}
 \mu_{p,\lambda}^{\textnormal{RC}}(S)/ \mu_{p,\lambda}^{\textnormal{RC}}(S') = \frac{1-p_e}{p_e} \frac{(1+D_u)(1+D_v)}{1+D_u D_v} \in \left[\frac{1-p_e}{p_e},\frac{2(1-p_e)}{p_e}\right).
\end{align*}
In conclusion, $\frac{1-p_e}{p_e}\leq\mu_{p,\lambda}^{\textnormal{RC}}(S)/ \mu_{p,\lambda}^{\textnormal{RC}}(S')<\frac{2(1-p_e)}{p_e}$ in any cases.
Therefore, 
\begin{align*}
    (\bar{\mu}_{p,\lambda}^{\textnormal{RC}})_e^{\tau}(0)&=(\mu_{p,\lambda}^{\textnormal{RC}})_e^{\bar{\tau}}(1) = \frac{\sum_{\sigma \in \{0,1\}^E: \sigma_e = 1 \land \sigma_\Lambda = \bar{\tau} }\mu_{p,\lambda}^{\textnormal{RC}}(\sigma)}{\sum_{\sigma \in \{0,1\}^E: \sigma_e = 0 \land \sigma_\Lambda = \bar{\tau} }\mu_{p,\lambda}^{\textnormal{RC}}(\sigma)+\sum_{\sigma \in \{0,1\}^E: \sigma_e = 1 \land \sigma_\Lambda = \bar{\tau} }\mu_{p,\lambda}^{\textnormal{RC}}(\sigma)}\\
     &> \frac{1}{\frac{2(1-p_e)}{p_e}+1}= \frac{p_e}{2-p_e}\geq \frac{1}{2},
\end{align*}
The last inequality holds because $2/3\leq p_e\leq1$. By a similar argument, we have
\begin{align*}
    \frac{(\bar{\mu}_{p,\lambda}^{\textnormal{RC}})_e^{\tau}(1)/(\bar{\mu}_{p,\lambda}^{\textnormal{RC}})_e^{\tau}(0)}{(\bar{\mu}_{p,\lambda}^{\textnormal{RC}})_e^{\tau_S}(1)/(\bar{\mu}_{p,\lambda}^{\textnormal{RC}})_e^{\tau_S}(0)}=\frac{(\mu_{p,\lambda}^{\textnormal{RC}})_e^{\bar{\tau}}(0)/(\mu_{p,\lambda}^{\textnormal{RC}})_e^{\bar{\tau}}(1)}{(\mu_{p,\lambda}^{\textnormal{RC}})_e^{\bar{\tau_S}}(0)/(\mu_{p,\lambda}^{\textnormal{RC}})_e^{\bar{\tau_S}}(1)}&<\frac{2(1-p_e)}{p_e}/\frac{1-p_e}{p_e} = 2.
\end{align*}
Hence, $\bar{\mu}_{p,\lambda}^{\textnormal{RC}}$ is $2$-marginally stable.
\end{proof}


By~\eqref{eq:para-pt}, the parameter $p_t(f)$ of $\bar{\mu}_{p_t,\lambda}^{\textnormal{RC}}\equiv(\e^{-t}*\bar{\mu}_{p,\lambda}^{\textnormal{RC}})$ satisfies $\frac{p_t(f)}{1-p_t(f)} = \frac{p(f)}{1-p(f)} e^t$. We have $p_t(f)\geq \frac{2}{3}$ if and only if $\e^t\geq \frac{2(1-p(f))}{p(f)}$. Therefore,$(\e^{-t} * \bar{\mu}_{p,\lambda}^{\textnormal{RC}})$ is $2$-marginally stable when $1\geq t\geq \log \frac{1}{\theta_0}\geq\log \frac{2}{p_{\min}}$. Combining \Cref{lem:SIandMStoEI}, \Cref{cor:EI-rc-large-lambda}, and \Cref{lem:rc-ms}, for any $t \in (\log\frac{1}{\theta_0},\log \frac{1}{\theta})$, the distribution $(\e^{-t} * \bar{\mu}_{p,\lambda}^{\textnormal{RC}})$ is $384 * 5 * 2^4\leq 5*10^4$-entropically independent. This finishes the proof of entropic independence for $(\e^{-t}*\bar{\mu}_{p,\lambda}^{\textnormal{RC}})$ without pinning.

For the case with pinning, note that $(\e^{-t} * \bar{\mu}_{p,\lambda}^{\textnormal{RC}})^{\*1_\Lambda}$ is a random cluster model defined on the subgraph $G=(V,E\setminus \Lambda)$. We remark that we consider the flipped distribution so that fixing the value to be 1 is removing the edge from the graph. Note that $\min_{e \in E \setminus \Lambda} p_e \geq p_{\min}$. Going through the above analysis in the subgraph finishes the proof of the case with pinning.

\subsection{Entropic independence of bipartite hardcore model}\label{sec:HC-EI-proof}

In this section, we prove \Cref{lem:HC-EI}. We prove the entropic independence by verifying the uniqueness condition. 
Note that $(\e^{-t} *\mu_{L,\lambda,\lambda}^{\textnormal{HC}})^{\*1_\Lambda}$ can be viewed as bipartite hardcore model on a subgraph. Formally, let $G'=G[(V_L \setminus \Lambda) \cup V_R]$ be the induced subgraph. Let $\mu_{G',\lambda,\lambda}^{\textnormal{HC}}$ be the bipartite hardcore model specified by $G',\lambda,\lambda$.
Note that $\mu_{G',\lambda,\lambda}^{\textnormal{HC}}$  also satisfies the left-side uniqueness condition.
Let $\mu_{L',\lambda,\lambda}^{\textnormal{HC}}$ be the marginal distribution on $V_L \setminus \Lambda$ projected from $\mu_{G',\lambda,\lambda}^{\textnormal{HC}}$. Note that in distribution, $(\e^{-t} *\mu_{L,\lambda,\lambda}^{\textnormal{HC}})^{\*1_\Lambda}$ fixes all variables in $\Lambda$ to be $0$ and all other variables in $V_L \setminus \Lambda$ follows the distribution $\e^{-t} *\mu_{L',\lambda,\lambda}^{\textnormal{HC}}$. Hence, we only need to verify the entropic independence for $\e^{-t}*\mu_{G,\lambda,\lambda}^{\textnormal{HC}}$ without pinning. The case with pinning can be reduced to the same analysis on the subgraph.


Compared to the classical uniqueness threshold $\lambda_c(\Delta)\defeq \frac{(\Delta-1)^{\Delta-1}}{(\Delta-2)^\Delta}$, we need a refined definition of the uniqueness threshold of the bipartite hardcore model in~\cite{Chen0Y23}.

\begin{definition}[\cite{Chen0Y23}]
Let $F_{\lambda,d,\beta,w}(x):=\lambda(1+\beta(1+x)^w)^{-d}$, where $\lambda,d,\beta,w>0$.
For any $\delta\in [0,1)$, a tuple $(\lambda,d,\beta,w)\in \mathbb{R}_{+}^4$ is said to be $\delta$-unique, if for all fixed point $\hat{x}\geq 0$ satisfying $F(\hat{x})=\hat{x}$, it holds that $F'(\hat{x})\leq 1-\delta$.      

Furthermore, for any $\delta\in [0,1)$, a tuple $(\lambda,d,\beta)\in \mathbb{R}_{+}^3$ is said to be $\delta$-unique,   if $(\lambda,d,\beta,w)$ is $\delta$-unique for all $w>0$.
\end{definition}

The above definition has two parameters $d$ and $w$. Intuitively, one can think of $d+1$ as the maximum degree on $V_L$ and $w+1$ as the maximum degree on $V_R$. The following uniqueness condition is proved in \cite{Chen0Y23}.

\begin{lemma}[\text{\cite[Thm. 2 \& Thm. 12]{Chen0Y23}}]\label{lem:unique-simple}
In case of $\lambda = \beta$, for any $\Delta=d+1\geq 3$ and $\delta\in [0,1)$, $(\lambda,d,\lambda)$ is $\frac{\delta}{10}$-unique if $\lambda\leq (1-\delta)\lambda_c(\Delta) = (1-\delta)\frac{(\Delta-1)^{\Delta-1}}{(\Delta-2)^\Delta}$.

In case of $\lambda \neq \beta$, for any $\delta\in [0,1)$, $\beta>0$ and $d\geq 1-\delta$,
\begin{enumerate}
    \item if $\beta \leq \frac{1-\delta}{d}\e ^{1+\frac{1-\delta}{d}}$, then for all $\lambda>0$, $(\lambda,d,\beta)$ is $\delta$-unique;
    \item  if $\beta > \frac{1-\delta}{d}\e ^{1+\frac{1-\delta}{d}}$, there exists $\lambda_c>0$ such that $(\lambda,d,\beta)$ is $\delta$-unique if and only if $\lambda \geq \lambda_c$, where $\lambda_c:=x_c(1+\beta(1+x_c)^{-w_c})^d$ and $(x_c,w_c)$ is the unique positive solution of
    \begin{align}\label{eq:unique-condition}
            \begin{cases}
                (1-\delta)(1+x)(\beta+(1+x)^w)-\beta d w x=0,\\
                w \log (1+x) (\beta d-(1+x)^{w+1}) + \delta(1+x)(\beta+(1+x)^w)=0.
            \end{cases}
    \end{align}
    \end{enumerate}
\end{lemma}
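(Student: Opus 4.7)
The plan is to reduce the $\delta$-uniqueness condition to a quantitative bound on the derivative of the tree recursion at its positive fixed points, and then to extract the threshold conditions by Lagrangian extremization over the free parameter $w$. A direct calculation on $F(x)=\lambda(1+\beta(1+x)^w)^{-d}$ gives
\[
F'(x)=-\frac{d\beta w(1+x)^{w-1}}{1+\beta(1+x)^w}\,F(x),
\]
so at any positive fixed point $\hat x=F(\hat x)$,
\[
|F'(\hat x)| = \frac{d\beta w\,\hat x\,(1+\hat x)^{w-1}}{1+\beta(1+\hat x)^w}.
\]
The entire lemma reduces to bounding this quantity by $1-\delta'$ (with $\delta'=\delta/10$ in Part~1 and $\delta'=\delta$ in Part~2), uniformly over all $w>0$ and all positive solutions of the fixed-point equation.

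For Part~1 (symmetric case $\lambda=\beta$), I would first show that among all $w>0$ the worst case is $w=1$, by differentiating $|F'(\hat x)|$ along the fixed-point curve in $w$ and using monotonicity of $\hat x$ in $w$; this reduces the problem to the classical hardcore tree recursion $x\mapsto \lambda(1+x)^{-d}$. The classical uniqueness threshold there is $\lambda_c(\Delta)=\frac{(\Delta-1)^{\Delta-1}}{(\Delta-2)^\Delta}$, attained with $|F'(\hat x)|=1$ at a quadratic tangency. A quantitative stability analysis around this critical fixed point then converts the $O(\delta)$ multiplicative slack in $\lambda\le(1-\delta)\lambda_c(\Delta)$ into an $O(\delta)$ gap below $1$ in $|F'(\hat x)|$; the $\frac{1}{10}$ constant arises from bounding the second-order Taylor coefficient uniformly over $\Delta\ge 3$ using explicit formulas for $\hat x_c$ and $F''(\hat x_c)$.

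For Part~2 (asymmetric case), the parameter $w$ is a genuine free variable and one must jointly optimize $|F'(\hat x)|$ over $(\hat x,w)$ subject to the fixed-point constraint $\hat x(1+\beta(1+\hat x)^w)^d=\lambda$. Setting $|F'(\hat x)|=1-\delta$ on the boundary and applying Lagrange multipliers produces two stationarity equations in $(x,w)$ which, after routine algebraic cleanup, become exactly the system~\eqref{eq:unique-condition}. The dichotomy in $\beta$ comes from the limiting regime $w\to\infty$: under the scaling $w\hat x\to s$ and $(1+s/w)^w\to e^s$, the stationarity system collapses to a transcendental equation whose threshold is the value $\frac{1-\delta}{d}e^{1+(1-\delta)/d}$. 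Below this threshold~\eqref{eq:unique-condition} admits no positive solution, so $|F'(\hat x)|<1-\delta$ holds globally and $\delta$-uniqueness persists for all $\lambda>0$; above it, the unique positive solution $(x_c,w_c)$ furnishes $\lambda_c=x_c(1+\beta(1+x_c)^{w_c})^d$, and for $\lambda<\lambda_c$ one can exhibit a bad $(\hat x,w)$ pair violating the derivative bound.

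The main obstacle is the joint optimization in Part~2: the map $(\hat x,w)\mapsto|F'(\hat x)|$ restricted to the fixed-point manifold is not obviously concave, so one must rule out spurious interior critical points and verify that~\eqref{eq:unique-condition} locates the genuine maximum rather than a saddle. I would handle this by parametrizing the fixed-point manifold via $w$, deriving $\frac{d\hat x}{dw}$ from implicit differentiation, and showing that along this curve $|F'(\hat x)|$ is quasi-concave in $w$, which ensures uniqueness of the critical point. Part~1's analysis is comparatively cleaner because the constraint $\lambda=\beta$ forces the optimum to collapse to $w=1$, sidestepping this extremization issue entirely.
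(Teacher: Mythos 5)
This lemma is cited from \cite{Chen0Y23} rather than proved in this paper, so the remarks below evaluate your plan on its own terms.

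Your high-level framework — bound the derivative of the tree recursion at its fixed points and extremize over $w$ via a Lagrangian — is the right shape for Part~2 and matches the structure of the stated equations. The Part~1 reduction, however, does not hold. You claim that for $\lambda=\beta$ the worst case over $w>0$ is $w=1$ and that this ``reduces the problem to the classical hardcore tree recursion $x\mapsto\lambda(1+x)^{-d}$.'' But $F_{\lambda,d,\lambda,1}(x)=\lambda\bigl(1+\lambda(1+x)\bigr)^{-d}$, which is not the classical recursion, and no change of variable collapses the inner factor $\lambda(1+x)$ to $(1+x)$. More importantly, there is no reason $w=1$ should be extremal: $|F'(\hat x)|$ carries a prefactor $w$ while the fixed point $\hat x$ varies continuously with $w$, and the supremum over $w$ along the fixed-point manifold is an interior stationary point determined jointly by $(\lambda,d,\beta)$ — this is precisely why Part~2 requires a two-equation system in $(x,w)$ rather than an explicit one-variable threshold. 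The symmetric case $\lambda=\beta$ does not escape this extremization, and the $1/10$ loss in Part~1 reflects the cost of controlling the interior extremal $w$, not a second-order Taylor coefficient at a tangency of the classical recursion. Your Part~2 sketch (stationarity of $|F'(\hat x)|$ along the fixed-point constraint, with the $w\to\infty$ scaling limit producing the threshold $\frac{1-\delta}{d}e^{1+(1-\delta)/d}$ and the cleaned-up system \eqref{eq:unique-condition}) is consistent with how this result is actually established; but the issue you flag yourself — proving the stationarity system has a unique positive solution located at a genuine maximum rather than a saddle — is the technical heart of the argument and requires an explicit monotonicity analysis in suitably scaled variables, not merely a quasi-concavity check along a one-parameter curve.
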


The following relation between the uniqueness condition and the spectral independence for bipartite hardcore model is known.

\begin{lemma}[\text{\cite[Thm. 16]{Chen0Y23}}]\label{lem:bhc-unique-si}
Let $\mu_{G,\lambda,\beta}^{\textnormal{HC}}$ be a bipartite Hardcore model specified by $G=((V_L,V_R),E),\lambda,\beta$, where $\lambda>0$ is the external fields on $V_L$ and $\beta>0$ is the external fields on $V_R$. Let $\Delta = d+1\geq 3$ be the maximum degree on $V_L$. For any $\delta\in (0,1)$, if $(\lambda,d,\beta)$ is $\delta$-unique, the marginal distribution $\mu^\textnormal{HC}_{L,\lambda,\beta}$ on $V_L$, is $\frac{\Delta(1+\beta)^\Delta}{d \cdot \delta}$-$\infty$-spectrally independent.
\end{lemma}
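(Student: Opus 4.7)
The plan is to convert the spectral independence bound for the marginal $\mu^{\mathrm{HC}}_{L,\lambda,\beta}$ into a correlation decay computation on Weitz's self-avoiding walk (SAW) tree for the \emph{two-level} recursion $F = F_{\lambda,d,\beta,w}$, and then use the $\delta$-uniqueness hypothesis to close the argument. Fix any pinning $\tau$ on a subset of $V_L$; after pinning, the model becomes a bipartite hardcore model on an induced subgraph with left-degrees still at most $\Delta = d+1$, and $(\lambda,d',\beta)$ remains $\delta$-unique for every $d' \leq d$ (this monotonicity is standard for the uniqueness threshold and should be verified separately). Hence it suffices to bound $\sum_{v \in V_L} |\Psi_{\mu^{\mathrm{HC}}_L}(u,v)|$ for an arbitrary root $u \in V_L$ in an unpinned graph.

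First I would construct the SAW tree $T = T_{\mathrm{SAW}}(G,u)$ and note that, since the values on $V_R$ are integrated out, the marginal ratio $R_u = \mu_L(X_u = \text{in})/\mu_L(X_u = \text{out})$ at the root satisfies a two-level recursion of the shape $R_u = \lambda \prod_{r=1}^{d_u}\bigl(1 + \beta \prod_{j=1}^{w_r}(1+R_{j,r})^{-1}\bigr)^{-1}$, where the $R_{j,r}$ are ratios at the $V_L$-grandchildren in $T$. This is exactly the functional form of $F_{\lambda,d_u,\beta,w_\cdot}$, with local parameters $d_u \leq d$ and $w_r \leq \Delta-1$. By Weitz's identity, the marginal-on-$V_L$ influence $\Psi(u,v)$ equals the corresponding tree influence, which along the unique $u$-to-$v$ path in $T$ factors as a telescoping product of partial derivatives $\partial R_\bullet / \partial R_\bullet$ of the two-level recursion.

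Second, I would promote $\delta$-uniqueness to a \emph{uniform} contraction along arbitrary SAW-tree branches via a potential function $\Phi$ (a change of coordinate $y = \Phi(x)$ under which the two-level recursion becomes a $(1-\delta)$-Lipschitz map globally, not just at fixed points). The standard choice in the \cite{Chen0Y23} style is $\Phi'(x)^2 \propto \big(\partial_x \log F\big)^{-1}$ or an explicit logarithmic-type potential; one checks that in the new coordinate each level-2 step contracts by a factor $\leq 1-\delta/(\text{const})$ uniformly in the local $w$. Combining the telescoping path representation with this contraction, one sums over all $v\in V_L$ by layering: the branching factor at each level-2 step in $T$ is at most $d \cdot (\Delta - 1)$, while the contraction is $\le 1-\delta$ in the potential coordinate. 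Converting back to the original coordinate costs at most one factor bounding the derivative of $\Phi^{-1}$, which is uniformly $O(1)$.

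Finally, the prefactor $\Delta(1+\beta)^\Delta/(d\delta)$ emerges as follows: the root-level derivative $|\partial R_u / \partial R_{\text{grandchild}}|$ is bounded by $(1+\beta)^\Delta$, reflecting the worst one-step sensitivity across up to $\Delta$ neighbors in $V_R$ each contributing a factor at most $1+\beta$; the $\Delta/d = 1 + 1/d$ accounts for the root-vs.-internal degree discrepancy; and $1/\delta$ is the geometric-series sum $\sum_{k\ge 0}(1-\delta)^k$ after the branching and contraction are balanced. The hard part will be Step~2: constructing a potential $\Phi$ that gives uniform contraction along the tree, since $w$ may be unboundedly large for general bipartite graphs (the right-side degree is not controlled), so the argument must exploit the precise form of $F_{\lambda,d,\beta,w}$ to show that the effect of large $w$ is absorbed monotonically and does not break the $(1-\delta)$-contraction guaranteed at $w$-varying fixed points by the definition of $\delta$-uniqueness.
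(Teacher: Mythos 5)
This lemma is imported verbatim from the literature: the paper cites it as \cite[Thm.~16]{Chen0Y23} and gives no proof of its own, so there is no ``paper's own proof'' here against which to compare your attempt. The authors use it as a black box to deduce \Cref{cor:bhc-unique-si} and \Cref{lem:bhc-large-lambda-unique}.

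As for the proposal on its own merits: your high-level route (project onto $V_L$, pass to the self-avoiding-walk tree, recognize the two-level recursion $F_{\lambda,d,\beta,w}$ as the ratio update after integrating out $V_R$, telescope influences along tree paths, and convert $\delta$-uniqueness into a uniform contraction via a potential function) is the correct framework and is indeed the kind of argument used in \cite{Chen0Y23}. But you have not actually supplied a proof, and you say so: Step~2 --- exhibiting a potential $\Phi$ under which the two-level map is uniformly contractive for \emph{all} admissible subtree configurations, not merely at the univariate fixed points where $\delta$-uniqueness is postulated --- is the technical heart of the theorem, and is precisely what is hard when the right-side degree $w$ is unbounded. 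Your sketch asserts that such a $\Phi$ exists and that the conversion factor between coordinates is $O(1)$, but neither is established; without them the telescoping-plus-geometric-sum argument does not close. Your heuristic accounting of the prefactor $\Delta(1+\beta)^\Delta/(d\delta)$ is also speculative: for instance, reading $\Delta/d$ as a root-degree correction and $(1+\beta)^\Delta$ as a one-step derivative bound are plausible guesses but are not derived from the recursion, and in this kind of proof those constants typically come out of the specific choice of potential. In short, the outline points in the right direction, but it is an outline with the crucial lemma left as a to-do, not a proof; and since the paper itself only cites the result, the honest comparison is that the paper deliberately does not attempt what you are attempting.
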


Consider the model $\mu_{G,\lambda,\lambda}^{\textnormal{HC}}$ (where $\lambda=\beta$) with $\lambda\leq (1-10\delta)\lambda_c(\Delta)$ and its marginal distribution $\mu_{L,\lambda,\lambda}^{\textnormal{HC}}$ on the left side $V_L$. By \Cref{lem:unique-simple}, for any $0 < \lambda_0 \leq 1$, note that $(\lambda_0 * \mu_{L,\lambda,\lambda}^{\textnormal{HC}})\equiv \mu_{L,\lambda_0^{-1}\cdot\lambda,\lambda}^{\textnormal{HC}}$. The parameters of $(\lambda_0^{-1}\lambda, \Delta - 1, \lambda)$ is $\delta$-unique, where the left external field changes to $\lambda_0^{-1}\cdot\lambda$ by the mapping in~\eqref{eq:bhc-mapping}. We have 
\begin{corollary}\label{cor:bhc-unique-si}
    For any $0 < \lambda_0 \leq 1$, $(\lambda_0 * \mu_{L,G,\lambda}^{\textnormal{HC}})$ is   $\frac{\Delta(1+\lambda)^\Delta}{(\Delta - 1) \cdot \delta}$-$\infty$-spectrally independent.
\end{corollary}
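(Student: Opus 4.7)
The plan is to derive the corollary directly from the two preceding lemmas (\Cref{lem:unique-simple} and \Cref{lem:bhc-unique-si}), with the only nontrivial step being to translate the tilting by $\lambda_0$ into a shift of the left-side external field and then to promote the $\delta$-uniqueness of the symmetric triple $(\lambda,\Delta-1,\lambda)$ to the asymmetric triple $(\lambda_0^{-1}\lambda,\Delta-1,\lambda)$.

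First, I would unpack the tilted distribution. By the encoding~\eqref{eq:bhc-mapping}, a ``1'' on a left vertex indicates that the vertex is \emph{not} in the independent set, so multiplying the weight of $\sigma$ by $\lambda_0^{\|\sigma\|_1}$ rescales the weight of each left vertex \emph{absent} from the independent set by $\lambda_0$, which is equivalent to rescaling the weight of each left vertex \emph{present} by $\lambda_0^{-1}$. Since the marginal on $V_L$ only depends on the joint weights, this yields $(\lambda_0*\mu_{L,\lambda,\lambda}^{\textnormal{HC}})\equiv \mu_{L,\lambda_0^{-1}\lambda,\lambda}^{\textnormal{HC}}$, i.e.\ the tilted marginal is the marginal on $V_L$ of the hardcore model with left field $\lambda_0^{-1}\lambda$ and right field $\lambda$.

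Second, I would verify that the triple $(\lambda_0^{-1}\lambda,\Delta-1,\lambda)$ is $\delta$-unique. Applying the $\lambda=\beta$ case of \Cref{lem:unique-simple} to the assumption $\lambda\le (1-10\delta)\lambda_c(\Delta)$ gives that $(\lambda,\Delta-1,\lambda)$ is $\delta$-unique. To promote this to $\lambda_0^{-1}\lambda\ge\lambda$, I would split into the two cases of the second part of \Cref{lem:unique-simple} with $d=\Delta-1$ and $\beta=\lambda$. In Case 1 (small $\beta$), $(\lambda',d,\beta)$ is $\delta$-unique for \emph{all} $\lambda'>0$, so we are done. In Case 2 (large $\beta$), the set of $\lambda'$ giving $\delta$-uniqueness is the interval $[\lambda_c,\infty)$; since $(\lambda,d,\lambda)$ is already known to be $\delta$-unique, we must have $\lambda\ge\lambda_c$, and hence $\lambda_0^{-1}\lambda\ge\lambda\ge\lambda_c$, giving $\delta$-uniqueness of $(\lambda_0^{-1}\lambda,d,\lambda)$.

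Finally, I would apply \Cref{lem:bhc-unique-si} to $\mu_{L,\lambda_0^{-1}\lambda,\lambda}^{\textnormal{HC}}$ with $\beta=\lambda$ and $d=\Delta-1$, which yields $\ell_\infty$-spectral independence with constant $\frac{\Delta(1+\lambda)^\Delta}{(\Delta-1)\cdot\delta}$, matching the claimed bound. The main subtlety is the Case 2 monotonicity argument; one needs to be careful to observe that the characterization ``$\delta$-unique iff $\lambda'\ge\lambda_c$'' with $\beta$ held fixed is precisely what allows the known $\delta$-uniqueness of $(\lambda,d,\lambda)$ to be leveraged to conclude $\delta$-uniqueness of $(\lambda_0^{-1}\lambda,d,\lambda)$ without re-verifying fixed-point inequalities for the function $F_{\lambda',d,\lambda,w}$ directly.
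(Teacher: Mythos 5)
Your proposal is correct and follows essentially the same route as the paper: translate the tilt $\lambda_0$ into a left-field rescaling, establish $\delta$-uniqueness of the triple $(\lambda_0^{-1}\lambda,\Delta-1,\lambda)$, and then apply \Cref{lem:bhc-unique-si}. The paper is terse on the middle step (it merely cites \Cref{lem:unique-simple} and asserts $\delta$-uniqueness of the asymmetric triple), while you correctly spell out the needed monotonicity in the left field via the two-case structure of \Cref{lem:unique-simple}, using the ``$\delta$-unique iff $\lambda'\ge\lambda_c$'' characterization in the large-$\beta$ case and that uniqueness holds for all $\lambda'$ in the small-$\beta$ case; that is exactly the implicit argument the paper is relying on, so your filling-in is the right completion rather than a different approach.
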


We next prove the improved spectral independence in~\eqref{eq:alpha-t-hc} when $t \geq \frac{1}{\theta_0}$.


\begin{lemma}\label{lem:bhc-large-lambda-unique}
Fix a bipartite Hardcore model $\mu_{G,\lambda,\lambda}^{\textnormal{HC}}$, where $G=((V_L,V_R),E)$, the maximal degree on $V_L$ is $\Delta=d+1\geq 3$  and $\lambda'< \lambda_c(\Delta)$. For any $\lambda_0\geq \frac{\e^{\e^9}}{\lambda'}$, the distribution $(\lambda_0^{-1} * \mu_{L,\lambda',\lambda'}^{\textnormal{HC}})\equiv \mu_{L,\lambda_0\cdot\lambda',\lambda'}^{\textnormal{HC}}$ is $\frac{2\Delta(1+\lambda')^\Delta}{\Delta-1   }$-$\infty$-spectrally independent. 
\end{lemma}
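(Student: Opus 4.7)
The plan is to derive $\ell_\infty$-spectral independence from a uniqueness condition via \Cref{lem:bhc-unique-si}. Since $(\lambda_0^{-1} * \mu^{\textnormal{HC}}_{L,\lambda',\lambda'}) \equiv \mu^{\textnormal{HC}}_{L,\lambda_0\lambda',\lambda'}$, applying that lemma with $\beta = \lambda'$, $d = \Delta-1$, and $\delta = 1/2$ produces the spectral independence constant $\frac{\Delta(1+\lambda')^{\Delta}}{(\Delta-1)\cdot(1/2)} = \frac{2\Delta(1+\lambda')^\Delta}{\Delta-1}$, which matches the desired bound exactly. Hence the entire task reduces to verifying that the tuple $(\lambda_0\lambda', d, \lambda')$ is $(1/2)$-unique in the sense of \Cref{lem:unique-simple}.

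To verify this uniqueness, I would apply \Cref{lem:unique-simple} with $\beta = \lambda'$ and split by the size of $\lambda'$. If $\lambda' \leq \frac{1}{2d}e^{1+1/(2d)}$, case (1) of that lemma immediately yields $(1/2)$-uniqueness of $(\lambda, d, \lambda')$ for every $\lambda > 0$, so no lower bound on $\lambda_0$ is needed. Otherwise $\lambda' > \frac{1}{2d}e^{1+1/(2d)}$, and case (2) produces a threshold $\lambda_c = x_c(1+\lambda'(1+x_c)^{-w_c})^d$ with $(x_c, w_c)$ the unique positive solution of the system \eqref{eq:unique-condition} at $\delta = 1/2$, $\beta = \lambda'$, such that $(1/2)$-uniqueness of $(\lambda_0\lambda', d, \lambda')$ is equivalent to $\lambda_0\lambda' \geq \lambda_c$. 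Since the hypothesis $\lambda_0 \geq e^{e^9}/\lambda'$ gives $\lambda_0\lambda' \geq e^{e^9}$, the entire argument reduces to proving the universal bound $\lambda_c \leq e^{e^9}$.

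The main leverage for bounding $\lambda_c$ is the standing hypothesis $\lambda' < \lambda_c(\Delta) \leq e/(\Delta-1)$, so $\beta d = \lambda' d \leq e$; combined with the case-(2) lower bound $\beta > \frac{1}{2d}e^{1+1/(2d)}$, the product $\beta d$ is pinned inside a bounded absolute interval. My plan is to (i) use the first equation of \eqref{eq:unique-condition} to express $x_c$ in terms of $w_c$, giving a bound polynomial in $\beta d w_c$, (ii) substitute into the second equation to obtain an implicit equation in $w_c$ alone, and (iii) exploit monotonicity together with the bound on $\beta d$ to force $w_c$ below an absolute constant such as $e^8$. The resulting estimate $x_c \leq e^{e^8}$ then yields $\lambda_c \leq x_c(1+\beta)^d \leq e^{e^8} \cdot e^{\beta d} \leq e^{e^8} \cdot e^e \leq e^{e^9}$, using $(1+\beta)^d \leq e^{\beta d}$.

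The hard part is step (iii): extracting a sharp enough monotonicity or asymptotic estimate from the second equation of \eqref{eq:unique-condition} to prove that $w_c$ is uniformly bounded over all admissible $(\beta, d)$ with $\beta d \in [e/2, e]$. Once this is done, assembling the above pieces with \Cref{lem:bhc-unique-si} completes the proof, with a uniform spectral independence constant that depends only on $\Delta$ and $\lambda'$ and not on the gap $\delta$ to the uniqueness threshold.
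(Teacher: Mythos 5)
Your high-level plan matches the paper: reduce via \Cref{lem:bhc-unique-si} at $\delta = 1/2$ to the single inequality $\lambda_c \leq \e^{\e^9}$, where $\lambda_c$ is determined by the system \eqref{eq:unique-condition}. You also correctly note that case (1) needs no lower bound on $\lambda_0$, so only case (2) matters, and that $\lambda' < \lambda_c(\Delta)$ pins $\beta d$ in a bounded interval. Up to that point the proposal is sound.

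Two problems follow. First, a quantitative slip: you claim $\lambda_c(\Delta) \leq \e/(\Delta-1)$ and hence $\beta d \leq \e$. This is false for small degree; e.g.\ $\Delta = 3$ gives $\lambda_c(3) = 4$ while $\e/2 \approx 1.36$. The correct upper bound is $\beta d \leq \lambda_c(\Delta)\cdot d = \big(\tfrac{d}{d-1}\big)^{d+1}$, which is maximized at $d=2$ to give $\beta d \leq 8$ (the paper uses this). This does not sink the argument since $\e^{\e^8}\cdot\e^{8} \leq \e^{\e^9}$ still holds, but the interval $[\e/2,\e]$ you invoke is wrong.

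Second, and more seriously, the step you flag as ``the hard part'' — (iii), bounding $w_c$ uniformly from a single implicit equation — is not actually carried out and is unlikely to be tractable as stated. The first equation of \eqref{eq:unique-condition} involves $(1+x)^w$, so you cannot solve it for $x$ in closed form to get an equation in $w$ alone. The paper instead does the reverse: it uses the first equation to eliminate the $(1+x)^{w+1}$ term from the second, then solves the resulting linear-in-$w$ relation to obtain an explicit formula $w = \tfrac{1}{2}\big(\tfrac{1}{\log(1+x)} + \tfrac{1+x}{dx} + \tfrac{1}{x}\big)$ at $\delta=1/2$, substitutes this back into the first equation to get a single equation in $x$, and then bounds $x_c$ directly by contradiction: if $\log(1+x_c) > 9$ then $8\big(1 + \tfrac{x_c}{\log(1+x_c)}\big) < 1+x_c$, contradicting the equation (using $\beta d \leq 8$). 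This gives $x_c \leq \e^9$, far sharper than your projected $x_c \leq \e^{\e^8}$, and the other factor $(1+\beta(1+x_c)^{-w_c})^d \leq \e^{(1+x_c)/2}$ is then bounded using the derived formula for $w_c$. Your proposal is missing the crucial observation that $w$ can be isolated as an explicit function of $x$ once the two equations are combined — without it, the plan does not close.
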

\begin{proof}
We first simplify the uniqueness condition in \Cref{lem:unique-simple} for $\mu_{G,\lambda,\beta}^{\textnormal{HC}}$. 
Note that in \Cref{lem:bhc-large-lambda-unique}, we consider the case when $\lambda = \lambda_0 \cdot \lambda'$ and $\beta = \lambda'$. 
We show that $\mu_{G,\lambda,\beta}^{\textnormal{HC}}$ is $\frac{1}{2}$-unique if $\lambda \geq \e^{\e^9}$ and $\beta \leq (1-\delta_0)\lambda_c(\Delta)$ for any $\delta_0 > 0$. By the assumption in \Cref{lem:bhc-large-lambda-unique}, the distribution $(\lambda_0^{-1} * \mu_{L,\lambda',\lambda'}^{\textnormal{HC}})\equiv \mu_{L,\lambda_0\cdot\lambda',\lambda'}^{\textnormal{HC}}$ is $\frac{1}{2}$-unique. Then by \Cref{lem:bhc-unique-si}, the distribution $(\lambda_0^{-1} * \mu_{L,\lambda',\lambda'}^{\textnormal{HC}})$ is $\frac{2\Delta(1+\lambda')^\Delta}{\Delta-1   }$-$\infty$-spectrally independent.

To verify the uniqueness condition in \Cref{lem:unique-simple}, we only need to focus on the second case.
We calculate the value of $\lambda_c$ in the uniqueness condition when $\delta=\frac{1}{2}$.
We show that $\lambda_c \leq \e^{\e^9}$ if $\beta < \lambda_c(\Delta)$.
Hence, if $\beta < \lambda_c(\Delta)$ and $\lambda \geq \e^{\e^9}$, the distribution $\mu_{G,\lambda,\beta}^{\textnormal{HC}}$ is $\frac{1}{2}$-unique.

The first equation of~\eqref{eq:unique-condition} could imply
\begin{align}\label{eq:unique-eq1}
    \frac{\beta d w x}{1-\delta}=(1+x)(\beta+(1+x)^w)=(1+x)^{w+1}+\beta(1+x).
\end{align}
Take~\eqref{eq:unique-eq1} into the second equation, we have
\begin{align}\label{eq:unique-w}
    &w\log (1+x)\left(\frac{\beta d w x}{1-\delta}-\beta(1+x)-\beta d\right) = \frac{\delta\beta d w x}{1-\delta};\notag\\
    \textnormal{divided by }\beta dw\Rightarrow\quad &\log (1+x)\left(\frac{w x}{1-\delta}-\frac{1+x}{d}-1\right) = \frac{\delta x}{1-\delta};\notag\\
    \textnormal{divided by }\log (1+x)\Rightarrow\quad &\frac{wx}{1-\delta} = \frac{\delta x}{(1-\delta)\log (1+x)}+\frac{1+x}{d}+1;\notag\\
    \textnormal{multiplying }\frac{1-\delta}{x}\Rightarrow\quad &w=\frac{\delta }{\log (1+x)}+\frac{(1+x)(1-\delta)}{d x} + \frac{1-\delta}{x};\notag\\
        \overset{\text{set }\delta=\frac{1}{2}}{\Rightarrow} \quad&w=\frac{1}{2}\left(\frac{1}{\log (1+x)}+\frac{1+x}{d x} + \frac{1}{x}\right). 
\end{align}    
Take the value of $w$ and $\delta=\frac{1}{2}$ into the first equation of~\eqref{eq:unique-condition}, we can obtain
\begin{align*}
    &\frac{1}{2}(1+x)\left(\beta+(1+x)^{\frac{1}{2}\left(\frac{1}{\log (1+x)}+\frac{1+x}{d x} + \frac{1}{x}\right)}\right)=\beta dx\frac{1}{2}\left(\frac{1}{\log (1+x)}+\frac{1+x}{d x} + \frac{1}{x}\right)\\
    \Rightarrow\quad&(1+x)^{1+\frac{1}{2}\left(\frac{1}{\log (1+x)}+\frac{1+x}{d x} + \frac{1}{x}\right)}=\beta d\left(1+\frac{x}{\log (1+x)}\right).
\end{align*}
Since $\beta < \lambda_c(\Delta)$. Then $\beta d\leq \frac{d^d}{(d-1)^{d+1}}d\leq 8$. We have
\begin{align}\label{eq:x-c-ineq}
    1+x\leq (1+x)^{1+\frac{1}{2}\left(\frac{1}{\log (1+x)}+\frac{1+x}{d x} + \frac{1}{x}\right)}=\beta d\left(1+\frac{x}{\log (1+x)}\right) \leq 8\left(1+\frac{x}{\log(1+x)}\right).
\end{align}
We claim $\log(1+x)\leq 9$. Otherwise, $8(1+\frac{x}{\log(1+x)})\leq8 + \frac{8x}{9}\leq 1+x$, which contradicts to~\eqref{eq:x-c-ineq}. Therefore,
\begin{align}\label{eq:x-c-bound}
    x_c\leq \e^{9}.
\end{align}
Note $\lambda_c = x_c(1+\beta(1+x_c)^{-w_c})^d$.
Consider another factor of $\lambda_c$. By~\eqref{eq:unique-eq1} and $\delta=\frac{1}{2}$,
\begin{align}\label{eq:lambda-bound}
\begin{split}
    (1+\beta(1+x_c)^{-w_c})^d&=(1+\frac{\beta}{\frac{\beta dw_c x_c}{\frac{1}{2}(1+x_c)}-\beta})^d\\
    \textnormal{take the value of }w_c \text{ in \eqref{eq:unique-w}}\quad&= (1+\frac{1}{\frac{dx_c}{(1+x_c)\log (1+x_c)}+\frac{d}{1+x_c}})^d\\
    &\leq \e^{\frac{1}{\frac{x_c}{(1+x_c)\log (1+x_c)}+\frac{1}{1+x_c}}}\leq \e^{\frac{1}{\frac{1}{1+x_c}+\frac{1}{1+x_c}}}\\
    &=\e^{\frac{1+x_c}{2}}\leq \e^{\frac{1+\e^9}{2}}.
\end{split}
\end{align}

We use~\eqref{eq:x-c-bound} and $x_c\geq\log(1+x_c)$ in above inequalities. 
Combining~\eqref{eq:x-c-bound} and~\eqref{eq:lambda-bound}, we have 
\begin{align*}
    \lambda_c = x_c(1+\beta(1+x_c)^{-w_c})^d\leq \e^9\cdot \e^{\frac{1+\e^9}{2}} \leq \e^{\e^9}. &\qedhere
\end{align*}
\end{proof}

To verify the entropic independence, to use \Cref{lem:SIandMStoEI}, we need the following lemma.

\begin{lemma}[\text{\cite[Lem. 23]{Chen0Y23}}]\label{lem:bhc-marginal-bound}
For any $\lambda_0\in (0,1]$, let $\nu = \lambda_0 * \mu_{L,\lambda,\lambda}^{\textnormal{HC}}$. At least one of $\nu$ and $\bar{\nu}$ is $2(1+\lambda)^\Delta$-marginally stable.   
\end{lemma}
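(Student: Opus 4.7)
The plan is to analyse the marginal ratio $R_v^\tau := \nu_v^\tau(1)/\nu_v^\tau(0)$ for any $v \in V_L$ and any feasible pinning $\tau$ on $\Lambda \subseteq V_L \setminus \{v\}$ by integrating out the right side of the bipartite hardcore model. Using the mapping in~\eqref{eq:bhc-mapping}, $\nu = \lambda_0 * \mu_{L,\lambda,\lambda}^{\textnormal{HC}}$ is (up to normalisation) the marginal on $V_L$ of the bipartite hardcore model with left-field $\lambda/\lambda_0$ and right-field $\lambda$. Summing out $V_R$ gives
\[
\nu(\sigma_{V_L}) \;\propto\; \left(\frac{\lambda}{\lambda_0}\right)^{|S_L(\sigma)|} \prod_{w \in V_R} Q_w(\sigma_{V_L}),
\]
where $S_L(\sigma) = \{u \in V_L : \sigma_u = 0\}$ is the set of in-IS left-vertices and $Q_w(\sigma_{V_L}) = 1+\lambda$ if $w$ has no neighbour in $S_L(\sigma)$, else $Q_w = 1$.

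First I would isolate the contribution of $N(v)$. Conditioning on $\sigma_v = 0$ forces $Q_w = 1$ for all $w \in N(v)$, while conditioning on $\sigma_v = 1$ leaves each $Q_w$ for $w \in N(v)$ dependent only on whether $w$ has another in-IS neighbour in $V_L \setminus \{v\}$. Letting $P_\tau$ be the distribution on $\{0,1\}^{V_L \setminus (\Lambda \cup \{v\})}$ with weight proportional to $(\lambda/\lambda_0)^{|S_L|} \prod_{w \notin N(v)} Q_w$ (evaluated jointly with $\tau$), a short computation yields
\[
R_v^\tau \;=\; \frac{\lambda_0}{\lambda}\cdot\mathbb{E}_{P_\tau}\!\left[\prod_{w \in N(v)} Q_w\right].
\]
Because $Q_w \in \{1, 1+\lambda\}$ and $|N(v)| \le \Delta$, the expectation lies in $[1,(1+\lambda)^\Delta]$, so uniformly in $\tau$,
\[
\frac{\lambda_0}{\lambda} \;\le\; R_v^\tau \;\le\; \frac{\lambda_0}{\lambda}(1+\lambda)^\Delta.
\]

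From this two-sided bound the ratio part of the marginally stable condition for $K = 2(1+\lambda)^\Delta$ follows immediately: for any $S \subseteq \Lambda$,
\[
\frac{R_v^\tau}{R_v^{\tau_S}} \;\le\; (1+\lambda)^\Delta \;\le\; K, \qquad \frac{R_v^{\tau_S}}{R_v^\tau} \;\le\; (1+\lambda)^\Delta \;\le\; K,
\]
and the second inequality is exactly the ratio condition for $\bar\nu$ (whose marginal ratio at $v$ is $1/R_v^\tau$). For the marginal probability lower bound I split on the sign of $\lambda_0 - \lambda$. If $\lambda_0 \le \lambda$, then $R_v^\tau \le (1+\lambda)^\Delta$ gives $\nu_v^\tau(0) = 1/(1+R_v^\tau) \ge 1/(1+(1+\lambda)^\Delta) \ge K^{-1}$, so $\nu$ is $K$-marginally stable. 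If $\lambda_0 > \lambda$, then $R_v^\tau \ge \lambda_0/\lambda > 1$ gives $\bar\nu_v^\tau(0) = \nu_v^\tau(1) = R_v^\tau/(1+R_v^\tau) \ge 1/2 \ge K^{-1}$, so $\bar\nu$ is $K$-marginally stable. Either way the lemma holds.

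The main obstacle, and essentially the only non-routine step, is establishing the factorised expectation representation of $R_v^\tau$; once it is in hand every remaining bound is a two-line computation. The case split on $\lambda_0$ versus $\lambda$ is unavoidable: when $\lambda_0 \ll \lambda$ the vertex is biased so strongly toward the IS that $\nu_v^\tau(1)$ can be arbitrarily small, while when $\lambda_0 \gg \lambda$ the reverse happens. This is exactly why the statement asserts the property for one of $\nu$ or $\bar\nu$ rather than both.
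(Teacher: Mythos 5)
Your proof is correct. The paper does not prove this lemma itself — it cites \cite[Lem.~23]{Chen0Y23} — so there is no in-paper argument to compare against. The route you take (integrate out $V_R$ to write $\nu(\sigma_{V_L})\propto(\lambda/\lambda_0)^{|S_L|}\prod_w Q_w$, factor out the neighbourhood of $v$ to obtain $R_v^\tau=\frac{\lambda_0}{\lambda}\,\mathbb{E}_{P_\tau}\bigl[\prod_{w\in N(v)}Q_w\bigr]$, read off the uniform two-sided bound $\lambda_0/\lambda\le R_v^\tau\le(\lambda_0/\lambda)(1+\lambda)^\Delta$, and split on $\lambda_0\lessgtr\lambda$ for the marginal lower bound) is clean and every step checks out. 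The factorised representation of $R_v^\tau$ is indeed the crux, and you establish it correctly because $Q_w$ for $w\notin N(v)$ is insensitive to $\sigma_v$, so those factors cancel in the ratio. One notational nit: for $\bar\nu$ the marginal ratio at $v$ under pinning $\tau$ is $\bar R_v^\tau=1/R_v^{\bar\tau}$ (the pinning is flipped too), not $1/R_v^\tau$ as your parenthetical suggests; this is harmless here since your two-sided bound on $R_v^\tau$ holds uniformly over all pinnings, but the quantifier is worth stating precisely.
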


By \Cref{ob:CI}, a distribution $\nu$ is entropically (or spectrally) independent if and only if its flipped distribution $\bar{\nu}$ is entropically (or spectrally) independent. Therefore, the entropic independence of $(\e^{-t} * \mu_{L,\lambda,\lambda}^{\textnormal{HC}})$ can be proved by combining \Cref{lem:SIandMStoEI}, \Cref{cor:bhc-unique-si}, \Cref{lem:bhc-large-lambda-unique}, and \Cref{lem:bhc-marginal-bound}.
For the case $(\e^{-t} * \mu_{L,\lambda,\lambda}^{\textnormal{HC}})^{\*1_\Lambda}$ with pinning, as discussed before, we can go through the same proof in a subgraph. The left maximum degree of a subgraph is at most $\Delta$.

\end{document}